\newcommand{\href}[1]{#1} 
\let\origdoublepage\cleardoublepage
\newcommand{\clearemptydoublepage}{%
  \clearpage{\pagestyle{empty}\origdoublepage}}
\let\cleardoublepage\clearemptydoublepage
\newcommand{\C}{{\mathbb C}}
\newcommand{\N}{{\mathbb N}}
\newcommand{\R}{{\mathbb R}}
\newcommand{\bbT}{{\mathbb T}}
\newcommand{\one}{\mathbbm{1}}
\newcommand{\cG}{{\mathcal G}}
\newcommand{\cL}{{\mathcal L}}
\newcommand{\cH}{{\mathcal H}}
\newcommand{\cC}{{\mathcal C}}
\newcommand{\SU}{\mathrm{SU}}
\def\T{T^{\Gamma}}
\def\G{{\cal G}_{\Gamma}}
\newcommand{\be}{\begin{equation}}
\newcommand{\ee}{\end{equation}}
\newcommand{\beq}{\begin{eqnarray}}
\newcommand{\eeq}{\end{eqnarray}}
\newcommand{\bea}{\begin{eqnarray}}
\newcommand{\eea}{\end{eqnarray}}
\newcommand{\beal}{\begin{align}}
\newcommand{\eeal}{\end{align}}
\newcommand{\nn}{\nonumber}
\newcommand{\bra}{\langle}
\newcommand{\ket}{\rangle}
\newcommand{\la}{\langle}
\newcommand{\ra}{\rangle}
\newcommand{\tr}{{\mathrm Tr}}
\newcommand{\rd}{\mathrm{d}}
\newcommand{\bpm}{\begin{pmatrix}}
\newcommand{\epm}{\end{pmatrix}}
\newcommand{\bvm}{\begin{vmatrix}}
\newcommand{\evm}{\end{vmatrix}}
\def\nn{\nonumber}
\newcommand{\corner}{\,\raisebox{-0.5ex}{\resizebox{15pt}{!}{\mbox{\huge{$\lrcorner$}}}}\,}
\newtheorem{theorem}{Theorem}[section]
\newtheorem{lemma}[theorem]{Lemma}
\newtheorem{proposition}[theorem]{Proposition}
\newtheorem{corollary}[theorem]{Corollary}
\newtheorem{definition}[theorem]{Definition}
\newcommand{\sixj}[6]{
\left\{
\begin{array}{ccc}
#1 & #2 & #3 \\
#4 & #5 & #6
\end{array}
\right\}
}
\newcommand{\threej}[6]{
\left(
\begin{array}{ccc}
#1 & #2 & #3 \\
#4 & #5 & #6
\end{array}
\right)
}
\begin{document}


\pagestyle{empty}
\pagenumbering{roman}

\begin{titlepage}
        \begin{center}
        \vspace*{1.0cm}

        \Huge
        {\bf Generating Functionals for Spin Foam Amplitudes }

        \vspace*{1.0cm}

        \normalsize
        by \\

        \vspace*{1.0cm}

        \Large
        Jeff Hnybida \\

        \vspace*{3.0cm}

        \normalsize
        A thesis \\
        presented to the University of Waterloo \\ 
        in fulfillment of the \\
        thesis requirement for the degree of \\
        Doctor of Philosophy \\
        in \\
        Physics \\

        \vspace*{2.0cm}

        Waterloo, Ontario, Canada, 2014 \\

        \vspace*{1.0cm}

        \copyright\ Jeff Hnybida 2014 \\
        \end{center}
\end{titlepage}

\pagestyle{plain}
\setcounter{page}{2}

\cleardoublepage 


\begin{center}\textbf{Author's Declaration}\end{center}

  \noindent
I hereby declare that I am the sole author of this thesis. This is a true copy of the thesis, including any required final revisions, as accepted by my examiners.

  \bigskip
  
  \noindent
I understand that my thesis may be made electronically available to the public.

\cleardoublepage


\begin{center}\textbf{Abstract}\end{center}

Various approaches to Quantum Gravity such as Loop Quantum Gravity, Spin Foam Models and Tensor-Group Field theories use invariant tensors on a group, called intertwiners, as the basic building block of transition amplitudes.  For the group SU(2) the contraction of these intertwiners in the pattern of a graph produces what are called spin network amplitudes, which also have various applications and a long history.

We construct a generating functional for the exact evalutation of a coherent representation of these spin network amplitudes.  This generating functional is defined for arbitrary graphs and depends only on a pair of spinors for each edge.  The generating functional is a meromorphic polynomial in the spinor invariants which is determined by the cycle structure of the graph.

The expansion of the spin network generating function is given in terms of a newly recognized basis of SU(2) intertwiners consisting of the monomials of the holomorphic spinor invariants.  This basis is labelled by the degrees of the monomials and is thus discrete.  It is also overcomplete, but contains the precise amount of data to specify points in the classical space of closed polyhedra, and is in this sense coherent.  We call this new basis the discrete-coherent basis.  

We focus our study on the 4-valent basis, which is the first non-trivial dimension, and is also the case of interest for Quantum Gravity.  We find simple relations between the new basis, the orthonormal basis, and the coherent basis.  

The 4-simplex amplitude in the new basis depends on 20 spins and is referred to as the 20j symbol.  We show that by simply summing over five of the extra spins produces the 15j symbol of the orthonormal basis.  On the other hand, the 20j symbol is the exact evaluation of the coherent 4-simplex amplitude.

The asymptotic limit of the 20j symbol is found to give a generalization of the Regge action to Twisted Geometry.  By this we mean the five glued tetrahedra in the 4-simplex have different shapes when viewed from different frames.  The imposition of the matching of shapes is known to be related to the simplicity constraints in spin foam models.   

Finally we discuss the process of coarse graining moves at the level of the generating functionals and give a general prescription for arbitrary graphs.  A direct relation between the polynomial of cycles in the spin network generating functional and the high temperature loop expansion of the 2d Ising model is found.  

\cleardoublepage


\begin{center}\textbf{Acknowledgements}\end{center}

I would like to thank my supervisor Laurent Freidel for his inspiration and guidance.  I was fortunate to learn so much about physics, but also much more.

I am indebted to my collaborators Joseph Ben Geloun, John Klauder, Bianca Dittrich, Andrzej Banburski, Linqing Chen.  I would also like to thank the members of my committee for their support: Lee Smolin, Freddy Cachazo, Achim Kempf, Robb Mann, Florian Girelli, and Niayesh Afshordi.  Special thanks go to Mark Shegelski for his ongoing mentorship and friendship.

I would like to thank all the members of the Perimeter Institute, in particular Debbie Guenther, Joy Montgomery, George Arvola, Dan Lynch, Dawn Bombay, and to all the friends I've made throughout the years.  

Finally I couldn't have done it without my family and their never ending encouragement and support.  Thanks to Laurie Hnybida, Roxanne Heppner, and Marianne Humphries for coming all the way to Waterloo to see the defence.


\cleardoublepage


\begin{center}\textbf{Dedication}\end{center}
\vspace{150pt}
\begin{center}{ In memory of my father, Ted Hnybida.} \end{center}

\cleardoublepage

\renewcommand\contentsname{Table of Contents}
\tableofcontents
\cleardoublepage
\phantomsection


\addcontentsline{toc}{chapter}{List of Figures}
\listoffigures
\cleardoublepage
\phantomsection		


\pagenumbering{arabic}

\chapter{Introduction}

The work contained in this thesis was developed as part of the broader effort to quantize Einstein's theory of General Relativity.  While General Relativity can be treated successfully as a quantum field theory at low energy (small curvature) it is notoriously nonrenormalizable.  This likely implies that either new unknown physics is yet to be discovered, perhaps near the Planck scale, or that non-perturbative methods need to be considered.  To add to this challenge, there is currently very little experimental data with which to guide the theory.  For this reason it is precarious to stray too far from the conventional interpretations of Quantum Mechanics and General Relativity. 

So far a theory of Quantum Gravity has yet to be widely accepted.  There are currently several well established programs in pursuit of this goal.  These include simplicial path integral methods such as Quantum Regge Calculus \cite{Rocek:1982fr} and Causal Dynamical Triangulations \cite{Ambjorn:2010rx}.  Various approaches via canonical quantization include Loop Quantum Gravity \cite{Rovelli:2004tv}, Asymptotic Safety \cite{Benedetti:2009rx}, and String Theory \cite{Polchinski:1998rq}.  

The most popular approach to Quantum Gravity has always been String Theory partially owing to its much broader ambitions.  However, likely due to its nonconservatism the theory has been found to admit a large number of possible vacuum states, which casts doubt on its potential predictive power.  This ambiguity is known as the Landscape Problem and has yet to be reconciled.  String theory has had undeniable success and the Landscape problem is not necessarily insurmountable, but until a theory of Quantum Gravity is finally confirmed via experiment, it is most sensible to investigate all other approaches which might be more restricted in their assumptions.

The canonical quantization of Loop Quantum Gravity has had success in constructing a kinematical Hilbert space and deriving the spectra of geometrical operators \cite{Rovelli:1994ge}.  Intriguingly, the kinematical Hilbert space was found to be spanned by the spin network states of Penrose \cite{Rovelli:1995ac} representing the quantum states of space in much the way he had envisioned.  The dynamics of the theory was expected to be describe by the evolution of these spin networks states, but this still has yet to be realised and progress in the canonical picture has since stalled.  The other (and usually much simpler) path to dynamics is of course via the path integral.  Hence, interest in covariant quantum histories of spin networks began to subsequently emerge and was an early motivation for what are now called Spin Foam models.

Spin Foam models are in fact a synthesis of the various non-perturbative formulations of Quantum Gravity: they correspond classically to simplicial discretizations of spacetime, they compute transition amplitudes between spin networks on the boundary \`a la Loop Quantum Gravity, and their amplitudes are weighted by the (cosine of) the Regge action.  Furthermore, a partition function for these spin foam amplitudes can be derived somewhat rigorously from a {\it constrained} topological gauge theory known as BF theory.  The computation, coarse graining, and geometric characterisation of the amplitudes of BF theory will be the main focus of this thesis. 

BF theory is a topological field theory defined in all dimensions by the action
\be
  S_{BF}(B,\omega) \equiv \int_\mathcal{M} \: \bra B \wedge F(\omega) \ket \:\rd x^n 
\ee  
where the $n-2$ form $B$ acts as a Lagrange multiplier for the vanishing of the curvature 2-form $F = \rd \omega + \omega \wedge \omega$.  Here $\bra \cdot \ket$ is the Killing form on the Lie algebra of the gauge group.  See Chapter \ref{chapter_BF} for more details on BF theory.

In the rest of the introduction we give a quick review of pure 3d Quantum Gravity without cosmological constant, in particular the Ponzano-Regge model, and its relation to BF theory.  We then discuss the coherent representation of BF theory for which most of the results of this thesis pertain to.  Finally we discuss the path to 4d Quantum Gravity and end with a discussion of renormalization.

\subsubsection{3d Quantum Gravity}


In three dimensions General Relativity is topological and is precisely of the BF type.  Indeed, for Riemannian signature $\eta = \text{Diag}(1,1,1)$ we define a 1-form frame field $e$ in terms of the metric by
\be
  g_{\mu \nu} = \eta_{ij} e^{i}_\mu e^{j}_\nu \qquad\qquad e = e^{i}_\mu \rd x^\mu \tau_i
\ee   
where $\tau_j = (i/2) \sigma_j$ are the $\mathfrak{su}(2)$ generators.  The connection 1-form $\omega$  has curvature 
\be
  F(\omega) = \rd \omega + \omega \wedge \omega \qquad\qquad \omega = \omega^{i}_\mu \rd x^{\mu} \tau_i
\ee 
and the Einstein-Hilbert action is given by integrating the scalar curvature
\be
  S(e,\omega) = \frac{1}{\kappa} \int_\mathcal{M} \: \text{tr} \left( e \wedge F(\omega) \right)
\ee
The equations of motion enforce the compatibility of the metric and frame field and the flatness of the connection 
\be
  \rd_\omega e = 0 \qquad \qquad  F(\omega) = 0 
\ee
where $\rd_\omega$ denotes the covariant derivative.  With the additional condition of non-degeneracy of the frame field, this formulation is equivalent to the Einstein-Hilbert action for 3d General Relativity, without matter, and without a cosmological constant.  

Under local $SU(2)$ gauge transformations we have
\be \label{eqn_rot_gauge}
  e \mapsto g e g^{-1} \qquad \omega \mapsto g \omega g^{-1} + (\rd g)g^{-1}
\ee
which correspond to the local rotation symmetry.  In addition, there exist an extra set of gauge transformations corresponding to a translation of the frame field
\be \label{eqn_shift_gauge}
  e \mapsto e + \rd_\omega \phi \qquad \omega \mapsto \omega
\ee
which is due to the second Bianchi identity $\rd_\omega F = 0$.  

Combined with the equation of motion $\rd_\omega e = 0$ the translation gauge symmetry implies that all solutions are locally trivial $e \approx 0$ up to gauge transformations.  Hence the only degrees of freedom are global which is the reason this theory is referred to as a Topological Field Theory.  A combination of the rotational and translational gauge symmetries can be shown to correspond to the diffeomorphism symmetry of General Relativity on-shell \cite{Freidel:2002dw}.

Being topological, 3d gravity can be equivalently described in terms of a finite set of data on, for instance, a triangulation $\Delta$ homeomorphic to the manifold $\mathcal{M}$.  This data is usually taken to be the parallel transports between tetrahedra and the integrals of the frame field over edges of the triangulation, approximating the connection and frame field respectively.  This is a classical formulation of simplicial gravity for which the model of Ponzano and Regge \cite{ponzano1968semiclassical} is based.

Upon quantization the holonomies act by group multiplication while the frame fields act via the differential operators $\vec{J}$.  The frame field operators are interpreted as the quantization of edge vectors of $\Delta$ and their Casimir thus gives their norm (or length).  The spectrum of the Casimir is given by the unitary irreducible representations $j=1/2,1,3/2,...$ of SU(2) which shows that length in this system is quantized.

Finally, every triple of edge vectors meeting at a node must be invariant under the local rotational gauge transformations (\ref{eqn_rot_gauge}).  Furthermore, there is only one invariant rank three tensor on SU(2) up to normalization:  The Wigner 3j symbol (or Clebsch-Gordan coefficient).\footnote{In this thesis we refer to invariant tensors on SU(2) as intertwiners; for an explanation why see section \ref{section_CG_int}.}  

The 3j symbol thus has the interpretation as a quantum triangle and its three spins correspond to the lengths of its three edges, which close to form a triangle due to the SU(2) invariance.  Contracting four 3j symbols in the pattern of a tetrahedron gives the well-known 6j symbol which is the amplitude for each tetrahedron of $\Delta$.  

This leads to the Ponzano-Regge partition function of 3d gravity with zero cosmological constant
\be \label{eqn_PR_part}
  Z_{BF}^{\Delta} = \sum_{\{j_e\}} \prod_{\text{edges} \: e} (-1)^{2j_e} (2j_e+1) \prod_{\text{triangles} \: t} (-1)^{\sum_{e \cap t} j_e} \prod_{\text{tetrahedra} \: \tau} \{6j\}_{\tau}
\ee  
where $\{6j\}_{\tau}$ is the 6j symbol with the appropriate coloring by $j_e$ of the six edges of the tetrahedron $\tau$.  The signs and the factors $2j_e+1$ are necessary for topological invariance.  A derivation of the BF partition function will be given in the next chapter, explaining the origin of each of these factors.

If $\Delta$ has boundary then (\ref{eqn_PR_part}) computes the transition amplitude for the 2d geometry defined by the spins on the boundary.  This sum, however, does not always converge, partly due to the residual noncompact gauge symmetry (\ref{eqn_shift_gauge}) \cite{Freidel:2004vi} and partly due to topological (or potentially other) reasons \cite{Barrett:2008wh}.  In more generality these divergences can be shown to be attributed to degeneracies in the map between the continuous and discrete connections \cite{Bonzom:2010ar,Bonzom:2010zh,Bonzom:2011br}.

In the semiclassical limit the edge lengths are defined in terms of the spins by $l_e = j_e + 1/2$.  Taking all the spins to be uniformly large, i.e. $\lambda j_e$ where $\lambda \rightarrow \infty$, the 6j symbol scales like
\be \label{eqn_6j_asym}
  \{6j\} \sim \frac{\lambda^{-\frac{3}{2}}}{\sqrt{12 V(l_e)}} \cos\left( \lambda S(l_e) + \frac{1}{4}  \right)
\ee
where $S(l_e)$ is the Regge action.  

The Regge action was first proposed by Regge \cite{Regge:1961px} as an approximation to the Einstein-Hilbert action on a triangulation of a manifold.  It is defined in terms of the edge lengths $l_e$ and the dihedral angles $\Theta_e$ between the triangles sharing the edge $e$ by
\be \label{eqn_Regge_action_intro}
  S(l_e) = \sum_{e \subset \tau} l_e \Theta_e
\ee
As the triangulation is refined in a suitable way the discretized equations of motion converge to Einsteins equation $R_{\mu \nu} = 0$.  Note that the Regge action can be generalized to higher dimensions and Lorentzian signature.

It was the semiclassical limit (\ref{eqn_6j_asym}) that originally motivated Ponzano and Regge to take the 6j symbol as the quantum building block of 3d gravity and the connection with BF theory was discovered much later.  The derivation of the partition function (\ref{eqn_PR_part}) in terms of simplicial amplitudes from BF theory also generalizes to higher dimensions and Lorentzian signature.

\subsubsection{Coherent BF Theory}

The coherent intertwiners, introduced first by Girelli and Livine \cite{Girelli:2005ii} and then by Livine and Speziale \cite{Livine:2007vk}, are simply a coherent state representation of the space of invariant tensors on SU(2).  The power of the coherent representation cannot be overstated; the exact evaluations we compute in Chapter \ref{chapter_exact} are a result of a special exponentiating property of coherent states. 

Each SU(2) coherent state is labeled by a spinor $|z\ket \in \C^2$ where $|z]$ denotes its contragradient version.  We use a bra-ket notation for the spinors
\be
  |z\ket \equiv \bpm z^0 \\ z^1 \epm, \qquad |z] \equiv \bpm -\bar{z}^{1} \\ \bar{z}^0 \epm
\ee
such that given two spinors $z$ and $w$ the two invariants which can be formed by contracting with either epsilon or delta are denoted
\be
  [z|w\ket = z^0 w^1 - z^1w^0, \qquad \bra z | w \ket \equiv \bar{z}^{0} w^{0} + \bar{z}^1 w^1 
\ee
Since we will work heavily with polynomials of these invariants, this notation is more clear than the usual index notation.

The exponentiating property of the coherent states corresponds to the fact that the spin $j$ representation is simply the tensor product of $2j$ copies of the spinor $|z\ket^{\otimes 2j}$. 
A coherent rank $n$ tensor on SU(2) is therefore the tensor product of $n$ exponentiated spinors $\otimes_{i=1}^{n}|z_i\ket^{\otimes 2j_i}$ where the dimension of the $i$'th representation is $2j_i+1$.  To make the coherent tensor invariant we group average using the Haar measure 
\be \label{eqn_coh_amp}
  \|j_i,z_i\ket \equiv \int \rd g \: \bigotimes_{i=1}^{n} \frac{g|z_i\ket^{\otimes 2j_i}}{(2j_i)!} 
\ee
which is the definition of the Livine-Speziale intertwiner (up to normalization, which we choose to be $1/(2j_i)!$).  These states span the intertwiner space and are thus equally well suited to represent the BF theory parition function (\ref{eqn_PR_part}).

The coherent 6j symbol is constructed by contracting $4$ coherent intertwiners (\ref{eqn_coh_amp}) in the pattern of a tetrahedron.  Labeling each vertex by $i=1,...,4$ and edges by pairs $(ij)$ this amplitude depends on 6 spins $j_{ij} = j_{ji}$ and 12 spinors $|z^{i}_{j}\ket \neq |z^{j}_{i}\ket$ where the upper index denotes the vertex and the lower index the connected vertex.  Thus the coherent amplitude in 3d is given by
\be \label{eqn_coh_3S}
  \mathcal{A}_{3S}(j_{ij},z^{i}_{j}) \equiv \int_{SU(2)^4} \prod_{i=1}^{4} \rd g_i \prod_{1\leq i<j \leq 4} \frac{[z^{i}_{j}|g_{i}^{-1}g_{j}|z^{j}_{i}\ket^{2j_{ij}}}{(2j_{ij})!}
\ee

The asymptotics of the coherent amplitude have been studied extensively, however the actual evaluation of these amplitudes was not known.  While the asymptotic analysis is important to check the semi-classical limit, the exact evaluation could be useful to study recursion relations \cite{Bonzom:2009zd}, coarse graining moves, or to perform numerical calculations.

To obtain the exact evaluation we use a special property\footnote{See Lemma \ref{eqn_SU2_lemma}, which was first shown in \cite{Livine:2011gp}.} of the Haar measure on SU(2) to express the group integrals in (\ref{eqn_coh_3S}) as Gaussian integrals.  The price for converting group integrals into integrals over $\C^2$ are factors of $1/(J_i+1)!$ for each vertex where $J_i = \sum_j j_{ij}$.  Thus we define a generating functional as
\be \label{eqn_3S_gen}
  \mathcal{A}_{3S}(z^{i}_{j}) \equiv \sum_{\{j_{ij}\}} \mathcal{A}_{3S}(j_{ij},z^{i}_{j}) \prod_i (J_i+1)!
\ee

Remarkably, we are able to compute the Gaussian integrals in (\ref{eqn_3S_gen}), not just for the tetrahedral graph (\ref{eqn_coh_3S}) but for any {\it arbitrary graph}.  Performing the Gaussian integrals produces a determinant depending purely on the spinors.  Even more remarkably we find that the determinant can be evaluated in general and can be expressed in terms of loops of the spin network graph.  

For example, after integration and evaluating the determinant, the generating functional (\ref{eqn_3S_gen}) of the 3-simplex takes the form
\be \label{eqn_3S_loops}
  \mathcal{A}_{3S}(z^{i}_{j}) = \frac{1}{\left(1-A_{123} - A_{124} - A_{134} - A_{234} + A_{1234} - A_{1243} - A_{1324} \right)^2}
\ee
where $A_{12...p} = [z^{1}_{p}|z^{1}_{2}\ket[z^{2}_{1}|z^{2}_{3}\ket \cdots [z^{p}_{p-1}|z^{p}_{1}\ket$.  Each term in the sum is a cycle of the tetrahedron.  The signs in (\ref{eqn_3S_loops}) are determined by a convention defined in Theorem \ref{thm_amp} in terms of the orientations of the graph, which are implicit in the definition (\ref{eqn_coh_3S}).  The general result for an arbitrary graph is similar in that the sum contains a term for every set of cycles which do not share vertices or edges.  See Theorem \ref{thm_amp}. 

Expanding (\ref{eqn_3S_loops}) in a power series and comparing terms of the same homogeneity in (\ref{eqn_3S_gen}) determines a Racah formula for the evaluation of the group integrals in (\ref{eqn_coh_3S}).  This allows us to define a Racah formula for arbitrary graphs.  In the case of (\ref{eqn_coh_3S}) we find the Racah formula for the 6j symbol
\be \label{eqn_A3S_6j}
  \mathcal{A}_{3S}(j_{ij},z^{i}_{j}) = (-1)^{s} \{6j\} \prod_{a\neq i<j} \left( \frac{\prod_{i<j}[z^{a}_{i}|z^{a}_{j}\ket^{k^{a}_{ij}}}{\sqrt{(J_a+1)! \prod_{i<j} k^{a}_{ij}!}} \right)
\ee
where the integers $k^{a}_{ij} = k^{a}_{ji}$ are (by homogeneity) solutions of the equations $\sum_{j\neq i,a} k^{a}_{ij} = 2j_i$\footnote{For trivalent graphs there is a unique solution given by $k^{a}_{ij} = j_{ai} + j_{aj} - j_{ak}$ where $a,i,j,k$ are all distinct.} and
the sign $s = j_{12}+j_{13}$ comes from differing orientations compared with the conventional definition of the 6j symbol.

The expansion of the generating functional (\ref{eqn_3S_loops}) is expressed as polynomials in the fundamental holomorphic invariants 
\be
  [z_i|z_j\ket \equiv  
	\epsilon_{AB} z^{A}_i z^{B}_j = \bpm z_{i}^{0} & z_{i}^{1} \epm \bpm 0 & 1 \\ -1 & 0 \epm \bpm z_{j}^{0} \\ z_{j}^{1} \epm = z_{i}^{0}z_{j}^{1} - z_{j}^{0} \\ z_{i}^{1}
\ee
This suggests that given a set of $n$ spinors $\{z_i\}$ perhaps the most natural basis of intertwiners (invariant tensors on SU(2)) should be a monomial of invariants of the form

\be \label{eqn_k_basis}
  (z_i \| k_{ij} \ket \equiv \prod_{i<j} \frac{[z_{i}|z_{j}\ket^{k_{ij}}}{k_{ij}!}
\ee
with the conventional normalization.  In 3d this basis is proportional to the Wigner 3j symbol as it must, but in higher dimensions this gives a new basis of intertwiners.  In Chapter \ref{section_discrete_coherent} we study the algebraic and geometric properties of this basis in 4d, which is the first non-trivial case and the case of interest for Quantum Gravity.  

We find that while this basis is discrete, being labeled by the finite set of integers $\{k_{ij}\}$, it also coherent.  By this we mean that the states represent accurately the classical degrees of freedom, i.e. the space of polyhedra \cite{Conrady:2009px}.  In 4d there are six $\{k_{ij}\}$ which provide the correct amount of information to uniquely define a tetrahedron.  Compare this to the five spins of the orthonormal basis of the same space.  For this reason we refer to this basis as the discrete-coherent basis, or just the $k$-basis.

The $k$-basis (\ref{eqn_k_basis}) was first considered by Bargmann \cite{bargmann1962representations} also in the context of generating functionals.  There he also derived (\ref{eqn_3S_loops}) by other methods, but specifically for the 6j symbol.  Bargmann's work is built upon earlier work by Schwinger \cite{schwinger2001angular} who also considered generating functionals of the 6j and 9j symbols in his harmonic oscillator variables (see Section \ref{sec_su2}).  

Various other spin network generating functionals have also been developed since Bargmann's work.  These generating functionals differ in that they are either restricted to trivalent/planar graphs or they compute a slightly different spin network amplitude called the chromatic evaluation.
\footnote{In 1975 Labarthe \cite{Labarthe:1975yf} developed a set of Feynman rules for computing a 3nj-symbol generating functional for arbitrary trivalent graphs.  Then in 1998 Westbury found a closed formula for the generating functional of the chromatic evaluation on planar, trivalent graphs  \cite{westbury1998generating} and also shortly after by Schnetz \cite{schnetz1998generating}.  Finally, more recently Garoufalidis \cite{Garoufalidis:2009vi} proved the existence of an asymptotic limit of the chromatic evaluation while Costantino and Marche \cite{costantino2011generating} solved the asymptotic evaluation and also generalized to non-trivial holonomies.}

In Chapter \ref{section_discrete_coherent} we find many interesting relations between the $k$-basis, the orthonormal basis, and the coherent basis.  For example,  in four dimensions we find that while the orthonormal basis is labeled by one extra spin, the $k$-basis is labeled by two extra spins.  Summing over one of these extra spins produces an orthonormal state.  This feature also generalizes to higher dimensions.

In this way, amplitudes constructed from the $k$-basis are more fundamental than those constructed with the orthonormal basis. 

On the other hand, as mentioned above, the generating functionals allow us to evaluate the group integrals in (\ref{eqn_coh_3S}); the evaluation is given in the $k$-basis as in (\ref{eqn_A3S_6j}).  These evaluations are precisely the amplitudes of $k$-basis contractions, which are a new type of amplitude.  In other words, the $k$-basis amplitudes provide the exact evaluation of the group integrals of the coherent amplitudes.

As shown in Chapter \ref{chapter_semi} these amplitudes also give a generalization of Ponzano and Regge's asymptotic formula (\ref{eqn_Regge_action_intro}).\footnote{In 4d the amplitude of a 4-simplex in BF theory is called a 15j symbol.  Certain special cases of the large spin limit of the 15j symbol have recently been computed \cite{Bonzom:2011cy,Yu:2011is}.  However, the uniformly large spin limit analogous to (\ref{eqn_6j_asym}) has not been found until now \cite{Freidel:2013fia} which is a corollary to one of the main results of Chapter \ref{chapter_semi}, Theorem \ref{thm_twisted_action} .  Also, asymptotics of a coherent version of the 15j symbol have been computed recently \cite{Barrett:2009as} and were a major milestone in the spin foam program \cite{Conrady:2008mk,Barrett:2009gg} as this resulted in the recovery of the 4d Regge action in the semiclassical limit.  The exact evaluation of these coherent amplitudes was computed in \cite{Freidel:2012ji}, and is one of the main results of Chapter \ref{chapter_exact}.}  The asymptotic formula in four dimensions describes
 a discontinuous generalization of Regge calculus called Twisted Geometry \cite{Freidel:2010aq}.  Like Regge calculus, spacetime is discretized into a triangulation consisting of 4-simplicies each constructed out of five boundary tetrahedra.  However, in Twisted Geometry the shapes of these tetrahedra can be different when viewed from different frames.  It is only required that the areas of glued triangles match, but the shapes of the glued triangles could be wildly different.

Our action for the Twisted Geometry of a 4-simplex $\sigma$, having five boundary tetrahedra $\tau$ sharing ten triangles $t$ is of the form
\be
  S_{\mathbb{\tau}} = \sum_{t \in \sigma} A_t \Theta^{t} \qquad \Theta^{t} = \frac{1}{3} \sum_{\tau \not\supset t} \Theta^{t}_{\tau}
\ee
where $A_t$ is the area of triangle $t$ and $\Theta^{t}_{\tau}$ is the 4d dihedral angle between the two tetrahedra sharing triangle $t$ {\it when viewed from tetrahedron $\tau$}.  Thus the dihedral angles are not unique and the three $\Theta^{t}_{\tau}$ only agree when the shapes of the glued triangles are constrained to match.  Each $\Theta^{t}_{\tau}$ is a function of the boundary $\{k^{\tau}_{tt'}\}$ values

This possibility was investigated in the context of Spin Foam models by Dittrich and Ryan \cite{Dittrich:2010ey}.  This seems to suggest that this shape matching is more important in the transformation of BF theory into General Relativity than might have been previously thought.  In Section \ref{section_geo} we give a set of conditions on the boundary $k$ data which enforces this shape matching in the semi-classical limit and could be used to define a spin foam model.  See also the formulation of area-angle Regge calculus in terms of shape matching constraints \cite{Dittrich:2008va} which is also discussed in Section \ref{section_regge}.

\subsubsection{4d Quantum Gravity}

While General Relativity in four dimensions is not topological, it was discovered by Plebanski \cite{Plebanski:1977zz} that it could be formulated by a {\it constrained} four dimensional BF theory.  That is if $B$ is constrained to be of the form
\be
  B = \star(e \wedge e)
\ee  
for a real tetrad 1-form $e$ then the BF action becomes the Hilbert-Palatini action for General Relativity \cite{Peldan:1993hi}.  The crux of the spin foam program is to formulate a discretized version of these constraints can break the topological invariance of BF theory and give rise to the local degrees of freedom of gravity.  

The question of whether four dimensional quantum gravity could be formulated in an analogous way as the Ponzano-Regge model was first studied by Riesenberger \cite{Reisenberger:1997sk}, Barret and Crane \cite{Barrett:1997gw}, Baez \cite{Baez:1997zt}, Freidel and Krasnov \cite{Freidel:1998pt} and others.  The partition function (or state sum) of these models is defined on a dual cellular complex\footnote{We take the dual cellular complex $\Delta^\ast$ to be the dual of a triangulation $\Delta$ for simplicity, but more general cellular complexes can also be considered.  Further, we really only require the 2-skeleton of $\Delta^\ast$, i.e. the vertices, edges and faces.  This is a one-extra-dimensional generalization of a Feynman graph. } $\Delta^\ast$ and has the general form
\be
  Z^{\Delta^\ast}_{\text{spin foam}} = \sum_{j_f,i_e} \prod_{f} A_f(j_f) \prod_{e} A_e(j_f,i_e) \prod_{v} A_v(j_f,i_e)
\ee
where $v,e,f$ are the vertices, edges, faces of $\Delta^\ast$ and $j_f, i_e$ are combinatorial data.  The Ponzano-Regge model (\ref{eqn_PR_part}) is of this form where $A_v = \{6j\}$, $A_e = (-1)^{\sum_{f\cap e} j_f}$, and $A_f = (-1)^{2j_f}(2j_f+1)$.\footnote{Note that tetrahedra of $\Delta$ are dual to vertices of $\Delta^\ast$ and faces are dual to edges.}  Also the data $i_e$ is not necessary in this case since the Wigner 3j symbol is unique, but the set $\{i_e\}$ becomes non-trivial for four or more dimensions.  

The advantage of formulating GR as a constrained BF theory is that, instead of quantizing Plebanski's action, we can instead use the topological nature of BF theory to quantize the discretized BF action and impose the (discretized) constraints at the quantum level.  The first model of this type was proposed by Barret and Crane \cite{Barrett:1997gw}.  

While this is not a quantization of a constrained system in the sense of Dirac it is a quantization of the Gupta-Bleuler type which was realised by Livine and Speziale \cite{Livine:2007ya} and led to corrected versions of the Barret-Crane model by Engle, Livine, Pereira, Rovelli \cite{Engle:2007wy} and by Freidel, Krasnov \cite{Freidel:2007py}.

The discreteness of the boundary avoids the problem of defining the analogy of the well known Wheeler-Misner-Hawking sum over geometries \cite{Misner:1957wq}:  
\be \label{eqn_WMH}
  Z_{WMH}(g_{\mu \nu}|_{\partial M}) = \int_{\text{Metrics/Diff}} \rd \mu(g_{\mu \nu}) e^{i S_{EH}(g_{\mu \nu})}
\ee
where the measure over equivalence classes of metrics under diffeomorphisms is ill defined.  The discreteness of the Spin Foam formalism allows us to postpone this definition and construct (more or less) well-defined amplitudes.  The problem of summing over geometries is then recast into the question of ``what is to be done with the discretization?''

There are various points of view on how this should be handled.  On the one hand, the original idea from Loop Quantum Gravity was that the discrete structures are to represent quantum histories of spin networks and hence they should be summed like Feynman diagrams, in analogy with (\ref{eqn_WMH}).  

In fact, the Ponzano-Regge partition function (\ref{eqn_PR_part}) can be shown to be a Feynman amplitude of a non-local field theory over three copies of $SU(2)$ \cite{Boulatov:1992vp}:
\begin{align} \label{eqn_bulatov}
  S_{\text{Bulatov}}[\varphi] &= \frac{1}{2}\int_{\text{SU(2)}^3} \varphi(g_1,g_2,g_3) \varphi(g_3,g_2,g_1) \\
  & \hspace{60pt} + \frac{\lambda}{4!} \int_{\text{SU(2)}^6} \varphi(g_1,g_2,g_3) \varphi(g_3,g_4,g_3) \varphi(g_4,g_2,g_6) \varphi(g_6,g_5,g_1) \nonumber
\end{align}
where the field satisfies the reality condition $\overline{\varphi(g_1,g_2,g_3)} = \varphi(g_3,g_2,g_1)$ and integration is with respect to the Haar measure.  Gauge fixing the local rotation invariance amounts to a diagonal SU(2) invariance
\be
  \varphi(g_1 h, g_2 h, g_3 h) = \varphi(g_1,g_2,g_3) \qquad \forall h \in SU(2)
\ee
and so the fields can be expanded in invariant rank 3-tensors, i.e. the Wigner 3j symbol.  

The kinetic term is trivial since the 3j symbol is orthogonal while the interaction term contracts four 3j symbols into a 6j symbol.  Thus the Ponzano-Regge model (\ref{eqn_PR_part}) is reproduced for each Feynman graph.  Each Feynman amplitude, in turn corresponds to a dual 2-complex.  

One can see that the GFT sum is not only a sum over geometries, but also a sum over topologies.  One has to be careful though, since these 2-complexes are not all homeomorphic to manifolds, or even pseudo-manifolds \cite{Gurau:2010nd}.

Each 2-complex in the path integral (\ref{eqn_bulatov}) is then weighted simply by the inverse of the symmetry factor and powers of the coupling constants.   Such models can be defined in all dimensions for various group manifolds and are referred to as (Tensor) Group Field Theories \cite{Freidel:2005qe,Oriti:2006se,Oriti:2009wn}.  The problem of the continuum limit is then shifted to the renormalization of such models.  Much progress has been made in this direction \cite{Geloun:2010vj,BenGeloun:2011rc,Carrozza:2013wda}, especially with the proposal of an interesting new class of models called Coloured Group Field Theories \cite{Gurau:2009tw,Gurau:2011aq} for which the pathological pseudomanifolds are suppressed.

On the other hand, the discrete regularization of spin foam amplitudes can be viewed as part of a (background independent) Lattice Gauge theory and hence should be refined in a suitably defined sense \cite{Dittrich:2011zh,Dittrich:2012jq,Bahr:2012qj}.  While diffeomorphism symmetry is generically broken by the lattice regularization it is expected that it will be recovered in the limit of a large number of simplices.  For this reason there has been interest in studying the fixed points under coarse graining of, at first, simpler models.  These models are either dimensionally reduced or involve simpler gauge groups.

In Chapter \ref{chapter_coarse_graining} we study the behaviour of our spin network generating functional under general coarse graining moves.  We find a simple transformation of the coarse grained action in terms of lattice paths.  In section \ref{sec_ising} we show that for a square lattice, our generating functional expressed as sums over loops similar to (\ref{eqn_3S_loops}), gives precisely the partition function for the 2d Ising model.  Since the Ising model and its renormalization are very well understood this example could provide a toy model for which one could base a study of the more complicated spin foam renormalization. 








\chapter{Gravity from BF Theory}
\label{chapter_BF}


\section{BF Theory}

First let us recall the basic framework of gauge theory \cite{Oeckl:2005rh}.  

\begin{definition}
Let $P$ be a principal $G$-bundle over a smooth $n$ dimensional manifold $\mathcal{M}$.  A connection $\omega$ is defined to be a $\mathfrak{g}$-valued one-form on a local trivialization of $P$.  The curvature $F$ associated with a connection $\omega$ is defined to be
\be
  F \equiv \rd_\omega \omega = \rd \omega + [\omega \wedge \omega]
\ee
where $[\cdot,\cdot]$ is the Lie bracket and $\rd_\omega$ is the exterior covariant derivative.  A connection is said to be flat if $F=0$.
\end{definition}


A change of local trivialization is referred to as a gauge transformation and results in a transformation by $g \in G$ given by
\be \label{eqn_rot_symmetry}
  \omega \mapsto g^{-1} \omega g + g^{-1}\rd g, \qquad F \mapsto g^{-1}F g
\ee

The letter `F' in ``BF theory'' refers to the curvature while `B' refers to an additional $\mathfrak{g}$-valued $(n-2)$ form field which acts as a Lagrange multiplier enforcing flatness of the connection.  Under a gauge transformation $B \mapsto g^{-1} B g$ in a local trivialization.   The action for the theory is given by
\be \label{eqn_BF_action}
  S_{\text{BF}} \equiv \int_\mathcal{M} \: \bra B \wedge F(\omega) \ket
\ee 
where $\bra \cdot \ket$ is the Killing form on the Lie algebra.  A partition function can be formally defined by
\be \label{eqn_Z_BF}
  Z_{\text{BF}} \equiv \int DB \, D\omega \,  e^{i S_{\text{BF}}} = \int D\omega \, \delta(F(\omega))
\ee
having the equations of motion 
\be
  F =0, \qquad \rd_\omega B = 0
\ee
The first equation requires the connection $\omega$ to be flat while the second requires $B$ to be closed.  

Since the action (\ref{eqn_BF_action}) does not involve a metric it is said that BF theory is {\it topological}.  This means that the theory has no local degrees of freedom and hence only has topological degrees of freedom.  This is due to an additional gauge symmetry of the action (\ref{eqn_BF_action})
\be \label{eqn_shift_symmetry}
  \omega \mapsto \omega, \qquad B \mapsto B + \rd \phi
\ee
which follows from the second Bianchi identity $\rd_\omega F = 0$.  The field $\phi$ is any arbitrary $\mathfrak{g}$-valued $n-3$ form.  Now the local exactness of $B \approx \rd \psi$ implies that all solutions of the equation of motion are locally gauge equivalent to the trivial solution $B \approx \rd \psi - \rd \phi \approx 0$.

\section{Classical Actions for Gravity}

General Relativity describes the dynamics of the curvature of a Lorentzian connection on the tangent bundle of a four dimensional manifold.  The difference between General Relativity and a general principal bundle is the existence of a soldering one-form $e$ which relates the fibers of the bundle to the tangent space at each point.  Indeed, the existence of the so called tetrad form $e$ implies the existence of a metric which General Relativity requires.

For a Euclidean $\text{so}(4)$ or Lorentzian $\text{so}(3,1)$ connection the Lie algebra is labeled by a pair of antisymmetric internal indices $I,J = 0,1,2,3$.  Thus the spin connection and the tetrad are explicitly $\omega^{IJ}_{\mu} = - \omega^{JI}_{\mu}$ and $e^{I}_\mu$ where $\mu = 0,1,2,3$ is a spacetime index.  The tetrad defines the metric by
\be
   \eta_{IJ} \, e^{I}_{\mu} e^{J}_{\nu} = g_{\mu \nu} 
\ee
where $\eta$ is the flat spacetime metric.  We will suppress indices as much as possible though since this section is more of an overview than a rigorous derivation which can be found in \cite{Perez:2002vg}.

We will now recall how the existence of $e$ in the four dimensional BF action leads to the classical Palatini action for gravity.  By restricting the two form $B$ field in (\ref{eqn_BF_action}) to be of the form 
\be \label{eqn_simplicity_constraint}
  B = \star(e \wedge e)
\ee
where $e$ is a real one form field we obtain the Palatini action for General Relativity
\be \label{eqn_palatini}
  S_{\text{Palatini}} \equiv \int_\mathcal{M} \text{tr} \left( \star(e \wedge e) \wedge F(\omega) \right) 
\ee
Varying with respect to the tetrad produces the Einstein equations for (torsionful) curvature $F$.  However, the equation of motion obtained by varying with respect to the connection $\omega$ implies the zero torsion condition which thus restricts $\omega$ to be the Levi-Civita connection.  Hence the Palatini action produces the Einstein equations for the Levi-Civita connection at the classical level.  Note that as opposed to the Einstein-Hilbert action, the Palatini action treats the tetrad $e$ and connection $\omega$ as independent variables.

While a path integral quantization of the Palatini action has yet to be well-defined, and not for lack of trying, the BF action does have a well understood quantization.  The spin foam approach to quantum gravity is to impose the constraints (\ref{eqn_simplicity_constraint}) on the well-defined BF path integral {\it after} quantizing.  This approach was pioneered by Riesenberger \cite{Reisenberger:1997sk}, Barret, Crane \cite{Barrett:1997gw}, Baez \cite{Baez:1997zt}, Freidel, Krasnov \cite{Freidel:1998pt} and others.

\section{Discretized BF Theory}
\label{section_disc_BF}

Since BF theory is topological we can describe the continuous theory exactly by a discrete set of data, essentially encoding the topological information of the manifold.  This discrete data can be defined with respect to a triangulation of $\mathcal{M}$, which is just a simplicial complex $\Delta$ homeomorphic to $\mathcal{M}$.\footnote{We can also choose to specify data on much more general cell complexes.  We note that the results developed in this thesis are general enough to apply to these more general discrete structures, however for simplicity we will discuss only simplicial complexes. }

In four dimensions this simplicial complex consists of 4-simplices, tetrahedra, triangles, lines and points.  The dual complex $\Delta^\ast$ is constructed by placing a vertex at the center of each 4-simplex, connecting the vertices by edges, the edges form closed faces, etc.  Each $n$-simplex dual to a vertex in $\Delta^\ast$ approximates a flat neighbourhood of $\mathcal{M}$.  The parallel transport $g_e$ along an edge $e \in \Delta^\ast$ represents a change of local frame.



For each edge $e \in \Delta^\ast$ the parallel transport $g_e$ is related to the connection $\omega$ by the path ordered exponential
\be \label{eqn_parallel_transport}
  g_e \equiv \mathcal{P}\exp \left( \int_{\gamma_e} \omega \right)
\ee
where the path $\gamma_e : [0,1] \rightarrow \mathcal{M}$ parameterizes the edge $e$ in $\mathcal{M}$ which of course depends on the local trivialization.  This parallel transport is of course and element of the gauge group $G$, which we will take to be $\text{spin}(4)$.  If the path $\gamma_e$ is closed the parallel transport is referred to as a holonomy.  Under a gauge transformation
\be
  g_e \mapsto g^{-1}(\gamma_e(0)) \, g_e \, g(\gamma_e(1))
\ee
and so holonomies transform by conjugation.  Similarly, for each face $f$ of $\delta^\ast$ we can define a lie algebra element $X \in \text{so}(4)$ by integrating $B$ over $f$ as in
\be
  B_f \equiv \int_f \rd \sigma_f \, B 
\ee
where $\sigma_f$ is the area form on $f$.

The BF action in the discrete form is defined to be
\be \label{eqn_Z_BF_disc}
  Z^{\Delta^\ast}_{BF} 
  = \int \prod_{e\in \Delta^\ast} \rd g_e \, \prod_{f \in \Delta^\ast} \delta(G_f) 
\ee
in analogy with (\ref{eqn_Z_BF}).  Here $G_f \equiv \vec{\prod}_{e \in f} g_e$ is the holonomy around the face $f$.  The triviality of $G_f$ for each face in $\Delta^\ast$ is the discrete analog of vanishing curvature.

It is important to note that depending on the complex $\Delta^\ast$ there could be redundant delta functions on the RHS of (\ref{eqn_Z_BF_disc}).  For example take a tetrahedron and assume that the holonomy around three of the four faces is the identity.  Then the holonomy of the fourth face is automatically constrained to also be the identity.  This results in a factor $\delta(\one)$ which is divergent and implies that (\ref{eqn_Z_BF_disc}) is not always well defined.  

In analogy with loop divergences in Feynman graphs, these divergences intuitively (but not always) occur when there are closed two dimensional surfaces, or bubbles, in $\Delta^\ast$, and hence they are referred to as bubble divergences.  The degree of divergence can be determined for special complexes by counting these bubbles \cite{Freidel:2009hd}, or by analysing the cellular cohomology of a general cellular complex, see \cite{Bonzom:2010ar,Bonzom:2010zh,Bonzom:2011br}.

These bubble divergences are actually a result of the BF gauge symmetry (\ref{eqn_shift_symmetry}).  A gauge fixing procedure has been developed in the three dimensional case \cite{Freidel:2004vi} and can be extended to higher dimensional BF theory.  In fact it can be shown that the two gauge symmetries: local rotations (\ref{eqn_rot_symmetry}) and shift in $B$ (\ref{eqn_shift_symmetry}) are equivalent to diffeomorphism symmetry in three dimensions \cite{Freidel:2002dw}.

The crux of the spin foam program is that the simplicity constraints break this shift symmetry and thus give rise to non-topological models.  It is an open problem to then determine the residual gauge symmetry and its relation to diffeomorphisms in four dimensions.

\subsection{Vertex Amplitudes}

Let us first expand the BF partition function (\ref{eqn_Z_BF_disc}) into modes.  For a compact group, such as spin(4) this decomposition follows from the Peter-Weyl theorem \cite{faraut2008analysis}.  The Peter-Weyl theorem states that matrix elements of the unitary irreducible representations are dense in $L^2(G)$.  These representations, which we denote by $\rho$, are finite dimensional and countable.  Using the isomorphism $\text{spin}(4) \cong \text{SU}(2) \times \text{SU}(2)$ we will take $\rho = (j_L,j_R)$, but for now we will just write $\rho$.

The trace of a group element in a particular representation is called the character $\chi_\rho(g) \equiv \text{tr}_\rho(g)$.  For central functions $f(g) = f(hgh^{-1})$ for all $h \in G$ the characters $\chi_\rho(g) \equiv \text{tr}_\rho(g)$ form a basis.  The delta function is a central function so we can write
\be
  \delta(g) = \sum_{\rho} \text{dim}(\rho) \chi_{\rho}(g)
\ee
where $\text{dim}(\rho)$ is the dimension of the finite dimensional representation and it follows from the orthonormality of the characters with respect to the Haar measure.

The BF partition function (\ref{eqn_Z_BF_disc}) therefore becomes
\be \label{eqn_Z_BF_modes}
  Z^{\Delta^\ast}_{BF}  
  = \sum_{\rho_f} \int \prod_{e\in \Delta^\ast} \rd g_e \, \prod_{f \in \Delta^\ast} \text{dim}(\rho_f) \chi_{\rho_f}(G_f) 
\ee
where recall $G_f = g_{e_1} g_{e_2} \cdots g_{e_n}$ is the directed product of group elements belonging to the edges $e_1,...,e_n$ of the face $f$.  Thus each face of $\Delta^\ast$ is assigned a unitary irreducible representation $\rho_f$ and we sum over all such representations for each face.

If $\Delta$ is a four dimensional simplicial complex then each (dual) edge $e$ of $\Delta^\ast$ corresponds to a tetrahedron; hence it contains four faces and the four representations $\rho_f$ correspond to their areas.  

The group element $g_e$ on the dual edge $e$ in (\ref{eqn_Z_BF_modes}) acts on the space
\be \label{eqn_tensor_prod_space}
  V^{\rho_{1}} \otimes V^{\rho_{2}} \otimes V^{\rho_{3}} \otimes V^{\rho_{4}}  
\ee
where $\rho_1, \rho_2, \rho_3, \rho_4$ are the representations on the four faces of $e$ and $V^{\rho}$ is the finite dimensional representation space.  Therefore we are free to insert a resolution of identity of (\ref{eqn_tensor_prod_space}) into the partition function (\ref{eqn_Z_BF_modes}) on each edge.  The integration over the Haar measure $\rd g_e$ projects this resolution of identity on (\ref{eqn_tensor_prod_space}) to a resolution of identity on the invariant subspace
\be  \label{eqn_intertwiner_space}
  \cH_{\rho_1,\rho_2,\rho_3,\rho_4} \equiv \text{Inv}_G \left(V^{\rho_{1}} \otimes V^{\rho_{2}} \otimes V^{\rho_{3}} \otimes V^{\rho_{4}} \right)
\ee

Elements of (\ref{eqn_intertwiner_space}) are referred to as intertwiners since they intertwine the representations on (\ref{eqn_tensor_prod_space}) with the trivial representation.  Explicitly we can write
\be
  \int \rd g \, T^{\rho_1}(g) \otimes T^{\rho_2}(g) \otimes T^{\rho_3}(g) \otimes T^{\rho_4}(g) = \sum_{\iota} \|\iota \ket_{\rho_i} {}_{\rho_i}\bra \iota \|
\ee
where $\iota$ labels a basis of the finite dimensional intertwiner space (\ref{eqn_intertwiner_space}).  Note that we are free to choose any basis and just like the representation labels $\rho_f$ the intertwiner basis labels $\iota_e$ will also have a physical interpretation: Different bases will correspond to different physical quantities.  The study of various bases of the intertwiner space (\ref{eqn_intertwiner_space}) will be a main focus of this thesis.

The effect of inserting the intertwiner resolution of identity is that it splits all the dual edges in $\Delta^\ast$ in half, producing an invariant contraction at each vertex which is called a vertex amplitude.  The partition function is then a product of these vertex amplitudes 
\be \label{eqn_Z_BF_vertex_amps}
  Z^{\Delta^\ast}_{BF}  
  = \sum_{\rho_f} \sum_{\iota_e} \, \prod_{f \in \Delta^\ast} \text{dim}(\rho_f) \, \prod_{v \in \Delta^\ast} A_{v}(\rho_f, \iota_e)
\ee
and the vertex amplitude is defined by
\be
  A_{v}(\rho_f, \iota_e) \equiv \underset{e \in v}{\corner} \|\iota_{e} \ket_{\rho_f},
\ee
where we use $\corner$ to denote contraction of the representation arguments in an understood pattern, in this case a 4-simplex. 
We note that if $\Delta$ is not a simplicial complex then other patterns of contraction can be used to define various different vertex amplitudes so long as all the strands are contracted.

\section{Regge Calculus}
\label{section_regge}

Regge Calculus is a discrete approximation of General Relativity by a piece-wise flat simplicial manifold $\Delta$.  If the manifold is curved, then the curvature is concentrated on the co-dimension two simplicies.  For example imagine triangles meeting at a vertex in two dimensions.  If the triangles approximate a curved 2d surface then when they are laid flat the they will possess a deficit angle. See Figure \ref{fig_regge}.

\begin{figure} 
  \centering
    \includegraphics[width=0.25\textwidth]{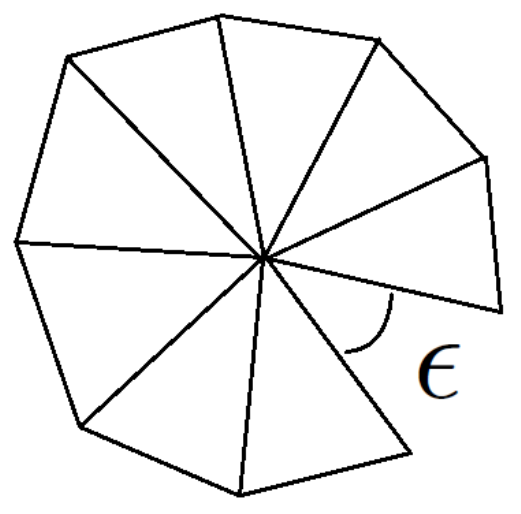}
   \caption{In 2d the curvature of a triangulation is concentrated at the vertices.  When the triangles surrounding a vertex are laid flat they produce a deficit angle $\epsilon$ which is a measure of the curvature at that point. }  \label{fig_regge}
\end{figure}

To be more precise the sum of the dihedral angles of co-dimension one simplices belonging to a co-dimension two simplex fails to be $2\pi$.  Let us specialize to the 4d Riemannian case in which the codimension 2, 1 and 0 simplices are triangles $t$, tetrahedra $\tau$ and 4-simplices $\sigma$.  The deficit angle at a triangle $t$ is then
\be \label{eqn_epsilon_dihedral}
  \epsilon_{t} = 2\pi - \sum_{\sigma \supset t} \Theta^{\sigma}_{t} 
\ee
where $\Theta^{\sigma}_{t}$ is the 4d dihedral angle between the pair of tetrahedra in $\sigma$ sharing the triangle $t$.  The Regge action for a four dimensional Riemannian manifold is defined by
\be
  S_{\text{Regge}} = \sum_{t} A_t \epsilon_t = 2\pi \sum_t A_t - \sum_\sigma \sum_{t \in \sigma} A_t \Theta^{\sigma}_{t} 
\ee 
where $A_t$ is the area of triangle $t$.  The continuum limit of this action coincides with the Einstein-Hilbert action \cite{Regge:1961px}.  


The Regge action is a second order action in a similar way that the Einstein-Hilbert action is second order, i.e. contains two derivatives of the frame field.  This is a result of imposing metric compatibility of the connection into the action.  The Palatini action (\ref{eqn_palatini}) on the other hand, is a first order action in that it has only first derivatives and treats the connection as an independent variable.  

The geometry of the triangulation is completely determined by the length variables $l_e$.  Therefore the areas $A_t$ and the dihedral angles $\Theta_t$ can both be considered as functions of $l_e$ and so the Regge action is also only a function of $l_e$.  Taking the areas and dihedral angles as independent variables gives a first order formulation of Regge Calculus as derived by Barrett \cite{Barrett:1994nn}.  This, however, requires imposing a constraint for each 4-simplex enforcing the vanishing of the Gram matrix \cite{Bahr:2009qd}.

Alternatively, and more relevant to our asymptotic results in Chapter \ref{chapter_semi}, one can use the 3d dihedral angles and the areas as configuration variables \cite{Dittrich:2008va}.  For a 4-simplex there are thirty 3d dihedral angles and 10 areas so we require thirty independent constraints.  These constraints can be taken to be the closure of the five tetrahedra, and the matching of the 2d interior angles of all of the triangles.

Label the tetrahedra of the 4-simplex by $\{i,j,k,l,m\} \in \{1,...,5\}$ so $(ij)$ is the triangle shared by $i$ and $j$ and $(ji)(ik)$ is the edge shared by the two triangles.  Let $\theta^{a}_{ij}$ be the 3d dihedral angle in tetrahedron $a$ between the triangles sharing $i$ and $j$ with the convention $\theta^{a}_{ii} = 0$.  Let $\alpha^{ai}_{jk}$ be the 2d interior angle, in tetrahedron $a$, in triangle $(ai)$, between edges $(ia)(aj)$ and $(ia)(ak)$.  Then $\alpha^{ai}_{jk}$ can be expressed in terms of $\theta^{a}_{ij}$ \cite{Dittrich:2008va}:
\be \label{eqn_2d_3d_dihedral}
  \cos \alpha^{ai}_{jk} = \frac{\cos \theta^{a}_{jk} - \cos \theta^{a}_{ij} \cos \theta^{a}_{ik}}{ \sin \theta^{a}_{ij} \sin \theta^{a}_{ik}}.
\ee

Defining the constraints:
\be
  \mathcal{C}^{ab}_{jk} \equiv \cos \alpha^{ab}_{jk} - \cos \alpha^{ba}_{kj} \qquad \mathcal{N}^{a}_{ij} \equiv A_{ai} - \sum_{(j \neq a,i)} A_{aj} \cos \theta^{a}_{ij}
\ee
The first ensures that the angles of glued triangles are of the same shape.  The second is simply the scalar product of the closure constraint.  There are 30 of the first and 20 of the second but together there are only 30 independent constraints \cite{Dittrich:2008va}.

This leads to a formulation of Regge calculus in terms of the areas and 3d dihedral angles
\be \label{eqn_Regge_3d_angles}
  S[A_{t},\theta^{\tau}_{t},\lambda^{\tau}_{t},\mu^{\sigma}_{ee'}] = \sum_{t} A_t \epsilon_t(\theta) 
  + \sum_{\tau} \sum_{t \in \tau} \lambda^{\tau}_{t} \mathcal{N}^{\tau}_{t} + \sum_{\sigma} \sum_{e,e' \in \sigma} \mu^{\sigma}_{ee'} \mathcal{C}^{\sigma}_{ee'}
\ee
where the deficit angles $\epsilon_t(\theta)$ are given in terms of the 4d dihedral angles (\ref{eqn_epsilon_dihedral}), which can be expressed in terms of the 3d dihedral angles by a formula similar to (\ref{eqn_2d_3d_dihedral}).

In \cite{Dittrich:2008va} it is postulated that the first two terms in the action correspond to the BF action and that the third term is the analog of Plebanski's constraints.  Hence the shape matching constraints $\mathcal{C}^{\sigma}_{ee'}$ with the closure constraint might be the proper discrete analog of the simplicity constraints.

This is relevant because in Chapter \ref{chapter_semi} we derive a semi-classical action for a 4-simplex BF amplitude in Theorem \ref{thm_twisted_action}.  We find an action possessing the first two terms in (\ref{eqn_Regge_3d_angles}) where the 4d dihedral angles are functions of the 3d dihedral angles and the constraints $\mathcal{C}^{\sigma}_{ee'}$ are not satisfied.  Since the shapes of the triangles do not match, this is action of a recently proposed generalization of BF theory called Twisted Geometry \cite{Freidel:2010aq,Freidel:2013bfa}.

In section (\ref{section_geo}) we discuss  the characterization of these geometricity constraints and we give a condition on the boundary data of the amplitude which ensures the vanishing of the shape matching constraints $\mathcal{C}^{\sigma}_{ee'}$ in the semi-classical limit.   It would be straightforward to implement these constraints into the BF partition function and this could be an interesting spin foam model to consider in future studies.

\chapter{BF Amplitudes}
\label{chapter_BF_amp}
\section{The Group SU(2)}
\label{sec_su2}

The group $\text{SU}(2)$ consists of unitary, $2\times2$ matrices with unit determinant and can be parameterized as follows  
\be \label{eqn_g_fund}
  g = \bpm \alpha & -\overline{\beta} \\ \beta & \alpha \epm, \qquad |\alpha|^2+|\beta|^2 = 1
\ee
The group operation is given by matrix multiplication for which there are a countable number of unitary irreducible representations labeled by half integers $j = 1/2, 1, 3/2,...$ referred to as spins.  The fundamental representation on $|z\ket \in \C^2$ is given by left multiplication $g|z\ket$ and is called the spinor representation.  We will use a bra-ket notation to denote a spinor and its contravariant conjugate by
\be
  |z\ket = \bpm z_0 \\ z_1 \epm, \qquad |z] = \bpm -\bar{z}_{1} \\ \bar{z}_0 \epm
\ee
We will sometimes refer to a spinor by simply $z$ and we will use $\check{z}$ to refer to $|z]$.
We should mention that this notation differs from other conventions for spinors which also use square and angle brackets.\footnote{We note that in the index notation $|z\ket = z^A$ where $A=0,1$ and $\bra z| = \delta_{A\bar{A}} \bar{z}^{\bar{A}}$, $|z] = \delta^{AB} \epsilon_{B\bar{A}} \bar{z}^{\bar{A}}$, $[z| = \epsilon_{AB} z^{A}$.  Thus the two invariants are $\bra z | w \ket = \delta_{\bar{A}B} \bar{z}^{\bar{A}} w^B$ and $[z|w\ket = \epsilon_{AB}z^{A}w^{B}$.} 

Let $V^j$ be the vector space of holomorphic polynomials on $\C^2$ which are homogeneous of degree $2j$.  
Then following action of $\text{SU}(2)$ on $P \in V^j$ 
\be \label{eqn_SU2_rep}
  T^{j}(g) P(z) = P(g^{-1}(z))
\ee
defines the $2j+1$ dimensional representation of spin $j$.  This representation is unitary with respect to the Hermitian inner product 
\be \label{barg_in_prod}
  \bra P | P' \ket = \int_{\C^2} \overline{P(z)} P'(z) \rd\mu(z), \qquad \rd\mu(z) = \pi^{-2} e^{-\bra z | z \ket}  \rd^{4}z
\ee 
for two functions $P,P' \in V^j$ and $\rd^{4}z$ is the Lebesgue measure on $\C^2$.  This is the well known Bargmann-Fock inner product \cite{bargmann1962representations,schwinger2001angular} which was introduced in the Loop Quantum Gravity context in \cite{Borja:2010rc, Livine:2011gp}.  Note that we will use a round bracket $(z|f\ket$ to denote the holomorphic representation of a state in this Hilbert space.

The standard orthonormal basis on $V^j$ with respect to this inner product is given by
\be \label{eqn_ortho_holo}
   e^{j}_{m}(z) \equiv (z|j m \ket = \frac{z^{j+m}_0 z^{j-m}_1}{\sqrt{(j+m)!(j-m)!}}, \qquad 
\ee
which is simply the holomorphic representaation.  Indeed, we can define the differential operators
\be  
  J_+ = z_0 \frac{\partial}{\partial z_1}, \quad J_- = z_1 \frac{\partial}{\partial z_0}, \quad J_3 = \frac{1}{2}\left( z_0 \frac{\partial}{\partial z_0} - z_1 \frac{\partial}{\partial z_1} \right)
\ee
and see that they satisfy the commutation relations
\be
  [J_3,J_\pm] = \pm J_\pm, \qquad [J_+,J_-] = 2J_3
\ee
We also have the linear Casimir operator 
\be  \label{eqn_E_casimir}
  E = \frac{1}{2}\left( z_0 \frac{\partial}{\partial z_0} + z_1 \frac{\partial}{\partial z_1} \right)
\ee
which commutes with the other operators $[E,J_3] = [E,J_\pm] = 0$ and is related to the quadratic Casimir by
\be
  J^2 \equiv \frac{1}{2}(J_+J_- + J_-J_+) + J_{3}^2 = E(E+1)
\ee
and the usual eigenvalue equations
\be
  J_3 |j m \ket = m |j m \ket, \qquad E |j m \ket = j |j m \ket
\ee
which can be found by acting on the basis (\ref{eqn_ortho_holo}). This representation is exactly the Schwinger representation \cite{schwinger2001angular} in terms of a pair of decoupled harmonic oscillators
\be \label{eqn_schwinger}
  [a,a^\dagger] = [b,b^\dagger] = 1, \qquad [a,b] = [a,b^\dagger] = [a^\dagger,b] = [a^\dagger,b^\dagger] = 0
\ee
by the quantization
\be  
  a^\dagger \mapsto z_0 \quad b^\dagger \mapsto z_1 \qquad a \mapsto \frac{\partial}{\partial z_0} \quad b \mapsto \frac{\partial}{\partial z_1}
\ee
This representation in terms of harmonic oscillators not only has a close connection with coherent states, but also illuminates a U(N) representation on the space of SU(2) intertwiners as first pointed out by Girelli and Livine \cite{Girelli:2005ii} and led to the so called U(N) formalism for coherent intertwiners \cite{Freidel:2009ck,Freidel:2010tt}.

\section{Orthonormal Intertwiners}

An intertwiner is a map which is invariant under the action of a group.  The classic example of an intertwining map is given by the Clebsch-Gordan coefficients of SU(2) which map the tensor product of two representations of to a direct sum of irreducible representations.  In this section we review the construction of the Clebsch-Gordan map in the spinor representation.  We emphasize the role of the existence of a holomorphic spinor invariant as the key to this decomposition.

We define the Wigner 3j symbol from the CG coefficients and with it we construct the canonical orthonormal basis intertwiners.  In the 4-valent case this leads to three possible bases $S$, $T$, and $U$ which are an allusion to the three Mandelstam channels.  Finally, we contract five 4-valent states into a 4-simplex amplitude called the 15j symbol which is the building block of the Ooguri model for spin(4) BF theory in 4d.

\subsection{The Clebsch-Gordan Intertwiner}
\label{section_CG_int}

Consider the tensor product of two representations $T^{j_1} \otimes T^{j_2}$ with the diagonal action on holomorphic polynomials
\be \label{eqn_diag_action}
  (T^{j_1}\otimes T^{j_2})(g) P(z_1,z_2) = P(g^{-1}z_1,g^{-1}z_2)
\ee
The canonical basis of $V^{j_1} \otimes V^{j_2}$ is given by $e^{j_1}_{m_1}(z_1) e^{j_2}_{m_2}(z_2)$.  However we can construct another basis due to the existence of the holomorphic invariant
\be \label{eqn_holo_inv_z12}
  [z_1|z_2 \ket = \alpha_1\beta_2 - \alpha_2\beta_1
\ee
Indeed, consider the set of holomorphic polynomials which are divisible by $[z_1|z_2 \ket$
\be \label{eqn_inv_CG_subspace}
  [z_1|z_2\ket^{j_1+j_2-j} Q^{j}(z_1,z_2)
\ee
where $Q^{j}$ is a polynomial homogeneous of degree $j_1-j_2+j$ in $z_1$ and $-j_1+j_2+j$ in $z_2$.  The subspaces spanned by the polynomials (\ref{eqn_inv_CG_subspace}) are invariant under the action (\ref{eqn_diag_action}) for each $j$ since (\ref{eqn_holo_inv_z12}) is invariant.  Moreover it is easy to see that polynomials (\ref{eqn_inv_CG_subspace}) of different $j$ are orthogonal which leads to the decomposition
\be \label{eqn_chlebsch}
  V^{j_1} \otimes V^{j_2} \cong \mathop{\oplus}_{j=|j_1-j_2|}^{j_1+j_2} V^j 
\ee 
which is the well known Clebsch-Gordan series.  It is also clear that each representation on the RHS of (\ref{eqn_chlebsch}) appears only once.  The factoring (\ref{eqn_inv_CG_subspace}) of invariants will be a key idea for the rest of this thesis.

Note that $j = j(j_1,j_2)$ since there are restrictions
\be \label{eqn_CG_conditions}
  |j_1-j_2| \leq j \leq j_1+j_2, \qquad j_1+j_2+j \in \N
\ee
which are known as the Clebsch-Gordan (or triangle) conditions.  They are equivalent to the existence of three positive integers $k_{12},k_{13},k_{23}$ such that
\be \label{eqn_CG_ks}
  k_{12} = j_1+j_2-j, \quad k_{13} = j_1+j_2-j \quad k_{23} = -j_1+j_2+j
\ee
We will later generalize (\ref{eqn_CG_ks}) to higher dimensions in the coming chapters.  Furthermore we will extend the key insight (\ref{eqn_inv_CG_subspace}) to characterize higher dimensional intertwiners in terms of the divisibility by the fundamental invariants.

It is now a straightforward, but tedious, task \cite{vilenkin1991representation} to construct the canonical orthonormal basis $e^{j(j_1,j_2)}_{m(m_1,m_2)}(z_1,z_2)$ on the RHS of (\ref{eqn_chlebsch}) from the highest weights of (\ref{eqn_inv_CG_subspace}).  It also follows by construction that
\be
  T^{j}(g) e^{j(j_1,j_2)}_{m(m_1,m_2)}(z_1,z_2) = e^{j(j_1,j_2)}_{m(m_1,m_2)}(g^{-1}z_1,g^{-1}z_2)
\ee

The Clebsch-Gordan coefficients are then defined to be the matrix elements of this change of basis
\be  \label{eqn_clebsch_basis_map}
  e^{j_1}_{m_1}(z_1) \, e^{j_2}_{m_2}(z_2) = \sum_{j=|j_1-j_2|}^{j_1+j_2}\sum_{m=-j}^{j} C^{j_1, j_2, j}_{m_1, m_2, m} \, e^{j(j_1,j_2)}_{m(m_1,m_2)}(z_1,z_2)
\ee  
and we can explicitly compute the Clebsch-Gordan coefficients by the scalar product (\ref{barg_in_prod}) as
\be \label{eqn_clebsch_inner_def}
  C^{j_1, j_2, j}_{m_1, m_2, m} = \bra e^{j(j_1,j_2)}_{m(m_1,m_2)} | e^{j_1}_{m_1} \, e^{j_2}_{m_2} \ket
\ee
Finally we note that the Clebsch-Gordan map
\be \label{eqn_CG_map}
  C^{j_1,j_2,j} : V^{j_1} \otimes V^{j_2} \rightarrow V^{j}
\ee
defined by the action (\ref{eqn_clebsch_basis_map}), is a linear isomorphism between two orthonormal bases of the same space and is hence unitary.  It also follows that 
\be
  C^{j_1,j_2,j} (T^{j_1}\otimes T^{j_2}) = T^{j} C^{j_1,j_2,j} 
\ee 
which shows that $C^{j_1,j_2,j}$ intertwines the two representations $T^{j_1} \otimes T^{j_2}$ and $T^{j}$.

This was an admittedly messy example of an intertwining map, but it introduced two key ideas (\ref{eqn_inv_CG_subspace}) and (\ref{eqn_CG_ks}) in a hopefully familiar context.  We will next look at the Wigner 3j symbol which is more symmetric and then generalize these intertwining maps from three to $n$ tensor products.

\subsubsection{The Wigner 3j Symbol}

In the previous section we constructed the Clebsch-Gordan coefficients and found that they defined an intertwining map.  Note that instead of the map (\ref{eqn_CG_map}) into $V^j$ we could have instead considered the invariant linear functionals
\be \label{eqn_3_valent_inter}
  V^{j_1} \otimes V^{j_2} \otimes (V^{j})^\ast \rightarrow \C 
\ee
where $(V^{j})^\ast$ is the canonical dual of $V^{j}$.  Moreover, since a space and its dual are isomorphic we lose no generality in considering the more symmetric set of maps
\be \label{eqn_3_inter_space}
  \cH_{j_1,j_2,j_3} \equiv \text{Inv}_{\text{SU}(2)}\left[V^{j_1} \otimes V^{j_2} \otimes V^{j_3} \right].
\ee
where $\text{Inv}_{\text{SU}(2)}$ indicates the SU(2) invariant subspace.  This is a Hilbert space with the inner product inherited from (\ref{barg_in_prod}) and each invariant vector intertwines the tensor product with the trivial representation.  

The Clebsch-Gordan map we just constructed intertwines three representations and is unique.  Indeed, since (\ref{eqn_CG_map}) is a unitary isomorphism and each representation in (\ref{eqn_chlebsch}) appears only once it follows that the Clebsch-Gordan map is the only invariant tensor in (\ref{eqn_3_valent_inter}) up to scaling.  Moreover, by unitarity of the map (\ref{eqn_CG_map}) the coefficients are also orthonormal
\be \label{eqn_CG_orthog}
  \sum_{m_1,m_2} C^{j_1, j_2, j}_{m_1, m_2, m} C^{j_1, j_2, j'}_{m_1, m_2, m'} = \delta_{j,j'} \delta_{m,m'}
\ee

Finally we can define a more symmetric version of the Clebsch-Gordan coefficient known as the Wigner 3j symbol which is defined simply by a rescaling and change of phase \cite{yutsis1962mathematical}
\be \label{eqn_3j_CG}
  (-1)^{j-m}\threej{j_1}{j_2}{j}{m_1}{m_2}{-m} \equiv \frac{(-1)^{j_1-j_2+j}}{\sqrt{2j+1}} C^{j_1, j_2, j}_{m_1, m_2, m} 
\ee

In summary, a 3-valent intertwiner is uniquely determined by its three spins which must satisfy the triangle conditions (\ref{eqn_CG_conditions}).  For more than three spins the space of invariant vectors has dimension greater than one and thus requires the specification of a basis.

\subsection{Edge Orientation, Vertex Ordering, and Contraction}
\label{sec_contraction}

The graphical representation of the $3j$ symbol is given by a trivalent node.  The three legs are labeled by the three spins, the ordering of which, affects the phase.  Indeed, while the $3j$ symbol is unique it possesses many symmetries \cite{varshalovich1988quantum}, the most important of which can be summarized by
\begin{itemize}
  \item even permutations of the columns are symmetric 
	\item odd permutations of the columns change the phase by the sign $(-1)^{j_1+j_2+j_3}$ 
	\item change of sign $m_i \mapsto -m_i$ $\forall i=1,2,3$ change the phase by the sign $(-1)^{j_1+j_2+j_3}$
\end{itemize}
The invariance under cyclic permutations of the spins $j_1, j_2, j_3$ implies that there are two possible orderings of the three spins in (\ref{eqn_3j_CG}) which are referred to as clockwise or counterclockwise with reference to the planar drawing of the trivalent node.  Therefore every amplitude defined by the contraction of trivalent intertwiners requires a label (usually plus/minus for ccw/cw) to specify the ordering which affects the overall phase.

Furthermore, we must assign an orientation of the edges to distinguish between $V^j$ and its dual $(V^j)^\ast$ in (\ref{eqn_3_inter_space}).  When contracting indicies we will always take one index in $V^j$ and the other in the dual $(V^j)^\ast$ so that there is a definite direction along the edge.  An edge outgoing from a node will indicate an index in $V^j$ while an incoming edge belongs to the dual $(V^j)^\ast$.  This will also affect the overall phase of an amplitude.      

Let us see how this affects the $3j$ symbol.  Observe that in the fundamental representation (\ref{eqn_g_fund}) the complex conjugate of a group element can be obtained  by
\be
 g \epsilon  = \epsilon \bar{g} \qquad \epsilon \equiv \bpm 0 & 1 \\  -1 & 0 \epm 
\ee
Further, since $\epsilon \in \text{SU}(2)$ it follows that
\be
  T^j(g) T^j(\epsilon) = T^j(\epsilon) (T^j(g))^\ast
\ee
where $(T^j(g))^\ast$ is the contragredient representation acting in the dual space $(V^{j})^\ast$.  Hence $T^j(\epsilon) : V^j \rightarrow (V^j)^\ast$ and a basis of $(V^{j})^\ast$ is given by
\be \label{eqn_dual_basis}
  e^{j}_{m}(\epsilon^{-1}z) = e^{j}_{m}\bpm -z_1 \\ z_0 \epm = (-1)^{j-m} e^{j}_{-m}(z)
\ee
Hence contraction of $V^j$ and $(V^j)^\ast$ can be achieved by multiplying by the sign $(-1)^{j-m}$, setting $m \mapsto -m$ on the dual leg and summing over $m$.  Observe that this is equivalent to representing one leg with $z$ and the dual leg with $\check{z}$ (recall $|\check{z}\ket \equiv |z]$) and integrating over $\rd \mu(z)$. 

In this way, if we denote the 3j symbol by $\|j_1, j_2, j_3\ket$ as in
\be \label{eqn_3j_state}
  (z_1,z_2,z_3\|j_1, j_2, j_3\ket \equiv \sum_{m_i} \threej{j_1}{j_2}{j_3}{m_1}{m_2}{m_3} \prod_i e^{j_i}_{m_i} (z_i)
\ee
the vertex ordering is explicit and the legs are either incoming or outgoing if $j_i$ is represented with $z_i$ or $\check{z}_i$.  The full contraction of two 3j symbols is then the scalar product
\begin{align}
  \bra j_1, j_2, j_3 \| j_3, j_2, j_1 \ket 
	&= \int \prod_{i=1}^{3} \rd \mu(z_i) \, (\check{z}_1,\check{z}_2,\check{z}_3\|j_1, j_2, j_3\ket (z_3,z_2,z_1\|j_3, j_2, j_1\ket \\
	&= \sum_{m_i} (-1)^{\sum_i j_i-m_i} \threej{j_1}{j_2}{j_3}{-m_1}{-m_2}{-m_3} \threej{j_3}{j_2}{j_1}{m_3}{m_2}{m_1} \nonumber \\
	&= \frac{1}{2j_3+1} \sum_{m_i}  C^{j_1, j_2, j_3}_{m_1, m_2, m_3} C^{j_1, j_2, j_3}_{m_1, m_2, m_3} = 1
\end{align}
where we used (\ref{eqn_CG_orthog}), $\sum_{m_3} = 2j_3+1$ and the basic symmetry properties of the 3j symbol. 

Although these edge orientations and vertex orderings are necessary to properly define the phases of amplitudes, these choices are a priori arbitrary and hence we will often omit them from diagrams.  The details of the contraction and orientations will be assumed implicit in the definition of the amplitude.  The determination of these signs will unfortunately be one of the most laborious aspects of later chapters, and will play a key role in connecting with the 2d Ising model in Section \ref{sec_ising}.

There are other conventions for spin network amplitudes such as Kauffman and Lins \cite{kauffman2013knots} which define away these signs.  On the contrary, we find that the inclusion of these signs provides a generality that allows for simpler, more powerful, expressions for spin network generating functionals than in these other conventions (see Chapter \ref{chapter_exact}).

Alternatively one could define a phase convention based on semiclassical considerations.  In \cite{Barrett:2009gg} a specific phase was postulated to give precisely the Regge action in the asymptotics of a coherent intertwiner contraction and was part of the definition of what they called a ``Regge state''.  The counterpart of this phase for a basis of intertwiners which we introduced in Chapter \ref{section_discrete_coherent} is computed explicitly in Theorem \ref{thm_twisted_action}.



\subsection{The Orthogonal Basis}

Let us now extend the intertwiner Hilbert space (\ref{eqn_3_inter_space}) to the $n$-valent case
\be \label{eqn_inter_space}
  \cH_{j_1,...,j_n} \equiv \text{Inv}_{\text{SU}(2)}\left[V^{j_1} \otimes \cdots \otimes V^{j_n} \right].
\ee

The vectors in this Hilbert space will be referred to as $n$-valent intertwiners since they are represented graphically by an $n$-valent node.  The legs are labeled by spins while the node has a label representing a state in (\ref{eqn_inter_space}).

When $n=2$ Shur's lemma requires $j_1 = j_2$ by irreducibility.  The first non-trivial case is $n=3$ for which we saw there was a unique intertwiner given by the Clebsch-Gordan or Wigner 3j symbol.  For $n=4$ the easiest way to construct a basis is to contract two 3-valent intertwiners together to create a 4-valent intertwiner.  

Besides the four spins on its external legs, this 4-valent intertwiner contains an extra spin on the adjoining link.  There are three ways to perform such a contraction corresponding to the three Mandelstam channels $S$, $T$, and $U$ as depicted in Figure \ref{fig_STU}.
\begin{figure} 
  \centering
    \includegraphics[width=0.8\textwidth]{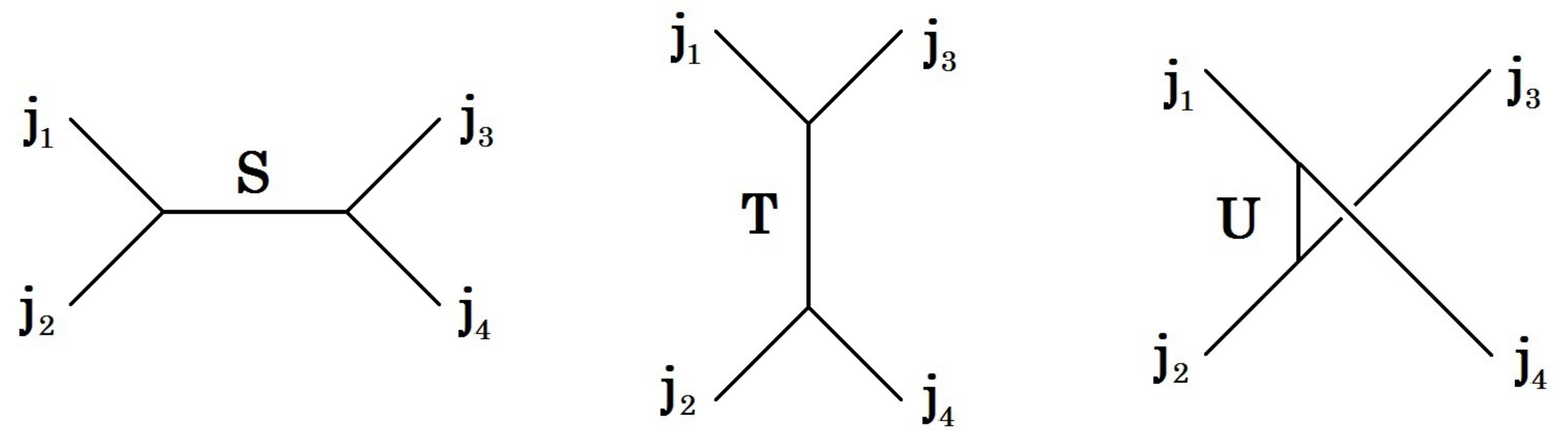}
    \caption{The three channels of a 4-valent vertex.}  \label{fig_STU}
\end{figure}

Contracting two 3j symbols with the procedure described below (\ref{eqn_dual_basis}) we can define the $S$ channels up to a constant factor
\begin{align} \label{eqn_S_3j}
  (z_i|S\ket \equiv N_S \sum_{m_i} \sum_{m=-S}^{S} (-1)^{S-m} \threej{j_1}{j_2}{S}{m_1}{m_2}{m} \threej{S}{j_3}{j_4}{-m}{m_3}{m_4} \prod_{i=1}^{4} e^{j_i}_{m_i}(z_i)
\end{align}
where the spin $S$ has the range
\be
\max\{|j_1-j_2|,|j_3-j_4|\} \leq S \leq \min\{j_1+j_2,j_3+j_4\}
\ee
Since the dual is on the vertex $(S,j_3,j_4)$ the orientation of the internal edge is from $(j_1,j_2,S)$ to $(S,j_3,j_4)$.  Similarly, the $T$ channel is defined by permuting $2 \rightarrow 3$ in (\ref{eqn_S_3j}) giving
\begin{align} \label{eqn_T_3j}
  (z_i|T\ket \equiv N_T \sum_{m_i} \sum_{m=-T}^{T} (-1)^{T-m} \threej{j_1}{j_3}{T}{m_1}{m_3}{m} \threej{T}{j_2}{j_4}{-m}{m_2}{m_4} \prod_{i=1}^{4} e^{j_i}_{m_i}(z_i)
\end{align}
and the $U$ channel by permuting $2 \rightarrow 4$ in (\ref{eqn_S_3j}).  

One can check that these states are eigenstates of the scalar product operators $\vec{J}_i \cdot \vec{J}_{j}$.  In fact this can be taken as the definition e definition 
\begin{align} \label{eqn_S_eigen}
  \vec{J}_1 \cdot \vec{J}_2 \, |S\ket &= \frac{1}{2}\left(S(S+1) - j_1(j_1+1) - j_2(j_2+1)\right) |S \ket, 
\end{align}
and similarly for $T$ and $U$.  The orthogonality of these states
\be
  \bra S' | S \ket = \frac{\delta_{S,S'}}{2S+1} N_{S}^2 , \quad \bra T' | T \ket =  \frac{\delta_{T,T'}}{2T+1} N_{T}^2 , \quad \bra U' | U \ket =  \frac{\delta_{U,U'}}{2U+1} N_{U}^2 
\ee 
is also easy to verify using (\ref{eqn_CG_orthog}).  Note that orthogonality in the external spins $j_i$ is implied.  

While we could choose the constants $N_{S,T,U}$ to normalize the scalar products we will instead choose
\be
  N_S = \Delta(j_1 j_2 S) \Delta(j_3 j_4 S), \quad N_T = \Delta(j_1 j_3 T) \Delta(j_2 j_4 T), \quad N_U = \Delta(j_1 j_4 U) \Delta(j_2 j_3 U)
\ee
where we define the triangle coefficients
\be \label{eqn_tri_coeff}
\Delta^{2}(j_{1}j_{2}j_{3}) \equiv \frac{(j_{1}+j_{2}+j_{3}+1)!}{(j_{1}-j_{2}+j_{3})! (j_{2}-j_{1}+j_{3})!(j_{1}+j_{2}-j_{3})!}.
\ee 
This normalization will allow us to relate both the $S$ and $T$ states in a natural way (See Theorem \ref{thm_sum_T}) to a new basis defined in section \ref{section_discrete_coherent} we call the discrete coherent basis.  

It can be shown that each set of states (\ref{eqn_S_3j}) spans the intertwiner space (\ref{eqn_inter_space}).  Thus we can express the resolution of identity as
\be \label{eqn_res_id_orth}
  \one_{j_1,...,j_4} = \sum_{S} \frac{|S\ket\bra S|}{\|S\|^2} = \sum_{T} \frac{|T\ket\bra T|}{\|T\|^2} = \sum_{U} \frac{|U\ket\bra U|}{\|U\|^2} 
\ee
where we define the normalization constants $\|S\|^2 = N_{S}^2/(2S+1)$.

This procedure can be extended to construct orthonormal bases for the higher valent intertwiner spaces.  The $n$-valent space will have possess (many) orthonormal bases labeled by $n-3$ extra spins constructed by contracting $n-2$ trivalent intertwiners together.  We will continue to focus on the 4-valent case though.

\subsection{The 15j Symbol and the Ooguri Model}

Let us now contract five 4-valent intertwiners from the bases (\ref{eqn_S_3j}) or (\ref{eqn_T_3j}) in the pattern of a 4-simplex.  This amplitude is called the 15j symbol since it depends on ten spins connecting the intertwiners and five spins on the vertices labeling a state from either the $S$, $T$, or $U$ bases.  

\begin{figure} 
  \centering
    \includegraphics[width=0.8\textwidth]{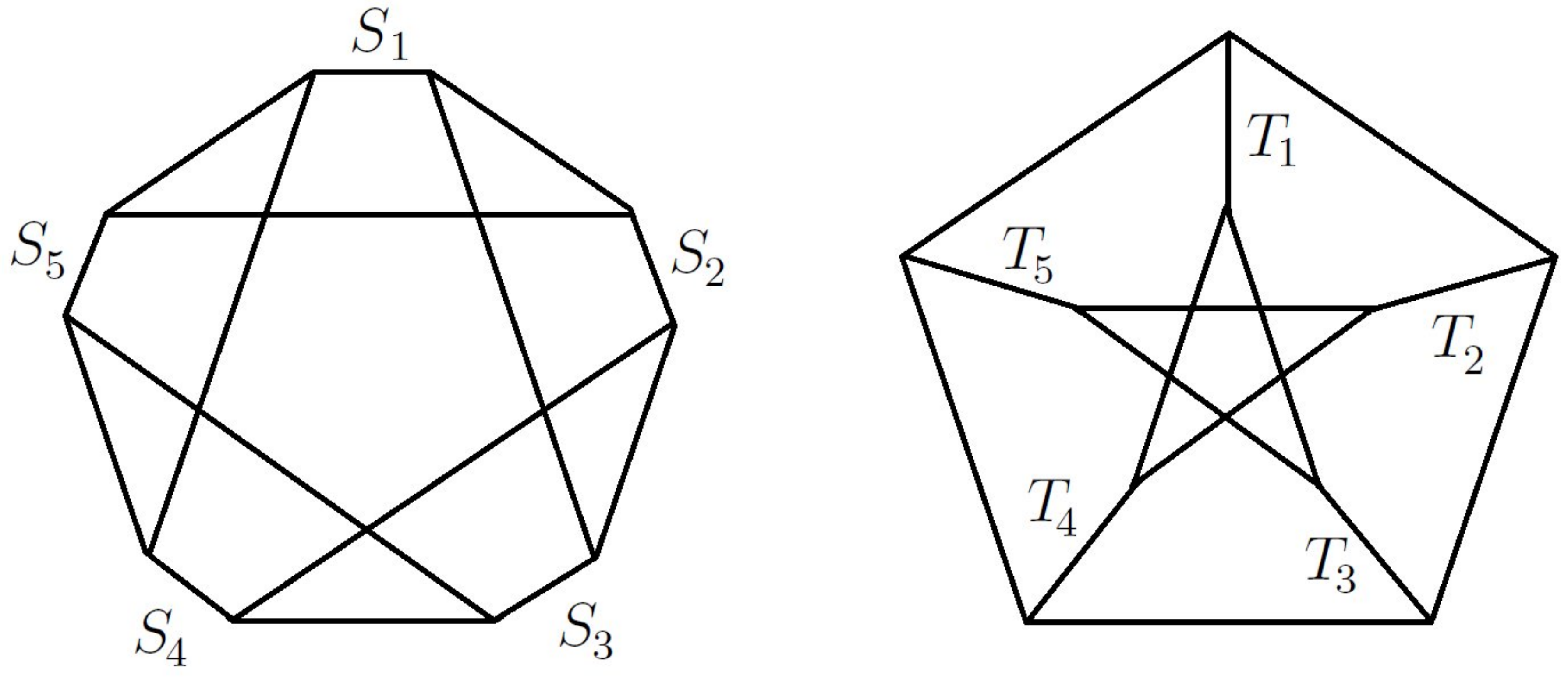}
    \caption{Two of the five kinds of 15j symbols constructed by contracting all $S$ or all $T$ channels.}  \label{fig_15j}
\end{figure}

Let us label the five nodes of the 4-simplex by 1,...,5 so the spins connecting the intertwiners are labeled $j_{ij}$ where $i,j \in \{1,...,5\}$.  Then contracting five $|S_i\ket$ channel states $i=1,...,5$ defines the 15j symbol of the first kind
\be \label{eqn_15j_cont}
  \{15j\}_{S_i} \equiv \underset{i}{\corner} \|S_i \ket
\ee
There are four other inequivalent, irreducible\footnote{By irreducible we mean it does not contain any cycles of length three.  This is because such an amplitude can be reduced to a product of a 12j symbol and a 6j symbol by inserting a trivial resolution of identity on 3-valent intertwiners factoring out the 3-cycle.  Recall the resolution of identity is trivial because the 3-valent intertwiner space is one-dimensional. } kinds of 15j symbols which can be constructed by contracting $S$, $T$, and $U$ channels \cite{yutsis1962mathematical}.

In the literature a normalized 15j symbol $\widehat{\{15j\}}_{S_i}$ is defined with respect to (\ref{eqn_15j_cont}) by simply dividing by the norm $\prod_{i} \|S_i\|$.  We note that in  the notation of \cite{yutsis1962mathematical} 
\begin{align} \label{eqn_yutsis_15j}
  &\widehat{\{15j\}}_{S_i} \equiv  \begin{Bmatrix} S_1 && j_{13} && S_3 && j_{35} && S_5 & \\
	                &  j_{12} && j_{23} && j_{34} && j_{45} && j_{15} \\
									j_{25} && S_2 && j_{24} && S_4 && j_{14} & \end{Bmatrix}  = \frac{\{15j\}_{S_i}}{\prod_i \|S_i\|}
\end{align}
Note that this definition (\ref{eqn_yutsis_15j}) of the 15j symbol comes with a specific overall phase determined by a conventional orientation of the edges and vertices.  Hence (\ref{eqn_yutsis_15j}) is up to a sign.

We will now express the BF partition function (\ref{eqn_Z_BF_vertex_amps}) for $G = \text{Spin}(4)$ in terms of 15j symbols.  This form of BF theory is known as the Ooguri model, but it was also studied by Crane, Yetter, 

Using the isomorphism $\text{Spin}(4) \cong \text{SU}(2) \times \text{SU}(2)$ a basis of $\text{Spin}(4)$ intertwiners is given by $\|S^{L} \ket \otimes \|S^{R} \ket$ where $L$ and $R$ denote the left and right copies of $\text{SU}(2)$.  Therefore if we insert two copies of the resolution of identity (\ref{eqn_res_id_orth}) into the BF partition function (\ref{eqn_Z_BF_vertex_amps}) we get
\be \label{eqn_ooguri_BF}
  Z^{\Delta^\ast}_{BF}  
  = \sum_{(j^{L}_{f},j^{R}_{f})} \sum_{(S^{L}_e,S^{R}_e)} \,(-1)^\chi \prod_{f \in \Delta^\ast} (2j^{L}_{f}+1)(2j^{R}_{f}+1) \, \prod_{e \in \Delta^\ast} \|S^{L}_e\|^{-2} \|S^{R}_e\|^{-2} \, \prod_{v \in \Delta^\ast} \{15j\}_{S^{L}_e} \{15j\}_{S^{R}_e}
\ee
where $\chi$ is a linear function of the $k^{L}_{ij}, k^{R}_{ij}$ relating the two sign conventions.  This model was first proposed by Ooguri \cite{Ooguri:1992eb} in a manner similar to the Bulatov model (\ref{eqn_bulatov}) but in four dimensions.

The use of the orthogonal basis in the Ooguri model means that it possesses the fewest number of parameters.  This is not necessarily an advantage though, since the geometry of each of the tetrahedra in the 4-simplex are underdetermined.  Indeed, for each node of the 15j symbol the four incident spins determine the four areas of the faces of the tetrahedron, while the intertwiner label $S$ represents the area of one of the the medial parallelograms \cite{Baez:1999tk}.   

\begin{figure} 
  \centering
    \includegraphics[width=0.7\textwidth]{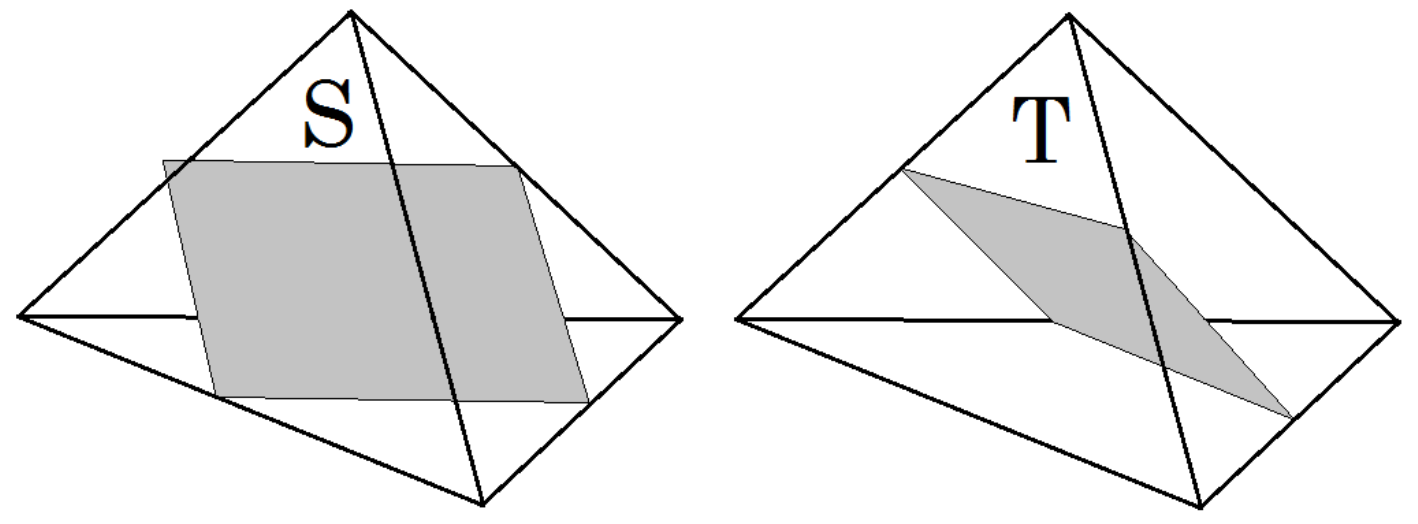}
    \caption{The spins $S$ and $T$ are the areas of the medial parallelograms of the tetrahedron. }  \label{fig_ST}
\end{figure}

To see this refer to figure \ref{fig_ST} and note that
\be
  |\vec{J}_{1} + \vec{J}_{2}|^2 = \vec{J}_{1} \cdot \vec{J}_{1} + \vec{J}_{2} \cdot \vec{J}_{2} + 2 \vec{J}_{1} \cdot \vec{J}_{2}
\ee
and use (\ref{eqn_S_eigen}).  However, to uniquely determine the geometry of a tetrahedron we require six pieces of information, such as the six edge lengths, or the four areas and two angles.  Therefore the geometry in each of the tetrahedra is fuzzy.

This is an issue when constructing spin foam models because discretized simplicity constraints are imposed on the intertwiner representation labels and are based on geometric arguments.  For this reason it is wise to instead consider a coherent representation of the intertwiners, which we review next.  In Chapter \ref{section_discrete_coherent} we introduce a new discrete-coherent basis which is labeled by two extra spins $S,T$ and hence defines the tetrahedral geometry uniquely.  Hence it is overcomplete but still finite dimensional.

\section{Coherent Intertwiners}
\label{section_coherent_intertwiners}

We now review the coherent intertwiner formalism introduced by Livine and Speziale \cite{Livine:2007vk}.  As opposed to the orthogonal basis which are labeled by spins, the coherent intertwiners are labeled by a continuous set of data: a spinor for each representation.  This means the coherent basis is overcomplete but it has the advantage that each state is peaked on points in phase space representing geometrical polyhedra.  

The other advantage of these states is that the amplitudes {\it exponentiate}.  This makes it easy for us to construct generating functionals, which we will use in Chapter \ref{chapter_exact} to compute these amplitudes exactly.

\subsection{SU(2) Coherent States}

Coherent states can be defined for arbitrary Lie groups \cite{Perelomov:1986tf}.  In the case of SU(2) this is particularly simple due to the Schwinger representation (\ref{eqn_schwinger}) in terms of a pair of independent harmonic oscillators.  


In the spin $j$ representation the coherent state $|j,z\ket$ is defined to be the holomorphic functional
\be \label{eqn_holo_func}
(w|j,z\ket \equiv \frac{[w|z\ket^{2j}}{(2j)!}.
\ee
These states possess the characteristic property that their scalar product with any spin $j$ state $|P\ket$ reproduces the functional $P(z)$, that is
\be \label{eqn_holo_rep}
\bra j, \check{z} | P\ket = P(z).
\ee
This follows from a  direct computation which shows that
$
(w|j,z\ket =\bra j,\check{w}|j,z\ket.
$
This property implies that we can identify the label $(z|$ of $(z|P\ket$ with the state $\bra j,z|$ when 
evaluated on a spin $j$ functional. In the following we will use interchangeably the notation $(z|=\bra j,\check{z}|$ for the labels.

From (\ref{eqn_SU2_rep}) this implies the coherence with respect to the SU(2) action
\be \label{eqn_CS_group_coherence}
  g | j, z \ket = |j, gz \ket \qquad \forall g \in SU(2)
\ee
where $g$ is in the fundamental representation.

These states resolve the identity on $V^j$ since
\be
  \int \rd \mu(z) \, \bra j, m | j, z \ket \bra j', z | j', m' \ket = \bra j, m | j', m' \ket = \delta_{j,j'} \delta_{m,m'}
\ee
which can be easily calculated using formula (\ref{eqn_gaussian_int_one_var}) in Appendix \ref{Gauss_int_app}.  Therefore the identity on $V^j(\C^2,\rd\mu)$ can be expressed as
\be
  \one_j = \int \rd \mu(z) | j, z \ket \bra j, z |
\ee
and so an arbitrary holomorphic function on the Bargmann-Fock space $L^2(\C^2,\rd \mu(z))$ can be expressed as
\be
  P(z) = \sum_{2j=0}^{\infty} \bra j, \check{z} | P\ket 
\ee
due to the analyticity.  This immediately implies the following formula for the overlap (and norm)
\be
  \bra j, w | j, z \ket = \frac{\bra w | z \ket^{2j}}{(2j)!}
\ee
which confirms (\ref{eqn_holo_rep}).

Besides the power of Gaussian integration, the coherent states also have nice semi-classical properties.  That is, they are peaked with minimal uncertainty around the expectations values
\be
   \frac{\bra j z | \vec{J} |j z \ket}{\bra j z | j z \ket} = j \frac{\vec{V}(z)}{|\vec{V}(z)|}
\ee
where the spinor corresponds to a classical 3-vector $\vec{V}(z) = \bra z | \vec{\sigma} | z \ket$ via the Penrose null-flag interpretation of the spinor
\be
  \|z \ket \bra z \| = \frac{1}{2} \left( \bra z | z \ket + \vec{V}(z) \cdot \vec{\sigma} \right)
\ee
The flag vector corresponding to the phase of the spinor also carries geometrical information \cite{Freidel:2010aq}.  In the case of intertwiners this will carry information about the extrinsic curvature.  In Theorem \ref{thm_twisted_action} we see how these phases arrange to define (generalized) 4d dihedral angles of a 4-simplex.

\subsection{Livine-Speziale Coherent Intertwiners}

Instead of using the Clebsch-Gordan coefficients to define invariant states, we can instead group average a basis of $V^{j_1} \otimes \cdots \otimes V^{j_n}$ to project into the intertwiner space (\ref{eqn_inter_space}).  This projector is sometimes known as the Haar projector.  Taking a coherent state basis labeled by $n$ spinors $z_{1},\cdots ,z_{n}$ the Livine-Speziale\footnote{Note that the normalization of these states is different from \cite{Livine:2007vk} to better suit the Bargmann scalar product (\ref{barg_in_prod}).} \cite{Livine:2007vk} coherent intertwiner is defined by group averaging the holomorphic functionals (\ref{eqn_holo_func}) as
\be \label{eqn_LS_coh_int}
 (w_{i} \|j_i,z_i\ket \equiv \int \rd g \prod_{i=1}^{n} \frac{ [w_{i}| g |z_i\ket^{ 2j_i}}{(2j_i)!}.
\ee

These states are coherent in the sense that their scalar product  reproduces the holomorphic functional, they are labeled by the continuous set of data $\{z_i\}$ and  they
resolve  the identity:
\be
  \bra j_{i}, w_{i} \|j_i,z_i\ket=(\check{w}_{i} \|j_i,z_i\ket,\qquad \one_{j_i} = \int \prod_i \rd\mu(z_i) \|j_i,z_i \ket \bra j_i,z_i\|. \label{coherent}
\ee
This is shown by using the identity $\int \rd\mu(w) \bra a|w\ket^{2j}\bra w|b\ket^{2j} = (2j)! \bra a | b \ket^{2j}$, which itself is proven by summing over $j$ and performing the Gaussian integration.  See Appendix \ref{Gauss_int_app}.

While the orthogonal basis can be defined by the eigenstates of the operators $\vec{J}_i \cdot \vec{J}_j$ these operators do not form a closed algebra \cite{Dupuis:2010iq}
\be
  \left[\vec{J}_1 \cdot \vec{J}_2, \vec{J}_1 \cdot \vec{J}_3 \right] = \vec{J}_1 \cdot \left(\vec{J}_3 \wedge \vec{J}_4 \right)
\ee

As mentioned above there exists a hidden U(N) structure to these coherent intertwiners which possess a set of observables which do form a closed algebra.  For each spinor $z_i$ let $a_i$ and $b_i$ be the pair of harmonic oscillators as in the Schwinger representation (\ref{eqn_schwinger}).  Then the differential operators are defined by
\be 
  J_{i}^{z} = \frac{1}{2} \left( a_{i}^\dagger a_i - b_{i}^\dagger b_i \right), \quad J^{+}_{i} = a_{i}^\dagger b_i, \quad J^{-}_{i} = a_i b_{i}^\dagger, \quad E_{i} = \frac{1}{2} \left( a_{i}^\dagger a_i + b_{i}^\dagger b_i \right)
\ee
where $E_i$ is the i'th representation and has the eigenvalue $2j_i$.

Just as we can take the ``square root'' of the quadratic Casimir by $J^2 = E(E+1)$, we can also remarkably take the square root of the scalar product operators $\vec{J}_i \cdot \vec{J}_j$ as
\be \label{eqn_scalar_op_sqrt}
  \vec{J}_i \cdot \vec{J}_j = \frac{1}{2} E_{ij} E_{ji} - \frac{1}{4} E_i E_j - \frac{1}{2} E_i
\ee
where the operators $E_{ij}$ are defined by
\be \label{eqn_Eij}
  E_{ij} = \frac{1}{2} \left( a_{i}^\dagger a_j + b_{i}^\dagger b_j \right)
\ee
These operators are precisely the generators of a $\mathfrak{u}(N)$ algebra satisfying the commutation relations
\be
  [E_{ij},E_{kl}] = \delta_{jk} E_{il} - \delta_{il}E_{kj}
\ee
We will use these operators to find the action of the scalar product operators in the new basis defined in the next chapter.

\subsection{The Coherent Amplitude and Spinor BF Theory}

The coherent vertex amplitude in four dimensions is constructed by contracting five coherent intertwiners (\ref{eqn_LS_coh_int}) in the pattern of a 4-simplex.  The contracted legs of the intertwiners must have the same spin $j_{ij} = j_{ji}$  where $i,j \in \{1,..,5\}$ label the vertices of a 4-simplex.  Each leg also carries two spinors $|z^{i}_{j}\ket \neq |z^{j}_{i}\ket$ where the upper index indicates the vertex the spinor belongs to and the lower index the vertex it connects to.

Performing the contraction the amplitude depends on 10 spins and 20 spinors
\be \label{eqn_coherent_4S}
  A_{4S}(j_{ij}, z^{i}_{j}) =\int_{\mathrm{SU(2)^{5}}}\prod_{i} \rd g_{i} \prod_{i<j} \frac{[z^{i}_{j}|g_{i}g_{j}^{-1}|z^{j}_{i}\ket^{2j_{ij}}}{(2j_{ij})!}.
\ee
This can be easily generalized to the $n$-simplex and also to arbitrary graphs.

The remarkable thing about this amplitude is that it exponentiates in the sum over the spins.  Furthermore, it can be written purely in terms of the spinor and group integrals.  As shown in \cite{Dupuis:2011fz} the BF theory parition function takes the form
\be
  Z^{\Delta^\ast}_{BF}  = \int \prod_{(e,f)} \rd \mu(z^{e}_{f}) \prod_f (\bra z^{e}_f|z^{e}_f\ket-1) \int \prod_{e} \rd g_e e^{\sum_{(e,f)} [z^{e}_{f}|g_e|z^{e}_{f}\ket}
\ee
where $e, f$ are edges and faces of $\Delta^\ast$.  Note that $z^{e'}_{f} = \check{z}^{e}_{f}$ if $f$ is directed from $e'$ to $e$ as with the usual rules of contraction, see Section \ref{sec_contraction}.  The dimension factors $2j_f+1$ for each face are taken care of by the insertion of the obervables $(\bra z^{e}_f|z^{e}_f\ket-1)$.

The asymptotics of the amplitude (\ref{eqn_coherent_4S}) have been studied extensively \cite{Conrady:2008mk,Barrett:2009gg} however the actual evaluation, i.e. the evaluation of the group integrals, has been left unsolved.  In Chapter \ref{chapter_exact} we construct a generating functional which computes these amplitudes exactly, and for arbitrary graphs.  

We find that this amplitude can be expressed as
\begin{align} 
  A_\Gamma(j_{ij},z^{i}_{j}) 
  &=  \sum_{k^{a}_{ij}} A_{\Gamma}(k_{ij}^{a})
\prod_{(a,i,j)}  \left(\frac{[z^{a}_{i}|z^{a}_{j}\ket^{k_{ij}^{a}}}{(J_{a}+1)!}\right). \nonumber
\end{align}
where $k^{a}_{ij} = k^{a}_{ji}$ is a set of six positive integers for each vertex $a$ such that $\sum_{j \neq a,i} k^{a}_{ij} = 2j_{ai}$.  This is described in detail in Section \ref{section_racah}.

For this reason it seems that the most natural basis of intertwiners is actually given by monomials in the holomorphic spinor invariants $[z^{a}_{i}|z^{a}_{j}\ket$ and is labeled by the degree $\{k^{a}_{ij}\}$ of these monomials.  We investigate this possibility in the next chapter.

\chapter{The Discrete Coherent Basis}
\label{section_discrete_coherent}

\section{A Basis of Monomial Invariants}

We will now show how to construct a new basis which is also coherent, resolves the identity, but is labeled by a discrete set of integers.  Since the product $[z|w\ket$ is holomorphic and SU(2) invariant it can be used to construct a  complete basis of the intertwiner space ${\cal H}_{n}\equiv \oplus_{j_{i}}{\cal H}_{j_{1}\cdots j_{n}}$  by
\be\label{C}
( z_{i} | k_{ij} \ket  \equiv \prod_{i<j}\frac{ [z_{i}|z_{j}\ket^{k_{ij}}}{k_{ij}!}.
\ee
This basis  is labeled by  $n(n-1)/2$ non-negative integers $[k]\equiv (k_{ij})_{i\neq j = 1,\cdots, n}$ with $k_{ij}=k_{ji}$.  
Note that we are free to choose a phase convention.\footnote{Later we will see that the asymptotic limit of the intertwiners will imply a canonical phase as noted in \cite{Barrett:2009gg}.}  The phase is affected by the implicit ordering chosen by the spinors $\{z_1,z_2,...,z_n\}$ and the convention of choosing pairs by $i<j$.

This basis was introduced by Schwinger \cite{schwinger2001angular} and also Bargmann \cite{bargmann1962representations} for studying generating functions of the 3nj-symbols and we have generalized it here to the $n$-valent case.  The $n$-valent states (\ref{C}) can also be used to construct generating functions for general graphs as was done in \cite{Bonzom:2012bn} and \cite{Freidel:2012ji}.  We will first review the 3-valent case and then go on to study the properties of the 4-valent case.

For a basis representing the subspace ${\cal H}_{j_{1}\cdots j_{n}}$ with fixed  spins $j_i$, we have $n$ homogeneity conditions which require the integers $[k]$ to satisfy
\be\label{kj}
\sum_{j\neq i} k_{ij} =2j_{i}.
\ee
The sum of spins at the vertex is defined by $J = \sum_i j_i = \sum_{i<j} k_{ij}$ and is required to be a positive integer.   
From the relation $[\check{w}|\check{z}\ket = \overline{[w|z\ket}$ we see that these states satisfy the reality condition 
\be\label{real}
\overline{( z_{i} | k_{ij} \ket} = ( \check{z}_{i} | k_{ij} \ket.
\ee
Furthermore, from the coherency property (\ref{coherent}) we can easily compute the overlap of these states with the coherent intertwiners:
\be
\bra j_{i}, \check{z}_{i} || k_{ij} \ket = ( z_{i} | k_{ij} \ket = \bra k_{ij} || j, z_{i} \ket .
\ee
where the last equality follows from 
the reality condition (\ref{real}) and the fact that $(-z_{i}|k_{ij}\ket = (z_{i}||k_{ij}\ket$.

In section \ref{section_relate_coh_discoh} we will use generating functionals to show that the scalar product of 
coherent intertwiners can be expressed in terms of the coefficients of the discrete basis as 
\be\label{fundrelation}
\bra j_{i}, \check{w}_{i} || j_{i}, z_{i}\ket = \sum_{[k]\in K_{j}} \frac{( {w}_{i} | k_{ij} \ket {( z_{i} | k_{ij} \ket} }{||[k]||^{2}},\quad \mathrm{with}\quad ||[k]||^{2} =\frac{ (J+1)!}{\prod_{i<j}k_{ij}!}.
\ee
where $K_{j}$ denotes all the $k_{ij}$ solution of (\ref{kj}).
This result in turn implies that 
\be \label{eqn_rel_coh_discoh}
|| j_{i}, z_{i}\ket = \sum_{[k]\in K_{j}} \frac{| k_{ij} \ket \bra k_{ij} \| j, z_{i} \ket }{||[k]||^{2}},
\ee
which expresses the coherent intertwiners in terms of the discrete basis.  We note that the scalar product (\ref{fundrelation}) was also studied in \cite{Bonzom:2012bn} where they use the notation $F_{ij} = [z_i|z_j\ket$.  In fact an explicit expression for the $n$-valent scalar product was derived there using the Pl\"ucker relations explicitly.

\subsection{3-valent Intertwiners}

In the case $n=3$ there is only one intertwiner.  Indeed, given $[k]=(k_{12},k_{23},k_{31})$ the  homogeneity restriction requires $2j_{1}=k_{12} + k_{13}$ which can be easily solved by
\be \label{eqn_3_k}
k_{12} = j_1 + j_2 - j_3,\qquad k_{13} = j_1 - j_2 + j_3, \qquad k_{23} = -j_1 + j_2 + j_3.
\ee
In this case the fact that homogeneous functions of different degree are orthogonal implies that $|{k_{12},k_{23},k_{31}}\ket$ form an orthogonal basis
\footnote{One can also arrive at this basis by considering the representation space of symmetrised spinors.  For details see appendix A of \cite{Rovelli:2004tv}.  The two approaches are essentially the same, however in the holomorphic representation we have the advantage of tools like generating functionals and Gaussian integration.} of (\ref{eqn_inter_space}).

Since there is only one holomorphic  function $( z_{i}|{[k]}\ket $ it must be proportional to the Wigner 3j symbol.  Furthermore from the relation (\ref{eqn_rel_coh_discoh}) and the resolution of identity we can read off the norm of these states
\be
  \bra k_{12},k_{23},k_{31}|k_{12},k_{23},k_{31}\ket = \|[k]\|^{2} = \Delta^2(j_1 j_2 j_3) 
\ee
where the triangle coefficients were defined in (\ref{eqn_tri_coeff}).  It can be shown \cite{bargmann1962representations}
\begin{align} \label{C3}
 ( z_{i}| k_{12},k_{23},k_{31} \ket
&= \Delta(j_1 j_2 j_3) \sum_{m_1 m_2 m_3} \threej{j_1}{j_2}{j_3}{m_1}{m_2}{m_3} e^{j_1}_{m_1}(z_1) e^{j_2}_{m_2}(z_2) e^{j_3}_{m_3}(z_3) \\
&= \Delta(j_1 j_2 j_3) ( z_{i}\| j_1,j_2,j_3 \ket \nonumber
\end{align}
where we defined the Wigner $3j$ states $\| j_1,j_2,j_3 \ket$ in (\ref{eqn_3j_state}).  Note that we could divide $| k_{12},k_{23},k_{31}\ket$ by $\Delta(j_1 j_2 j_3)$ to normalize this basis, but it will be simpler to instead work with these unnormalized states.

\subsection{Counting}

For $n>3$ there are more basis elements $|{k_{ij}}\ket$ than the dimension of the intertwined space so the basis is no longer orthogonal.  Indeed, since we have $n(n-1)/2$ $k_{ij}$'s satisfying $n$ relations (\ref{kj}) these intertwiners are labeled by 
$n(n-3)/2$ integers. But this is clearly more that the dimension of the Hilbert space of $n$-valent intertwiners, which is known to be labeled by $n-3$ integers, i.e. by contracting only 3-valent nodes.
This means that the basis given above is { \it overcomplete}.

Another way to understand this counting is to recall that the algebra of gauge invariant operators acting on ${\cal H}_{j_{1},\cdots, j_{n}}$ is given by $J_{ij} \equiv J_{i}\cdot J_{j}$ for $i\neq j$ where $J_{i}$ denotes the angular momentum operator action in the $i$ direction.
These operators satisfy the closure relation $\sum_{i} J_{i} =0$ and the action of $J_{i}^{2}$ is given by multiplication by $j_{i}(j_{i}+1)$.
These relations mean that we can express any instance of $J_{n}$ say, by a summation of operators depending on $J_{i}$ for $i<n$. Thus a good basis of operator is for instance $J_{ij}$ for $i\neq j $ and $i,j<n$.
There are $(n-1)(n-2)/2$ such operators. They satisfy one relation that stems from the closure relation which is\be
\sum_{i\neq j <n} J_{ij} = j_{n}(j_{n}+1) - \sum_{i<n}j_{i}(j_{i}+1).
\ee
This makes it clear that if we want to maximally represent these operators we need $ n(n-3)/2$ labels.
These operators do not commute, therefore these labels represent an overcomplete basis.  A maximal commuting subalgebra  is of dimension $n-3$.

For example, in the case $n=4$ the basis is  labeled by $2$ integers while we need only one, and
for $n=5$  it is  labeled by $5$ integers while we need only two.  Despite this overcompleteness we will be able to determine all of the necessary properties of these states and we will discover some interesting relations between the orthogonal bases on the one hand and coherent intertwiners on the other.

\section{The 4-valent case}
\label{4-val}

We now focus on the case $n=4$. A very convenient labeling of the basis $|{[k]}\ket$ is done in terms of three spins $S$, $T$, $U$ 
which refer to the three channels in which a 4-valent vertex can be split into two three valent ones.
The relationship between these labels and the $k$ labels is given by
\be\label{int1}
S\equiv j_{1}+j_{2} -k_{12},\quad T\equiv j_{1}+j_{3}  -k_{13},\quad U\equiv j_{1}+j_{4}-k_{14}.
\ee
where $S$, $T$, and $U$ are such that the $k_{ij}$ are non-negative integers.  The constraints in (\ref{kj}) imply that 
$ j_{1}+j_{2} -(j_{3}+j_{4}) = k_{12}-k_{34}$, thus we also have 
\be\label{int2}
S= j_{3}+j_{4}-k_{34},\quad T= j_{2}+j_{4}-k_{24},\quad U= j_{2}+j_{3}-k_{23}.
\ee
Summing over all $k_{ij}$ shows that $S$, $T$, and $U$ are not independent but satisfy the relation
 \be
 S+T+U=J.
 \ee 
where $J = j_1+j_2+j_3+j_4$.  We can therefore label the $4$-valent intertwiner basis by the four spins $j_i$ and two extra spins $S,T$ and we will henceforth denote by 
$k_{ij}(j_{i},S,T)$ the corresponding integers in (\ref{int1}, \ref{int2}). 
These integers cannot take arbitrary values, since  $k_{ij}$ are restricted by
$0\leq  k_{ij} \leq  \mathrm{max}(2j_{i},2j_{j})$, this restriction\footnote{It is given by
\bea
\mathrm{max}(|j_{1}-j_{2}|, |j_{3}-j_{4}|) \leq &S& \leq \mathrm{min}(j_{1}+j_{2}, j_{3}+j_{4}), \\
\mathrm{max}(|j_{1}-j_{3}|, |j_{3}-j_{4}|) \leq &T& \leq \mathrm{min}(j_{1}+j_{3}, j_{3}+j_{4}),\\
\mathrm{max}(j_{1}+j_{4}, j_{2}+j_{3}) \leq S &+&T \leq  J - \mathrm{max}(|j_{1}-j_{4}|, |j_{2}-j_{3}|).
\eea 
} is denoted by $(S,T)\in {\cal N}_{j_{i}}$.
In the case all spins are equal to $N/2$ this is simply $0\leq S,T\leq N$, $N\leq S+T\leq 2N$.

 We will  denote the corresponding basis by
  $|S,T\ket_{j_{i}}$ where
 \be \label{eqn_ST_notation}
 |S,T\ket_{j_{i}} \equiv  |[k](j_{i},S,T)\ket.
\ee
In the following we will omit the subscript $j_{i}$  and use the shorthand $|S,T\ket \equiv \left| S, T \right\ket_{j_{i}}$ for notational simplicity when the context is clear and the external spins are fixed.

\subsection{Overcompletness and Identity Decomposition}

As discussed above, the $|S,T\ket$ basis has one extra label and is thus overcomplete.  We will now investigate the nature of the relations among these states which is summarized by the following theorem:
\begin{theorem}
The $|S,T\ket$ states are  not linearly independent; all the relations among them are generated by the fundamental relation
\be \label{fund_rel}
 (k_{12}+1)(k_{34}+1) \left| S-1,T \right\ket - (k_{13}+1)(k_{24}+1) \left|  S,T-1 \right\ket +k_{14}k_{23} \left| S,T \right\ket =0 
\ee
where $k_{ij}$ stands for $k_{ij}(j_{i},S,T)$.
\end{theorem}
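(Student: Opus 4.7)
The proof rests on the Plücker--Schouten identity for spinor brackets in $\C^{2}$,
\[
[z_{1}|z_{2}\ket[z_{3}|z_{4}\ket - [z_{1}|z_{3}\ket[z_{2}|z_{4}\ket + [z_{1}|z_{4}\ket[z_{2}|z_{3}\ket = 0,
\]
which is just the statement that any three vectors in a two-dimensional space are linearly dependent. The plan is to promote this identity into the claimed relation among basis states, and then to argue that no further independent relation is needed.

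First I would translate shifts in $(S,T)$ into shifts of the $k_{ij}$ using (\ref{int1})--(\ref{int2}). With the spins $j_{i}$ held fixed, decrementing $S$ forces $k_{12}\mapsto k_{12}+1$, $k_{34}\mapsto k_{34}+1$, and via $U = J-S-T$ also $k_{14}\mapsto k_{14}-1$, $k_{23}\mapsto k_{23}-1$; decrementing $T$ instead raises $k_{13},k_{24}$ and lowers $k_{14},k_{23}$. Thus the three polynomials $(z_{i}|S,T\ket$, $(z_{i}|S-1,T\ket$, $(z_{i}|S,T-1\ket$ all involve the same six brackets, with exponents differing by controlled $\pm 1$ shifts, and the homogeneity constraints (\ref{kj}) are preserved throughout.

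The key step is to multiply the Plücker identity by the common monomial
\[
\frac{[z_{1}|z_{2}\ket^{k_{12}}[z_{3}|z_{4}\ket^{k_{34}}[z_{1}|z_{3}\ket^{k_{13}}[z_{2}|z_{4}\ket^{k_{24}}[z_{1}|z_{4}\ket^{k_{14}-1}[z_{2}|z_{3}\ket^{k_{23}-1}}{k_{12}!\,k_{13}!\,k_{24}!\,k_{34}!\,(k_{14}-1)!\,(k_{23}-1)!},
\]
and to reinterpret each of the three resulting products in the $|S,T\ket$ basis. Absorbing the extra pair of brackets in the first term into $k_{12}!$ and $k_{34}!$ produces the prefactor $(k_{12}+1)(k_{34}+1)$ and leaves precisely $(z_{i}|S-1,T\ket$; the analogous rebalancing in the second term produces $(k_{13}+1)(k_{24}+1)(z_{i}|S,T-1\ket$; and in the third, promoting $(k_{14}-1)!(k_{23}-1)!$ back to $k_{14}!\,k_{23}!$ gives the coefficient $k_{14}k_{23}$ multiplying $(z_{i}|S,T\ket$. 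The signs from the Plücker identity then assemble into the stated relation.

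For the clause \emph{all relations are generated}, I would appeal to the second fundamental theorem of invariant theory for $\mathrm{SL}(2,\C)$: the ideal of syzygies among the brackets $[z_{i}|z_{j}\ket$ of several spinors is generated by the Plücker relations. Restricting to the fixed homogeneity sector $\cH_{j_{1}\cdots j_{4}}$ then shows that every linear dependence among the $|S,T\ket$ is an iterated consequence of the displayed identity. As a consistency check, one can use the relation to express any $|S,T\ket$ with $k_{14},k_{23}\geq 1$ in terms of states with $k_{14}=0$ or $k_{23}=0$; the count of such boundary states matches the number of admissible $S$ values, which is exactly $\dim\cH_{j_{1}\cdots j_{4}}$. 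The main obstacle is this completeness clause---the mechanical derivation of the relation itself is pure bookkeeping of factorials, but ruling out independent syzygies at the level of the monomial basis either requires the invariant-theoretic input or the explicit dimension match sketched above.
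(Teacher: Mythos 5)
Your derivation of the relation itself is correct and is in substance the paper's own argument: both amount to multiplying the Pl\"ucker identity $[z_1|z_2\ket[z_3|z_4\ket-[z_1|z_3\ket[z_2|z_4\ket+[z_1|z_4\ket[z_2|z_3\ket=0$ by a monomial in the brackets and rebalancing factorials, with the shifts of the $k_{ij}$ under $S\to S-1$, $T\to T-1$ tracked exactly as you do. The only cosmetic difference is that the paper packages the multiplication as an operator $\hat R:\cH_{j_i-\frac12}\to\cH_{j_i}$ acting on the lower-spin state and then relabels $S\to S-1$, $T\to T-1$, whereas you multiply directly by the common monomial at fixed $j_i$; the bookkeeping and the resulting coefficients $(k_{12}+1)(k_{34}+1)$, $(k_{13}+1)(k_{24}+1)$, $k_{14}k_{23}$ agree. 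Where you genuinely go beyond the paper is the completeness clause ``all relations are generated'': the paper's proof stops after deriving the relation (it only remarks that powers of $\hat R$ generate further relations, and the full use of completeness is deferred to the scalar-product/projector construction), while you invoke the second fundamental theorem of invariant theory for $\mathrm{SL}(2,\C)$ --- every syzygy among the brackets lies in the ideal generated by the Pl\"ucker relation, and restricting to the fixed multidegree sector turns each monomial multiple of $R$ into precisely an instance of the fundamental relation at the given $j_i$. That argument is sound and arguably tighter than what is printed. Your back-up dimension count (reduce to states with $k_{14}=0$ or $k_{23}=0$, note $k_{14}-k_{23}$ is fixed so reduction terminates, and match the count of boundary states to $\dim\cH_{j_1\cdots j_4}$) is also viable, but to close it you must additionally know that the $|S,T\ket$ span the intertwiner space (the paper's resolution of identity, proved afterwards) so that the boundary states, having the right cardinality, form a basis; as stated it is a consistency check rather than a complete proof, and the invariant-theoretic route is the one that actually settles the clause.
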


It turns out that the relation among the states is easily seen in the holomorphic representation.
It is well known that the gauge invariant quantities $[z_{i}|z_{j}\ket$ are not independent,
they satisfy the  Pl\"ucker relation:
\be \label{eqn_R_plucker}
R(z_i)\equiv [z_{1}|z_{2}\ket[z_{3}|z_{4}\ket - [z_{1}|z_{3}\ket[z_{2}|z_{4}\ket + [z_{1}|z_{4}\ket[z_{2}|z_{3}\ket =0.
\ee
In order to write the effect of this relation on the states $|S,T\ket_{j_{i}}$
let's compute first the effect of multiplication by one monomial
\bea
[z_{1}|z_{2}\ket[z_{3}|z_{4}\ket ( z_{i} | S,T\ket_{j_{i}-\frac12}
= ({k_{12}k_{34}})(j_{i},S,T ) ( z_{i} | S,T+1\ket_{j_{i}}
\eea
where we used that $k_{12}(j_{i}-\frac12,S,T)  +1 = k_{12}(j_{i},S,T )= k_{12}(j_{i},S,T +1)$,
while 
$k_{13}(j_{i}-\frac12,S,T) = k_{13}(j_{i},S,T)-1 = k_{13}(j_{i},S,T+1)$,
and $k_{14}(j_{i}-\frac12,S,T) = k_{14}(j_{i},S,T)+ 1 = k_{14}(j_{i},S,T+1)$.
Performing similar computations for the different monomials we find that the multiplication by the Pl\"ucker relation can be implemented in terms of an operator $\hat{R} : {\cal{H}}_{j_{i}-\frac12} \rightarrow {\cal{H}}_{j_{i}}$
whose image vanishes identically.  It is defined by
 $  R(z_{i}) ( z_{i} | S,T\ket_{j_{i}-\frac12} = ( z_{i}| \hat{R} | S,T\ket_{j_{i}-\frac12}$ where $\hat{R}$ is given by 
\be
\hat{R} | S,T\ket_{j_{i}-\frac12} = 
{k_{12}k_{34}} | S,T+1\ket_{j_{i}} 
- {k_{13}k_{24}} |S+1,T\ket_{j_{i}} 
+ (k_{14}+2)(k_{23} +2)  |S+1,T+1\ket_{j_{i}} 
\ee
here $k_{ij}$ denotes $ k_{ij}(j_{i}, S, T)$.
By shifting the parameters $S\to S-1$ and $T\to T-1$ and using that 
$ k_{12}(j_{i}, S -1, T-1)= k_{12} +1$ etc. we obtain the desired relation stated in the theorem.
 By taking powers of the operator $\hat{R}$ we can generate many more relations which we will discuss in a later section.
Despite the linear dependence of these states they admit a resolution of identity, consistent with a coherent state basis:  
\begin{theorem} \label{thm_completeness}
The resolution of identity on the space of 4-valent intertwiners has the simple form
\be \label{eqn_res_id}
  \one_{{\cal H}_{j_{i}}} = \sum_{S,T}  \frac{|S,T\ket \bra S,T|}{\|S,T\|_{j_{i}}^2}, \qquad \|S,T\|^2_{j_{i}} \equiv \frac{(J+1)!}{\prod_{i<j}k_{ij}!}.
\ee
\end{theorem}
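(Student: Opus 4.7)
The plan is to verify (\ref{eqn_res_id}) as an operator identity by testing it on the overcomplete family of Livine--Speziale coherent intertwiners $\|j_i,z_i\ket$. Since these coherent states span ${\cal H}_{j_i}$ via (\ref{coherent}), and the space is finite dimensional, two operators on it agree iff their matrix elements $\bra j_i,\check{w}_i|\cdot|j_i,z_i\ket$ agree for every choice of spinor collections $\{z_i\},\{w_i\}$. Hence it suffices to prove
\be
\bra j_i,\check{w}_i \| j_i,z_i\ket
\;=\;
\sum_{S,T}\frac{\bra j_i,\check{w}_i|S,T\ket\,\bra S,T|j_i,z_i\ket}{\|S,T\|_{j_i}^{2}}.
\ee

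Using the bijection $(S,T)\leftrightarrow[k]$ supplied by (\ref{int1})--(\ref{int2}), together with the overlap identity $\bra j_i,\check{z}_i\|k_{ij}\ket = (z_i|k_{ij}\ket = \bra k_{ij}\|j_i,z_i\ket$ stated just below (\ref{real}), both factors in the summand evaluate to holomorphic monomials,
\be
\bra j_i,\check{w}_i|k_{ij}\ket = (w_i|k_{ij}\ket,\qquad
\bra k_{ij}|j_i,z_i\ket = (z_i|k_{ij}\ket,
\ee
so that the right-hand side of the candidate resolution collapses to
\be
\sum_{[k]\in K_j}\frac{(w_i|k_{ij}\ket\,(z_i|k_{ij}\ket}{\|[k]\|^{2}},
\qquad \|[k]\|^{2}=\frac{(J+1)!}{\prod_{i<j}k_{ij}!},
\ee
which is exactly the right-hand side of the fundamental scalar-product formula (\ref{fundrelation}).

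Invoking (\ref{fundrelation}) identifies this sum with $\bra j_i,\check{w}_i\|j_i,z_i\ket$, finishing the matching of matrix elements on a spanning family and hence establishing (\ref{eqn_res_id}). The genuine computational content is thus entirely absorbed into (\ref{fundrelation}), and that identity is the main obstacle; the text postpones its derivation to the generating-functional analysis of Section \ref{section_relate_coh_discoh}. The natural route there is to sum the overlap $\bra j_i,\check{w}_i\|j_i,z_i\ket$ over all admissible $j_i$, convert the defining Haar integral into a Gaussian integral on $\C^2$ via the SU(2) lemma announced for Chapter \ref{chapter_exact}, and then expand in powers of the holomorphic invariants $[w_i|z_j\ket$; reading off the coefficient of $\prod_{i<j}[w_i|z_j\ket^{k_{ij}}$ yields precisely the normalization $\|[k]\|^{-2}=\prod_{i<j}k_{ij}!/(J+1)!$. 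Once (\ref{fundrelation}) is in hand, the present theorem follows with no further computation.
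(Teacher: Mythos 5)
Your proof is valid, but it takes a genuinely different route from the paper's own proof of Theorem \ref{thm_completeness}. The paper works directly with the Schwinger generating functional ${\cal C}_{\tau_{ij}}$ at generic antisymmetric parameters $\tau_{ij}$, evaluates $\bra {\cal C}_{\tau_{ij}}|{\cal C}_{\tau_{ij}}\ket$ as the inverse determinant (\ref{eqn_scalar_det}), whose $n=4$ form (\ref{det4}) contains the Pl\"ucker term $|R(\tau)|^{2}$, then specializes to $\tau_{ij}=[z_i|z_j\ket$ (so $R=0$) and compares the two expansions to conclude that $\sum_{S',T'}|S',T'\ket\bra S',T'|/\|S',T'\|^{2}$ acts as the identity on the spanning family $|S,T\ket$ itself. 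You instead test the candidate operator identity on the Livine--Speziale coherent intertwiners and observe that it is literally equivalent to the overlap formula (\ref{fundrelation}). This is legitimate: the coherent intertwiners span by (\ref{coherent}), and (\ref{fundrelation}) is established in Section \ref{section_relate_coh_discoh} without invoking Theorem \ref{thm_completeness}, so there is no circularity; indeed your route coincides with the paper's later theta-graph argument, which it notes ``proves the $n$-valent generalization'' of this theorem. What the paper's proof buys is that the deformation to generic $\tau$ and the determinant (\ref{det4}) also encode the off-diagonal structure of $\bra S,T|S',T'\ket$ exploited later (Proposition \ref{product1}); what your route buys is uniformity in the valence $n$ and a clean isolation of the computational core in (\ref{fundrelation}).

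Two small slips in your sketch of (\ref{fundrelation}), neither fatal: after multiplying by $(J+1)!$, summing over $j_i$ and applying Lemma \ref{eqn_SU2_lemma}, the Gaussian integral yields $e^{\sum_{i<j}[w_i|w_j\ket[z_i|z_j\ket}$, so the expansion is in the per-vertex invariants $[w_i|w_j\ket$ and $[z_i|z_j\ket$, not in mixed invariants $[w_i|z_j\ket$ as you wrote. Moreover, since the monomials $\prod_{i<j}[z_i|z_j\ket^{k_{ij}}$ with $[k]\in K_j$ are linearly dependent for $n\geq 4$ (Pl\"ucker relations), one cannot ``read off the coefficient'' of an individual monomial; instead one matches terms of fixed homogeneity degree $2j_i$ in each $z_i$ and $w_i$, which produces precisely the sum over $[k]\in K_j$ appearing in (\ref{fundrelation}), and that is all your argument needs.
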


\begin{proof}
To show this we introduce the following generating functional which depends holomorphically on $n$ spinors $|z_{i}\ket$ and $n(n-1)/2$ complex numbers $\tau_{ij} =-\tau_{ji}$
\be
 {\cal C}_{\tau_{ij}}(z_i)  \equiv  \sum_{[k] } 
 \prod_{i<j} \tau_{ij}^{k_{ij}} (z_i|k_{ij}\ket = e^{\sum_{i<j} \tau_{ij}[z_{i}|z_{j}\ket },
\ee
Here the sum is over all non-negative integers $[k]$ and not just those satisfying (\ref{kj}).  This functional was first considered by Schwinger \cite{schwinger2001angular}.  The scalar product between the generating functionals is
\be\label{CC22}
\left\bra {\cal C}_{\tau_{ij}} | {\cal C}_{\tau_{ij}} \right\ket =  \int \prod_{i}\rd\mu(z_{i})   \left|{\cal C}_{\tau_{ij}}(z_{i})\right|^{2}
 =   \int \prod_{i}\rd\mu(z_{i}) e^{\sum_{i<j} \tau_{ij}[z_{i}|z_{j}\ket + \bar{\tau}_{ij} \bra z_{j}|z_{i}]}.
\ee
This integral is Gaussian and can be shown to have the following exact evaluation (for more details see \cite{Freidel:2012ji})
\be \label{eqn_scalar_det}
\left\bra {\cal C}_{\tau_{ij}} | {\cal C}_{\tau_{ij}} \right\ket
= \frac{1}{\det(1 + T\overline{T})}
\ee
where $T = (\tau_{ij})$ and $\overline{T} = (\overline{\tau}_{ij})$ are $n \times n$ antisymmetric matrices.  This determinant can be evaluated explicitly and in the case $n=4$ it has the form 
\be\label{det4}
\det(1 + T\overline{T}) = \left(1-\sum_{i<j} |\tau_{ij}|^{2} +  |R(\tau)|^{2}\right)^{2}
\ee
where $R(\tau)= \tau_{12}\tau_{34} +  \tau_{13} \tau_{42}+ \tau_{14}\tau_{23}$.  Notice that $R(\tau)$ is the Pl\"ucker identity and when $\tau_{ij} =[z_{i}|z_{j}\ket$ it vanishes.  Now by expanding the LHS of (\ref{CC22}) in the notation of (\ref{eqn_ST_notation})
\bea \label{eqn_scalar_expanded} 
 \left\bra {\cal C}_{\tau_{ij}} | {\cal C}_{\tau_{ij}} \right\ket&= & \sum_{j_i,S,T} \sum_{j'_i,S',T'} \left\bra S,T \right|\left. S',T' \right\ket \prod_{i<j} \bar{\tau}_{ij}^{k_{ij}} \tau_{ij}^{k_{ij}'}
 \eea
we see that the generating functional contains information about the scalar products of the new intertwiner basis.  
We now have two different ways to evaluate the scalar product for the generating functional 
with $\tau_{ij} = [z_i|z_j\ket$.
On one hand we start from (\ref{eqn_scalar_expanded}) to get 
\bea
 \left\bra {\cal C}_{[z_i|z_j\ket} | {\cal C}_{[z_i|z_j\ket} \right\ket&= &  \sum_{j_i,S,T} \sum_{j'_i,S',T'}(J+1)!^{2} \frac{( z_i | S,T \ket \bra S,T | S',T' \ket \bra S', T' | z_i )}{\|S,T\|^2 \|S',T'\|^2}. 
\eea
here we used the definition of our states and normalization:
\be
 \prod_{i<j} [z_i|z_j\ket^{k_{ij}} = (J+1)!\frac{ ( z_i | S,T \ket}{\|S,T\|^2}.
\ee
On the other hand we can evaluate directly  the product by expanding (\ref{det4}) when $R(\tau)=0$.
This gives
\be
 \left\bra {\cal C}_{[z_i|z_j\ket} | {\cal C}_{[z_i|z_j\ket} \right\ket =
 \sum_{j_i,S,T}(J+1)!^{2}\, \frac{( z_i | S,T \ket \bra S, T | z_i )}{\|S,T\|^2}.
\ee
equating the two expressions gives the identity decomposition
\be
\sum_{S',T'} \frac{\bra S,T | S',T' \ket \bra S', T' | z_i )}{ \|S',T'\|^2} = \bra S, T | z_i ).
\ee
This completes the proof.
\end{proof}

We will show that despite the fact that they are discrete, the $|S,T\ket$ basis shares many of the same properties as the coherent intertwiners such as the correspondence with classical tetrahedra in the semi-classical limit. In addition the $|S,T\ket$ states also possess a simple relation with the orthogonal basis as  we will show in the next section.

\subsection{The Relation with the Orthogonal Basis}

In the previous sections we introduced a new and overcomplete basis of the space of 4-valent intertwiners which provided a simple decomposition of the identity.  On the other hand, the standard basis of 4-valent intertwiners is orthogonal, and is defined by the eigenstates of either of the invariant operators $J_{1}\cdot J_{2}$ or $J_{1}\cdot J_{3}$ or $J_{1}\cdot J_{4}$.  We will denote these orthogonal bases by $|S\ket$ and $|T\ket$ and $|U\ket$ respectively.  We would now like to investigate the action of the $S$ and $T$ channel operators $J_{1}\cdot J_{2}$ and $J_{1}\cdot J_{3}$ on $\left|S,T\right\ket$ as well as the relationship between the four  bases: $\left|S,T\right\ket$, $\left|S\right\ket$, $\left|T\right\ket$, $ | U\ket$. 

It is well known that, up to normalization,
the usual 4-valent intertwiner basis is obtained by the composition of two trivalent intertwiners.
For now we will focus on the $|S\ket$ states, which in the holomorphic representation, are defined to be
\bea \label{eqn_S_def}
\left( z_i | S \right\ket \equiv \int \rd\mu(z) {C}_{(j_{1},j_{2},S)}(z_{1},z_{2},\check{z}) {C}_{(S,j_{3},j_{4})}(z,z_{3},z_{4}),
\eea
where $|\check{z}\ket \equiv |z]$ and ${C}_{(j_{1},j_{2},S)}(z_{1},z_{2},\check{z}) = (z_{1},z_{2},\check{z}|k_{ij}(j_{1},j_{2},S)\ket$.  The expression (\ref{eqn_scalar_op_sqrt}) for the scalar product operators $J_{i}\cdot J_{j}$ repeated here for convenience
\be
  2J_{i}\cdot J_{j} = E_{ij}E_{ji}-\frac12 E_{ii}E_{jj} - E_{ii}.
\ee
can be written in the holomorphic representation using
\be
  E_{ij}\equiv z_{i}^{A}\partial_{z_{j}^{A}}.
\ee 

The operator $E_{ij}$ acts nontrivially only on a function of $z_{j}$ and its action amounts to replacing $z_{j}$ by $z_{i}$, i-e
 \be
E_{ij}\cdot [z_{j}|w\ket = [z_{i}|w\ket.
 \ee
Using this we can now compute the action of $J_1 \cdot J_2$ on $|S\ket$.  First note that the action of $E_{ii}$ on $|S\ket$ is given by $2j_{i}$ and the action of $E_{12}E_{21}$ is given by $(j_1-j_2+S)(-j_1+j_2+S+1)$.  Therefore the action of $J_1 \cdot J_2$ on $|S\ket$ is found to be
\bea \label{eqn_eigen_J_dot_J}
  J_{1}\cdot J_{2} \left|S\right\ket &=& \frac12\left(S(S+1) - j_{1}(j_{1}+1) - j_{2}(j_{2}+1)\right) \left|S\right\ket.
\eea

We are now in a position to discuss the physical interpretation of the spins $S$ and $T$.  From equation (\ref{eqn_eigen_J_dot_J}) we see that the operator $(J_1 +J_2)^2$ is diagonal in the $|S\ket$ basis with eigenvalue $S(S+1)$.  In \cite{Baez:1999tk} it is pointed out that if $A_1$ and $A_2$ are the classical area vectors of two faces of a tetrahedron then $|A_1 + A_2|^2$ is equal to four times the area of the medial parallelogram between the two faces.  The spins $T$ and $U$ would then be the areas of the other two medial parallelograms in the tetrahedron.  

This interpretation, however, does not hold for the $|S,T\ket$ states as we will see by computing the action $J_1 \cdot J_2$ on $|S,T\ket$.  We will find the true correspondence with the classical variables when we study the semi-classical limit.

\begin{theorem}
The action of $J_1 \cdot J_2$ on $|S,T\ket$ does not change the value of $S$ and it is given by
\begin{align} \label{Jact1}
2J_{1}\cdot J_{2} \left|S,T \right\ket = \left(S(S+1) - j_{1}(j_{1}+1) - j_{2}(j_{2}+1)\right) \left|S,T \right\ket \\
+\left( (k_{14}+1)(k_{23}+1)  \left|S,T-1 \right\ket -
k_{14}k_{23}  \left|S,T \right\ket  \right)  \nonumber \\
+\left( (k_{13}+1)(k_{24}+1) \left|S,T+1 \right\ket -
k_{13}k_{24}  \left|S,T \right\ket  \right). \nonumber
\end{align}
where $k_{ij}$ stands for $k_{ij}(j_i,S,T)$.  Similarly the action of $J_1 \cdot J_3$ does not change the value of $T$.
\end{theorem}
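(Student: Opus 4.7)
The plan is to work in the holomorphic representation and exploit the $\mathfrak{u}(N)$ expression
\be
2\,J_1\cdot J_2 \;=\; E_{12}E_{21} - \tfrac12 E_{11}E_{22} - E_{11},\qquad E_{ij}=z_i^A\partial_{z_j^A},
\ee
already derived in this chapter, and apply it directly to the monomial $(z_i\|S,T\ket = \prod_{i<j}[z_i|z_j\ket^{k_{ij}}/k_{ij}!$. First I would observe why $S$ is preserved: since $E_{21}[z_1|z_2\ket = [z_2|z_2\ket = 0$ and $E_{12}[z_1|z_2\ket = [z_1|z_1\ket = 0$, the operator $E_{12}E_{21}$ annihilates the $[z_1|z_2\ket$ factor of the monomial and hence leaves the exponent $k_{12}=j_1+j_2-S$, and with it $S$, unchanged. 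Only the exponents $k_{13},k_{14},k_{23},k_{24}$ can be shifted.

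Next I would compute $E_{12}E_{21}$ by treating each $E_{ij}$ as a derivation. The non-vanishing contributions of $E_{21}$ come from the $z_1$-bearing factors $[z_1|z_3\ket^{k_{13}}$ and $[z_1|z_4\ket^{k_{14}}$, producing two intermediate monomials with prefactors $k_{13}$ and $k_{14}$. A second derivation by $E_{12}$ then splits each of these into two terms, giving four monomials in total: two that revert to the original monomial (contributing $[k_{13}(k_{23}+1)+k_{14}(k_{24}+1)]\,(z_i\|S,T\ket$ before normalization), and two in which either the pair $(k_{13},k_{24})$ is decremented while $(k_{14},k_{23})$ is incremented, or the opposite shift occurs. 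Using $T=j_1+j_3-k_{13}=j_2+j_4-k_{24}$, $U=j_1+j_4-k_{14}=j_2+j_3-k_{23}$ and the constraint $S+T+U=J$, these two shifted monomials are recognized as $|S,T\pm1\ket$ up to factorial normalization.

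Passing from the raw monomials back to the normalized basis states requires absorbing the factorial ratios $\prod k'_{ij}!/\prod k_{ij}!$. A short computation shows that the outside prefactors $k_{13}k_{24}$ and $k_{14}k_{23}$ cancel against these ratios to leave precisely the advertised coefficients $(k_{14}+1)(k_{23}+1)$ and $(k_{13}+1)(k_{24}+1)$ in front of the $|S,T\mp1\ket$ terms. Finally I would add the contribution of the remaining two terms in $2\,J_1\cdot J_2$, which act diagonally on $|S,T\ket$ as $-\tfrac12(2j_1)(2j_2)-2j_1$, and simplify the combined diagonal coefficient using $k_{13}+k_{14}=j_1-j_2+S$ and $k_{23}+k_{24}=-j_1+j_2+S$ (consequences of the homogeneity relations together with $S+T+U=J$) to reduce it to the Casimir-like expression $S(S+1)-j_1(j_1+1)-j_2(j_2+1)$.

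The main obstacle is the combinatorial bookkeeping in the third step: tracking all four shifted monomials and verifying that their normalization ratios cancel exactly as needed. The algebra itself is elementary but laborious, and one must be careful with the conventional identification between shifts in $(k_{13},k_{24})$ versus $(k_{14},k_{23})$ and shifts in $T$. The final assertion about $J_1\cdot J_3$ follows by an identical argument with the pair $(1,2)$ replaced by $(1,3)$: now $E_{13}$ and $E_{31}$ annihilate $[z_1|z_3\ket$, so $k_{13}$ and hence $T=j_1+j_3-k_{13}$ is preserved, and the full matrix elements arise from the analogous four-term calculation in the $(k_{12},k_{34},k_{14},k_{24})$ sector.
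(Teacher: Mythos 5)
Your proposal follows essentially the same route as the paper's proof: apply $2J_1\cdot J_2=E_{12}E_{21}-\tfrac12 E_{11}E_{22}-E_{11}$ in the holomorphic representation to the monomial $(z_i|S,T\ket$, obtain the diagonal piece $k_{13}(k_{23}+1)+k_{14}(k_{24}+1)$ plus the two renormalized shifted terms, and simplify the diagonal via $(k_{13}+k_{14})(k_{23}+k_{24})=S^2-(j_1-j_2)^2$, which is precisely the identity the paper invokes. The only differences are cosmetic: the paper obtains the $J_1\cdot J_3$ statement by the permutation $1\leftrightarrow 3$ rather than redoing the computation, and the delicate bookkeeping of which shifted monomial is $|S,T+1\ket$ versus $|S,T-1\ket$ is treated (and glossed) in the same way in both arguments.
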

\begin{proof}
The action of $E_{ii}$ on $|S,T\ket$ is given by $2j_{i}$ while the action of $E_{12}E_{21}$ on $|S,T\ket$ is
\be
  \left( k_{13}(k_{23}+1) + k_{14}(k_{24}+1) \right)|S,T\ket + (k_{13}+1)(k_{24}+1)|S,T+1\ket + (k_{14}+1)(k_{23}+1)|S,T-1\ket. 
\ee
Now with this and the relation 
\be
  k_{13}k_{23} + k_{14}k_{24} = S^{2}- (j_{1}-j_{2})^{2} - k_{13}k_{24}-k_{14}k_{23}
\ee
we find the desired result.  The action of $J_1 \cdot J_3$ can be deduced from a permutation exchanging $1$ and $3$, under such a permutation $J_1 \cdot J_3 \to J_1 \cdot J_2$ and $ (-1)^{k_{23}} |S,T\ket \to  |T, S\ket$.  Similarly under an exchange of $1$ and $4$, $J_1 \cdot J_4 \to J_1 \cdot J_2$ and $ (-1)^{k_{23}+k_{34}} |S,T\ket \to  |U, T\ket$ .
\end{proof}

While the $S$ and $T$ spins don't share the interpretation of areas of parallelograms like in the orthogonal basis (since there are extra diagonal terms), it turns out that they are still closely related as we will now show.  First of all, notice that the coefficient of the first term in (\ref{Jact1}) is the same as the eigenvalue in (\ref{eqn_eigen_J_dot_J}).  Furthermore, if one sums over $T$ in (\ref{Jact1}) it can be seen that the last two terms cancel out because $k_{13}(j_{i},S,T-1)= k_{13}(j_{i},S,T)+1$,
$k_{14}(j_{i},S,T+1)= k_{14}(j_{i},S,T)+1$... and so on.  
Therefore $\sum_{T} \left|S,T \right\ket$ is {\it  proportional } to $\left| S \right\ket$.  What we will now show in the following theorem is that the proportionality constant is exactly one.
\begin{theorem} \label{thm_sum_T}
The orthogonal basis is obtained from the $\left|S,T \right\ket$ basis by summing over the $S$ or $T$ channels
\be
\left| S \right\ket =\sum_{T}  \left|S,T \right\ket, \hspace{12pt} \left| T \right\ket =\sum_{S} (-1)^{k_{23}}  \left| S,T \right\ket, \quad \left| U \right\ket =\sum_{S+T=J-U} (-1)^{k_{23}+k_{34}}  \left| S,T \right\ket.
\ee
\end{theorem}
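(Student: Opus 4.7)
The plan is to establish the identity by direct computation in the holomorphic representation, starting from the definition (\ref{eqn_S_def}) and the monomial factorisation (\ref{C}) of the trivalent intertwiners. Substituting gives
\[
(z_{i}|S\ket=\int\rd\mu(z)\,\frac{[z_{1}|z_{2}\ket^{j_{1}+j_{2}-S}[z_{1}|\check z\ket^{j_{1}+S-j_{2}}[z_{2}|\check z\ket^{j_{2}+S-j_{1}}}{(j_{1}+j_{2}-S)!(j_{1}+S-j_{2})!(j_{2}+S-j_{1})!}\cdot\frac{[z|z_{3}\ket^{S+j_{3}-j_{4}}[z|z_{4}\ket^{S+j_{4}-j_{3}}[z_{3}|z_{4}\ket^{j_{3}+j_{4}-S}}{(S+j_{3}-j_{4})!(S+j_{4}-j_{3})!(j_{3}+j_{4}-S)!},
\]
and the elementary identities $[z_{a}|\check z\ket=\bra z|z_{a}\ket$ and $[z|z_{b}\ket=-\bra\check z_{b}|z\ket$ cast the integrand as a product of an antiholomorphic and a holomorphic polynomial in $z$, making the integral Gaussian.

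I would evaluate that Gaussian via the generating function
\[
\int\rd\mu(z)\,e^{s\bra z|z_{1}\ket+t\bra z|z_{2}\ket+u[z|z_{3}\ket+v[z|z_{4}\ket}=\exp\!\bigl(us[z_{1}|z_{3}\ket+ut[z_{2}|z_{3}\ket+vs[z_{1}|z_{4}\ket+vt[z_{2}|z_{4}\ket\bigr),
\]
a direct consequence of the reproducing identity $\int\rd\mu(z)\,e^{\bra z|A\ket+\bra B|z\ket}=e^{\bra B|A\ket}$. Reading off the coefficient of $s^{j_{1}+S-j_{2}}t^{j_{2}+S-j_{1}}u^{S+j_{3}-j_{4}}v^{S+j_{4}-j_{3}}$ expresses the integral as a single finite sum, parameterised by one integer $a$ with $b,c,d$ fixed by the linear constraints $a+c=j_{1}+S-j_{2}$, $a+b=S+j_{3}-j_{4}$, $b+d=j_{2}+S-j_{1}$, $c+d=S+j_{4}-j_{3}$, of the monomials $[z_{1}|z_{3}\ket^{a}[z_{2}|z_{3}\ket^{b}[z_{1}|z_{4}\ket^{c}[z_{2}|z_{4}\ket^{d}$ carrying the appropriate factorial weights.

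The decisive identification is the substitution $T:=j_{1}+j_{3}-a$: using (\ref{int1})--(\ref{int2}) one verifies that $a,b,c,d$ equal precisely $k_{13}(j_{i},S,T),k_{23}(j_{i},S,T),k_{14}(j_{i},S,T),k_{24}(j_{i},S,T)$. After this substitution all factorials arising from the trivalent intertwiners and from the multinomial expansion assemble into $\prod_{i<j}k_{ij}!$, so each term of the $a$-sum equals $(z_{i}|S,T\ket$ as defined by (\ref{C}); incorporating the untouched factors $[z_{1}|z_{2}\ket^{k_{12}}$ and $[z_{3}|z_{4}\ket^{k_{34}}$ coming directly from the trivalent intertwiners completes the identification $(z_{i}|S\ket=\sum_{T}(z_{i}|S,T\ket$. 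The range of $T$ matches $\mathcal{N}_{j_{i}}$ because the same factorials enforce the non-negativity of all six $k_{ij}$.

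The $T$- and $U$-channel formulas follow from the analogous computation after the permutations $z_{2}\leftrightarrow z_{3}$ and $z_{2}\leftrightarrow z_{4}$ of the spinors entering the two trivalent intertwiners: the generating functional now contains invariants of the form $[z_{3}|z_{2}\ket=-[z_{2}|z_{3}\ket$, and expanding the exponential produces the claimed sign factors $(-1)^{k_{23}}$ and $(-1)^{k_{23}+k_{34}}$, while the free summation parameter becomes $S$ at fixed $T$ (respectively pairs $(S,T)$ with $S+T=J-U$ at fixed $U$), consistent with $S+T+U=J$. The main obstacle is purely notational: tracking these signs through the linear change of variables from the integration parameter to $T$ (resp.\ $S$), verifying by means of the constraint $\sum_{j\neq i}k_{ij}=2j_{i}$ that the resulting exponent reduces to the advertised form, and checking that the same factorials confine the summation to the admissible region $\mathcal{N}_{j_{i}}$ of the $k$-basis.
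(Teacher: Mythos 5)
Your proposal is correct and is essentially the paper's own argument: the paper likewise evaluates the Gaussian integral over the internal spinor via a Schwinger-type generating functional with source variables at each trivalent node (equation (\ref{eqn_gen_fun_S_ST})) and then compares coefficients of the monomial in the sources, noting that the left side is a sum over $S$ while the right side is a sum over $S$ and $T$, which yields $|S\ket=\sum_T|S,T\ket$ with the other channels following by permutation. Your version only differs in bookkeeping — you expand the monomial intertwiners explicitly and extract the coefficient of $s^{k_1}t^{k_2}u^{k_3}v^{k_4}$, identifying the free summation integer $a=k_{13}$ with $T=j_1+j_3-a$ — which is the same computation.
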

\begin{proof}
Using the generating functionals in (\ref{defC}) in analogy with the definition (\ref{eqn_S_def}) of $|S\ket$ we can perform the following Gaussian integral 
\begin{align} \label{eqn_gen_fun_S_ST}
  & \int \rd\mu(z){\cal C}_{(\tau_{1},\tau_{2},\tau_{12})}(z_{1},z_{2},\check{z}) {\cal C}_{(\tau_{3},\tau_{4},\tau_{34})}(z,z_{3},z_{4})  \\
  &= e^{\tau_{12}[z_1|z_2\ket + \tau_{34}[z_3|z_4\ket} \int \rd\mu(z) e^{\tau_{1}[\check{z}|z_1\ket + \tau_{2}[\check{z}|z_2\ket} e^{\tau_{3}[z|z_3\ket + \tau_{4}[z|z_4\ket} 
  = e^{\sum_{i<j} \tau_{ij} [z_i|z_j\ket} 
  = {\cal C}_{(\tau_{ij})}(z_{i}), \nonumber
\end{align}
where $|\check z\ket = |z]$ and $\tau_{13}=\tau_{1}\tau_{3}$, $\tau_{14}=\tau_{1}\tau_{4}$, $\tau_{23}=\tau_{2}\tau_{3}$, $\tau_{24}=\tau_{2}\tau_{4}$.  Now let $k_{1} = j_{1}-j_{2}+S$, $k_{2} = j_{2}-j_{1}+S$, $k_{3} = j_{3}-j_{4}+S$, and $k_{4} = j_{4}-j_{3}+S$ as prescribed by (\ref{eqn_3_k}). Then looking at the coefficient of 
\be
\tau_{12}^{k_{12}}\tau_{1}^{k_{1}} \tau_{2}^{k_{2}}\tau_{3}^{k_{3}} \tau_{4}^{k_{4}}\tau_{34}^{k_{34}}
\ee
we get the conditions $k_{1} = k_{13}+k_{14}$, $k_{2} = k_{23}+k_{24}$, $k_{3} = k_{13}+k_{34}$, and $k_{4} = k_{14}+k_{24}$.  These conditions are trivially satisfied if the $k_{ij}$ are defined as in (\ref{int1}) and (\ref{int2}) which can be seen for instance by adding $k_{12}$ to the first condition.  Notice, however that the LHS of (\ref{eqn_gen_fun_S_ST}), when expanded, is a sum over $j_i$ and $S$ whereas the RHS is a sum over $j_i$, $S$, and $T$.  Thus we get the identity
\be
\int \rd\mu(z) {C}_{(j_{1},j_{2},S)}(z_{1},z_{2},\check{z}) {C}_{(S,j_{3},j_{4})}(z,z_{3},z_{4})
= \sum_{T} ( z_{i}| S,T\ket_{j_{i}}.
\ee
which implies $|S\ket = \sum_T |S,T\ket$.The other identities are obtained by permutation of indices.
\end{proof}
\begin{figure} 
  \centering
    \includegraphics[width=0.6\textwidth]{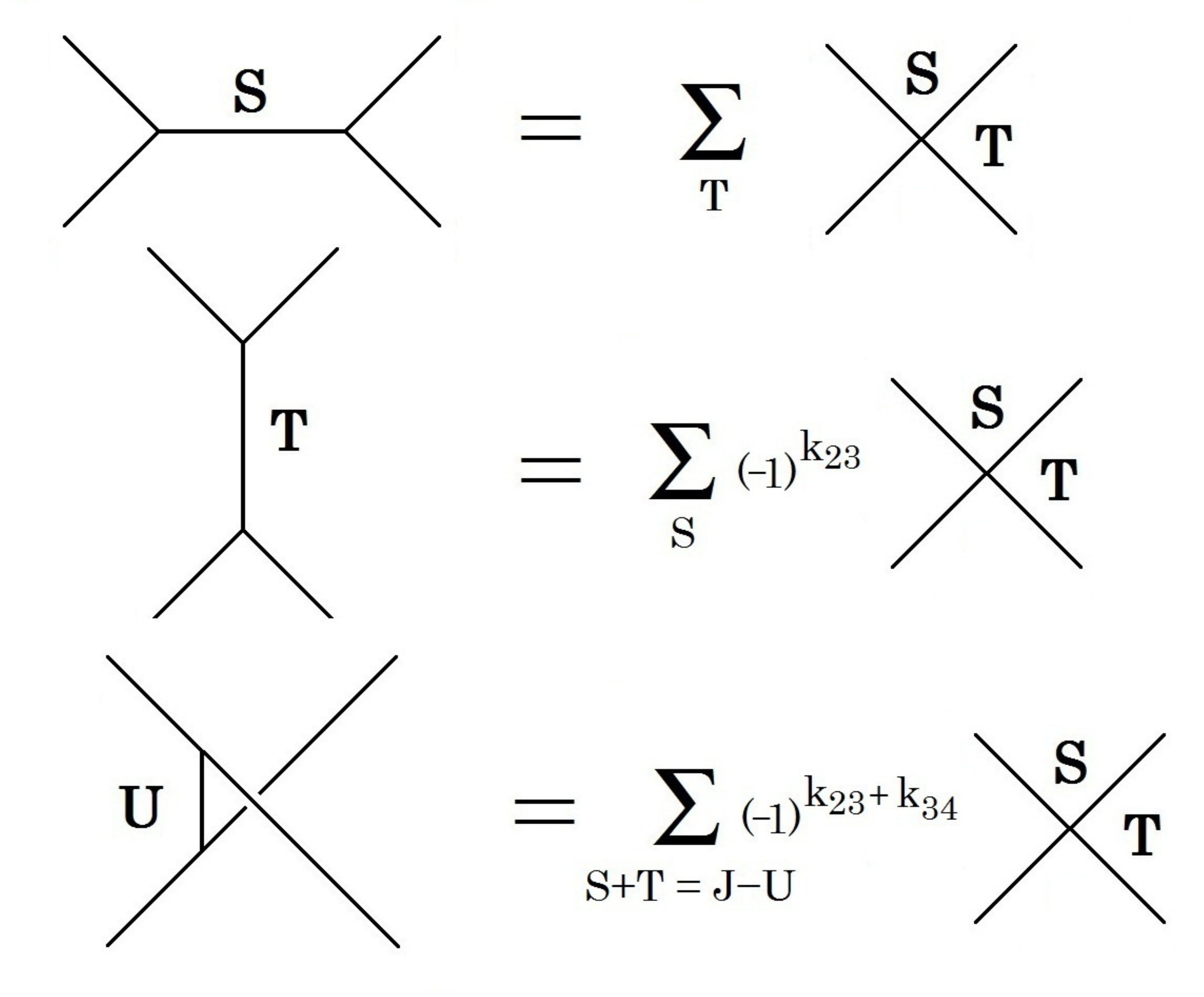}
    \caption{The graphical representation of theorem \ref{thm_sum_T} where a 4-valent vertex labeled by S and T is summed over T to produce the S channel decomposition.}  \label{fig_sumT}
\end{figure}
This last theorem shows that the $|S\ket$ and $|T\ket$ or $|U\ket$ bases are generated by the $|S,T\ket$ basis.  This is particularly useful for instance when describing spin-network amplitudes containing 4-valent nodes since a choice of $S$ or $T$ basis must be made at every such node.  The amplitude written in the $|S,T\ket$ basis however will generate all the different kinds of amplitudes by simply summing over the labellings.  For example the $15j$ symbol comes in five different kinds depending on the basis choice at the five nodes.  Thus a new symbol labeled by 20 spins, i.e. ten $j_i$, five $S$'s and five $T$'s, based on the $|S,T\ket$ basis would be a generator of these various symbols.  Moreover this $20j$ symbol would be the amplitude corresponding to the coherent 4-simplex.  We will define this new symbol shortly.

\subsection{Scalar Products}

In this section we will compute the scalar product in the $|S,T\ket$ basis and demonstrate the utility of Theorem \ref{thm_sum_T} by generating all the various other scalar products.  Let us first make a general remark about the form of the scalar product that follows from the  resolution of identity in (\ref{eqn_res_id}).
Let us split the scalar product into the naive product and the remainder:
\be
\left\bra S,T  \right.\left| S',T'  \right \ket = \|S,T\|^2 \delta_{S,S'} \delta_{T,T'} + O_{S,T}^{S',T'}
\ee
The resolution of the identity implies that 
\be
\sum_{S',T'} O_{S,T}^{S',T'} \frac{\bra S',T' |}{||S',T'||^{2}} =0= \sum_{S,T}\frac{|S,T\ket}{||S,T||^{2}} O_{S,T}^{S',T'}
\ee
This means that the reminder belongs to the algebra generated by the  fundamental relation in (\ref{fund_rel}).
These relations can be derived by considering the product of the operator $\hat{R}:H_{j_{i}-\frac12} \mapsto {\cal H}_{j_{i}}$ introduced previously: $R(z_i)^N \bra z_{i} |S,T\ket = \bra z_{i} | \hat{R}^{N}|S,T\ket$, where
 $R(z_i)$ is the Pl\"ucker relation given in (\ref{eqn_R_plucker}).
 Expanding $R(z_i)^N$ using the multinomial theorem we find
 \be\label{rel}
\left(R(z_i)\right)^N \prod_{i<j} [z_i|z_j\ket^{k_{ij}(j_i-N/2,s,t)} = \sum_{S,T} R^{(s,t)}_{(S,T)}(N) \prod_{i<j} [z_i|z_j\ket^{k_{ij}(j_i,S,T)} =0,
 \ee
where the summation coefficients are given by
\be\label{eqn_R_def}
R^{(s,t)}_{(S,T)}(N) = \frac{(-1)^{t-T+N}N!}{[s-S+N]![t-T+N]![S-s+T-t-N]!}
\ee
and the sum is over $S=s+N-a$, $T=t+N-b$ with $ a,b\geq 0$ and $ a+b \leq N$.  
From this result and the definition of the states $ \bra z_{i} |S,T\ket = ||S,T||_{j_{i}} / (J+1)!  \prod_{i<j} [z_i|z_j\ket^{k_{ij}(j_i,S,T)}$, we can write this relation  as
\be \label{eqn_R_action}
 \frac{ \hat{R}^{N}  | s,t\ket_{j_{i}-N/2}}{ ||s,t||_{j_{i}-N/2}^{2 }}  =  \frac{(J+1)!}{(J-2N +1)!} 
 \left(\sum_{S,T} R^{(s,t)}_{(S,T)}(N) \frac{| S,T \ket_{j_{i}}}{ ||S,T ||_{j_{i}}^{2 }}\right) =0.
\ee
The coefficients in the sum vanish if any of the arguments in the factorials is negative.  Note that for $N=1$ we recover the fundamental relation (\ref{fund_rel}).  

Now that we have determined the linear relations among the basis states we can deduce the exact form of the scalar product
\begin{proposition}\label{product1}
The scalar product is given by
\be\label{scalar}
\left\bra S,T  \right.\left| S',T'  \right \ket = \|S,T\|^2 \delta_{S,S'} \delta_{T,T'} +
\sum_{s,t,N} \frac{(-1)^N }{N!}\frac{(J-N+1)!}{\prod_{i<j} k_{ij}(j_i-N/2,s,t)!} 
{ R^{(s,t)}_{(S,T)}(N) R^{(s,t)}_{(S',T')}(N) }{}
\ee
\end{proposition}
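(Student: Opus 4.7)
The plan is to extract $\bra S,T | S',T'\ket$ by computing a bilinear extension of the scalar product $\bra\mathcal{C}_\tau|\mathcal{C}_\sigma\ket$ in two different ways. Introduce two independent antisymmetric complex matrices $\tau=(\tau_{ij})$ and $\sigma=(\sigma_{ij})$ and define
\begin{equation}
\bra \mathcal{C}_{\tau} | \mathcal{C}_{\sigma} \ket \equiv \int \prod_i \rd\mu(z_i) \, \overline{\mathcal{C}_{\tau}(z_i)}\, \mathcal{C}_{\sigma}(z_i).
\end{equation}
The Gaussian computation leading to (\ref{eqn_scalar_det}) extends without change to this bilinear setting, yielding $1/\det(1+\bar T\Sigma)$, and the $n=4$ factorization underlying (\ref{det4}) gives the closed form
\begin{equation}
\bra \mathcal{C}_{\tau} | \mathcal{C}_{\sigma} \ket = \frac{1}{\left(1 - \sum_{i<j}\bar\tau_{ij}\sigma_{ij} + \overline{R(\tau)}\,R(\sigma)\right)^{2}}.
\end{equation}

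Next I expand both sides and match the coefficient of $\bar\tau^{[k(j_i,S,T)]}\sigma^{[k(j_i,S',T')]}$. By the definition (\ref{C}) of the basis, the left-hand side equals $\sum \bar\tau^{[k(j_i,S,T)]}\sigma^{[k(j_i,S',T')]}\bra S,T|S',T'\ket_{j_i}$. The right-hand side expands via the identity $(1-x+y)^{-2}=\sum_{M\geq 0}(M+1)(x-y)^M$ with $x=\sum_{i<j}\bar\tau_{ij}\sigma_{ij}$ and $y=\overline{R(\tau)}R(\sigma)$ as
\begin{equation}
\sum_{N,L\geq 0}(L+N+1)\binom{L+N}{N}(-1)^N\,\overline{R(\tau)}^{\,N} R(\sigma)^{N}\left(\sum_{i<j}\bar\tau_{ij}\sigma_{ij}\right)^{L}.
\end{equation}
Crucially, the multinomial expansion of $x^L$ produces monomials with \emph{identical} exponent vectors $[l]$ in $\bar\tau$ and $\sigma$; hitting the prescribed targets after the shifts induced by $\overline{R(\tau)}^{\,N}$ and $R(\sigma)^{N}$ pins down $[l]=[k(j_i-N/2,s,t)]$ for a single common pair $(s,t)$, while vertex-sum counting forces $L=J-2N$.

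The multinomial coefficient of $\sigma^{[k(j_i,S',T')]}$ in $R(\sigma)^{N}\sigma^{[k(j_i-N/2,s,t)]}$ is precisely $R^{(s,t)}_{(S',T')}(N)$ of (\ref{eqn_R_def}), since treating the $\sigma_{ij}$ as formal Pl\"ucker-free variables is exactly the reading of (\ref{rel}) that isolates the individual transition coefficients; the conjugate factor $\overline{R(\tau)}^{\,N}$ yields the analogous real coefficient $R^{(s,t)}_{(S,T)}(N)$. The combinatorial prefactors collapse via
\begin{equation}
(L+N+1)\binom{L+N}{N}\frac{L!}{\prod_{i<j}k_{ij}(j_i-N/2,s,t)!}=\frac{(J-N+1)!}{N!\,\prod_{i<j}k_{ij}(j_i-N/2,s,t)!}
\end{equation}
after substituting $L=J-2N$, so the contribution of each triple $(N,s,t)$ to $\bra S,T|S',T'\ket$ is exactly the summand of the proposition. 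The $N=0$ term collapses to $\|S,T\|^{2}\delta_{S,S'}\delta_{T,T'}$ since $R^{(s,t)}_{(S,T)}(0)=\delta_{sS}\delta_{tT}$. The main obstacle is the combinatorial bookkeeping: showing that the $\bar\tau$ and $\sigma$ exponents from $x^L$ must be the \emph{same} vector (so a single $(s,t)$ is summed rather than two independent ones) and verifying that the Pl\"ucker multinomial reproduces (\ref{eqn_R_def}) with the exact sign convention of the proposition.
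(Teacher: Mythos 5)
Your proof is correct and is essentially the paper's own argument: the paper officially routes Proposition \ref{product1} through the projector statements (\ref{eqn_scalar_proj}) and (\ref{eqn_proj_kern}), but the appendix proof of that projector formula is exactly your computation — expand $\bra {\cal C}_{\tau}|{\cal C}_{\tau}\ket=(1-\sum_{i<j}|\tau_{ij}|^{2}+|R(\tau)|^{2})^{-2}$, shift $j_i\to j_i-N/2$, use the multinomial reading of (\ref{rel}) to produce the $R^{(s,t)}_{(S,T)}(N)$ coefficients, and compare coefficients of $\bar\tau^{[k]}\tau^{[k']}$. Your introduction of an independent $\sigma$ is only a cosmetic polarization, since the holomorphic and antiholomorphic powers of a single $\tau$ already carry independent exponent vectors.
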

The proof of this formula is given by equations (\ref{eqn_scalar_proj}) and (\ref{eqn_proj_kern}) together with (\ref{eqn_R_action}).

\subsection{Constraints Quantisation}
\label{Sec_const_quant}

We would like now to develop a deeper understanding of the construction just given of the scalar product.  We have seen that the complexity of the scalar product comes from the imposition of the constraints $\hat{R}=0$. This suggest that we should be able to understand the previous construction in terms of constraint quantization.
In order to do so, let's introduce the auxiliary  Hilbert space ${\cal \widehat{H}}_{j_{i}}$ with an orthogonal basis 
$|S,T)_{j_{i}}$ having $(S,T)\in {\cal N}_{j_{i}}$ and the scalar product
\be
(S',T'|S,T) =||S,T||_{j_{i}}^{2} \delta_{S,S'}\delta_{T,T'}.
\ee 
For this Hilbert space the decomposition of the identity takes the canonical form
\be
\one_{{\cal \widehat{H}}_{j_{i}}} = \sum_{S,T}  \frac{|S,T)( S,T|}{\|S,T\|_{j_{i}}^2}.
\ee
We  define the operator $\hat{R} : {\cal \widehat{H}}_{j_{i}-\frac12}\mapsto {\cal \widehat{H}}_{j_{i}}$ by
\be
\hat{R}|S,T)_{j_{i}-\frac12} \equiv 
{k_{12}k_{34}} | S,T+1)_{j_{i}} 
- {k_{13}k_{24}} |S+1,T)_{j_{i}} 
+ (k_{14}+2)(k_{23} +2)  |S+1,T+1)_{j_{i}} 
\ee
Its powers can be evaluated in terms of the coefficients introduced it the previous section, we find 
\be\label{matrixel}
{}_{j_{i}} (S,T| \hat{R}^{N}  | s,t)_{j_{i}-N/2} = ||s,t||_{j_{i}-N/2}^{2 }  \frac{(J+1)!}{(J-2N +1)!} 
  R^{(s,t)}_{(S,T)}(N).
\ee
The operator $\hat{R}$ is not hermitian, however the operator 
$$H \equiv \hat{R}^{\dagger} \hat{R} $$ is an hermitian operator, being positive its kernel coincides with the kernel of $\hat{R}$. 
The intertwiner Hilbert space is defined as the quotient of this auxiliary Hilbert space by the relation $H=0$. This means that $ {\cal {H}}_{j_{i}} = \mathrm{Im}\Pi_{j_{i}} $, where $\Pi^{2}_{j_{i}}=\Pi_{j_{i}}$ with $\Pi_{j_{i}} : {\cal \widehat{H}}_{j_{i}}\to {\cal \widehat{H}}_{j_{i}}$ the projector onto the 
kernel of $H$.  This means that the intertwined states are related to the auxiliary states as 
$$ |S,T\ket_{j_{i}} = \Pi_{j_{i}} |S,T)_{j_{i}}$$ and the physical scalar is given by the matrix element of the projector
\be \label{eqn_scalar_proj}
\left\bra S,T  \right.\left| S',T'  \right \ket = ( S,T | \Pi_{j_{i}} |  S',T' ).
\ee
From the results of the previous section this projector can be explicitly constructed.
\begin{lemma}\label{product2}
The projector onto the kernel of $H$ is explicitly given by
\be \label{eqn_proj_kern}
\Pi_{j_{i}} = 1 +\sum_{N=1}^{\mathrm{min}(2j_{i})}\frac{(-1)^{N}}{N!} \frac{(J-N+1)!(J-2N+1)!}{(J+1)!^{2}} \, \hat{R}^{N}(\hat{R}^{\dagger})^{N}.
\ee
\end{lemma}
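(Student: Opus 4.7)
The plan is to exploit equation (\ref{eqn_scalar_proj}), which identifies the physical scalar product $\langle S,T|S',T'\rangle$ with the matrix element $(S,T|\Pi_{j_i}|S',T')$ of the projector in the auxiliary basis. Since the orthogonal projector onto $\ker H\subset \widehat{\cH}_{j_i}$ is uniquely determined by its matrix elements in an orthogonal basis, it suffices to show that the ansatz of the lemma reproduces the closed-form scalar product already established in Proposition \ref{product1}.

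First I would compute $(S,T|\hat{R}^N(\hat{R}^\dagger)^N|S',T')$ by inserting the resolution of identity on the intermediate auxiliary space $\widehat{\cH}_{j_i-N/2}$. The matrix elements of both $\hat{R}^N$ and $(\hat{R}^\dagger)^N$ are given by equation (\ref{matrixel}), since the coefficients $R^{(s,t)}_{(S,T)}(N)$ defined in (\ref{eqn_R_def}) are real. One of the two norm factors $\|s,t\|^2_{j_i-N/2}$ is absorbed by the $1/\|s,t\|^2_{j_i-N/2}$ weight in the identity decomposition, yielding
\begin{equation}
(S,T|\hat{R}^N(\hat{R}^\dagger)^N|S',T')
= \left(\frac{(J+1)!}{(J-2N+1)!}\right)^{2}\sum_{s,t}\|s,t\|^{2}_{j_i-N/2}\,R^{(s,t)}_{(S,T)}(N)\,R^{(s,t)}_{(S',T')}(N).
\end{equation}

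Second, multiplying by the proposed coefficient $(-1)^N(J-N+1)!(J-2N+1)!/\bigl(N!(J+1)!^2\bigr)$ and substituting $\|s,t\|^{2}_{j_i-N/2}=(J-2N+1)!/\prod_{i<j}k_{ij}(j_i-N/2,s,t)!$, the $(J+1)!^2$ cancels and the $(J-2N+1)!$ factors collapse, leaving
\begin{equation}
\frac{(-1)^N(J-N+1)!}{N!}\sum_{s,t}\frac{R^{(s,t)}_{(S,T)}(N)\,R^{(s,t)}_{(S',T')}(N)}{\prod_{i<j}k_{ij}(j_i-N/2,s,t)!}.
\end{equation}
Together with the $N=0$ contribution $\|S,T\|^{2}\delta_{S,S'}\delta_{T,T'}$, this is precisely the formula for $\langle S,T|S',T'\rangle$ given in Proposition \ref{product1}. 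Invoking (\ref{eqn_scalar_proj}) then identifies the ansatz with $\Pi_{j_i}$.

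The main obstacle is the combinatorial bookkeeping: carefully tracking the degree shifts $j_i\to j_i-N/2$, the precise powers of $(J+1)!$, $(J-N+1)!$ and $(J-2N+1)!$, and the cancellations that produce the clean coefficient $(-1)^N(J-N+1)!/N!$. A more conceptual alternative would be to verify the projector axioms directly: Hermiticity is immediate from the self-adjoint form $\hat{R}^N(\hat{R}^\dagger)^N$ with real coefficients, and $\Pi_{j_i}$ acts as the identity on $\ker\hat{R}^\dagger$ since every $N\geq 1$ term contains a $(\hat{R}^\dagger)^N$ that annihilates such states; however the remaining requirement $\hat{R}^\dagger\Pi_{j_i}=0$ would demand a telescoping identity for $\hat{R}^\dagger\hat{R}^N$ against the explicit coefficients, whereas the matrix-element route above delegates that combinatorics to the already-proven Proposition \ref{product1}.
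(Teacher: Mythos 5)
Your computation of the matrix elements of the ansatz is correct, and it is in fact the same rewriting (run in reverse) that closes the paper's own argument: inserting $\one_{\widehat{\cH}_{j_i-N/2}}$, using (\ref{matrixel}), and watching the factors $(J+1)!^2$ and $(J-2N+1)!$ cancel against $\|s,t\|^2_{j_i-N/2}$ is precisely how the paper identifies the explicit sum with $(S,T|\hat{R}^N(\hat{R}^\dagger)^N|S',T')$. The problem is the logical anchor you chose: you treat Proposition \ref{product1} as "already-proven" and delegate all the combinatorics to it, but in the paper Proposition \ref{product1} has no independent proof --- the text states that it follows from (\ref{eqn_scalar_proj}) and (\ref{eqn_proj_kern}) together with (\ref{eqn_R_action}), i.e.\ from the very formula you are trying to establish. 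As written, your argument is therefore circular: the scalar-product formula and the projector formula are equivalent by your (correct) matrix-element manipulation plus (\ref{eqn_scalar_proj}), but neither has been derived yet.

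The missing content is an independent derivation of $\bra S,T|S',T'\ket$, and this is exactly what the paper's proof supplies before performing your rewriting. It equates the expansion (\ref{eqn_scalar_expanded}) of $\bra {\cal C}_{\tau}|{\cal C}_{\tau}\ket$ with the closed Gaussian evaluation (\ref{det4}), expands $\bigl(1-\sum_{i<j}|\tau_{ij}|^2+|R(\tau)|^2\bigr)^{-2}$ in powers of $|R(\tau)|^{2N}$, shifts $j_i\to j_i-N/2$, and uses the multinomial identity (\ref{rel}) to convert $R(\tau)^N\prod_{i<j}\tau_{ij}^{k_{ij}(j_i-N/2,s,t)}$ into a sum over the coefficients $R^{(s,t)}_{(S,T)}(N)$; comparing coefficients of $\prod_{i<j}\tau_{ij}^{k_{ij}}\bar\tau_{ij}^{k'_{ij}}$ then yields the explicit scalar-product sum. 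Only after that step does the identification with $\hat{R}^N(\hat{R}^\dagger)^N$ and the appeal to (\ref{eqn_scalar_proj}) prove the lemma (and Proposition \ref{product1} becomes a consequence, not an input). To repair your proposal, either carry out this generating-functional computation yourself, or pursue your alternative route and actually prove the telescoping identity needed to show $\hat{R}^\dagger\Pi_{j_i}=0$ directly; in either case the combinatorial work you hoped to avoid is the substance of the lemma.
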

The proof is contained in Appendix \ref{proj_kern_proof}.

\subsection{Overlap with the Orthogonal Basis}

Let us now show how theorem \ref{thm_sum_T} can be used to generate the various other scalar products.  To do so we need the following Lemma:
\begin{lemma} \label{Lemma_sum_T}
\be \label{eqn_R_delta}
  \sum_{T} R^{(s,t)}_{(S,T)}(N) = \delta_{s,S}. 
\ee
\end{lemma}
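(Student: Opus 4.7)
The plan is to reduce the sum to a standard binomial identity. Recall from (\ref{eqn_R_def}) that
\[
R^{(s,t)}_{(S,T)}(N) = \frac{(-1)^{t-T+N}N!}{[s-S+N]![t-T+N]![S-s+T-t-N]!},
\]
with the constraint that the three factorial arguments are non-negative. The key move is to reparametrize the summation indices: set $a \equiv s-S+N$ and $b \equiv t-T+N$, which by the constraints on $R^{(s,t)}_{(S,T)}(N)$ satisfy $a,b\geq 0$ and $a+b \leq N$. Note that $a$ is fixed once $s$ and $S$ are fixed (it does not depend on $T$), while $b$ parametrizes the sum over $T$, running from $0$ to $N-a$.

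In these variables the summand becomes
\[
R^{(s,t)}_{(S,T)}(N) = \frac{(-1)^{b}\, N!}{a!\,b!\,(N-a-b)!} = \binom{N}{a}\binom{N-a}{b}(-1)^b,
\]
so the sum over $T$ collapses to
\[
\sum_{T} R^{(s,t)}_{(S,T)}(N) = \binom{N}{a}\sum_{b=0}^{N-a}\binom{N-a}{b}(-1)^b = \binom{N}{a}(1-1)^{N-a}.
\]
By the binomial theorem the right-hand side vanishes unless $N-a=0$, in which case it equals $1$. Thus the sum is $\delta_{a,N}$, which translates back to $\delta_{S,s}$, establishing the lemma.

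There is essentially no obstacle here beyond bookkeeping: the only thing to be careful about is verifying that the summation range on $T$ matches precisely $b \in \{0,1,\dots,N-a\}$ as dictated by the support conditions below (\ref{eqn_R_def}), so that no boundary terms are spuriously dropped or added. Once that is confirmed, the identity is immediate from $(1-1)^{N-a}=\delta_{a,N}$.
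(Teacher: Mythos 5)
Your proof is correct and follows essentially the same route as the paper: the same change of variables $a=N+s-S$, $b=N+t-T$, followed by evaluating the resulting alternating binomial sum. The only cosmetic difference is that you factor the multinomial as $\binom{N}{a}\binom{N-a}{b}$ and invoke $(1-1)^{N-a}$ directly, whereas the paper rewrites the sum as $\sum_{b}(-1)^{b-a}\binom{N}{b}\binom{b}{a}$ and cites a standard binomial identity — the same computation in a slightly different dressing.
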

The proof is given in Appendix \ref{Lemma_sum_T_proof}.  This identity translates into the statement that $\hat{R}^{\dagger}$ acts diagonally on $|S)_{j_{i}}=\sum_{T}|S,T)_{j_{i}}$:
\be
(R^{\dagger})^{N} |S)_{j_{i}} = \frac{(J-2N +1)!}{(J+1)!} |S)_{j_{i}-N/2}.
\ee
Therefore summing over $T$ in (\ref{scalar}) yields
\bea \label{eqn_s_ST_overlap}
  \left\bra S  \right.\left|S',T' \right \ket &=&
  \sum_{N=0}^{\mathrm{min}(2j_{i})}  \alpha_{J,N} \sum_{t} R^{(S,t)}_{(S',T')}(N) ||S,t||_{j_{i}-N/2}^{2}
\eea
where it is convenient to define
\bea \label{eqn_alpha}
 \alpha_{J,N} &\equiv & 
 \frac{(-1)^N}{N!} \frac{(J-N+1)!}{(J-2N+1)!}.
\eea
By summing over the different labels in (\ref{scalar}) we can compute the remaining scalar products:
\bea \label{eqn_usual_sc_prods}
  \left\bra S\right|\left.S' \right\ket = \delta_{S,S'}\sum_{T,N} \alpha_{J,N} ||S,T||_{j_{i}-N/2}^{2}, \hspace{12pt}
   \left\bra T\right|\left.T' \right\ket = \delta_{T,T'} \sum_{S,N} \alpha_{J,N} ||S,T||_{j_{i}-N/2}^{2},\nonumber \\ \hspace{12pt}  \bra S | T \ket_{j_{i}} = (-1)^{k_{23}} \sum_N \alpha_{J,N} ||S,T||_{j_{i}-N/2}^{2} 
   =  \sum_N \alpha_{J,N} (S|T)_{j_{i}-N/2}
\eea
where here the sum over $N$ starts from zero.  Hence all of the scalar products between $|S\ket$, $|T\ket$, and $|S,T\ket$ bases are different summations over $\alpha_{J,N}$ and the canonical norms.  
%
We now show how to perform the summations in (\ref{eqn_usual_sc_prods}) to give the well known normalization factors for $|S\ket$ and $|T\ket$.  
\begin{proposition} \label{ST_orthog}
\be
  \left\bra S\right|\left.S' \right\ket  = \frac{\delta_{S,S'}}{2S+1}  \Delta^{2}(j_{1}j_{2}S) \Delta^{2}(j_{3}j_{4}S), \hspace{12pt} \left\bra T\right|\left.T' \right\ket  = \frac{\delta_{T,T'}}{2T+1}  \Delta^{2}(j_{1}j_{3}T) \Delta^{2}(j_{2}j_{4}T),
\ee
where the triangle coefficients were given in (\ref{eqn_tri_coeff}).
\end{proposition}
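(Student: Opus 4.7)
The plan is to compute $\bra S | S'\ket$ directly from the integral definition (\ref{eqn_S_def}) rather than evaluate the combinatorial sum appearing in (\ref{eqn_usual_sc_prods}); both routes compute the same object, and the direct route bypasses an awkward multinomial summation over $N$ and $T$.

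First I would apply the factorization (\ref{C3}) to each trivalent building block in (\ref{eqn_S_def}). This rewrites $C_{(j_1,j_2,S)}(z_1,z_2,\check z)=\Delta(j_1 j_2 S)(z_1,z_2,\check z\|j_1,j_2,S\ket$ and likewise for $C_{(S,j_3,j_4)}$, extracting an overall factor $\Delta(j_1 j_2 S)\Delta(j_3 j_4 S)$ and expressing $|S\ket$ as the Bargmann contraction of two normalized Wigner 3j states through the intermediate spin $S$. Taking the Bargmann inner product $\bra S'|S\ket$, the integration over the external spinors $z_i$ against the orthonormal monomials (\ref{eqn_ortho_holo}) converts the scalar product into a sum over magnetic quantum numbers of a product of four 3j symbols, while the integration over the auxiliary spinor $z$ (with its dual partner $\check z$) enacts the internal contraction of the intermediate spin-$S$ leg, weighted by the sign $(-1)^{S-m}$ exactly as in the magnetic-index formula (\ref{eqn_S_3j}).

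Next I would apply the orthogonality of the Wigner 3j symbols, equivalently the Clebsch-Gordan orthogonality (\ref{eqn_CG_orthog}), once for the $(j_1,j_2,S)$ triple and once for the $(j_3,j_4,S)$ triple. This simultaneously enforces $S=S'$ and produces the dimensional factor $1/(2S+1)$ coming from the trace of the identity on $V^S$. Combining with the extracted $\Delta$ prefactors yields the stated formula for $\bra S|S'\ket$. The formula for $\bra T|T'\ket$ then follows by the index permutation $2\leftrightarrow 3$ throughout, which interchanges the $S$- and $T$-channel decompositions, together with the sign introduced by the $(-1)^{k_{23}}$ factor appearing in Theorem \ref{thm_sum_T} when passing from $|S,T\ket$ to $|T\ket$.

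The main technical obstacle is bookkeeping signs and orientations: the $(-1)^{S-m}$ phases from the dual leg $\check z$ described in Section \ref{sec_contraction} must be transported consistently through the Bargmann representation, and the phase conventions used in defining $C_{(j_1,j_2,S)}$ via (\ref{C3}) must match those of the 3j symbols appearing in the orthogonality relation (\ref{eqn_CG_orthog}). As a byproduct, matching the direct answer to the sum in (\ref{eqn_usual_sc_prods}) yields the nontrivial identity $\sum_{T,N}\alpha_{J,N}\|S,T\|^2_{j_i-N/2}=\Delta^2(j_1 j_2 S)\Delta^2(j_3 j_4 S)/(2S+1)$, which would otherwise require a Chu--Vandermonde style manipulation to verify directly.
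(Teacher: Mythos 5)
Your proposal is correct, but it proves the proposition by a genuinely different route than the paper. You compute $\bra S|S'\ket$ directly from the definition (\ref{eqn_S_def}): using (\ref{C3}) to pull out the factors $\Delta(j_1j_2S)\Delta(j_3j_4S)$, letting the Bargmann integration over the internal spinor implement the dual-leg contraction of Section \ref{sec_contraction}, and then invoking Clebsch--Gordan orthogonality (\ref{eqn_CG_orthog}) twice to produce $\delta_{S,S'}/(2S+1)$; this is the classical recoupling-theory calculation and indeed reproduces the normalization already quoted below (\ref{eqn_S_3j}). The paper instead takes the representation (\ref{eqn_usual_sc_prods}) of $\bra S|S'\ket$ as a sum over $T$ and $N$ of $\alpha_{J,N}\,\|S,T\|^2_{j_i-N/2}$ — obtained from the $|S,T\ket$ scalar product of Proposition \ref{product1} and Theorem \ref{thm_sum_T} — and evaluates it explicitly, doing the $T$-sum with Vandermonde's identity (\ref{vandermonde}) and the $N$-sum with the binomial identity (\ref{binom_id}). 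What your route buys is brevity and independence from the $|S,T\ket$ machinery; what the paper's route buys is precisely the consistency check that motivates the section: the combinatorial identity you relegate to a ``byproduct'', $\sum_{T,N}\alpha_{J,N}\|S,T\|^2_{j_i-N/2}=\Delta^2(j_1j_2S)\Delta^2(j_3j_4S)/(2S+1)$, is the actual content of the paper's proof, confirming that the overcomplete-basis formalism reproduces the known norms. Two small cautions: your argument leans on the proportionality constant in (\ref{C3}), which the paper cites rather than derives, and for the $T$-channel statement the $(-1)^{k_{23}}$ phase from Theorem \ref{thm_sum_T} is irrelevant — the cleanest statement is that $|T\ket$ is defined by the $2\leftrightarrow 3$ permutation of (\ref{eqn_S_3j}), so its norm follows by relabeling, with the sign cancelling in the modulus squared.
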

The proof is given in Appendix \ref{ST_orthog_proof}

Finally, it is easy to see that the overlap between the $|S\ket$ and $|T\ket$ bases is given by a 6j symbol.  That is, the third sum in (\ref{eqn_usual_sc_prods}) can be recognized as the Racah expansion of the 6j symbol by making a change of variable $m = J-N$.  Doing so we get
\bea
  \bra S | T \ket 
  &=& {(-1)^{J+k_{23}}}{\Delta(j_{1}j_{2}S)\Delta(j_{3}j_{4}S)\Delta(j_{1}j_{3}T)\Delta(j_{2}j_{4}T)} \sixj{j_1}{j_2}{S}{j_4}{j_3}{T}.
\eea
These relations with the orthogonal basis provide a consistency check, but they will also be useful later in connecting with the 15j symbol.  To do so we will next study the contraction of the $|S,T\ket$ states.

\subsection{The 20j symbol}

Let us now use all of the results obtained for the $|S,T\ket$ basis to compute a generalization of the $15j$ symbol, which will depend now on 20 spins: ten $j_e$ on the edges and five $S_v$, and five $T_v$ on the vertices.  The $20j$ symbol will be the amplitude corresponding to the coherent 4-simplex.  It is a generalization of the $15j$ symbol since by theorem \ref{thm_sum_T} we can sum over five of the extra spins and obtain one of the five variations of the $15j$ symbol.

First label the vertices of the 4-simplex by $i=1,..,5$ and an edge directed from $i$ to $j$ as in $z_{e} \equiv z^{i}_{j} \neq z^{j}_{i} \equiv z_{e^{-1}}$. Let $\{20j\}_{S_i,T_i} \equiv \underset{i}{\corner} |S_i, T_i \ket$ denote the unnormalized 20j symbol.
The relation (\ref{eqn_discrete_amp}) between coherent and discrete amplitudes reads
\be 
  A_{4S}(j_{ij},z^{i}_{j}) = \sum_{S_i,T_i} \{20j\}_{S_i,T_i} \prod_{i} \frac{( z^{i}_{j}  |  S_i, T_i\ket}{\|S_i,T_i\|^2}
\ee
\begin{figure} 
  \centering
    \includegraphics[width=0.5\textwidth]{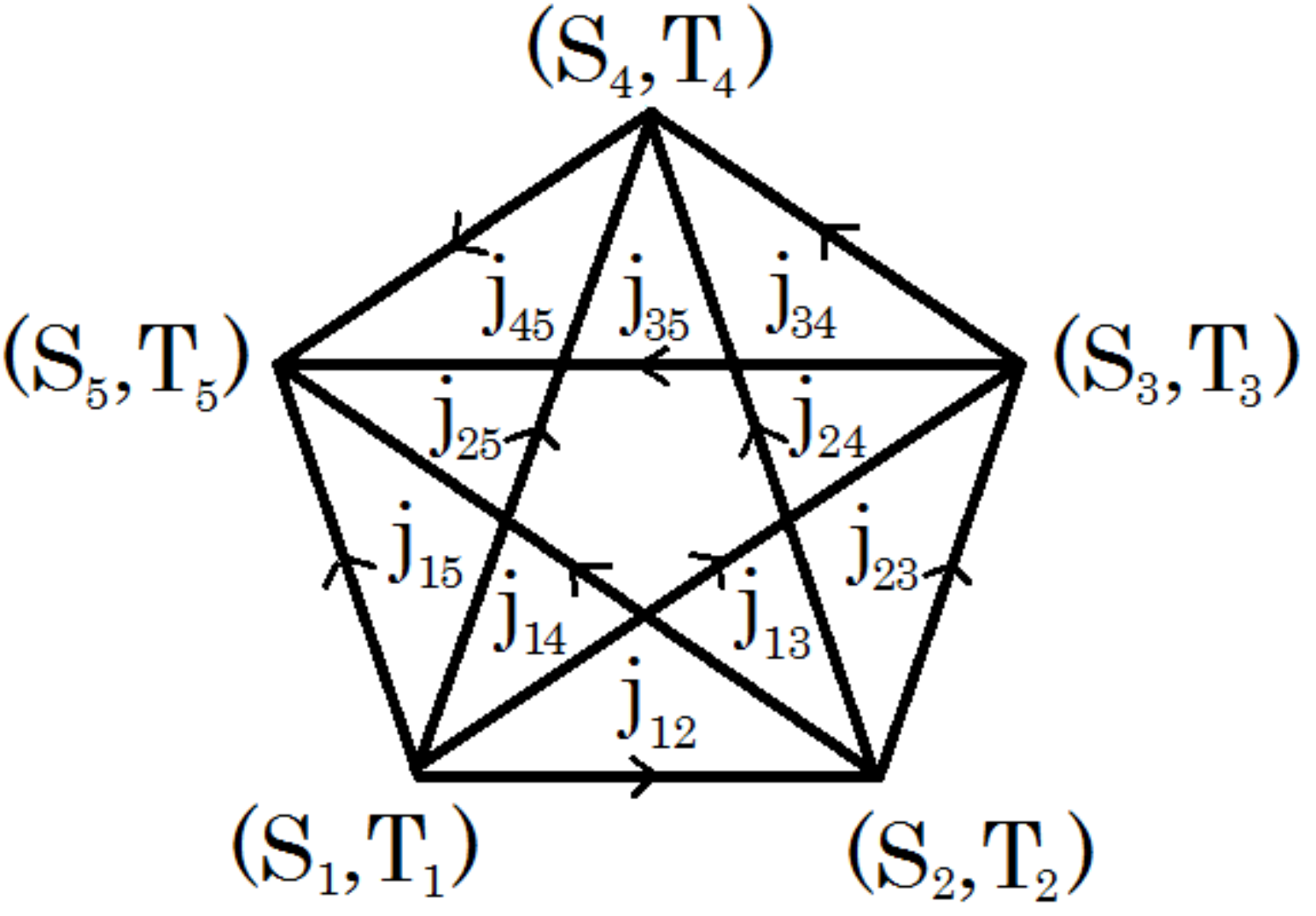}
    \caption{The graphical representation of the 20j symbol: The amplitude of the coherent 4-simplex.}  \label{fig_20j}
\end{figure}
We can express the $20j$ symbol explicitly in terms of the $15j$ symbol by inserting another five resolutions of identity $\one_{j_i} = \sum_{S'} |S'\ket\bra S'|/\|S'\|^2$ into the definition of the 20j symbol to get
\be \label{eqn_20j_15j}
  \{20j\}_{S_i,T_i} = \sum_{S_{i}^{'}} \{15j\}_{S'_i} \prod_{i} \frac{\bra S_{i}^{'} | S_i, T_i \ket}{\|S_{i}^{'}\|^2}
\ee
where $\{15j\}_{S'_i}$ is the unnormalized 15j symbol defined by $\{15j\}_{S'_i} \equiv \underset{i}{\corner} |S_{i}^{'} \ket$ and is equal to $\prod_i \|S_{i}^{'}\|$ times the conventional normalized 15j symbol up to a sign depending on the orientation of the edges.  Notice that by summing over $T_i$ in (\ref{eqn_20j_15j}) we obtain the unnormalized $15j$ symbol as expected.  Thus the five different kinds of $15j$ symbols are derived from the $20j$ by summing over the different channels.  For example the $15j$ with all $S$ channels is given by
\be
  \{15j\}_{S_i} = \sum_{T_i} \{20j\}_{S_i,T_i}, 
\ee
and the other kinds of 15j symbol are given similarly.

Let us now use theorem \ref{thm_sum_T} to rewrite the 20j symbol in a more symmetric form
\be
  \{20j\}_{S_i,T_i} = \sum_{S_{i}^{'},T_{i}^{'}} \{15j\}_{S'_i} \prod_{i} \frac{\bra S_{i}^{'},T_{i}^{'} | S_i, T_i \ket}{\|S_{i}^{'}\|^2}.
\ee
In this form it is easy to derive the asymptotics of the 20j symbol by those of the 15j since for large spins (see the next section) $\bra S,T | S', T' \ket \sim \delta_{S,S'} \delta_{T,T'} \|S,T\|^2$, and therefore 
\be
  \{20j\}_{S_i,T_i}\sim  \{15j\}_{S_i} \prod_i \frac{\|S_i,T_i\|^2}{\|S_i\|^2}.
\ee
This means that understanding the asymptotics of the $20j$ symbol will give us the asymptotics of the $15j$ symbol too.
There has been recent results on the asymptotics of spin networks evaluation \cite{Conrady:2008mk,Barrett:2009gg} but this progress concerns however the asymptotic evaluation of the coherent state amplitude $ A_{4S}(j_{ij},z^{i}_{j})$. 
The asymptotic evaluation of the non coherent $15j$ symbol is not known and as we are going to see in the next section our techniques allow us to unravel the asymptotics for the first time.

\chapter{Semi-classical Limit}
\label{chapter_semi}

It is now well-known and explained in great detail in \cite{Conrady:2009px,Freidel:2009nu} that the space of $4$-valent intertwiners can be uniquely labeled by 
oriented tetrahedra. In this section we will demonstrate this correspondence for the $|S,T\ket$ states.  In order to connect with the classical behaviour we would like to analyze the asymptotics of the scalar product of two such states in the limit where the spins $(j_{i},S,T)$ are all uniformly large.
We use the fact that this scalar product can itself be expressed as an integral 
\be
\la S,T|S',T'\ra =\frac1{\prod_{i<j}( k_{ij}! k_{ij}'!)} \int \prod_{i} \frac{\rd^2 z_{i}}{\pi^2}  e^{- S_{k}(z)}
\ee
  where the action is given by
  \be \label{eqn_action}
  S_{k} = \sum_{i} \la z_{i}| z_{i} \ra - \sum_{i<j}\left( k_{ij} \ln[z_{i}|z_{j}\ra + k'_{ij} \ln \la z_{i}|z_{j}] \right).
  \ee
The asymptotic evaluation of this scalar product is controlled by the stationary points\footnote{ If $k_{ij}= N K_{ij}$ and we  define $ z_{i}=\sqrt{N} x_{i}$ we see that this integral that we want to evaluate in the large $N$ limit takes the usual form $N^{2J} \int \prod_{i}\rd x_{i} e^{-N S_{K}(x)} $ .} of this action.
That is we look for solutions of 
\be
\sum_{j\neq i} \frac{k_{ij}}{[z_{i}|z_{j}\ra} [z_{i}| = \la z_{j} |,\qquad
\sum_{j\neq i} \frac{k_{ij}'}{\bra z_{j}|z_{i}]} |z_{i}] = | z_{j} \ra .\label{kz}
\ee
Now it is clear that if $k\neq k'$ there cannot be any real solution. This shows that this scalar product is exponentially suppressed unless $(S,T)=(S',T')$
\footnote{We could still evaluate the  integral asymptotically  when  $k\neq k'$ by looking for complex solutions.
In order to do so we use the fact that $[\check{z}_{i}|\check{z}_{j}\ra = [z_{i}|z_{j}\ra^{*}$.
We get an action holomorphic in $ |z_{i}\ket$ and $|\check{z}_{i}\ket$:
$$S_{k} =- \sum_{i} [ \check{z}_{i}| z_{i} \ra - \sum_{i<j}\left( k_{ij} \ln[z_{i}|z_{j}\ra + k'_{ij} \ln [\check{ z}_{i}|\check{z}_{j}\ra \right).$$
 The stationary equations are 
\be
\sum_{j\neq i} \frac{k_{ij}}{[z_{i}|z_{j}\ra} [z_{j}| = -[ \check{z}_{i} |,\qquad 
\sum_{j\neq i} \frac{k_{ij}'}{[\check{z}_{i}|\check{z}_{j}\ra} [\check{z}_{j}| =  [ z_{i} |,
\ee
In the case $k_{ij}\neq k'_{ij}$ we do not demand that $[{z}_{i}| = \la \check{z}_{i}|$ which corresponds to the real contour of integration.}.
Furthermore, if we contract this   equation with $ | z_{j}\ra$  we obtain the constraints
\be
2j_{i} = \sum_{j\neq i} k_{ij}= \la z_{i}|{z}_{i}\ra.
\ee
These equations are invariant under $\SU(2)$,  so $  g|z_{i}\ra$ is a solution if $|z_{i}\ra$ is and $ g \in \SU(2)$.
We also have an invariance of these equations under the rescaling $ |z_{i}\ra \to e^{i\alpha^{i}} |z_{i}\ra$.

Finally, by taking the conjugation of (\ref{kz}) $|\check z\ket = |z]$ and using the fact that $[\check{z}_{j}|\check{z}_{i}\ra = [z_{j}|z_{i}\ra^{*}$ we can show that this equation is also equivalent
to the conjugated equation 
\be
\sum_{j\neq i} \frac{k_{ij}}{[\check{z}_{j}|\check{z}_{i}\ra} [\check{z}_{j}| = \la \check{z}_{i} |.\label{conjkz}
\ee
This means that the $\mathbb{Z}_{2}$ transformation  $ |z_{i}\ra \to | \check{z}_{i}\ra =|z_{i} ]$ is also a symmetry of the equation of motion.  In summary this shows that the symmetry group of the solutions (\ref{kz}) is given by $\SU(2) \times \mathrm{U}(1)^{4}\times \mathbb{Z}_{2}$.

\section{Relation with Framed Tetrahedra}

What is remarkable about the solutions (\ref{kz}) is that they are in one to one correspondence with framed tetrahedra.
A framed tetrahedron in $\R^{3}$ is a tetrahedron together with a choice of frame on each face (i.e. a choice of a preferred direction tangential to the face).
The SU(2) invariance corresponds to rotations of the tetrahedron, while a rotation of the frame on face $i$ by an angle $\alpha^{i}$ corresponds to a rescaling of $|z_{i}\ra $ by $ e^{i\alpha^{i}/2}$.
The $\mathbb{Z}_{2}$ transformation corresponds to a global reflection exchanging inward and outward normals.

Indeed, suppose that we have a framed tetrahedron which is  such that the area and outward unit normal directions of the face $i$ are  denoted by $(A_{i},N_{i})$.
We also denote $F_{i}$ to be the unit vector in the face $i$ (i.e. $F_{i}\cdot N_{i}=0$) that provides the framing of the face $i$.
Then the fact that this data corresponds to a tetrahedron is implied by the closure constraints
\be
\sum_{i} A_{i} N_{i} =0.
\ee

Such a framed tetrahedron can be equivalently labeled in terms of four spinors $|z_{i}\ra$ which satisfy the closure relation 
\be
\sum_{i} |z_{i}\ket \bra z_{i}| = \frac{A}2 1
\ee
where $A=\sum_{i}A$ is  the total area of the  tetrahedra.
This data is 
related to the data $(A_{i},N_{i},F_{i})$ as follows: First $\bra z_{i}|z_{i}\ket = A_{i}$ and second
\be \label{eqn_z_N_F}
|z_{i}\ra\la z_{i}| -|z_{i}][z_{i}| = A_{i} N_{i} \cdot \sigma ,\qquad  |z_{i}\ra[z_{i}| = i\frac{A_{i}}{2}\left(F_{i} + i N_{i}\times F_{i}\right) \cdot \sigma
\ee
where $\sigma = (\sigma_{1},\sigma_{2},\sigma_{3})$ are the Pauli matrices and $\times$ denotes the cross product.
The first equation determines $|z_{i}\ra$ up to a phase while the second equation determines the phase up to an overall sign. Thus $\pm |z_{i}\ra$ is uniquely determined by the framed tetrahedron.

\begin{theorem} \label{thm_closure}
The solutions of (\ref{kz}) are in one to one correspondence with framed tetrahedra, with face areas $2j_{i}$ and total area $2J$.
 The discrete parameters related to the spinors are given by
\be \label{eqn_k_on_shell}
\bra z_{i}|z_{i}\ket = 2j_{i}, \quad J k_{ij} \equiv   |[z_{j}|z_{i}\ket|^{2}.
\ee
%
\end{theorem}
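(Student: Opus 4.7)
The plan is to reduce the theorem to two structural identities and then invoke the spinor-tetrahedron dictionary (\ref{eqn_z_N_F}). First, contracting the first stationary equation in (\ref{kz}) with $|z_i\ket$ and using the homogeneity constraint (\ref{kj}) together with the antisymmetry of $[z_i|z_j\ket$ collapses the sum to $\bra z_i|z_i\ket=\sum_{j\neq i}k_{ij}=2j_i$, which gives the face areas $A_i=2j_i$ and total area $A=\sum_i A_i=2J$ directly.

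Next, I would write the stationary equation as $\bra z_\ell|=-\sum_{j\neq\ell}(k_{\ell j}/F_{\ell j})[z_j|$ with $F_{ij}\equiv[z_i|z_j\ket$, multiply on the left by $|z_\ell\ket$, and sum over $\ell$. The double sum reorganises by pairing $(\ell,j)$ with $(j,\ell)$; the symmetry of $k_{ij}$, the antisymmetry of $F_{ij}$, and the Fierz-type spinor identity $|z_\ell\ket[z_j|-|z_j\ket[z_\ell|=-F_{\ell j}\,\one$ collapse each pair to $k_{\ell j}\,\one$, giving the closure relation $\sum_i|z_i\ket\bra z_i|=J\,\one$. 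Combined with $\bra z_i|z_i\ket=2j_i$, this is precisely the data characterising a framed tetrahedron via (\ref{eqn_z_N_F}) with face areas $2j_i$ and total area $2J$; the residual $\mathrm{SU}(2)\times\mathrm{U}(1)^{4}\times\mathbb{Z}_{2}$ symmetry of the stationary equations matches exactly the global rotations, independent face-frame phases, and global reflection under which framed tetrahedra are defined, so the spinor-tetrahedron dictionary supplies the bijection between solutions and framed tetrahedra.

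For the identity $Jk_{ij}=|F_{ij}|^{2}$, I would first derive a dual closure by combining $\sum_i|z_i\ket\bra z_i|=J\,\one$ with the Pauli identity $|z\ket\bra z|+|z][z|=\bra z|z\ket\,\one$ summed over $i$, which together with $\sum_i\bra z_i|z_i\ket=2J$ yields $\sum_i|z_i][z_i|=J\,\one$. Expanding $\bra z_\ell|=\bra z_\ell|\,\one$ in this dual basis via $\bra z_\ell|z_j]=-\bar F_{\ell j}$ gives $\bra z_\ell|=-J^{-1}\sum_{j}\bar F_{\ell j}[z_j|$, which compared with the stationary equation suggests $Jk_{\ell j}=|F_{\ell j}|^{2}$. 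The main obstacle is that the four covectors $[z_j|$ satisfy the Pl\"ucker relation (\ref{eqn_R_plucker}), so coefficient-wise matching is not automatic; to pin the identification down I would contract the difference of the two expansions with a test spinor such as $|z_m]$ and use the Pl\"ucker identity to reduce the resulting cubic relation in the $F_{ij}$ to the claimed per-edge equality. Conversely, for a framed tetrahedron with areas $2j_i$, defining $k_{ij}\equiv|F_{ij}|^{2}/J$ recovers (\ref{kj}) from the diagonal of closure and reproduces the stationary equations via the dual-closure expansion, completing the bijection.
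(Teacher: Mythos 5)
Your first three steps coincide with the paper's own proof: contracting (\ref{kz}) with $|z_i\ket$ to get $\bra z_i|z_i\ket=\sum_{j\neq i}k_{ij}=2j_i$, obtaining closure $\sum_i|z_i\ket\bra z_i|=J\,\one$ by pairing $(i,j)$ with $(j,i)$ and using $|z\ket[w|-|w\ket[z|=-[z|w\ket\one$, and the converse direction, where your dual closure $\sum_i|z_i][z_i|=J\,\one$ is exactly the identity behind the paper's verification that $k_{ij}\equiv\tfrac1J[z_j|z_i\ket\bra z_i|z_j]$ solves (\ref{kz}), namely $\tfrac{2}{A}\sum_{j\neq i}\bra z_i|z_j][z_j|=\bra z_i|$ using $\bra z_i|z_i]=0$.

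The step I cannot accept as written is your resolution of the coefficient ambiguity leading to $Jk_{\ell j}=|[z_j|z_\ell\ket|^2$. The difference of your two expansions is a single covector equation $\sum_{j\neq\ell}\Delta_{\ell j}[z_j|=0$ in $\C^2$; contracting it with test spinors (whether $|z_m]$ or anything else) can only reproduce linear combinations of its two components, and those two components admit a genuinely one-dimensional solution space, because for the three spinors $z_a,z_b,z_c$ complementary to $\ell$ one has $[z_b|z_c\ket[z_a|+[z_c|z_a\ket[z_b|+[z_a|z_b\ket[z_c|=0$, so any $\Delta_{\ell j}=\lambda_\ell\,[z_k|z_m\ket$ (with $\{k,m\}$ the complementary pair) is a solution. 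In other words, the Pl\"ucker-type relation is precisely what \emph{creates} the ambiguity; it cannot be used to remove it. Worse, even imposing the symmetry $k_{ij}=k_{ji}$ at all four vertices simultaneously does not kill it: the symmetric family $\Delta k_{12}=\Delta k_{34}\propto F_{12}F_{34}$, $\Delta k_{13}=\Delta k_{24}\propto -F_{13}F_{24}$, $\Delta k_{14}=\Delta k_{23}\propto F_{14}F_{23}$ (with $F_{ij}=[z_i|z_j\ket$) solves the homogeneous system at every vertex, again by the same three-spinor identity. What excludes it is the \emph{reality} of the $k_{ij}$ (the three products $F_{12}F_{34}$, $F_{13}F_{24}$, $F_{14}F_{23}$ have non-aligned phases for a nondegenerate configuration). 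So to legitimately conclude the per-edge identification you should replace your test-spinor/Pl\"ucker step by the paper's separate uniqueness argument — given the spinors, the real symmetric set $k_{ij}$ solving (\ref{kz}) is unique, hence equal to $|F_{ij}|^2/J$ — rather than attempting to match coefficients at a single vertex. With that substitution your proof is complete and is essentially the paper's.
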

\begin{proof}
Lets suppose that $|z_{i}\ket$ is a solution of (\ref{kz}).  Then
\bea
\sum_{i} |z_{i}\ra \la z_{i} | = 
\sum_i \sum_{j\neq i } \frac{k_{ij}}{[z_{j}|z_{i}\ra} |z_{i}\ra [z_{j}| 
= \sum_{i < j } \frac{k_{ij}}{[z_{j}|z_{i}\ra} (|z_{i}\ra [z_{j}|  -|z_{j}\ra [z_{i}|)
= \sum_{i < j } {k_{ij}} 1 = J 1.
\eea
which implies that $|z_{i}\ket$ satisfy the closure constraint and hence constitute a framed tetrahedron.
The area of the faces of this tetrahedron are given by $A_{i} =\sum_{j\neq i} k_{ij}$.

Let us now suppose that $|z_{i}\ket$ is a solution of the closure constraints and lets define $k_{ij} \equiv \frac2{A} [z_{j}|z_{i}\ket\bra z_{i}|z_{j}]$.
Then by construction we have 
\be
\sum_{j\neq i} \frac{k_{ij} }{[z_{j}|z_{i}\ket} [z_{j}| = \frac2{A} \sum_{j\neq i}\bra z_{i}|z_{j}] [z_{j}| = \bra z_{i}|.
\ee
which shows that $|z_{i}\ket$ is a solution of (\ref{kz}). 

Finally let us suppose that given $|z_{i}\ket$ we have another set $k'_{ij}$ which is a solution of (\ref{kz}).
 This would imply that 
$\sum_{j\neq i} {\Delta_{ij}} [z_{j}| =0$ with $\Delta_{ij}\equiv(k_{ij}-k_{ij}')/[z_{j}|z_{i}\ket$.
The sum contains three terms, and by contracting it with $|z_{k}]$, with $k\neq i,j$, we obtain  two relations.
The consistency of these relations implies that $\Delta_{ij}=0$. 
\end{proof}

This shows that the $|S,T\ket$ states enjoy the same geometrical properties as the coherent intertwiners.  Namely they are peaked on states representing closed bounded tetrahedra.  We note that the proof of theorem \ref{thm_closure} also holds for general $n$-valent intertwiners by simply extending the range of indices from 4 to $n$.  A similar analysis of stationary points of intertwiner generating functionals is given in \cite{Bonzom:2012bn}.

\section{Geometrical Interpretation}
It is interesting to make explicit the geometrical interpretation of the data encoded in the spinor variables.
We have seen that the norms of the spinors $2j_{i}=\bra z_{i}|z_{i}\ket$ are the areas of the faces. 

This can be made explicit by writing these spinors in terms of the geometrical data:
As we have seen $A_{i}$ denotes the area of the face $i$ and 
we denote by $\theta_{ij} \in [0,\pi]$ the dihedral angle between the normals $N_{i}$ and $N_{j}$.
The extra data necessary is the angle $\alpha^{i}_{j}$ in the face $i$ between   the oriented edge $(ij)$ and the reference vector $F_{i}$.
This data is represented in figure \ref{fig_tri} and is related to the spinors in the following lemma.

\begin{figure} 
  \centering
    \includegraphics[width=0.7\textwidth]{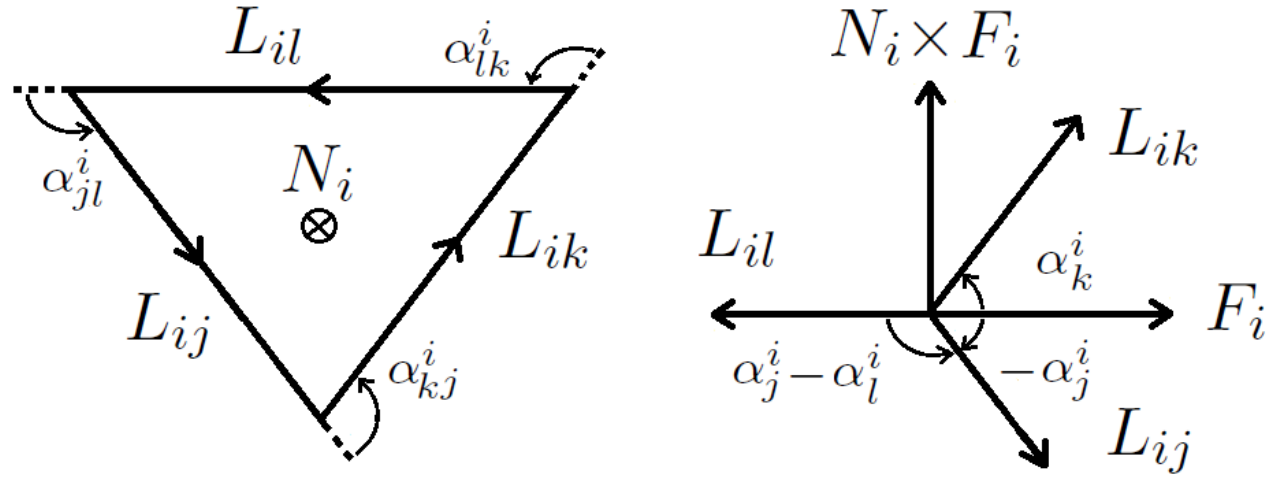}
    \caption{The geometrical data on the face $i$ of a framed tetrahedron.}  \label{fig_tri}
\end{figure}

\begin{lemma} \label{lemma_geo}
The angle between the edge $(ij)$ and the edge $(ik)$ is $\alpha^{i}_{jk}$ where
$\alpha^{i}_{jk} = \alpha^{i}_{j}-\alpha^{i}_{k}$.
The expression of the spinor products in terms of the geometrical data is  given by:
\bea
[z_{i}|z_{j}\ra &=& \epsilon_{ij} \sqrt{A_{i}A_{j}} \sin \frac{\theta_{ij}}{2} e^{i (\alpha^{i}_{j} + \alpha^{j}_{i})/2},\\
{[}z_{i}|z_{j}] &=& \sqrt{A_{i}A_{j}} \cos \frac{\theta_{ij}}{2} e^{i (\alpha^{i}_{j} - \alpha^{j}_{i})/2},
\eea
where $\epsilon_{ij} = +1$ if $(ij)$ is positively oriented and $-1$ otherwise.  The area of face $i$ is $A_{i} = 2j_{i}$ and
$\theta_{ij}\in[0,\pi]$ is the 3d external dihedral angle for which we choose the convention $\theta_{ii}=0$.  Finally, $\alpha^{i}_{j}$ is the angle in the face $i$ between the edge $(ij)$ and the reference vector $F_{i}$.
\end{lemma}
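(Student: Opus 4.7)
The strategy is to parametrize each spinor explicitly in terms of the geometric data $(A_i, N_i, F_i)$ and then read off both the modulus and the phase of the two invariants directly from the structural identities (\ref{eqn_z_N_F}). The first assertion, $\alpha^i_{jk}=\alpha^i_j - \alpha^i_k$, is immediate from the definition: both $\alpha^i_j$ and $\alpha^i_k$ are measured in the same face $i$ from the common reference direction $F_i$, so their difference is the interior angle between the edges $(ij)$ and $(ik)$.

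For the moduli, I would first combine $\bra z_i|z_i\ket = A_i$ with (\ref{eqn_z_N_F}) and the elementary identity $|z_i\ket\bra z_i| + |z_i][z_i| = A_i\one$ to get
\begin{equation*}
|z_i\ket\bra z_i| = \tfrac{A_i}{2}(\one + N_i\cdot\sigma), \qquad |z_i][z_i| = \tfrac{A_i}{2}(\one - N_i\cdot\sigma).
\end{equation*}
Then the trace identities $\text{tr}(|z_i][z_i|\cdot|z_j\ket\bra z_j|) = |[z_i|z_j\ket|^2$ and $\text{tr}(|z_i\ket\bra z_i|\cdot|z_j\ket\bra z_j|) = |\bra z_i|z_j\ket|^2$, combined with $N_i\cdot N_j = \cos\theta_{ij}$, give $|[z_i|z_j\ket|^2 = A_iA_j\sin^2(\theta_{ij}/2)$ and $|[z_i|z_j]|^2 = |\bra z_j|z_i\ket|^2 = A_iA_j\cos^2(\theta_{ij}/2)$, which produces the two advertised prefactors.

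For the phases I would fix, for each pair $(i,j)$, an auxiliary $\SU(2)$ rotation that sends $N_i$ to $\hat z$ and $F_i$ to the $\hat x$-axis. The second equation of (\ref{eqn_z_N_F}) then pins down the overall phase of $|z_i\ket$ up to a sign in terms of $F_i$. The shared edge $(ij)$, being perpendicular to both $N_i$ and $N_j$, lies along $\pm N_i\times N_j$; in this frame its azimuth is $\varphi_{ij}+\pi/2$, where $\varphi_{ij}$ is the azimuth of $N_j$. Matching to the definition of $\alpha^i_j$ gives $\varphi_{ij} = \alpha^i_j - \pi/2$, and symmetrically the azimuth of $N_i$ measured from $F_j$ in face-$j$'s frame is $\alpha^j_i - \pi/2$. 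Substituting these parametrizations of $|z_i\ket$ and $|z_j\ket$ into $[z_i|z_j\ket = z_i^0 z_j^1 - z_i^1 z_j^0$ and $[z_i|z_j] = \bra z_j|z_i\ket$ then produces the phases $e^{i(\alpha^i_j+\alpha^j_i)/2}$ and $e^{i(\alpha^i_j-\alpha^j_i)/2}$, respectively; the different combinations reflect the antisymmetry of $[z_i|z_j\ket$ versus the Hermitian symmetry of $[z_i|z_j]$ under $i\leftrightarrow j$.

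The main obstacle I anticipate is tracking the orientation sign $\epsilon_{ij}$. It encodes whether $N_i\times N_j$ or its opposite is taken as the positively oriented edge direction, a choice which affects only the antisymmetric invariant $[z_i|z_j\ket$ and not the Hermitian-symmetric $[z_i|z_j]$. Maintaining coherent sign conventions around the four faces of the tetrahedron, while simultaneously respecting the independent $\mathrm{U}(1)$ phase freedom $|z_i\ket\mapsto e^{i\alpha^i/2}|z_i\ket$ at each face, is what forces $\epsilon_{ij}$ to appear only in the sine formula.
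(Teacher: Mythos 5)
Your treatment of the moduli is correct and essentially equivalent to the paper's: writing $|z_i\ket\bra z_i| = \tfrac{A_i}{2}(\one + N_i\cdot\sigma)$ and $|z_i][z_i| = \tfrac{A_i}{2}(\one - N_i\cdot\sigma)$ and taking traces of products indeed gives $|[z_i|z_j\ra|^2 = A_iA_j\sin^2(\theta_{ij}/2)$ and $|[z_i|z_j]|^2 = A_iA_j\cos^2(\theta_{ij}/2)$, which is the same information the paper extracts from $N_i\cdot N_j$ and $|N_i\times N_j|$.

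The gap is in the phase computation, which is the real content of the lemma. You parametrize $|z_i\ket$ in a frame adapted to face $i$ ($N_i\mapsto\hat z$, $F_i\mapsto\hat x$) and describe the data for $|z_j\ket$ ``in face-$j$'s frame'', and then propose to substitute both into $[z_i|z_j\ra = z_i^0 z_j^1 - z_i^1 z_j^0$. But this bilinear is invariant only when both spinors are rotated by the \emph{same} $\SU(2)$ element; components taken in two different adapted frames cannot be inserted into it. To make your route work you must express both spinors in one common frame, i.e.\ conjugate $|z_j\ket$ by the rotation relating the two face frames (a rotation by $\theta_{ij}$ about the edge axis combined with rotations about the normals fixing the azimuths of $F_i$ and $F_j$), and it is precisely this relative-frame bookkeeping — together with the unresolved $\pm$ ambiguity of each spinor and the orientation of the edge $(ij)$ versus $(ji)$, hence $\epsilon_{ij}$ — that generates the half-angle phases $(\alpha^i_j\pm\alpha^j_i)/2$. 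Your sketch asserts the outcome of the substitution rather than deriving it, and the sign issue you flag as ``the main obstacle'' is exactly what is left undone. The paper sidesteps frames entirely: using the second relation of (\ref{eqn_z_N_F}) it evaluates the invariant $L_{ij}\cdot(F_i + iN_i\times F_i) = \tfrac{2}{A_i}[z_i|z_j][z_i|z_j\ra\la z_i|z_i\ra = \epsilon_{ij}|L_{ij}|e^{i\alpha^i_j}$, so the phase of the \emph{product} $[z_i|z_j][z_i|z_j\ra$ is $\alpha^i_j$; the symmetric computation on face $j$ gives $\alpha^j_i$, and since $[z_j|z_i\ra = -[z_i|z_j\ra$ while $[z_j|z_i] = \overline{[z_i|z_j]}$, the two individual phases follow (up to the orientation signs) as half-sum and half-difference, given the moduli already fixed. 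Note also that your first claim, $\alpha^i_{jk} = \alpha^i_j - \alpha^i_k$, is not purely definitional once orientations are included: the paper verifies it by showing $L_{ij}\cdot L_{ik} = \epsilon_{ij}\epsilon_{ki}|L_{ij}|\,|L_{ik}|\cos\alpha^i_{jk}$, and the factors $\epsilon_{ij}\epsilon_{ki}$ matter.
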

\begin{proof}
From the definitions (\ref{eqn_z_N_F}) we have
\be
|z_{i}\ket\bra z_{i}|=  \frac{A_{i}}{2}(1 + N_{i}\cdot \sigma ),\qquad 
|z_{i}][ z_{i}|= \frac{A_{i}}{2}(1 - N_{i}\cdot \sigma ).
\ee
and so the scalar product between two normals is
\be \label{eqn_N_dot}
  A_i A_j N_i \cdot N_j = |\bra z_i | z_j \ket|^2 - |\bra z_i | z_j]|^2 = A_i A_j \cos \theta_{ij}.
\ee
Defining the edge vectors by $L_{ij}\equiv A_{i}A_{j}(N_{i}\times N_{j})$ then gives
\be\label{Lij}
 L_{ij} \cdot \sigma = 2i \Big( |z_{i}\ra\la z_{i} | z_{j}][z_{j}| - |z_{j}][z_{j}|z_{i}\ra \la z_{i}| \Big).
\ee
Now by taking the trace of the square of (\ref{Lij}) we obtain 
\be \label{eqn_N_cross}
  A_i A_j |N_i \times N_j | =  2|[z_{i}|z_{j}\ket \bra z_j | z_i \ket |=  A_i A_j \sin \theta_{ij}.
\ee
Equations (\ref{eqn_N_dot}) and (\ref{eqn_N_cross}) determine the magnitudes of the spinor products.  

Lets now look at the scalar product between the edge vectors $L_{ij}$ 
and the complex vector $F_{i} + i N_{i}\times F_{i}$
\bea
 L_{ij}\cdot (F_{i} + i N_{i}\times F_{i})& =&
\frac{2}{A_{i}}\tr\left( \left\{|z_{i}\ra\la z_{i} | z_{j}][z_{j}| - |z_{j}][z_{j}|z_{i}\ra \la z_{i}| \right\} 
 |z_{i}\ra [ z_{i}| \right)  \label{eqn_L_F_N} \\
&= & \frac{2 }{A_{i}}  [ z_{i}|z_{j}][  z_{i} | z_{j}\ra \la z_{i}|z_{i}\ra   \nn \\
&=& \epsilon_{ij} A_{i}A_{j}\sin {\theta_{ij}} e^{i\alpha^{i}_{j} } = \epsilon_{ij} |L_{ij}| e^{i\alpha^{i}_{j}} \nn
\eea
where $|L_{ij}| = A_i A_j \sin \theta_{ij}$.  This shows that $L_{ij} \cdot F_i = |L_{ij}| \cos(\alpha^{i}_{j})$ and so $\alpha^{i}_{j}$ is indeed the angle  between the edge $(ij)$ and the frame vector on face $i$.  The sign of $L_{ij}\cdot (N_{i}\times F_{i})$ determines the orientation of this angle with respect to the 2d basis $\{F_i,\hat{F}_{i}\}$ where $\hat{F}_i \equiv N_{i}\times F_{i}$.

We can also show that the angle $\alpha^{i}_{jk} \equiv \alpha^{i}_{j} - \alpha^{i}_{k}$ is the angle between the edge vectors $L_{ij}$ and $L_{ik}$ at the vertex $i$.
Using (\ref{eqn_L_F_N}) we can construct the following quantity
\be
  \epsilon_{ij} \epsilon_{ki} |L_{ij}| \cdot |L_{ik}| e^{i(\alpha^{i}_{j} - \alpha^{i}_{k})} = [L_{ij} \cdot (F_{i} + i N_{i} \times F_{i})] \cdot [L_{ik} \cdot (F_{i} - i N_{i} \times F_{i})].
\ee
Now in the components of the 2d basis $L^{(1)}_{ij} = L_{ij} \cdot F_{i}$ and $L^{(2)}_{ij} = L_{ij} \cdot \hat{F}_{i}$ we have
\be
  \epsilon_{ij} \epsilon_{ki} |L_{ij}| \cdot |L_{ik}| e^{i(\alpha^{i}_{j} - \alpha^{i}_{k})} 
  = (L^{(1)}_{ij} + i L^{(2)}_{ij})(L^{(1)}_{ik} - i L^{(2)}_{ik}).
\ee
The real part is equal to $L_{ij} \cdot L_{ik}$.  
Therefore
\be \label{eqn_alpha_i_jk}
  L_{ij} \cdot L_{ik} = \epsilon_{ij} \epsilon_{ki} |L_{ij}| \cdot |L_{ik}| \cos \alpha^{i}_{jk}.
\ee
which shows that $\alpha^{i}_{jk}$ is the angle between edges $(ij)$ and $(ik)$.
\end{proof}

Using (\ref{eqn_alpha_i_jk}) and the definition $L_{ij} = A_i A_j (N_i \times N_j)$ we can relate the angles $\alpha^{i}_{jk}$ to the 3d dihedral angles by
\be \label{eqn_alpha_theta}
  \epsilon_{ij} \epsilon_{ki} \cos \alpha^{i}_{jk} = \frac{\cos \theta_{jk} - \cos \theta_{ij} \cos \theta_{ik}}{ \sin \theta_{ij} \sin \theta_{ik}}.
\ee
This is the spherical law of cosines relating the edges $(\theta_{ij},\theta_{jk},\theta_{ki})$ and angles 
$(\alpha_{ji}^{k},\alpha_{kj}^{i},\alpha_{ik}^{j})$ of a spherical triangle.  This relation with spherical geometry is captured by the so called three terms relations which we discuss next.

\subsection{Geometry of 3-terms Relations}

The relationships between the the 3d dihedral angles and the internal angles between edges is expressed via the three term relations 
(Fierz identity) satisfied by a set of spinors.
There are two such types of relations.
First there are the relations arising at a given vertex of the tetrahedra which imply that the angles $(\theta_{ij},\theta_{jk},\theta_{ki})$ and 
$(\alpha_{ji}^{k},\alpha_{kj}^{i},\alpha_{ik}^{j})$ are respectively the edge lengths and angles of a spherical triangle.
We can write two such relations\footnote{Note that $|z_{j}\ra \la z_{j} | + |z_{j}][ z_{j} | = \la z_{j} |z_{j}\ra \one$.}, 
\bea
\la z_{i}|z_{j}\ra \la z_{j} |z_{k}\ra + \la z_{i}|z_{j}][ z_{j} |z_{k}\ra &=& \la z_{i}|z_{k}\ra \la z_{j} |z_{j}\ra \\
\la z_{i}|z_{j}\ra \la z_{j} |z_{k}] + \la z_{i}|z_{j}][ z_{j} |z_{k} ] &=& \la z_{i}|z_{k}] \la z_{j} |z_{j}\ra 
\eea
which translate into
\bea
c_{ij} c_{jk} \label{eqn_3_term_1}
+ s_{ij} s_{jk} e^{i\alpha^{j}_{ki} }
&=& c_{ik} e^{i(\alpha^{i}_{jk}+\alpha^{j}_{ik} + \alpha^{k}_{ij})/2},\\
c_{ij} s_{jk} 
+  s_{ij} c_{jk} e^{i\alpha^{j}_{ki} }
&=&  s_{ik} e^{i( \alpha^{i}_{jk}-\alpha^{j}_{ik} - \alpha^{k}_{ij})/2}, \label{eqn_3_term_2}
\eea
where $c_{ij} \equiv \cos  \frac{\theta_{ij}}2$ and $ s_{ij}\equiv \epsilon_{ij}\sin  \frac{\theta_{ij}}2$.  Taking the difference of squares of equations $|(\ref{eqn_3_term_1})|^2 - |(\ref{eqn_3_term_2})|^2$ produces equation (\ref{eqn_alpha_theta}).

We also have a relation that  genuinely depends on the tetrahedral geometry and involves the four spinors;
it follows from the Pl\"ucker relation that we have already made extensive use of
\be
[z_{1}|z_{2}\ket[z_{3}|z_{4}\ket + [z_{1}|z_{3}\ket[z_{4}|z_{2}\ket + [z_{1}|z_{4}\ket[z_{2}|z_{3}\ket =0
\ee
and it reads 
\be
s_{12}s_{34} e^{i (\alpha_{12} +\alpha_{34})/2} + s_{13}s_{24} e^{i (\alpha_{13}+\alpha_{24})/2} + s_{14}s_{23}e^{i(\alpha_{14}+\alpha_{23})/2} =0
\ee
where we have defined 
\be \label{eqn_alpha_geo}
\alpha_{ij}\equiv \frac12 \sum_{k\neq i,j} (\alpha^{i}_{jk} +\alpha^{j}_{ik}).
\ee
It can be checked that these angles sum up to $0$: $\sum_{i\neq j } \alpha_{ij}=0$.  Now what needs to be appreciated is the non trivial fact that the angles 
$$
\Phi_{S} = \alpha_{12} +\alpha_{34}
,\quad
\Phi_{T} =\alpha_{13}+\alpha_{24},\quad
\Phi_{U} = \alpha_{14}+\alpha_{23}
$$
determine completely the geometry of the tetrahedron once we know the face areas $A_{i}=2j_{i}$.
This means that $(j_{i}, \alpha_{S},\alpha_{T})$ determine the value of all the 3d dihedral angles $\theta_{ij}$ and internal angles $\alpha_{ij}^{k}$.
This non-trivial fact follows from the analysis performed in \cite{Freidel:2009nu}.

\section{Asymptotic Evaluation of the 20j Symbol}

We will now take an indepth look at the asymptotic evaluation of the normalized  $20j$ symbol.
This object depends on the choice of an orientation of the edges, and we denote by $\epsilon_{ij}$ a sign which $+1$ if the edge $[ij]$ is positively oriented from $i$ to $j$ and $-1$ otherwise. 
This normalized $20j$ symbol is defined as a contraction of the normalised intertwiner $ |S,T\ra$ times the normalisations  and it is 
 expressed as 
\be \label{normalized_20j}
\widehat{\{20j\}}_{S_a,T_a}\equiv \frac{\{20j\}_{S_a,T_a}}{\prod_a \|S_a,T_a\|} = \frac{I(k_{ij})}{\sqrt{\prod_{a}(J_{a}+1)! \prod_{a\neq i<j}k^{a}_{ij}! }} ,
\ee
where $I(k_{ij})$ is an integral over $20$ spinors $|z_{i}^{j}\ra $. The contour of integration is a real contour where $ |z_{i}^{j}\ra$ is related to the  conjugate $|z_{j}^{i}]$ by the reality condition.
\be \label{eqn_reality_cond}
|z_{i}^{j}\ra = \epsilon_{ij}|z_{j}^{i} ].
\ee
This condition implies that the normals of glued faces are related by $N^{j}_{i} = - N^{i}_{j}$ and that the frame vectors match $F^{i}_{j} = F^{j}_{i}$.  

The integral  is given by
\be
I(k_{ij}) =\int \prod_{i\neq j} \frac{\rd^{2} z_{j}^{i} }{\pi^2}  e^{S(z^{i}_{j}) },\quad \mathrm{with}\quad S \equiv \sum_{i<j}  [ z^{i}_{j} | z_{i}^{j}\ket + \sum_{a} \sum_{i<j} k_{ij}^{a} \ln [z_{i}^{a}|z_{j}^{a}\ra
\ee
There are four spinors $|z_{i}\ra$ associated with a framed tetrahedron.
The stationary points of this equation are given by solutions of 
\be \label{system_of_equations}
\sum_{j\neq a,i} \frac{ k_{ij}^{a} }{[z_{j}^{a}|z_{i}^{a}\ra} [z_{j}^{a}| = - [z^{i}_{a}| = \epsilon_{ai} \la z_{i}^{a}|,
\ee
and according to the previous section the solution of these equations are given by
oriented framed tetrahedra.
The relationship between $k_{ij}^{a}$ and the spinors depends on the choice of graph orientation,
it is given by
\be \label{eqn_k_z}
k_{ij}^{a} = \frac1{J^{a}} |[\hat{z}_{i}^{a}|\hat{z}_{j}^{a}\ra|^{2}, \quad 2j_{ai}=  \la \hat{z}^{a}_{i}|\hat{z}^{a}_{i}\ra,\quad J^{a} = \sum_{i\neq a} j^{a}_{i} 
\ee
where $|\hat{z}^{a}_{i}\ra =|z^{a}_{i}\ra$ if the edge is oriented from $a$ to $i$ and 
$|\hat{z}^{a}_{i}\ra = |z^{a}_{i}]$ if the edge is oriented from $i$ to $a$.
This determines the norm of the spinor scalar products in terms of $k_{ij}^{a}$. 

The phases of these products are denoted $\alpha^{ab}_{i}$ and they denote the angle in the face $b$ of the tetrahedron $a$, between the edge $(bi)$ and the reference frame vector in the face $b$ of tetrahedra $a$.
As shown in lemma \ref{lemma_geo}, they are related to the spinor products by 
\be
[\hat{z}_{i}^{a}|\hat{z}_{j}^{a}\ra =\sqrt{J^{a} k_{ij}^{a}} e^{i (\alpha_{j}^{ai}+ \alpha_{i}^{aj})/2}.
\ee

Thus the on-shell evaluation of the action is
\be
  S_{\mathrm{onshell}} = -\sum_{a} J^{a} +\frac12 \sum_{a\neq i<j} k_{ij}^{a}\ln (J^{a} k_{ij}^{a} ) + \frac{i}{2}\sum_{a\neq i<j} k_{ij}^{a}  (\alpha_{i}^{aj} + \alpha_{j}^{ai})
\ee
The real part can be rewritten as
  \be
\mathrm{Re}(S_{\mathrm{onshell}}) = \frac12\left(\sum_{a} J_{a}\ln J_{a}-J^{a} + \sum_{a\neq i<j} ( k_{ij}^{a}\ln  k_{ij}^{a}    - k_{ij}^{a}) \right)
\ee
which is easily recognized as the dominant\footnote{up to a term given by $  \frac14\ln(\prod_{a}( 2\pi J_{a}^{3}\prod_{i<j} (2\pi  k_{ij}^{a} ) )).$ } term in the Stirling  expansion of  
\be
\ln\sqrt{\prod_{a}(J_a+1)! \prod_{a\neq i<j}k_{ij}^{a}!}
\ee
This cancels the factor in (\ref{normalized_20j}).  Let us now focus on the imaginary part:
\be\label{Ims}
{\mathrm{Im}}(S_{\mathrm{onshell}}) =  \frac12  \sum_{a\neq i<j} k_{ij}^{a} (\alpha_{i}^{aj} + \alpha_{j}^{ai}).
\ee
  First,  recall that the system of equations (\ref{system_of_equations}) possesses a gauge symmetry,
\be
\alpha^{ai}_{j} \to \alpha^{ai}_{j} +\theta^{ai}
\ee
where $\theta^{ai}= -\theta^{ia}$.
This  corresponds to the rotation of the  frame vector in the face $(ai)$ by an angle $\theta^{ai}$.
The action is invariant under these gauge transformations.  Indeed under $|z_{i}^{a}\ra \to e^{i\theta^{ai}} |z_{i}^{a}\ra $ the variation of the on-shell action is 
\bea
2\Delta S_{\mathrm{onshell}} &=& i \sum_{a\neq i<j} k_{ij}^{a} (\theta^{ai} + \theta^{aj}) =
i\sum_{(a,i,j)} k_{ij}^{a} \theta^{ai} = i\sum_{(a,i)} \left(\sum_{j\neq (a,i)}k_{ij}^{a}\right) \theta^{ai} \nn \\
&=& 2i \sum_{(a,i)} j_{ai} \theta^{ai} = i\sum_{(a,i)} j_{ai} ( \theta^{ai}+\theta^{ia}) =0
\eea
Here we have denoted by $(a,i,j)$ or $(a,i)$ a set of indices all distinct from each other. 
Therefore the on-shell action can be determined entirely in terms of gauge invariant angles.  The question is which combinations appear.

There are two types of gauge invariant data:
The first type characterizes the intrinsic geometry of each tetrahedron and depends only on the data associated with one tetrahedron.
These correspond  to the angles in a given tetrahedron $a$ between edges $(ij)$ and $(ik)$, and are given by
\be
\alpha^{ai}_{jk} \equiv \alpha^{ai}_{j}-\alpha^{ai}_{k}.
\ee
We already have seen in (\ref{eqn_alpha_i_jk}) that these angles are angles between the edges $(ij)$ and $(ik)$ at the tetrahedron $a$.

The second type of gauge invariant angles encode the extrinsic geometry of the gluing of the five tetrahedra. It depends on two tetrahedra and  involves the sum\footnote{Since the faces $(ab)$ and $(ba)$ have opposite orientations this is really a differences of angles when we take the orientation into account.} of angles between two tetrahedra
\be
\xi^{ab}_{i} \equiv \alpha^{ab}_{i} +\alpha^{ba}_{i}.
\ee

In order to understand the geometrical meaning of these angles let us first remark that when the shapes of the triangles $(ab)$ and $(ba)$ match then the angles between the edges of the triangles when viewed from $a$ or $b$ coincide.  Hence 
 \be \label{eqn_shape_match}
 \alpha^{ab}_{ij} =\alpha^{ba}_{ji}.
 \ee
This condition of shape matching therefore implies that
$
  0 = \alpha^{ab}_{ij} - \alpha^{ba}_{ji} = \xi^{ab}_{i} - \xi^{ab}_{j}
$
and so $\xi^{ab}_{i}$ is {\it independent} of  $i$.  This is the condition which will allow us to interpret $\xi^{ab}_i$ as the 4d dihedral angle between tetrahedra $a$ and $b$.

When the face matching condition is not satisfied, the geometry is twisted in the sense of \cite{Freidel:2010aq} 
and $\xi^{ab}_{i}$ represent a generalization of dihedral angles to twisted geometry.
Moreover, the on-shell action will therefore represent a generalization of the Regge action to twisted geometry.  

\begin{figure} 
  \centering
    \includegraphics[width=0.5\textwidth]{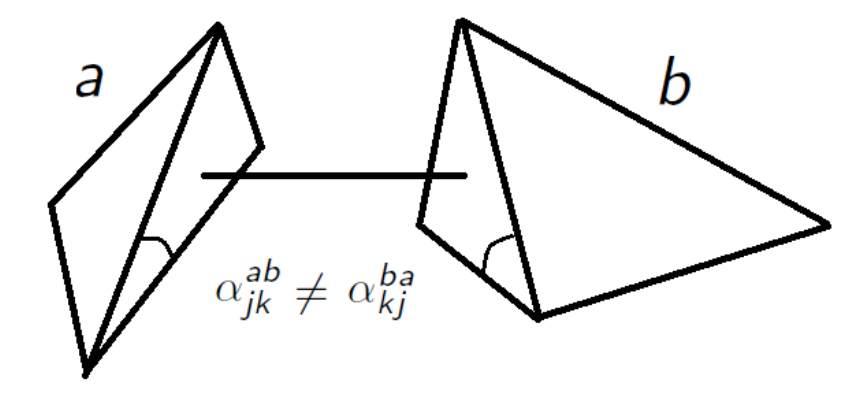}
    \caption{The gluing of tetrahedra in Twisted Geometry requires the areas of glued triangles to match, but the shapes can be different. }  \label{fig_twisted}
\end{figure}


Let us now express the on-shell action in terms of this data.

\begin{theorem} \label{thm_twisted_action}
  The generalization of the Regge action to twisted geometry is given by
\be\label{action}
S_\bbT =   \sum_{i<j} j_{ij} \xi^{ij} +  \sum_{a\neq i<j} k_{ij}^{a} \alpha_{ij}^{a}.
\ee
where 
\be\label{defang}
   \xi^{ij} \equiv \frac13  \sum_{k\neq (i,j)} \xi^{ij}_{k}, \quad   \alpha^{a}_{ij} \equiv  \frac16 \sum_{b\neq (i,j,a)} (\alpha^{ai}_{jb} +\alpha^{aj}_{ib}).
\ee 
\end{theorem}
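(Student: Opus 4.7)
The plan is to start from the imaginary part of the on-shell action in (\ref{Ims}) and rewrite it in manifestly gauge-invariant form using two averaging identities together with the closure relation $\sum_{b\neq a,i} k^{a}_{ib}=2j_{ai}$. First I would exploit $k^{a}_{ij}=k^{a}_{ji}$ and the $i\leftrightarrow j$ symmetry of the bracket $(\alpha^{ai}_j+\alpha^{aj}_i)$ to drop the restriction $i<j$, obtaining
\begin{equation}
2\,\mathrm{Im}(S_{\mathrm{onshell}}) = \sum_{(a,i,j)} k^{a}_{ij}\,\alpha^{ai}_j,
\end{equation}
where the sum runs over ordered triples of distinct indices in $\{1,\ldots,5\}$.

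Next, for a $4$-simplex there are exactly two admissible values of a fourth index $b\notin\{a,i,j\}$, so I would apply the averaging identity $\alpha^{ai}_j = \tfrac12\sum_{b\neq a,i,j}\bigl(\alpha^{ai}_{jb}+\alpha^{ai}_b\bigr)$. The first piece involves the gauge-invariant angle $\alpha^{ai}_{jb}$, while in the second piece a relabelling $j\leftrightarrow b$ together with the closure relation splits it into $\sum_{(a,i,j)} j_{ai}\,\alpha^{ai}_j$ minus a term proportional to $\mathrm{Im}(S_{\mathrm{onshell}})$ itself. Solving the resulting linear recursion yields
\begin{equation}
\mathrm{Im}(S_{\mathrm{onshell}}) = \tfrac16\sum_{(a,i,j,b)} k^{a}_{ij}\,\alpha^{ai}_{jb} + \tfrac13\sum_{(a,i,j)} j_{ai}\,\alpha^{ai}_j.
\end{equation}

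Finally, I would identify the two terms as the two pieces of $S_{\bbT}$ through a second round of symmetrization. Using $j_{ai}=j_{ia}$ to pair ordered $(a,i)$ with $(i,a)$ produces the gauge-invariant combinations $\xi^{ai}_j=\alpha^{ai}_j+\alpha^{ia}_j$; summing over the three values $j\neq a,i$ and invoking the definition (\ref{defang}) delivers exactly $\sum_{i<j} j_{ij}\,\xi^{ij}$. Likewise, in the first term the $k^{a}_{ij}=k^{a}_{ji}$ symmetry lets me symmetrize $\alpha^{ai}_{jb}$ into $\alpha^{ai}_{jb}+\alpha^{aj}_{ib}$, and summing the two remaining values of $b$ rebuilds $6\,\alpha^{a}_{ij}$ from (\ref{defang}), reproducing $\sum_{a\neq i<j} k^{a}_{ij}\,\alpha^{a}_{ij}$.

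The main obstacle I anticipate is combinatorial bookkeeping: the prefactors $1/3$ and $1/6$ in (\ref{defang}) are fixed precisely by the cardinalities of the free-index sets in a $4$-simplex, and they must be tracked consistently through three successive symmetrizations (over $b$, over the ordered pair $(a,i)$, and over the ordered pair $(i,j)$). In particular, the recursive step--where the $\alpha^{ai}_b$ piece feeds back a copy of $\mathrm{Im}(S_{\mathrm{onshell}})$--must produce a coefficient $3/2$ on the left-hand side, since any arithmetic slip there will mismatch the normalization of $\alpha^{a}_{ij}$ versus $\xi^{ij}$.
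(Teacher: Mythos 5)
Your proposal is correct and is essentially the paper's own argument run in the forward direction: the paper verifies the identity by expanding $\sum_{(a,i,j)} k^{a}_{ij}\alpha^{a}_{ij}$ via the definition (\ref{defang}) and showing it equals $2\,\mathrm{Im}(S_{\mathrm{onshell}})-\sum_{(a,i)} j_{ai}\xi^{ai}$, while you decompose $\alpha^{ai}_{j}$ with the same ingredients (the two choices of $b$, $\alpha^{ai}_{jb}=\alpha^{ai}_{j}-\alpha^{ai}_{b}$, the symmetries $k^{a}_{ij}=k^{a}_{ji}$, $j_{ij}=j_{ji}$, and closure $\sum_{j}k^{a}_{ij}=2j_{ai}$) and solve the resulting self-referential equation for $\mathrm{Im}(S_{\mathrm{onshell}})$. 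The coefficients $1/3$ and $1/6$ and the $3/2$ (equivalently $3I$) feedback step you flag all check out, so no gap remains.
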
 
\begin{proof}
Lets first recall the expression (\ref{Ims}) for the imaginary part of 
 the on-shell action 
 \be \label{eqn_2I_action}
2{\mathrm{Im}}(S_{\mathrm{onshell}}) \equiv 2 I  = \sum_{a\neq i<j} k_{ij}^{a}  (\alpha_{j}^{ai} + \alpha_{i}^{aj})
=\sum_{(a,i,j)} k_{ij}^{a}  \alpha_{j}^{ai}.
\ee
where we denote by $(i,j)$, $(a,i,j)$ a set of index distinct from each other.
We now evaluate the sum using the symmetries $ k^{a}_{ij}= k^{a}_{ji}$ , $j_{ij}=j_{ji}$ and the relation 
$ \sum_{j} k^{a}_{ij} = 2 j_{ai}$.
\begin{align} \label{eqn_action_proof}
  \sum_{(a,i,j)} k^{a}_{ij} \alpha^{a}_{ij} 
  &= \frac16 \sum_{(a,i,j,b)} k^{a}_{ij} (\alpha^{ai}_{j} + \alpha^{aj}_{i} - \alpha^{ai}_{b} - \alpha^{aj}_{b} )
  =  \frac13 \sum_{(a,i,j,b)} k^{a}_{ij} (\alpha^{ai}_{j}  - \alpha^{ai}_{b}),  \\
  &= \frac13 \sum_{(a,i,j)} k^{a}_{ij} \sum_{b\neq (a,i,j)}\left(\alpha^{ai}_{j} - \alpha^{ai}_{b}\right)  
  = 2I - \sum_{(a,i)} (\sum_{j\neq(a,i)} k^{a}_{ij}) ( \frac13\sum_{b\neq (a,i)} \alpha^{ai}_{b}),
 \nn  \\
 &= 2I - 2\sum_{(a,i)} j_{ai}  ( \frac13\sum_{b\neq (a,i)} \alpha^{ai}_{b})
 = 2I - \sum_{(a,i)} j_{ai} \xi^{ai} \nn.
\end{align}
 as required.
\end{proof}
Here $\xi^{ij}$ measures the extrinsic curvature of the face $(ij)$ inside the 4-simplex.  It is a generalisation of the 
dihedral angle in the case of twisted geometry.  The angle $\alpha^{a}_{ij}$ is a geometrical angle\footnote{See equation (\ref{eqn_alpha_geo}).} associated with the edge $(ij)$ inside the tetrahedron $a$. 

The first term is a generalization of the Regge action while the second term defines a canonical phase for the intertwiners.   This agrees with the analysis of \cite{Barrett:2009gg} in which the asymptotics of the 4-simplex amplitude in the coherent intertwiner basis, i.e. (\ref{amplitude}), was computed.  This amplitude depends on the boundary data defined by the spinors and spins and the critical points depend on their associated geometry.  For the case where this data represents a non-degenerate, geometrical 4-simplex and the intertwiners are given a canonical phase then it is found that the asymptotic evaluation contains the cosine of the Regge action.  The cases of non-geometric and non-degenerate boundary data are also considered there. 

In the analysis given here the boundary data is defined by the integers $k^{a}_{ij}$ and it is found that the asymptotic evaluation also contains the Regge action when the geometricity conditions (\ref{eqn_shape_match}) are enforced and the phase given by the second term in (\ref{action}) is taken.  The action given here is however of a more general form which also includes non-geometric configurations.  In \cite{Barrett:2009gg} it is also shown that the critical points imply the existence of a 4-simplex embedded in $\R^4$ which is what we show next. 

\subsection{Geometricity and 4d Dihedral angles}
\label{section_geo}

In this section we will discuss the connection between the $\xi^{ij}$ angles and the 4d dihedral angles of a 4-simplex when shape matching is imposed.  To do so we first derive relations between the angles $\xi$ and $\theta$ from the 3-term relations.  Indeed, using the reality condition (\ref{eqn_reality_cond}) and $|z^{a}_{b}\ket\bra z^{a}_{b}| + |z^{a}_{b}][ z^{a}_{b}| = A_{ab}\one$ we have 
\bea
  \left[z^{a}_{i}|z^{a}_{b} \right] \bra z^{b}_{a} | z^{b}_{i}] - [z^{a}_{i}|z^{a}_{b}\ket [ z^{b}_{a} | z^{b}_{i}] = \epsilon_{ab}\epsilon_{ai}\epsilon_{bi} A_{ab} \bra z^{i}_{a}|z^{i}_{b}\ket \\
  \left[z^{a}_{i}|z^{a}_{b} \right] \bra z^{b}_{a} | z^{b}_{i}\ket - [z^{a}_{i}|z^{a}_{b}\ket [ z^{b}_{a} | z^{b}_{i}\ket = \epsilon_{ab}\epsilon_{ai}\epsilon_{ib} A_{ab} \bra z^{i}_{a}|z^{i}_{b}] 
\eea
which are given explicitly by
\bea
  c^{a}_{ib}s^{b}_{ai} - s^{a}_{ib} c^{b}_{ai} e^{i \xi^{ab}_{i}} &=& \epsilon_{ab}\epsilon_{ai}\epsilon_{bi} c^{i}_{ab} e^{i(\xi^{ib}_{a}+\xi^{ab}_i-\xi^{ai}_{b})/2}, \label{eqn_4d_3_term_1}\\
  c^{a}_{ib}c^{b}_{ai} - s^{a}_{ib} s^{b}_{ai} e^{i \xi^{ab}_{i}} &=& \epsilon_{ab}\epsilon_{ai}\epsilon_{ib} s^{i}_{ab} e^{i(-\xi^{ib}_{a}+\xi^{ab}_b-\xi^{ai}_{b})/2}, \label{eqn_4d_3_term_2}
\eea
where $c^{a}_{ij} \equiv \cos  \frac{\theta^{a}_{ij}}2$ and $ s^{a}_{ij}\equiv \epsilon_{ij}\sin  \frac{\theta^{a}_{ij}}2$.  Taking the difference of squares of equations $|(\ref{eqn_4d_3_term_1})|^2 - |(\ref{eqn_4d_3_term_2})|^2$ we get
\be \label{eqn_4d_dihedral}
 -\cos \theta^{a}_{ib}\cos \theta^{b}_{ai} - \epsilon_{ib} \epsilon_{ai} \sin \theta^{a}_{ib} \sin \theta^{b}_{ai} \cos \xi^{ab}_{i} = \cos \theta^{i}_{ab}.
\ee

Another way to derive this relationship is to use the relations $N^{a}_{b} = -N^{b}_{a}$ and $F^{a}_{b} = F^{b}_{a}$ and an argument similar to the one leading to (\ref{eqn_alpha_i_jk}) to show that 
\be
  L^{a}_{bi} \cdot L^{b}_{ai} = \epsilon_{ib} \epsilon_{ai} |L^{a}_{bi}| \cdot |L^{b}_{ai}| \cos \xi^{ab}_{i}.
\ee
Then using the definition $L^{a}_{bi} = A_{ab} A_{ai} N^{a}_{b} \times N^{a}_{i}$ one arrives at $(\ref{eqn_4d_dihedral})$.

In the twisted picture we have three different $\xi^{ab}_{i}$ for $i\neq a,b$ and $\xi^{ab}$ is their average.  In order for the tetrahedra to glue together into a geometrical 4-simplex we must impose the shape matching conditions (\ref{eqn_shape_match}).  We already noted that when these conditions are satisfied $\xi^{ab}_{i}$ is independent of $i$.  Then as shown in \cite{Bahr:2009qd} equation (\ref{eqn_4d_dihedral}) is the relationship between the 3d and 4d dihedral angles of a  4-simplex\footnote{Note that our convention $\theta^{a}_{ii}=0$ differs from the other convention $\theta^{a}_{ii}=\pi$.}.  

Let us now construct a condition on the boundary $k^{a}_{ij}$ such that the matching constraints are satisfied in the semiclassical limit.

Finally, we note that all the gauge invariant angles are entirely determined by the values of $k_{ij}^{a}$.
First the 3d dihedral angles are determined by the $k_{ij}^{a}$ via (\ref{eqn_k_z})
\be \label{eqn_sc_k}
\left(\sin \frac{\theta_{ij}^{a}}{2}\right)^{2} = \frac{J^{a} k_{ij}^{a}}{4 j_{i}^{a}j_{j}^{a}} 
\ee
and then $\alpha^{ai}_{jk}$ and $\xi^{ab}_{i}$ are related to $\theta^{a}_{ij}$ by (\ref{eqn_alpha_theta}) and (\ref{eqn_4d_dihedral}) respectively.  Furthermore, these relations give an interpretation of $k^{a}_{ij}$ in terms of spherical geometry.  





\chapter{Exact Evaluations}
\label{chapter_exact}

In this chapter we will start by building a generating functional which computes the group integrals in the coherent amplitude (\ref{eqn_coherent_4S}).  The definition of the generally coherent amplitude is not limited to the 4-simplex, it can be extended to arbitrary graphs, and this poses no obstacle to its evaluation. 

In Theorem \ref{thm_gauss} it is shown how the group integrals in the generating functional can be expressed as a multi-dimensional Gaussian integral.  The evaluation of this Gaussian integration is the determinant of a matrix $1+X$ depending only on the spinors.

The matrix $1+X$ is defined as a block matrix of outer products of the spinors.  This matrix has a special property described in Definition \ref{def_scalar_loop}.  We call the class of matrices satisfying this property ``scalar loop matrices''.  We prove that for such matrices, defined in terms of the non-commutative blocks, the determinant is related to a quasi-determinant, we call the Loop Determinant.  

The final result is given in Theorem \ref{thm_amp} where the coherent generating functional is expressed as a perfect square of a polynomial in the holomorphic spinor invariants $[z^{v}_{e}|z^{v}_{e'}\ket$ following the pattern of simple loops of the graph which {\it  do not share vertices or edges}.  We give an illustration of this generating functional for the dipole graph as well as the 3 and 4 simplices.

In section \ref{sec_k_gen_func} we go on to define a generating functional for the $k$-basis amplitudes, also for arbitrary graphs.  This generating functional is defined in terms of complex numbers $\tau^{v}_{ee'}$ which keep track of the data $k^{v}_{ee'}$.  We find a formula of the same form as the coherent generating functional where the $\tau^{v}_{ee'}$ now follow the pattern of simple loops of the graph which {\it do not share edges but can share vertices.}

We show that the $k$-basis generating functional reduces to the coherent generating functional when we set $\tau^{v}_{ee'} = [z^{v}_{e}|z^{v}_{e'}\ket$, i.e. when the $\tau^{v}_{ee'}$ satisfy the Pl\"ucker relations.

Finally we show how to use these generating functionals to derive Racah formulae for arbitrary graphs.  In particular the Racah formula for the 20j symbol is derived and an explicit parameterization is given in Appendix \ref{20j_symbol}.

\section{Evaluating the Coherent Amplitude}

A general spin network can be defined by a directed graph $\Gamma$ in which the edges are labeled by spins $j_e$
and vertices are labeled by intertwiners. The spin network amplitude is obtained by contracting the intertwiners along the  edges of $\Gamma$. 
Depending on the intertwiner basis we get different amplitudes.  



Coherent intertwiners are labeled by a spinor on each edge, therefore we assign two spinors $z_{e}, z_{e^{-1}}$ to each oriented edge $e$ of $\Gamma$, one for $e$ and one for the reverse oriented edge $e^{-1}$.  We define also define the sum of spins at a vertex by
$$J_{v} \equiv\sum_{e:s_{e}=v} j_{e} + \sum_{e: t_{e}=v} j_{e}$$
where $s_{e}$ (resp. $t_{e}$) is the starting (resp. terminal vertex) of the edge $e$.

The contraction of coherent intertwiners then produces an amplitude depending on $j_{e}$ and holomorphically on all $z_{e}$
\be \label{amplitude}
  A_\Gamma(j_e,z_e) \equiv \underset{v\in V_\Gamma}{\corner} {\|j_e,z_e \ket} = \int \prod_{v\in V_\Gamma} \rd g_v \prod_{e \in E_\Gamma} \frac{ [z_e|g_{s_e}g_{t_e}^{-1}|z_{e^{-1}}\ket^{2j_e}}{(2j_{e})!}.
\ee 
where we define $E_{\Gamma}$ to be the set 
of  edges of $\Gamma$ and  $V_{\Gamma}$  the set of vertices. 

Our main goal is to compute the group integrals in this expression.  To do this we express the Haar integrals over SU(2) as integrals over $\C^2$ using the following lemma, which was first shown in \cite{Livine:2011gp}:
\begin{lemma} \label{eqn_SU2_lemma}
Let $f \in L^2(SU(2))$ be homogeneous of degree $2J$, i.e. $f(\lambda g) = \lambda^{2J} f(g)$.  Given a spinor by $|z\ket$ define $g(z) = (|0\ket\bra 0| + |0][0|)g(z) = |0\ket\bra z| + |0][z|$ where $|0\ket = (1,0)^t$.  Then
\be
  \int_{\C^2} \rd\mu(z) f(g(z)) = \Gamma(J+2) \int_{\text{SU(2)}} \rd g \, f(g)
\ee
\end{lemma}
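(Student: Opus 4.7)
The plan is to reduce the left-hand side to a product of a radial Gaussian and an angular integral over $S^{3}\subset\C^{2}$ via polar decomposition, and then identify the angular integral with the Haar integral on SU(2).

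First I would write $g(z)$ explicitly. Using $|0\ket=(1,0)^{t}$ one computes
\begin{equation*}
g(z) = \bpm \bar z_{0} & \bar z_{1} \\ -z_{1} & z_{0} \epm,
\qquad g(z)g(z)^{\dagger} = \bra z|z\ket\,\one,
\qquad \det g(z) = \bra z|z\ket.
\end{equation*}
In particular $g(u)\in\mathrm{SU}(2)$ whenever $u\in S^{3}$, and the map $u\mapsto g(u)$ is a diffeomorphism from $S^{3}$ onto $\mathrm{SU}(2)$. Since both rows of $g(z)$ are real-homogeneous of degree one, $g(rz)=r\,g(z)$ for $r>0$, and the hypothesis on $f$ gives $f(g(rz))=r^{2J}f(g(z))$.

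Next I would change to polar coordinates $z=ru$ with $r=\sqrt{\bra z|z\ket}\geq 0$ and $u\in S^{3}$, so that $\rd^{4}z = r^{3}\,\rd r\,\rd\Omega(u)$. The Gaussian integral then splits as
\begin{equation*}
\int_{\C^{2}} \rd\mu(z)\, f(g(z))
= \pi^{-2}\left(\int_{0}^{\infty} r^{3+2J} e^{-r^{2}}\,\rd r\right)
\left(\int_{S^{3}} f(g(u))\,\rd\Omega(u)\right).
\end{equation*}
The radial factor evaluates via the substitution $s=r^{2}$ to $\tfrac{1}{2}\Gamma(J+2)$.

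To handle the angular factor I would establish the intertwining relation $g(hz)=g(z)h^{-1}$ for $h\in\mathrm{SU}(2)$, which follows from $h^{\dagger}=h^{-1}$ together with $\bra hz|=\bra z|h^{-1}$ and $[hz|=[z|h^{-1}$. This shows the push-forward of $\rd\Omega$ under $u\mapsto g(u)$ is right-invariant on $\mathrm{SU}(2)$; combined with the left-$\mathrm{SU}(2)$ invariance of $\rd\Omega$ on $S^{3}$ the push-forward is bi-invariant, and by uniqueness of Haar it is proportional to $\rd g$. The constant is fixed by testing against $f\equiv 1$ (so $J=0$): with the conventional normalization $\int_{\mathrm{SU}(2)}\rd g = 1$ one obtains $\mathrm{Vol}(S^{3})=2\pi^{2}$. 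Assembling $\pi^{-2}$, $\tfrac{1}{2}\Gamma(J+2)$, and $2\pi^{2}$ yields the claimed coefficient $\Gamma(J+2)$.

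The only step that demands real care is the intertwining relation, since $g$ mixes antilinear and linear data between its two rows and one must be careful not to confuse $h$ with $h^{-1}$; once this is in hand the rest is a direct polar-coordinate calculation.
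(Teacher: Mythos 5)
Your proof is correct, and structurally it is the same reduction the paper uses: split the Gaussian integral into a radial part, which homogeneity turns into $\int_0^\infty r^{3+2J}e^{-r^2}\,\rd r = \tfrac12\Gamma(J+2)$, and an angular part identified with the Haar integral over $\SU(2)$. The one place you genuinely diverge is in how that identification is justified. The paper writes $|z\ket$ in explicit coordinates $(r,\theta,\phi,\psi)$, pulls out $r^{2J}$, and simply recognizes the leftover angular measure $\pi^{-2}\,\sin\theta\cos\theta\,\rd\theta\,\rd\phi\,\rd\psi$ (suitably normalized) as the Haar measure of $\SU(2)$ in those coordinates — a standard fact it does not re-derive. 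You instead prove the identification invariantly: the equivariance $g(hz)=g(z)h^{-1}$ shows the push-forward of the round measure on $S^3$ under $u\mapsto g(u)$ is right-invariant, hence proportional to Haar by uniqueness, with the constant $2\pi^2$ fixed by testing $f\equiv 1$. This is coordinate-free and makes explicit the step the paper leaves implicit; the paper's route is shorter if one is willing to quote the Euler-angle form of the Haar measure. One small remark: your sentence about combining with left-invariance to obtain bi-invariance is redundant and slightly circular as worded — the invariance of $\rd\Omega$ under $u\mapsto hu$ is exactly what gave right-invariance of the push-forward, and right-invariance plus uniqueness of Haar measure on a compact group already suffices, since left and right Haar measures coincide there.
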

\begin{proof}
We can relate the inner product (\ref{barg_in_prod}) to the standard $L^2(\text{SU(2)})$ inner product by parameterizing the spinor as
\be
  |z\ket = \bpm r \cos\theta e^{i\phi} \\ r \sin\theta e^{i\psi} \epm
\ee
where $r \in (0,\infty)$, $\theta \in [0,\pi/2)$, $\phi \in [0,2\pi)$, $\psi \in [0,2\pi)$.  The Lebesgue measure in these coordinates is $\rd^4 z = r^3 \sin \theta \cos \theta \rd r \, \rd\phi \, \rd\theta \, \rd\psi$.  Now using the homogeneity property $f(g(z)) = r^{2J} f(\widetilde{g}(z))$ we have
\be
  \int_{\C^2} \rd\mu(z) f(g(z)) = \int_{0}^{\infty} \rd r \, r^{3+2j} e^{-r^2} \int_{0}^{\pi/2} \rd\theta \sin\theta \cos\theta \int_{0}^{2\pi} \rd\phi \int_{0}^{2\pi} \rd\psi f(\widetilde{g}(z))
\ee
where $\widetilde{g}(z) \in \text{SU(2)}$.  Performing the integral over $r$ we get
\be
  \int \rd r \, r^{3+2J} e^{-r^2} = \frac{1}{2} \Gamma(J+2)
\ee
and so
\be
  \int_{\C^2} \rd\mu(z) f(g(z)) = \Gamma(J+2) \int_{\text{SU(2)}} \rd g \, f(g)
\ee
where $\rd g$ is the normalized Haar measure on SU(2).  In our case $J$ is an integer so $\Gamma(J+2) = (J+1)!$.
\end{proof}
We can now use this lemma to convert the group integrals in (\ref{def3}) to Gaussian integrals.  This motivates the following generating functional depending purely on the spinors
\bea\label{def2}
{\cal A}_{\Gamma}(z_{e})
&\equiv& \sum_{j_{e}} \prod_{v\in V_{\Gamma}}(J_{v}+1)! A_{\Gamma}(j_{e},z_{e}), \\
&=& \sum_{j_{e}} \prod_{v\in V_{\Gamma}}(J_{v}+1)! \int \prod_{i\in V_{\Gamma}} \rd g_{i} \prod_{e \in E_{\Gamma}} \frac{[z_{e}|g_{s_e}g_{t_e}^{-1}|z_{e^{-1}}\ket^{2j_{e}}}{(2j_{e})!}. \label{def3}
\eea
We now show how this generating functional can be expressed as a Gaussian integral, the evaluation of which is an inverse determinant.
\begin{theorem} \label{thm_gauss}
The fully coherent amplitude (\ref{def2}) can be evaluated as a Gaussian integral
\bea \label{eqn_coherent_gauss_int}
{\cal A}_{\Gamma}(z_{e}) 
&=& \int_{\C^{2|V_{\Gamma}|}} \prod_{i \in V_{\Gamma}}\rd\mu(\alpha_{i}) 
\exp\left(-{\sum_{i,j \in V_{\Gamma}}\bra \alpha_{i} | X_{ij} |\alpha_{j} \ket}\right) = \frac{1}{\det(1+X(z_{e}))}
\eea
where $\rd\mu(\alpha) \equiv e^{-\bra\alpha |\alpha \ket } \rd^{4} \alpha / \pi^{2}$; 
and $X_{ij}$ is a 2 by 2 matrix which vanishes if there is no edge between $i$ and $j$.
If $(ij)=e$ is an edge of $\Gamma$,  $X_{ij}$ is given by
\be \label{eqn_matrix_X}
X_{ij} = \sum_{e|s_{e}=i, t_{e}=j} |z_{e}\ket [z_{e^{-1}}| - \sum_{e|t_{e}=i, s_{e}=j} |z_{e^{-1}}\ket [{z}_{e}|.
\ee
\end{theorem}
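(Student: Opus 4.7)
The strategy is to convert each Haar integral on $\SU(2)$ into a Gaussian spinor integral via Lemma \ref{eqn_SU2_lemma}, push the sum over $j_e$ through these integrals using $\sum_{n}x^n/n!=e^x$, and then read off the resulting quadratic form.

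The first step is to put the integrand of $A_\Gamma(j_e,z_e)$ into a form to which Lemma \ref{eqn_SU2_lemma} applies. Since $g\in\SU(2)$ has unit determinant, $g^{-1}=\mathrm{adj}(g)$, and under this replacement the integrand is a homogeneous polynomial of degree $2J_v$ in the entries of each $g_v$ (each out-edge incident to $v$ contributes one factor of $g_v$, each in-edge one factor of $\mathrm{adj}(g_v)$, each linear in the matrix entries). Applying Lemma \ref{eqn_SU2_lemma} at every vertex trades the Haar measures for Gaussian measures at the cost of factors $(J_v+1)!^{-1}$, which are cancelled by the $\prod_v(J_v+1)!$ built into $\mathcal{A}_\Gamma(z_e)$. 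Swapping the sum over $j_e$ with the (Gaussian-damped) spinor integrals and using $\sum_{2j_e\ge 0}x^{2j_e}/(2j_e)!=e^x$ yields
\begin{equation*}
\mathcal{A}_\Gamma(z_e)=\int\prod_v d\mu(\alpha_v)\,\exp\Bigl(\sum_e [z_e|\mathcal{M}_e|z_{e^{-1}}\rangle\Bigr),\qquad \mathcal{M}_e=g(\alpha_{s_e})\,\mathrm{adj}(g(\alpha_{t_e})).
\end{equation*}

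The central computation is to recognise $\mathcal{M}_e$ as a rank-two operator. Choosing the parametrisation $g(\alpha)=|\alpha\rangle\langle 0|+|\alpha][0|$ (a choice allowed by bi-invariance of the Haar measure), one has $\mathrm{adj}(g(\alpha))=g(\alpha)^\dagger=|0\rangle\langle\alpha|+|0][\alpha|$, and the orthogonality relations $\langle 0|0\rangle=[0|0]=1$, $\langle 0|0]=[0|0\rangle=0$ collapse the product to $\mathcal{M}_e=|\alpha_{s_e}\rangle\langle\alpha_{t_e}|+|\alpha_{s_e}][\alpha_{t_e}|$. Contracting with $[z_e|\cdot|z_{e^{-1}}\rangle$ and applying the identities $[z|\alpha]=\langle\alpha|z\rangle$ and $[z|w\rangle=-[w|z\rangle$, each edge contributes
\begin{equation*}
[z_e|\mathcal{M}_e|z_{e^{-1}}\rangle=\langle\alpha_{t_e}|z_{e^{-1}}\rangle\,[z_e|\alpha_{s_e}\rangle-\langle\alpha_{s_e}|z_e\rangle\,[z_{e^{-1}}|\alpha_{t_e}\rangle.
\end{equation*}
Reorganising the sum over edges as a sum over ordered vertex pairs $(i,j)$, the second piece supplies $-\langle\alpha_i|\,|z_e\rangle[z_{e^{-1}}|\,|\alpha_j\rangle$ to the block $(s_e,t_e)$ while the first piece supplies $+\langle\alpha_i|\,|z_{e^{-1}}\rangle[z_e|\,|\alpha_j\rangle$ to the block $(t_e,s_e)$; assembling these over all edges reproduces precisely $-\sum_{i,j}\langle\alpha_i|X_{ij}|\alpha_j\rangle$ with $X_{ij}$ as in (\ref{eqn_matrix_X}), the relative sign between the two sums in the definition of $X_{ij}$ being forced by the antisymmetry $[z|w\rangle=-[w|z\rangle$.

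The final step is the Gaussian evaluation: combining this exponent with the factor $\prod_v e^{-\langle\alpha_v|\alpha_v\rangle}$ built into $d\mu(\alpha_v)$ gives a single Gaussian integral on $\mathbb{C}^{2|V_\Gamma|}$ with Hermitian form $\langle\alpha|(\mathbbm{1}+X(z_e))|\alpha\rangle$, whose standard evaluation is $\det(\mathbbm{1}+X(z_e))^{-1}$. The main obstacle is the adjugate-versus-inverse substitution at the start: one must argue that replacing $g_v^{-1}$ by $\mathrm{adj}(g_v)$ (valid pointwise on $\SU(2)$) is what restores the polynomial homogeneity of degree $2J_v$ in $g_v$ demanded by Lemma \ref{eqn_SU2_lemma}; after that, the proof is direct algebra and careful sign bookkeeping.
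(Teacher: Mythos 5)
Your proposal is correct and follows essentially the same route as the paper: Lemma \ref{eqn_SU2_lemma} trades each Haar integral for a Gaussian one, your $\mathcal{M}_e = |\alpha_{s_e}\ket\bra\alpha_{t_e}| + |\alpha_{s_e}][\alpha_{t_e}|$ is exactly what the paper obtains by inserting $\one = |0\ket\bra 0| + |0][0|$ between $g_{s_e}$ and $g_{t_e}^{-1}$ (the same computation as your adjugate substitution), and the identical sign bookkeeping yields $X_{ij}$. The only slip is calling $\bra\alpha|(\one+X)|\alpha\ket$ a Hermitian form --- $X$ is holomorphic in the $z_e$ and not Hermitian --- so, as the paper notes, the determinant formula must be justified by positive definiteness of the Hermitian part of $\one+X$ for spinors of small norm and then extended by analytic continuation.
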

\begin{proof}
Define the spinors  $|\alpha_{i}\ket \equiv g_{i}|0\ket$ where $|0\ket = (1 \: 0)^T$.
Using the decomposition of the identity $\one=|0\ket\bra0| + |0][0|$,  we can express the group product
as 
\bea
g_{i} g_{j}^{-1} & = & g_{i}(|0\ket\bra0| + |0][0|) g_{j}^{-1} = |\alpha_{i}\ket\bra \alpha_{j}| + |\alpha_{i}][\alpha_{j}|
\eea  

Therefore by Lemma \ref{eqn_SU2_lemma} ${\cal A}_{\Gamma}(z_{e}) $ can be written as 
\bea
{\cal A}_{\Gamma}(z_{e}) 
&=& \int_{\C^{2|V_{\Gamma}|}} \prod_{i \in V_{\Gamma}}\rd\mu(\alpha_{i}) 
\exp\left({\sum_{e \in E_{\Gamma}} [z_{e}| \left( |\alpha_{s(e)}\ket\bra \alpha_{t(e)}| + |\alpha_{s(e)}][\alpha_{t(e)}| \right) |z_{e^{-1}}\ket} \right) 
\eea
where $\rd\mu(\alpha) \equiv e^{-\bra\alpha |\alpha \ket } \rd^{4} \alpha / \pi^{2}$. 

Using the relation $[\alpha |w\ket [z|\beta] = - \bra \beta | z\ket [w|\alpha\ket$ we can write the integrand as 
in (\ref{eqn_coherent_gauss_int}) where the 2 by 2 matrix $X_{ij}$ is given by
\be 
X_{ij} = \sum_{e|s_{e}=i, t_{e}=j} |z_{e}\ket [z_{e^{-1}}| - \sum_{e|t_{e}=i, s_{e}=j} |z_{e^{-1}}\ket [{z}_{e}| 
\ee
and $X_{ij}$ vanishes if there is no edge between $i$ and $j$.  We can now define a matrix $X(z_e)$ of size $2n \times 2n$ by the $2 \times 2$ blocks.  The covariance matrix of the Gaussian is then $1 + X$ where the identity matrix comes from the measure and the contraction is with respect to the $2n$-vector composed of the $n$ spinors $|\alpha_{i} \ket$ stacked on top of eachother.  

Now, if the matrix $1+X$ were Hermitian then the Gaussian integral could be evaluated as the determinant of the inverse matrix by diagonalisation with a unitary transformation giving the determinant formula (\ref{eqn_coherent_gauss_int}).  In more generality, it is well known that the same evaluation is valid provided that merely the Hermitian part of $1+X$ is positive definite.  This requirement is satisfied for spinors $|{z}_{e}\ket$ of sufficiently small norm, and by holomorphicity can be analytically continued to a maximal domain dictated by the poles in (\ref{eqn_coherent_gauss_int}).  This completes the proof.
\end{proof}

\subsection{The Loop Determinant}

We will now show how to evaluate the determinant (\ref{eqn_coherent_gauss_int}) explicitly as a sum of terms depending on the cycle structure of the graph $\Gamma$.  This is due to a special property of the matrix $X(z_e)$ which we define in Definition \ref{def_scalar_loop}.  This allows us to define a quasi-determinant which we call the Loop Determinant in Definition \ref{def_loop_det} which has a nice relation with the usual determinant as given in Proposition \ref{Ldet=det}.

Let us first make precise what we mean by loops and cycles on a graph $\Gamma$.
\begin{figure} 
  \centering
    \includegraphics[width=0.7\textwidth]{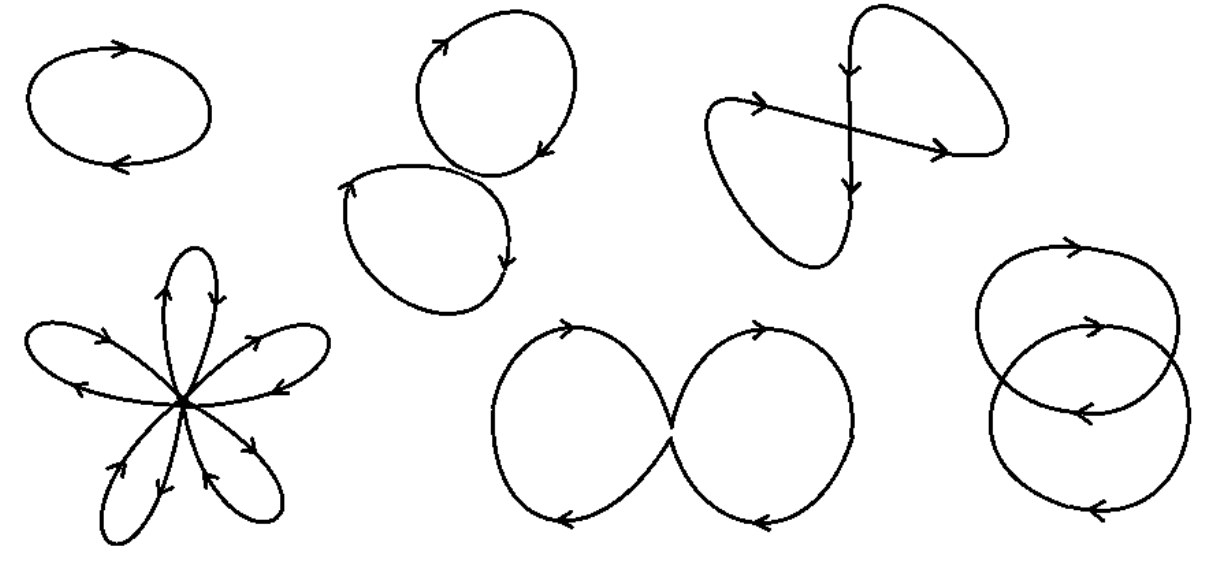}
    \caption{Some examples of paths on a graph which are collections of disjoint simple loops.  There is only one non-trivial cycle located in the top left.  Notice that the middle three diagrams each have an intersection of four edges at one vertex, but but they follow different paths.  These three crossings are analogous to the S, T, and U channels.}  \label{fig_simpleloops}
\end{figure}
\begin{definition}  \label{def_loops_cycles}
A loop of $\Gamma$ is a set of edges $l= e_{1}, \cdots e_{n}$ such that $ t_{e_{i}}= s_{e_{i+1}}$ and $ t_{e_{n}}= s_{e_{1}}$.  
A simple loop of $\Gamma$ is a loop in which $e_{i}\neq e_{j}$ for $i\neq j$, that is each edge enters at most once.
A non trivial cycle $c= (e_{1}, \cdots e_{n} )$ of $\Gamma$ is a simple  loop of $\Gamma$ in which  $s_{e_i} \neq s_{e_j}$ for $i \neq j$, 
i.e. it is a simple loop in which  each vertex is  traversed at most once.
\end{definition}

To define the Loop Determinant, we must first demonstrate how to write the usual determinant as a sum over cycle covers of a graph.  Recall the Laplace expansion of the determinant for a $n \times n$ matrix (of complex numbers)
\be
  \mathrm{det}(A) = \sum_{\pi} \text{sgn}(\pi) a_{1 \pi(1)} a_{2 \pi(2)} \cdots a_{n \pi(n)}.
  \label{eqn_Laplace_det}
\ee
An equivalent definition of the determinant can be given in terms of cycle covers of a complete directed graph on $n$ vertices \cite{mahajan1999determinant}.  On a complete graph we can label a loop by a sequence of vertices since there is only one edge between any two vertices.  A cycle is defined to be a simple loop for which all the vertices are distinct and a cycle cover is defined to be a collection of cycles which cover all the vertices in the graph, i.e. all of $\{1,...,n\}$.  

Notice that every permutation of $(1,...,n)$ corresponds to a unique partition of the set $\{1,...,n\}$ into disjoint cycles.  For example the permutation 
\be
  \pi =
  \bpm 
    1 & 2 & 3 & 4 & 5 & 6 \\
    2 & 4 & 6 & 1 & 5 & 3
  \epm
\ee
corresponds to the cycle cover $\cC = (124)(36)(5)$.  

The weight of a cycle $C=(c_{1}\cdots c_{i})$ is defined to be $W(C) = a_{c_1 c_2} a_{c_2 c_3} ... a_{c_i c_1}$ and the weight of a cycle cover is the product of the weights of its cycles.  The weight of a loop is defined in the same way.
Furthermore, it can be shown that the sign of a permutation is equal to $(-1)^{n+k}$ where $k$ is the number of cycles in its corresponding cover.   Therefore, Eq. (\ref{eqn_Laplace_det}) can be written as
\be
  \mathrm{det}(A) = \sum_{\cC} \mathrm{sgn}(\cC) W(\cC).
  \label{eqn_det_cycles}
\ee
where $\sum_{\cC} = (-1)^{n+k}$.  

Now suppose that the matrix $A$ is composed of elements which are noncommutative such as $2 \times 2$ matrices in the case of Eq. (\ref{eqn_matrix_X}).  In this case we lose many useful relations of the determinant such as the multiplicative property and the behavior with respect to elementary row operations due to the noncommutativity.  Yet for special types of matrices which we call scalar loop matrices we can define a quasi-determinant for which these properties still hold.

\begin{definition}  \label{def_scalar_loop}
A matrix is called a scalar loop matrix if for any loop $L$ the quantity $S(L) = \frac{1}{2}(W(L) + W(L^{-1}))$ is scalar where $L$ and its inverse $L^{-1}$ begin with the same element but the sum is otherwise invariant under cyclic permutations of $L$. 
\end{definition}
\begin{definition} \label{def_loop_det}
Let $A$ be a $n$ by $n$ scalar loop matrix.  The loop determinant of $A$ is defined to be
\be
  \mathrm{Ldet}(A) = \sum_{\cC} \mathrm{sgn}(\cC) S(\cC)
  \label{eqn_sc_det_cycles}
\ee
where the sum is over all cycle covers $\cC = C_1...C_k$ on $\{1,..,n\}$.
\end{definition}
Note that for a commutative matrix Eq. (\ref{eqn_sc_det_cycles}) is equivalent to Eq. (\ref{eqn_det_cycles}) for which the multiplicative property of the determinant was studied in \cite{kovacs1999determinants}.  

The following proposition shows that loop determinant behaves well under elementary row operations, which will be useful when we prove its relation with the usual determinant of a block matrix.
\begin{proposition}
Let $A$ be a scalar loop matrix.  Then the loop determinant behaves as the usual determinant under all the elementary row operations.
In particular the addition of a scalar multiple of one row of $A$ to another row leaves the loop determinant invariant.
\label{thm_sc_det_row_operation}
\end{proposition}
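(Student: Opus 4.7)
The plan is to reduce invariance under all row operations to the standard algebraic fact that two equal rows force the determinant to vanish, and to exhibit this through a sign-reversing involution on cycle covers that uses the scalar loop hypothesis in an essential way.

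First I would show that $\mathrm{Ldet}$ is multilinear in the rows of $A$. In $S(\cC) = \prod_{C \in \cC} S(C)$, only the unique cycle $C_i \in \cC$ containing vertex $i$ uses any entry of row $i$, and inside $S(C_i) = \tfrac{1}{2}(W(C_i) + W(C_i^{-1}))$ the row $i$ entries appear linearly as the outgoing-edge factors $a_{i,\pi(i)}$ and $a_{i,\pi^{-1}(i)}$. Hence $\mathrm{Ldet}$ is a linear function of each row. By multilinearity, the row operation $R_i \mapsto R_i + \lambda R_j$ yields
\[
  \mathrm{Ldet}(A') = \mathrm{Ldet}(A) + \lambda\,\mathrm{Ldet}(A^\sharp),
\]
where $A^\sharp$ is obtained from $A$ by replacing row $i$ with a duplicate of row $j$. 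Invariance under the addition operation thus reduces to proving $\mathrm{Ldet}(A^\sharp) = 0$. Scaling a single row and swapping two rows follow from the same multilinearity together with this vanishing (a row swap factors as a composition of elementary additions and a sign-flipping rescaling).

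The key combinatorial step is the involution $\tau: \pi \mapsto \pi \circ (i,j)$ on directed cycle covers viewed as permutations. This map swaps the outgoing edges at $i$ and $j$: if $i$ and $j$ lie in distinct cycles of $\cC$ those two cycles are merged into a single one, and if they lie in a common cycle that cycle is split into two. In either case the number of cycles changes by one, so $\mathrm{sgn}(\tau(\cC)) = -\mathrm{sgn}(\cC)$, pairing the terms of the $\mathrm{Ldet}$ sum with opposite signs.

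The main obstacle will be to show that the combined contributions $\mathrm{sgn}(\cC)S(\cC) + \mathrm{sgn}(\tau(\cC))S(\tau(\cC))$ sum to zero. In the commutative case the equality $a_{ik} = a_{jk}$ forces pairs to cancel term by term, but in the noncommutative setting individual pairs generally leave a remainder of the form $[s, M]$, where $s$ is a shorter-loop invariant such as a diagonal block $a_{ii}$ or a two-cycle product $a_{ik}a_{ki}$, and $M$ is a matrix product assembled from the remaining cycles. This is where the scalar loop hypothesis is indispensable: it forces every such $s = S(L)$ to be a scalar multiple of the identity, which commutes freely past any other matrix factor, making every commutator remainder vanish. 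Organizing the cancellation requires grouping paired cycle covers by the lengths of the altered cycles and telescoping the commutators, and the scalar loop property at each length furnishes exactly the scalar invariants needed to collapse the telescoped sum to zero, yielding $\mathrm{Ldet}(A^\sharp) = 0$.
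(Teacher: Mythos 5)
Your multilinearity reduction is sound, and it is in substance the same first move as the paper's proof: after expanding the modified row, the terms proportional to $\lambda$ are exactly the cycle-cover sum for the matrix $A^\sharp$ in which row $i$ is a copy of row $j$, so everything hinges on showing that this sum vanishes. The genuine gap is in the cancellation mechanism you propose for that sum. Pairing cycle covers by the sign-reversing involution $\pi \mapsto \pi\circ(i,j)$ does not cancel pair by pair, and the discrepancy left by a pair is not of the form $[s,M]$ with $s$ a scalar loop of $A$: because $S(C)=\tfrac12\bigl(W(C)+W(C^{-1})\bigr)$ averages the two orientations of a cycle, a split cover gets paired with only one orientation of the merged cycle, while the reversed orientation contributes half of a genuinely different monomial. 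Already for $n=3$, $i=1$, $j=2$ one can compute: the pair $\{(13)(2),\,(123)\}$ leaves the remainder $\tfrac12\bigl(a_{23}a_{32}a_{21}-a_{22}a_{23}a_{31}\bigr)$, which does not vanish by scalarity (note $a_{23}a_{31}$ is a loop of $A^\sharp$, not of $A$, so the hypothesis says nothing about it); it only cancels against the opposite remainder from the companion pair $\{(23)(1),\,(132)\}$, using scalarity of $a_{22}$ and of the genuine $2$-loop $a_{23}a_{32}$ of $A$.

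So the missing idea is precisely the grouping that replaces the naive involution, and that grouping is the entire content of the proposition. The paper supplies it explicitly: for each configuration one collects the two split covers that differ by reversing one of the two cycles through $i$ and $j$ together with the merged cycles in their several orientations (its covers built from $C_1,\dots,C_8$), and shows the $\lambda$-contributions of this whole packet cancel; the scalar loop property enters exactly where you expect, by letting closed-loop factors of $A$ itself commute past the remaining matrix factors. Your sentence about ``grouping paired cycle covers by the lengths of the altered cycles and telescoping the commutators'' is a placeholder for this identity rather than a proof of it, and as stated (pairwise commutator remainders) it is false. If you make the packet explicit and check that every factor whose scalarity you invoke is a loop of $A$ and not of $A^\sharp$, the rest of your outline, including the scaling and swap cases, goes through.
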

\begin{proof}
Suppose we add a scalar multiple $\lambda$ of row $i$ of $A$ to row $j$.  Then Eq. (\ref{eqn_sc_det_cycles}) is changed by replacing the single factor $A_{i \cdot}$ in each weight by $A_{i \cdot} + \lambda A_{j \cdot}$.  Therefore Eq. (\ref{eqn_sc_det_cycles}) becomes a sum of its original terms plus terms proportional to $\lambda$.  We will now show that all terms proportional to $\lambda$ cancel each other.

Let $\cC$ be a cycle cover of $1,...,n$.  Then there exists two possibilities: $i$ and $j$ are in the same cycle or $i$ and $j$ are in different cycles. Suppose that they are in the same cycle $C$ and let $\cC^\prime$ be the rest of $\cC$.  By cyclic invariance we can assume that $i = c_1$ and call $j=c_j$ where $C = (c_1 ... c_j... c_N)$. Replacing $A_{c_1 c_2}$ with $A_{c_{1} c_{2}} + \lambda A_{c_{j} c_{2}}$ in $W(C)$ we get 
\begin{align}
  W(C) 
  \rightarrow (A_{c_1 c_2} + \lambda A_{c_{j} c_{2}}) A_{c_2 c_3} \cdots A_{c_{j-1}c_{j}} A_{c_{j} c_{j+1}} \cdots A_{c_{N}c_{1}} 
  = W(C) + \lambda W(\widetilde{C}) N(C)
\end{align}
where $\widetilde{C} = (c_j c_{2} c_{3} ... c_{j-1})$ and $N(C) = A_{c_{j} c_{j+1}} A_{c_{j+1} c_{j+2}} \cdots A_{c_{N}c_{1}}$.  Now consider the cycle $\widehat{C} = (c_1 c_{j+1} c_{j+2} ... c_N)$ then
\begin{align}
  W(\widehat{C}) 
  \rightarrow (A_{c_1 c_{j+1}} + \lambda A_{c_{j} c_{j+1}}) A_{c_{j+1} c_{j+2}} \cdots A_{c_{N}c_{1}} 
  = W(\widehat{C}) + \lambda N(C)
\end{align}
and moreover
\be
 W(\widetilde{C}) W(\widehat{C})
  \rightarrow W(\widetilde{C})W(\widehat{C}) + \lambda W(\widetilde{C}) N(C)
\ee
This demonstrates that $W(C)$ and $W(\widetilde{C})W(\widehat{C})$ produce terms proportional to $\lambda$ which are equal but have opposite sign in Eq. (\ref{eqn_sc_det_cycles}) since $\text{sgn}(\widetilde{C} \widehat{C}) = -\text{sgn}(C)$.  We now show exactly how these terms cancel in Eq. (\ref{eqn_sc_det_cycles}), by considering eight cycle covers for which the terms proportional to $\lambda$ all cancel eachother.  Indeed, let 
\begin{align}
C_1 &= (c_{1} c_{2}  ... c_{j-1}), C_2 = (c_{j} c_{j+1} c_{j+2} ... c_{N}), C_3 = (c_{1} c_{N} c_{N-1} ... c_{j+1}), \nonumber \\
C_4 &= (c_{j} c_{j-1} c_{j-2}... c_{2}), C_5 = (c_{1} c_{2} ... c_{j-1} c_{j} c_{j+1} ... c_{N}), C_6 = (c_{1} c_{2} ... c_{j-1} c_{j} c_{N} c_{N-1} ... c_{j+1}), \nonumber \\
C_7 &= (c_{1} c_{j-1} c_{j-2} ... c_{2} c_{j} c_{j+1} ... c_{N}), C_8 = (c_{1} c_{j-1} c_{j-2} ... c_{2} c_{j} c_{N} c_{N-1} ... c_{j+1}) \nonumber
\end{align}
then it is straightforward to show that
\be
  S(C_1)S(C_2) + S(C_3)S(C_4) - S(C_5) - S(C_6) - S(C_7) - S(C_8)
\ee
is invariant after the row operation, i.e. the terms proportional to $\lambda$ cancel.  Conversely, if $c_1$ and $c_j$ are in different cycles we can write them as $C_1$ and $C_2$ in which case we can construct $C_3$,..., $C_8$ which leads to the same cancellation.

It is easy to see from Eq. (\ref{eqn_det_cycles}) that multiplying a row by a scalar produces an overall factor of $\lambda$ and switching two rows produces a minus sign, just like the determinant over a field.  Hence the loop determinant behaves as one would expect under all the elementary row operations.
\end{proof}
The reason we are interested in the loop determinant is because of the following observation.
\begin{proposition}\label{Ldet=det}
Let $A$ be a scalar loop matrix composed of block matrices and denote the ordinary determinant by $|A|$.  Then
\be
  |A| = \left|\mathrm{Ldet}(A)\right|
\ee
\end{proposition}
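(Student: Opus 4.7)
The plan is to exploit the low-rank structure of $X$ in the generating-functional setting, reducing the block determinant to a scalar determinant, and then matching terms combinatorially. Each block $X_{ij}$ is a sum of rank-one operators $|z_e\rangle[z_{e^{-1}}|$ (with signs) indexed by the edges $e$ between $i$ and $j$, so one may write $X=\sum_{e\in E_\Gamma}|u_e\rangle\langle v_e|$ on $\mathbb{C}^{2n}$, with $|u_e\rangle$ and $\langle v_e|$ supported on the blocks of $s_e$ and $t_e$ respectively. Sylvester's determinant identity then converts the $2n\times 2n$ block determinant into an $|E_\Gamma|\times|E_\Gamma|$ scalar determinant
\[
\bigl|I_{2n}+X\bigr|=\bigl|I_{|E_\Gamma|}+V^{\dagger}U\bigr|,\qquad (V^{\dagger}U)_{ee'}=\langle v_e|u_{e'}\rangle,
\]
whose off-diagonal entries vanish unless $t_e=s_{e'}$. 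Consequently, in the Leibniz expansion of the right-hand side, only permutations of $E_\Gamma$ whose non-trivial cycles correspond to simple loops on $\Gamma$ in the sense of Definition~\ref{def_loops_cycles} survive, and each such permutation-cycle produces a loop product $W(L)$.

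Next, invoke the scalar-loop hypothesis. For each simple loop $L$ of length $\ge 2$, the Leibniz expansion contains two permutation-cycles corresponding to $L$ and $L^{-1}$; by assumption their contributions sum to $W(L)+W(L^{-1})=2S(L)\cdot I_2$, which is scalar. Grouping permutations by the associated disjoint union of simple loops and applying this pairing across each loop reorganises the scalar determinant as a signed sum over cycle covers, with each cycle $C$ weighted by $S(C)$. Comparing with Definition~\ref{def_loop_det}, the outcome is precisely $(\mathrm{Ldet}(A))^2$, and this equals $|\mathrm{Ldet}(A)|$ once the scalar $\mathrm{Ldet}(A)$ is viewed as the $2\times 2$ matrix $\mathrm{Ldet}(A)\cdot I_2$, whose determinant is the square.

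The main obstacle is the detailed bookkeeping of signs and combinatorial factors in the pairing step: matching $\mathrm{sgn}(\cC)$ of the loop determinant with the Leibniz permutation signs, and verifying that the perfect-square form---already visible in formula~(\ref{det4}) for $n=4$---emerges uniformly from the two-dimensionality of the blocks. One must also treat carefully the corner cases of loops of length one (self-edges, for which $L=L^{-1}$) and the contributions of isolated vertices in a cycle cover, which correspond to identity blocks $I_2$. An alternative route, relying on Proposition~\ref{thm_sc_det_row_operation}, would be to perform block Gauss elimination to bring $A$ to a canonical form and verify the identity there; this would require showing that the scalar-loop property is preserved under the reduction, which is not immediate and may itself reduce to the combinatorial identity above.
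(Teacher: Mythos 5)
There is a genuine gap: the proposition is a statement about an \emph{arbitrary} scalar loop matrix of blocks, but your argument only addresses the particular matrix $1+X$ of Theorem \ref{thm_gauss}, whose blocks happen to be sums of rank-one operators attached to edges. That rank-one structure is not part of the hypothesis (Definition \ref{def_scalar_loop} only demands that $W(L)+W(L^{-1})$ be scalar), so even if your reduction were completed it would not prove Proposition \ref{Ldet=det} as stated; it would at best prove the instance needed later, and the paper deliberately separates the two. Moreover, even for $1+X$ the reduction is not as stated: each edge $e$ contributes \emph{two} rank-one terms to $X$ (one for each orientation, at blocks $(s_e,t_e)$ and $(t_e,s_e)$), so Sylvester's identity produces a $2|E_\Gamma|\times 2|E_\Gamma|$ scalar determinant, not an $|E_\Gamma|\times|E_\Gamma|$ one. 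More seriously, the pairing you invoke, $W(L)+W(L^{-1})=2S(L)\,\mathbbm{1}$, is a statement about products of the $2\times 2$ blocks of $A$; it does not act on the entries $\langle v_e|u_{e'}\rangle$ of the edge-indexed scalar matrix, whose Leibniz expansion consists of commuting numbers. Turning that scalar determinant into the square of a vertex-cycle sum is precisely the hard combinatorial step you defer to "bookkeeping," and in the paper it is the content of a separate argument (Lemma \ref{thm_gen_gauss} and Theorem \ref{thm_gen}, done via Pfaffians of the $2N\times 2N$ edge matrix), not of this proposition.

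The paper's proof is quite different and fully general: it is an induction on the block size using Schur-complement elimination of the first row and column. Since $A_{11}$ is scalar (it is the weight of a $1$-loop) it is invertible as a scalar, Proposition \ref{thm_sc_det_row_operation} guarantees $\mathrm{Ldet}$ is unchanged by the elimination, and the key step — the one you flag as "not immediate" and set aside — is carried out explicitly: the Schur complement $B_{ij}=A_{ij}-A_{i1}A_{11}^{-1}A_{1j}$ satisfies $W_B(L)=W_A(L)+\sum_{\sigma}(-A_{11}^{-1})^{|\sigma|}W_A(L(\sigma))$, where $L(\sigma)$ is $L$ with the index $1$ inserted, so $S_B(L)$ is scalar and $B$ is again a scalar loop matrix. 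Then $|A|=|A_{11}|\cdot|B|=|A_{11}|\cdot|\mathrm{Ldet}(B)|=|\mathrm{Ldet}(A)|$ closes the induction. If you want to salvage your route, you would either have to prove the proposition only for $1+X$ and restate it accordingly, or supply the missing regrouping of the edge-determinant into cycle covers with the correct signs — which is essentially re-deriving Theorem \ref{thm_gen} rather than proving this lemma.
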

\begin{proof}
By Theorem \ref{thm_sc_det_row_operation} the loop determinant is unchanged after Gaussian elimination so after eliminating the first column
\be
  \mathrm{Ldet}(A) 
  = \mathrm{Ldet}
  \bpm 
  A_{11} & A_{12} & \ldots & A_{1n} \\
  A_{21} & A_{22} & \ldots & A_{2n} \\
  \vdots & \vdots & \ddots & \vdots \\
  A_{n1} & A_{n2} & \ldots & A_{nn}
  \epm 
  = \mathrm{Ldet}
  \bpm
  A_{11} & A_{12} & \ldots & A_{1n} \\
  0 &    &   &    &  \\
  \vdots &   & B  &  \\
  0 &    &   &    &
  \epm
\ee
where $B$ is a $(n-1) \times (n-1)$ matrix with entries $B_{ij} = A_{ij} - A_{i1} A_{11}^{-1} A_{1j}$.  Note that since $A$ is a scalar loop matrix $A_{11}$ is scalar so $A_{11}^{-1}$ does indeed exist and is also scalar.  Furthermore, if $L=(l_1 l_2 ... l_i)$ is a loop of $\{2,3,...,n\}$ then $W_B(L) = B_{l_1 l_2} B_{l_2 l_3} \cdots B_{l_i l_1}$ can be expressed as
\be
  W_B(L) = W_A(L) + \sum_{\sigma} (-A_{11}^{-1})^{|\sigma|} W_{A}(L(\sigma))
\ee
where $\sigma \subset \{1,2,...,i\}$ and $L(\sigma) = (l_1 ... l_{\sigma_1} 1 l_{\sigma_1 + 1} ... l_{\sigma_2} 1 l_{\sigma_2+1} ... l_{i})$, i.e. it is $L$ with 1 inserted after every element of $\sigma$.  In other words $L(\sigma)$ is a loop of $\{1,2,3,...,n\}$ and so $S_B(L)$ is scalar which shows that $B$ is a scalar loop matrix.  

The hypothesis is clearly true for $n=1$ so now assume it is true for scalar loop matrices of size $(n-1) \times (n-1)$.  Then $|B| = \left|\mathrm{Ldet}(B)\right|$ which then implies
\be
  |A| = |A_{11}| \cdot |B| = |A_{11}| \cdot \left|\mathrm{Ldet}(B)\right| = \left|\mathrm{Ldet}(A)\right|
\ee 
which advances the induction hypothesis.
\end{proof}
The name scalar loop matrix comes from the fact that the collection of indices $L$ corresponds to a loop on the complete directed graph on $n$ vertices.  We will show that the matrix $X(z_e)$ in (\ref{eqn_coherent_gauss_int}), when viewed as a $n \times n$ matrix of $2 \times 2$ blocks defined in Eq. (\ref{eqn_matrix_X}), has precisely this property.  This will allow us to prove our main theorem which is the final expression for the generating functional in Theorem \ref{thm_amp}.
\begin{lemma} \label{lemma_X_scalar_loop}
The matrix $1+X$ in Eq. (\ref{eqn_matrix_X}) is a scalar loop matrix.
\end{lemma}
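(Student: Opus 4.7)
The plan is to reduce the scalar loop property of $1+X$ to a single Fierz-like spinor identity
\begin{equation}
|u\ket [v| - |v\ket [u| = [v|u\ket \cdot \id,
\end{equation}
which follows from a direct two-by-two computation, and then bookkeep the signs arising from edge orientations. The key observation is that each block $X_{ij}$ is, by construction, a signed sum of rank-one outer products $\pm|u\ket[v|$ of spinors attached to the endpoints, one summand per edge between $i$ and $j$. In the diagonal case $i=j$, both sums in (\ref{eqn_matrix_X}) range over the same set of self-loops, so the identity above collapses $X_{ii}$ to $\sum_{e}[z_{e^{-1}}|z_{e}\ket\cdot\id$; in particular $(1+X)_{ii}$ is already scalar, which handles loops of length one.

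For a matrix loop $L=(i_1,\ldots,i_k)$ with $k\ge 2$, I would expand
\begin{equation}
W(L)=X_{i_1 i_2}X_{i_2 i_3}\cdots X_{i_k i_1}
\end{equation}
as a sum over tuples $(e_1,\ldots,e_k)$, where $e_m$ is an edge between $i_m$ and $i_{m+1}$. Writing $u_m$ for the spinor of $e_m$ located at $i_m$, $v_m$ for the one at $i_{m+1}$, and $\sigma_m=\pm 1$ for the sign recording whether $e_m$ is oriented along $L$, the interior brackets collapse to scalars, leaving
\begin{equation}
W(L)=\sum_{(e_m)}\Big(\prod_m \sigma_m\Big)\Big(\prod_{m=1}^{k-1}[v_m|u_{m+1}\ket\Big)\,|u_1\ket[v_k|.
\end{equation}
Running the same edge tuple through $W(L^{-1})=X_{i_1 i_k}X_{i_k i_{k-1}}\cdots X_{i_2 i_1}$, each $e_m$ now sits in a block whose orientation relative to the traversal direction is reversed, so it contributes $-\sigma_m|v_m\ket[u_m|$. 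Collapsing brackets and using the antisymmetry $[u|v\ket=-[v|u\ket$ to absorb the resulting factor of $(-1)^{k-1}$ yields
\begin{equation}
W(L^{-1})=-\sum_{(e_m)}\Big(\prod_m \sigma_m\Big)\Big(\prod_{m=1}^{k-1}[v_m|u_{m+1}\ket\Big)\,|v_k\ket[u_1|.
\end{equation}

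Adding the two expressions leaves a common scalar coefficient in front of $|u_1\ket[v_k|-|v_k\ket[u_1|$, which the Fierz identity collapses to $[v_k|u_1\ket\cdot\id$. This gives
\begin{equation}
S(L)=\tfrac12\sum_{(e_m)}\Big(\prod_m \sigma_m\Big)\prod_{m=1}^{k}[v_m|u_{m+1}\ket\,\cdot\,\id,\qquad u_{k+1}\equiv u_1,
\end{equation}
a scalar multiple of the identity whose coefficient is manifestly invariant under cyclic relabelling of the vertex indices, and $L$ and $L^{-1}$ begin with the same element $i_1$ by construction; hence $1+X$ is a scalar loop matrix. The conceptual content is entirely contained in the Fierz identity; the main obstacle is the careful tracking of the orientation signs to verify that the forward and reverse rank-one pieces pair off with the correct relative sign $-1$, so that they antisymmetrise rather than symmetrise and the identity can be applied.
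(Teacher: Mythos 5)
Your proof is correct and follows essentially the same route as the paper's: collapse the interior spinor brackets of $W(L)$ and $W(L^{-1})$ into scalars, track the orientation signs, and apply the identity $|u\ket[v|-|v\ket[u|=[v|u\ket\,\id$ so that the forward and reversed weights combine into a multiple of the identity. The only differences are presentational — you sum over edge tuples directly (where the paper first treats a single edge per vertex pair and then generalizes) and you also check the diagonal self-loop blocks explicitly, which is a harmless extra bit of care.
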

\begin{proof}
First suppose $\Gamma$ is a complete oriented graph so that we can continue to label loops by pairs of vertices and let $L = (l_1 l_2 \cdots l_i)$ be a loop on $\{1,...,n\}$.  Then $X_{l_j l_k} = |z_{l_j l_k} \ket [z_{l_k l_j} |$ if the edge from $l_j$ to $l_k$ is positively oriented and the negative otherwise.  Suppose that $L$ has $|e|$ edges which are opposite the orientation.  Then
\be
  W(L) = (-1)^{|e|} |z_{l_1 l_2} \ket [z_{l_2 l_1} | z_{l_2 l_3} \ket \cdots [z_{l_{i} l_{i-1}} | z_{l_i l_1} \ket [ z_{l_1 l_i} |
\ee
and 
\be
  W(L^{-1}) = (-1)^{|e|+i} |z_{l_1 l_{i}} \ket [z_{l_{i} l_1} | z_{l_{i} l_{i-1}} \ket \cdots [z_{l_{2} l_{3}} | z_{l_2 l_{1}} \ket [ z_{l_1 l_2} |
\ee
Now using the identity $[z | w \ket = - [w | z \ket$ we have an extra factor of $(-1)^{i-1}$ in the second term and so
\be
  W(L) + W(L^{-1}) = (-1)^{|e|} [z_{l_2 l_1} | z_{l_2 l_3} \ket \cdots [z_{l_{i} l_{i-1}} | z_{l_{i} l_1} \ket \Big( |z_{l_1 l_2} \ket [ z_{l_1 l_i} | - |z_{l_1 l_i} \ket [ z_{l_1 l_2} | \Big)
\ee
now using $|z \ket [w| - |w \ket [z| = -[z|w\ket \one$ we have
\be
  S(L) \equiv \frac{1}{2}\left(W(L) + W(L^{-1})\right) = \frac{(-1)^{|e|}}{2} [z_{l_1 l_i} | z_{l_1 l_2} \ket [z_{l_2 l_1} | z_{l_2 l_3} \ket \cdots [z_{l_{i} l_{i-1}} | z_{l_{i} l_{1}} \ket \one
\label{eqn_weight_cycle}
\ee
By writing $X_{ij}$ as in Eq. (\ref{eqn_matrix_X}) we generalize $\Gamma$ to have any number of edges between pairs of vertices. In that case it is clear that $S(L)$ is equal to the sum of weights of the form on the r.h.s. of Eq. (\ref{eqn_weight_cycle}) over all loops in $\Gamma$ traversing the vertices $(l_1 l_2 \cdots l_i)$ in order.
\end{proof}
Finally we apply the previous lemmas to the matrix $1+X$ in Eq. (\ref{eqn_matrix_X}) to prove our main theorem of this section.  The evaluation of the determinant (\ref{eqn_coherent_gauss_int}) is given by
\begin{theorem} \label{thm_amp}
Let $\Gamma$ be an arbitrary graph with data given as in Lemma \ref{thm_gauss}.  Given a non trivial cycle $c= (e_{1}, \cdots ,e_{n})$ we define the quantity
\be
A_{c}(z_{e}) \equiv -(-1)^{|e|} [\tilde{z}_{e_{1}} | z_{e_{2}}\ket [\tilde{z}_{e_{2}}|z_{e_{3}}\ket \cdots  [\tilde{z}_{e_{n}} | z_{e_{1}}\ket
\ee
where 
$|e|$ is the number of edges of $c$ whose orientation agrees with the chosen orientation of  $\Gamma$,
and $\tilde{z}_{e}\equiv z_{e^{-1}}$.  We define a disjoint cycle union of $\Gamma$ to be a collection $C=\{c_{1},\cdots, c_{k}\}$ of non trivial cycles of $\Gamma$ which are pairwise disjoint (i.e. do not have any common edges or vertices).  Given a disjoint cycle union $C=\{c_{1},\cdots, c_{k}\}$ we define
\be \label{eqn_cycle_union_amp}
A_{C}(z_{e}) = A_{c_{1}}(z_{e})\cdots A_{c_{k}}(z_{e}).
\ee
The fully coherent amplitude is given by
\be \label{eqn_coh_gen_func_form}
{\cal A}_{\Gamma}(z_{e}) = \frac1{\left(1 + \sum_{C} A_{C}(z_{e})\right)^{2}}
\ee
where the sum is over all disjoint cycle unions $C$ of $\Gamma$.
\end{theorem}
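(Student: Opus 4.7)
The plan is to chain together the three preceding preparatory results---Theorem~\ref{thm_gauss}, Lemma~\ref{lemma_X_scalar_loop}, and Proposition~\ref{Ldet=det}---and then perform an explicit cycle-cover expansion of the resulting loop determinant. Theorem~\ref{thm_gauss} already reduces the problem to computing $1/\det(1+X(z_e))$, where $1+X$ is the $2|V_\Gamma|\times 2|V_\Gamma|$ matrix built from the $2\times 2$ blocks (\ref{eqn_matrix_X}). Lemma~\ref{lemma_X_scalar_loop} asserts that this matrix is a scalar loop matrix, and Proposition~\ref{Ldet=det} then replaces the ordinary determinant of the large block matrix by the $2\times 2$ determinant of $\mathrm{Ldet}(1+X)$. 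Because the diagonal blocks of $1+X$ all equal $\one$ (we may assume $\Gamma$ has no self-loops, so $X_{ii}=0$), the loop determinant will reduce to a scalar multiple $f(z_e)\one$ of the $2\times 2$ identity, and Proposition~\ref{Ldet=det} will yield $\det(1+X) = f(z_e)^2$. Thus the theorem reduces to proving the single identity $f(z_e) = 1 + \sum_C A_C(z_e)$.

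To compute $f(z_e)$ I would unfold Definition~\ref{def_loop_det} and write $\mathrm{Ldet}(1+X) = \sum_\mathcal{C} \mathrm{sgn}(\mathcal{C}) S(\mathcal{C})$ as a sum over cycle covers of the vertex set $V_\Gamma$. Each cover decomposes into (i) trivial length-$1$ cycles (fixed points), each of which contributes a factor $1$ from the diagonal block $\one$, and (ii) non-trivial cycles on subsets of $V_\Gamma$. For any non-trivial matrix cycle $L=(v_1,\dots,v_k)$, the Fierz-identity calculation already carried out inside the proof of Lemma~\ref{lemma_X_scalar_loop}---relying on $|z\ket[w| - |w\ket[z| = -[z|w\ket \one$---shows that $S(L)$ collapses to a scalar multiple of $\one$ whose coefficient is, up to an overall factor of $\pm\tfrac12$, exactly the product of brackets appearing in $A_c(z_e)$ for any simple loop $c$ of $\Gamma$ with vertex sequence $(v_1,\dots,v_k)$. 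Expanding each block $X_{v_a v_{a+1}}$ as a sum over individual edges of $\Gamma$ connecting $v_a$ and $v_{a+1}$ then converts the sum over matrix cycle covers into a sum over collections of vertex-disjoint (and therefore edge-disjoint) simple loops of $\Gamma$, which is precisely the notion of a disjoint cycle union in the statement.

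It then remains to verify that the arithmetic of signs and multiplicities reproduces exactly $1 + \sum_C A_C(z_e)$ with no spurious factors. The trivial cover (all fixed points) supplies the $1$. For a cover with $r$ oriented non-trivial cycles of total length $m$, the permutation parity contributes $\mathrm{sgn}(\mathcal{C}) = (-1)^{n+(r+n-m)} = (-1)^{r-m}$, while each cycle contributes a factor $\pm\tfrac12 A_c(z_e)$ from the Fierz step. The overall factor of $2^r$ needed to cancel the $2^{-r}$ is recovered because each unordered cycle of length $\geq 2$ is traversed in both orientations in the permutation sum, and $A_c = A_{c^{-1}}$: the latter identity follows from $[z|w\ket = -[w|z\ket$ together with the parity flip of $|e|$ under reversal, so the two orientations contribute equally. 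The hardest step will be precisely this sign bookkeeping: the $(-1)^{|e|}$ from edge-orientation conventions, the Fierz sign $-1$, and the permutation parity $(-1)^{n+k}$ must all conspire to match the $-(-1)^{|e|}$ prefactor built into the definition of $A_c$ and leave a clean $+1$ multiplying each $A_C$. Once that check is complete, inverting the square yields $\mathcal{A}_\Gamma(z_e) = 1/(1+\sum_C A_C(z_e))^2$ as claimed.
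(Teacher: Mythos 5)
Your proposal follows essentially the same route as the paper's own proof: it chains Theorem \ref{thm_gauss}, Lemma \ref{lemma_X_scalar_loop} and Proposition \ref{Ldet=det}, expands $\mathrm{Ldet}(1+X)$ over cycle covers with $1$-cycles contributing unity, uses the Fierz identity to collapse each non-trivial cycle to a scalar bracket product, and matches the permutation parity $(-1)^{n+k}$ against the $-(-1)^{|e|}$ prefactor in $A_c$ exactly as the paper does. The sign and factor-of-two bookkeeping you flag as the remaining check is precisely the computation carried out at the end of the paper's proof, so your plan is correct and complete in outline.
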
 
\begin{proof}
By lemma \ref{lemma_X_scalar_loop} the matrix $1+X$ is a scalar loop matrix.  Therefore by proposition \ref{Ldet=det}
\be
  |1+X| = \left|\mathrm{Ldet}(1+X)\right| = \left( \sum_{\cC} \mathrm{sgn}(\cC) S(\cC) \right)^2
\ee
where the sum is over all cycle covers of $V_\Gamma$.  Since the loop determinant is a scalar (proportional to the 2 by 2 identity), its determinant is a perfect square.  The 1-cycles of $1+X$ correspond to the diagonal which all have weight 1.  The cycle cover of all 1-cycles produces the term equal to unity.  The 2-cycles of $1+X$ all vanish since $[z_e|z_e \ket=0$. Therefore the cycle covers consist of disjoint unions of non-trivial cycles with the remaining vertices covered by 1-cycles.  This is enough to see that the weight from the loop determinant formula agrees with the weight in Eq. (\ref{eqn_cycle_union_amp}).  Now the sign of each term is $(-1)^{n+k}$ from the cycle cover and $(-1)^{|e|}$ from the weight formula in Eq. (\ref{eqn_weight_cycle}).  If a cycle cover has $i$ non-trivial cycles covering $n-r$ vertices then there are $k = i+r$ cycles in the cover.  Thus if we assign $(-1)^{|n|+|e|+1}$ to each non-trivial cycle where $|n|$ is the number of vertices in the cycle then $\sum (|n|+1) = (n-r) + i  = n+k-2r$ which agrees with the weight from the cycle cover.
\end{proof}

\subsection{Illustration}
Let us illustrate Theorem \ref{thm_amp} on one of the simplest graphs: the theta graph $\Theta_{n}$.  This graph consists of two vertices with $n$ edges running between them.  
The amplitude for this graph depends on $2n$ spinors denoted $z_{i}$ for the spinors attached to the first vertex and $w_{i}$ for the ones attached to the second vertex. 
We choose the orientation of all the edges to be directed from $z_{i}$ to $w_{i}$ where $i=1,\cdots, n$ labels the edges of $\Theta_{n}$.

For this graph the only cycles which have non-zero amplitudes are of length 2.  Further, since there are only two vertices, each disjoint cycle union
consists of a single nontrivial cycle. The amplitude associated to such a cycle going along the edge $i$ and then $j$ is given by
\be
A_{ij} = [w_{i}|w_{j}\ket[z_{j}|z_{i}\ket.
\ee
Therefore, from our general formula (\ref{eqn_coh_gen_func_form}) we have
\be\label{theta}
{\cal A}_{\Theta_{n}}(z_{i},w_{i}) = \left(1 + \sum_{i<j}[w_{i}|w_{j}\ket[z_{j}|z_{i}\ket \right)^{-2}.
\ee
\begin{figure} 
  \centering
    \includegraphics[width=0.5\textwidth]{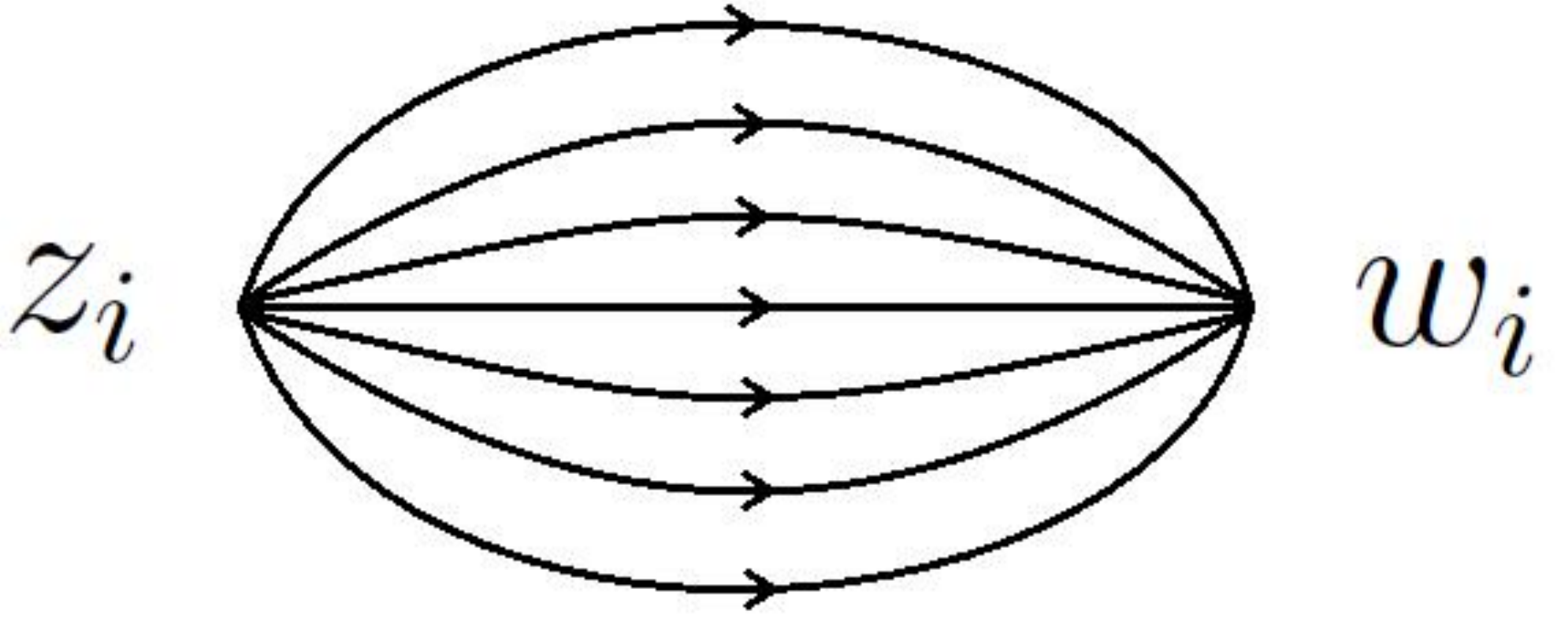}
    \caption{The theta graph with, in general, $n$ legs.  There is a spin $j_i$ and two spinors $\{z_i,w_i\}$ defined for each leg with the orientation from $z$ to $w$.  This amplitude corresponds to the scalar product of coherent intertwiners. }
\end{figure}

The theta graph amplitude gives information about the scalar product of intertwiners. 
In the next section we show how to use this generating function to construct a resolution of identity on the space of $n$-valent intertwiners.

We now illustrate the theorem for cases of the 3-simplex and the 4-simplex. 
In a $n$-simplex there is exactly one oriented edge for any pair of vertices $e=[ij]$ and so we can label cycles by sequences of vertices.  We choose the 
orientation of the simplex to be such that positively oriented edges are given by $e=[ij]$ for $i<j$.
Associated to the oriented edge $e=[ij]$ we assign the spinors $$z_{e} \equiv z^{i}_{j},\qquad \tilde{z}_{e}=
z_{e^{-1}} \equiv z^{j}_{i}.$$ 

Given a non trivial cycle $(1,2, \dots, p)$ of a $n$-simplex we define its amplitude by
\be
A_{12\cdots p}\equiv   [z^{1}_{p}|z_{2}^{1}\ket[z_{1}^{2}|z_{3}^{2}\ket \cdots [z_{p-1}^{p}|z_{1}^{p}\ket.
\ee
For the 3-simplex we have four non-trivial cycles of length $3$ and three non-trivial cycles of length $4$.  Since each of these cycles share a vertex or edge with every other, the only disjoint cycle unions are those which contain one non-trivial cycle.  Therefore, after taking into account the sign convention the 3-simplex amplitude is given by
\be  \label{eqn_6j_gen_func}
{\cal A}_{3S}= \bigg(1 - A_{123} - A_{124} - A_{134} - A_{234} + A_{1234} - A_{1243} - A_{1324} \bigg)^{-2}.
\ee
The sign in front of $A_{123}$ is determined in the following way.  First, there is one $-1$ which comes from the cycle union having one non trivial cycle and two $-1$ because the non trivial cycle $(1,2,3)$ contains the two edges $12$ and $23$ which have a positive orientation.  Thus the sign is negative.  

Expanding the generating functional (\ref{eqn_6j_gen_func}) in power series produces the Racah coefficients, i.e. the 6j symbol.

For the 4-simplex we have ten 3-cycles, fifteen 4 cycles, and twelve 5 cycles and again the disjoint cycle unions consist of only single cycles.
We define the 3-cycle amplitude to be
\be
A_{3}\equiv A_{123} + A_{124}+ A_{134} + A_{234} + A_{125}  +A_{135}+ A_{345} + A_{145} + A_{245} + A_{345},
\ee
the 4-cycle amplitude to be
\be
A_{4}\equiv \hat{A}_{1234} + \hat{A}_{1235} + \hat{A}_{1245} + \hat{A}_{1345} + \hat{A}_{2345},
\quad \mathrm{
with
}\quad 
\hat{A}_{1234}= A_{1234}- A_{1324}- A_{1243}.
\ee
and the 5-cycle amplitude to be
\bea\nonumber
A_{5} &=&  A_{12345} - A_{12435} - A_{23541}- A_{34152} - A_{45213}- A_{51324}\\
& &
- A_{12453} - A_{23514} -A_{34125} - A_{45231}- A_{51342} - A_{13524}.
\eea
Finally, the 4-simplex amplitude is given by
\be
{\cal A}_{4S} = ( 1 - A_{3} + A_{4} - A_{5})^{-2}.
\ee
By expanding this expression for the 4-simplex generating functional we will derive a Racah formula for the 20j symbol in Section \ref{section_racah}.

\subsection{Relating the Coherent and Discrete-Coherent Intertwiners}
\label{section_relate_coh_discoh}

We now would like to understand the relationship between the discrete-coherent basis of intertwiners from Section \ref{section_discrete_coherent} and the Livine-Speziale coherent intertwiners, Section \ref{section_coherent_intertwiners}.  In particular we would like to find the scalar product between these states.
In order to investigate this, let us introduce the normalised intertwiner basis
\be\label{Chat}
\widehat{C}_{[k]}^{(n)}(z_{i}) \equiv  \frac{ \prod_{i<j} [z_{i}|z_{j}\ket^{k_{ij}}}{\sqrt{ (J+1)! \prod_{i<j} k_{ij}!}} 
= \sqrt{\frac{ \prod_{i<j} k_{ij}!}{(J+1)!}} C_{[k]}^{(n)}. 
\ee

Intuitively, the theta graph consists of two $n$-valent intertwiners with pairs of legs identified, i.e. the scalar product.  Indeed, expanding the theta graph amplitude (\ref{theta}) in a power series yields an expression in terms of these intertwiners
\bea
{\cal A}_{\Theta_{n}}(z_{i},w_{i}) &=& \sum_{J} (-1)^J (J+1) \left(\sum_{i<j}  [w_{i}|w_{j}\ket[z_{j}|z_{i}\ket\right)^{J} \\
&=& \sum_{[k]}     {(J+1)!} \frac{
\prod_{i<j}  [w_{i}|w_{j}\ket^{k_{ij}} [z_{i}|z_{j}\ket^{k_{ij}}
}{\prod_{i<j} k_{ij}!}
\\ 
&=& \sum_{j_{i}} \left[(J+1)! \right]^{2}  \sum_{[k] \in K_{j}}\widehat{C}_{[k]}^{(n)}(z_{i})\widehat{C}_{[k]}^{(n)}(w_{i}).
\eea
This shows that ${\cal A}_{\Theta_{n}}(z_{i},w_{i})$ is a generating functional for the $n$-valent intertwiners.
Given the definition (\ref{def2}) of the amplitude ${\cal A}_{\Theta_{n}}(z_{i},w_{i})$ 
this implies that 
\be\label{CC}
  \sum_{[k] \in K_{j}}\widehat{C}_{[k]}^{(n)}(z_{i})\widehat{C}_{[k]}^{(n)}(w_{i}) =\int \rd g  \prod_{i}\frac{[z_{i}|g|w_{i}\ket^{2j_{i}} }{(2j_{i})!}.
\ee
This equation expresses the relation between the scalar product of the coherent intertwiners with the scalar product of the basis (\ref{Chat}).

We now have to understand the normalization properties of  $\widehat{C}_{[k]}^{(n)}$.
In order to do so, it is convenient to introduce another generating functional defined by
\be\label{ACC}
\widehat{\cal A}_{\Theta_{n}}(z_{i},w_{i}) \equiv  \sum_{[k]}   \widehat{C}_{[k]}^{(n)}(z_{i})\widehat{C}_{[k]}^{(n)}(w_{i}).
\ee
The remarkable fact about this generating functional, which follows from (\ref{CC}), is that it can be written as the evaluation of the following integral
\be
\widehat{\cal A}_{\Theta_{n}}(z_{i},w_{i}) =\int_{\SU(2)} \rd g \, e^{\sum_{i}[z_{i}|g|w_{i}\ket} \, .
\ee
We can now compute 
\bea
\int \prod_{i}\rd\mu(w_{i}) \left|\widehat{\cal A}_{\Theta_{n}}(z_{i},w_{i})\right|^{2} &=& \int \rd g \rd h 
\int \prod_{i}\rd\mu(w_{i}) e^{\sum_{i}[z_{i}|g|w_{i}\ket + \sum_{i}\bra w_{i}|h^{-1}|z_{i} ]}\\
&=& \int \rd g \rd h \,e^{\sum_{i}[z_{i}|gh^{-1}|z_{i} ]} 
= \widehat{\cal A}_{\Theta_{n}}(z_{i}, \check{z}_{i})
\eea
where $ |\check{z}_{i}\ket \equiv |z_{i}]$ and in the second line we evaluated the Gaussian integral.  Using (\ref{ACC}) to write this equality in terms of the intertwiner basis we get
\be
\sum_{[k],[k']} \widehat{C}_{[k']}^{(n)}(z_i) \left\bra \widehat{C}_{[k']}^{(n)} \right|\left. \widehat{C}_{[k]}^{(n)} \right\ket  \widehat{C}_{[k]}^{(n)}(\check{z}_{i})
=   \sum_{[k]}   \widehat{C}_{[k]}^{(n)}(z_{i})\widehat{C}_{[k]}^{(n)}(\check{z}_{i})
\ee
where we have used that ${C}_{[k]}^{(n)}(\check{z}_{i})$ is the complex conjugate of $C_{[k]}^{(n)}({z}_{i})$, i.e. $[\check{w}|\check{z}\ket=-\bra w | z ]= \bra z|w]= \overline{[w|z\ket}$, and the scalar product is defined with respect to the measure (\ref{barg_in_prod}).  This shows that the combination 
\be\label{proj}
P _{j}(z_i,z'_i) \equiv  \sum_{[k]\in K_{j} }    \widehat{C}_{[k]}^{(n)}(z_i) \widehat{C}_{[k]}^{(n)}(z'_i)
\ee
is a projector onto the space of SU$(2)$ intertwiners of spin $j_{i}$.  This proves the $n$-valent generalization of Theorem \ref{thm_completeness}.

\section{Generating Functionals}
\label{sec_k_gen_func}

In this section we construct a generating functional to compute the amplitudes of $k$-basis contractions.  We first warm up by constructing a generating functional for the scalar product $\bra k_{ij}| k'_{ij}\ket$.  We find that the scalar product is orthogonal up to terms generated by the Pl\"ucker relations.

We then generalize this construction by defining a generating functional (\ref{defG}) for arbitrary graphs.  We define a matrix $\T$ which makes the Gaussianity of this generating functional explicit.   
In Proposition \ref{prop_gen_func_coh_amp} we show explicitly how this generating functional can be related to the previous one.  In Theorem \ref{thm_gen} we evaluate the generating functional in terms of simple loops on the graph.  Finally Corollary \ref{cor_gen} shows explicitly how the Pl\"ucker relations conspire to eliminate all unions of simple loops that share vertices.  This gives an independent proof of Theorem \ref{thm_amp}.

\subsection{The Scalar Product}

We would like now to provide a  direct evaluation of the scalar product between two discrete-coherent intertwiners.
In order to do so 
we introduce the following generating functional which depends holomorphically on $n$ spinors $|z_{i}\ket$ and $n(n-1)/2$ complex numbers
$\tau_{ij} =-\tau_{ji}$
\be\label{defC}
 {\cal C}_{\tau_{ij}}(z_{i})  \equiv e^{\sum_{i<j} \tau_{ij}[z_{i}|z_{j}\ket } = \sum_{[k] } 
 \prod_{i<j}  C_{[k]}(z_{i}) \tau_{ij}^{k_{ij}}.
\ee
This functional was first consider by Schwinger \cite{schwinger2001angular}.
We now compute the scalar product between two such intertwiners
\bea\label{CC2}
\left\bra {\cal C}_{\tau_{ij}} | {\cal C}_{\tau_{ij}} \right\ket & = & \int \prod_{i}\rd\mu(z_{i})   \left|{\cal C}_{\tau_{ij}}(z_{i})\right|^{2}\\
& = &  \int \prod_{i}\rd\mu(z_{i}) e^{\sum_{i<j} \tau_{ij}[z_{i}|z_{j}\ket + \bar{\tau}_{ij} \bra z_{j}|z_{i}]}.
\eea
If we denote by $\alpha_{i}\in \C$ and $\beta_{i} \in \C$ the two components of the spinor $z_{i}$, 
and use that $[z_{i}|z_{j}\ket = \alpha_{i}\beta_{j} - \alpha_{j} \beta_{i}$ together with the antisymmetry of $\tau_{ij}$, this integral reads 
\be
\int \prod_{i}\rd\mu(\alpha_{i})\rd\mu(\beta_{i}) e^{\sum_{i,j}( \tau_{ij} \alpha_{i}\beta_{j} +\bar{\tau}_{ij}\bar{\alpha}_{i}\bar{\beta}_{j})}
\ee
with $\rd \mu(\alpha)=e^{-| \alpha |^{2}} \rd \alpha/\pi $.
We can easily integrate over $\beta_{j}$, since the integrand is linear in $\beta_{j}$ and we obtain:
\be
\int \prod_{i}\rd\mu(\alpha_{i}) e^{\sum_{i,j,k}\alpha_{i} \tau_{ij} \bar{\tau}_{kj} \bar{\alpha}_{k}}
= \frac{1}{\det(1 + T\overline{T})}
\ee
where $T = (\tau_{ij})$ and $\overline{T} = (\overline{\tau}_{ij})$.  In the case where $n=3$ this determinant can be explicitly evaluated and it is given by
\be
\det(1 + T\overline{T}) = \left(1-\sum_{i<j} |\tau_{ij}|^{2} \right)^{2}
\ee
In the case $n=4$  the explicit evaluation  gives 
\be
\det(1 + T\overline{T}) = \left(1-\sum_{i<j} |\tau_{ij}|^{2} +  |R|^{2}\right)^{2}
\ee
where 
\be \label{eqn_tau_plucker}
R(\tau)= \tau_{12}\tau_{34} +  \tau_{13} \tau_{42}+ \tau_{14}\tau_{23}.
\ee
Note that the Pl\"ucker identity tells us that $R=0$ when $\tau_{ij} =[z_{i}|z_{j}\ket$.  

By expanding the LHS of (\ref{CC2}) for $n=4$
\bea
 \left\bra {\cal C}_{\tau_{ij}} | {\cal C}_{\tau_{ij}} \right\ket&= &\sum_{[k],[k']} \prod_{i<j} \tau_{ij}^{k_{ij}} \bar{\tau}_{ij}^{k_{ij}'} \left\bra C_{[k']} \right|\left. C_{[k]}\right\ket
 \eea
 we see that the generating functional contains information about the scalar products of the new intertwiners.

For general $n$ we notice that
\be
  \det(1 + T\overline{T}) = \det \bpm T & 1 \\ -1 & \overline{T} \epm 
\ee
and since $T$ is $n\times n$ antisymmetric we can express the determinant as the square of a Pfaffian as
\be
  \det(1 + T\overline{T}) = \left( 1 + \sum_{I} (-1)^{\frac{|I|}{2}} \mathrm{pf}(T_I)\mathrm{pf}(\overline{T_I}) \right)^2
\ee
where $I \subset \{1,...,n\}$, $|I| = 2,4,...$ up to $n$, and $T_I$ is the submatrix of $T$ consisting of the rows and columns indexed by $I$.   In particular we have  $\mathrm{pf}(T_{\{i,j\}}) = \tau_{ij}$ and  for $I = \{i,j,k,l\}$ 
\be
  R_{ijkl} \equiv \mathrm{pf}(T_{\{i,j,k,l\}}) = \tau_{ij}\tau_{kl} + \tau_{ik}\tau_{lj} + \tau_{il}\tau_{jk}.
\ee
By the pfaffian expansion formula for $|I| > 4$ $\mathrm{pf}(T_{I})$ consists of terms, all of which contain a factor $R_{ijkl}$ for some $1\leq i<j<k<l \leq n$.  For instance $\mathrm{pf}(T_{\{1,2,3,4,5,6\}}) = \tau_{12}R_{3456} - \tau_{13}R_{2456}+\cdots$.  Therefore if $\tau_{ij} =[z_{i}|z_{j}\ket$ then we have $\binom{n}{4}$ relations $R_{ijkl} = 0$ in which case the scalar product has the form
\be \label{eqn_A_equals_G}
 \left\bra {\cal C}_{[z_{i}|z_{j}\ket}| {\cal C}_{[z_{i}|z_{j}\ket} \right\ket 
 = \left(1-\sum_{i<j} [z_{i}|z_{j}\ket \bra z_{i}|z_{j} ] \right)^{-2} 
 = {\cal A}_{\Theta_{n}}(z_{i},\check{z}_{i})
\ee
where $ |\check{z}_{i}\ket \equiv |z_{i}]$.
This shows that  when $\tau_{ij}=[z_{i}|z_{j}\ket$, we recover the amplitude ${\cal A}$ we computed initially.
This is not a coincidence, this is always true for any graph as we show in Proposition \ref{prop_gen_func_coh_amp}.

\subsection{Discrete-Coherent Amplitude Generating Functionals}

We would now like to define a generating functional for arbitrary $k$-basis contractions.  That is, we want to generalize (\ref{CC2}) from the theta graph, to arbitrary graphs.  This will involve a functional (\ref{defC}) for each vertex with the edges glued by integration with respect to $\rd \mu(z)$.  This is made precise in the following definition:

\begin{definition}
Given an oriented graph $\Gamma$ we define  a generating functional  that depends holomorphically 
on  parameters $\tau_{ee'}^{v}=-\tau_{e'e}^{v}$  associated with a pair of edges $e,e'$ meeting at $v$.
\be\label{defG}
\G(\tau_{ee'}^{v}) \equiv  \int \prod_{e\in E_{\Gamma}} \rd\mu(w_{e}) \prod_{v\in V_{\Gamma}}{\cal C}^{(v)}_{\tau_{ee'}^{v}}(w_{e})
\ee
where the integral is over one spinor per edge of $\Gamma$ and we integrate a product of intertwiners for each vertex $v$.
If $v$ is a $n$-valent vertex with outgoing edges $e_{1},\cdots, e_{k}$ and  incoming edges $e_{k+1},\cdots, e_{n}$ we define
\be
{\cal C}^{(v)}_{\tau_{ee'}^{v}}(w_{e}) \equiv{\cal C}_{\tau_{ee'}^{v}}( w_{e_{1}},\cdots, w_{e_{k}},\check{w}_{e_{k+1}},\cdots, \check{w}_{e_{n}}).
\ee
where ${\cal C}_{\tau}$ is defined in (\ref{defC}).  

The functional $\G(\tau_{ee'}^{v})$ is a Gaussian integral.  To make this explicit, we define a matrix $T^{\Gamma}$ whose entries  are  labeled by oriented edges of $\Gamma$.  The matrix elements of $T^{\Gamma}$ are given by:
\be \label{eqn_T_tau}
 T^{\Gamma}_{e_{1} e_{2}} = \tau_{e_{1}e_{2}}^{v}\quad \mathrm{if} \quad s(e_{1})=s(e_{2})=v,
\ee
while all the other matrix elements vanish.
This matrix is skew-symmetric
\be \label{eqn_T_antisym}
T^{\Gamma}_{e_{1}e_{2}} =-T^{\Gamma}_{e_{2}e_{1}}
\ee
The generating functional can be written as
\begin{align} \label{eqn_gen_func_T_gamma}
\G(\tau_{ee'}^{v}) 
  &= \int \prod_{e\in E_{\Gamma}} \rd\mu(w_{e}) \exp \Big\{ -\frac12 \sum_{e,e'}\big(
  \T_{e^{-1}e'} \bra w_{e}|w_{e'}\ket + \T_{e^{-1}e'^{-1}} \bra w_{e}|w_{e'}]  
  \\ & \hspace{170pt} -  \T_{ee'}[ w_{e} | w_{e'} \ket  - \T_{e e'^{-1}} [ w_{e} | w_{e'} ] \nonumber 
 \big)\Big\}
\end{align}
\end{definition}

Let us now explain how to get (\ref{eqn_gen_func_T_gamma}) from the definitions (\ref{defG}) and (\ref{eqn_T_tau}).  First note that two edges $e$ and $e'$ of $\Gamma$ can either share zero one or two vertices.
When two edges share a vertex there are four possible orientations of the edges at this vertex,
since each edge can be either incoming or outgoing.
 Taking all of these possibilities into account 
 we introduce the coefficients $\T_{ee'}$ which vanishes if $s(e)$ is different from $s(e')$ and is given by
 $\T_{ee'} \equiv \tau_{e e'}^{s_{e}}$ otherwise.
 If two edges meet at one vertex, one of the four coefficients 
 $\T_{ee'},\T_{e^{-1}e'^{-1}},\T_{ee'^{-1}},\T_{e^{-1}e'}$ is not zero.
 If two edges meet at two vertices then two such coefficients do not vanish.

 Finally to express explicitly the amplitude $\G$ as in (\ref{eqn_gen_func_T_gamma}) we need to take into account the orientation of the edges.  Using the convention of $z,\check{z}$ for outgoing,incoming edges and the identities $[\check{w}|w'\ket= -\bra w|w'\ket$, $[w|\check{w}'\ket= [w|w']$ and 
 $[\check{w}|\check{w}'\ket= -\bra w|w']$ the definition (\ref{defG}) translates into (\ref{eqn_gen_func_T_gamma}).  For an example see the generating functional of the scalar product (\ref{CC2}) where $\tau,\overline{\tau}$ correspond to $\tau^{v_1},\tau^{v_2}$ of the two vertices.

For the generating functional of the scalar product (\ref{CC2}) we found that when the variables $\tau$ satisfy the Pl\"ucker identity (\ref{eqn_tau_plucker}) that the generating functional $\G(\tau_{ee'}^{v})$ is equal to the fully coherent amplitude (\ref{eqn_coherent_gauss_int}) of the previous section; see (\ref{eqn_A_equals_G}).   We now prove that this is not a coincidence and hence applies to arbitrary graphs.
\begin{proposition} \label{prop_gen_func_coh_amp}
\be 
\G(\tau_{ee'}^{v}) = {\cal A}_{\Gamma}(z_{e}), \quad \mathrm{if} \quad \tau_{ee'}^{v} = [z_{e}|z_{e'}\ket \quad \mathrm{when}\quad s(e)=s(e')=v
\ee
\end{proposition}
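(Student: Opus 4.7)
The plan is to apply the coherent/discrete generating identity (\ref{CC}) locally at each vertex of $\Gamma$ and then collapse the edge integrals using the reproducing property of the Bargmann kernel. The starting observation is that, for fixed $n$ spinors and arbitrary spinors $w_{1},\dots,w_{n}$, summing (\ref{CC}) against the total homogeneity factor $(J+1)!$ and using the theta-graph evaluation gives the exponentiated form
\[
\mathcal{C}_{[z_{i}|z_{j}\ket}(w_{i})
\;=\;
e^{\sum_{i<j}[z_{i}|z_{j}\ket\,[w_{i}|w_{j}\ket}
\;=\;
\sum_{j_{i}}(J+1)!\int\rd g\,\prod_{i}\frac{[z_{i}|g|w_{i}\ket^{2j_{i}}}{(2j_{i})!}.
\]
This identity is the workhorse. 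Applied at every vertex $v$ of $\Gamma$ to $\mathcal{C}^{(v)}_{\tau^{v}_{ee'}}$ with $\tau^{v}_{ee'}=[z^{v}_{e}|z^{v}_{e'}\ket$ (where $z^{v}_{e}$ denotes $z_{e}$ if $s_{e}=v$ and $z_{e^{-1}}$ if $t_{e}=v$), it replaces each vertex exponential by a sum over spins together with a single $\SU(2)$ Haar integral over $g_{v}$.

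After the substitution, the generating functional becomes
\[
\G\big([z|z\ket\big)
\;=\;
\sum_{j_{e}}\prod_{v}(J_{v}+1)!\int\prod_{v}\rd g_{v}\,\int\prod_{e}\rd\mu(w_{e})
\prod_{v}\prod_{e\ni v}\frac{[z^{v}_{e}|g_{v}|\tilde w^{v}_{e}\ket^{2j_{e}}}{(2j_{e})!},
\]
where $\tilde w^{v}_{e}=w_{e}$ when $e$ is outgoing from $v$ and $\tilde w^{v}_{e}=\check w_{e}=|w_{e}]$ when $e$ is incoming. Each spinor $w_{e}$ appears in exactly two such factors, one for $s_{e}$ and one for $t_{e}$, so the $\rd\mu(w_{e})$ integral is the reproducing-kernel integral
\[
\int\rd\mu(w_{e})\;\frac{[z_{e}|g_{s_{e}}|w_{e}\ket^{2j_{e}}\,[z_{e^{-1}}|g_{t_{e}}|\check w_{e}\ket^{2j_{e}}}{(2j_{e})!^{2}}.
\]
Using $[b|\check w\ket=\bra w|b\ket$ and the $\SU(2)$ relation $g|w]=|gw]$, this rewrites as $\int\rd\mu(w_{e})\,[a|w_{e}\ket^{2j_{e}}\bra w_{e}|b\ket^{2j_{e}}/(2j_{e})!^{2}$ with $[a|=[z_{e}|g_{s_{e}}$ and $|b\ket=g_{t_{e}}^{-1}|z_{e^{-1}}\ket$; the Bargmann reproducing identity then collapses it to $[z_{e}|g_{s_{e}}g_{t_{e}}^{-1}|z_{e^{-1}}\ket^{2j_{e}}/(2j_{e})!$, which is exactly the per-edge factor in the definition (\ref{def3}) of $A_{\Gamma}(j_{e},z_{e})$.

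Summing over $j_{e}$ and reassembling the $(J_{v}+1)!$ factors yields $\mathcal{A}_{\Gamma}(z_{e})$ as defined in (\ref{def2}), proving the claim. The main subtlety to be careful about is bookkeeping for edge orientations: one must consistently track when a $w_{e}$ enters a vertex factor as $w_{e}$ versus as $\check w_{e}$, and verify that the two dualizations at the two endpoints of $e$ combine correctly to produce the parallel-transport combination $g_{s_{e}}g_{t_{e}}^{-1}$ rather than some other combination of group elements. Once that is handled, the only analytic input is the Bargmann reproducing formula, and everything else is formal rearrangement of sums and integrals.
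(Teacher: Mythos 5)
Your proposal is correct and takes essentially the same route as the paper's own proof: the paper likewise applies (\ref{CC}) at each vertex to rewrite ${\cal C}^{(v)}_{[z_e|z_{e'}\ket}$ as a spin sum with a single Haar integral, and then integrates out each $w_e$ using the Gaussian identity $\int \rd\mu(w)\,[z|g_{s}|w\ket^{2j}\,\bra w|g_{t}^{-1}|z'\ket^{2j'} = (2j)!\,\delta_{j,j'}\,[z|g_{s}g_{t}^{-1}|z'\ket^{2j}$, exactly your reproducing-kernel collapse. The only detail you gloss over is that the expansions at the two endpoints of an edge carry a priori independent spins, and it is the $\delta_{j,j'}$ produced by this very integral that identifies them — your bookkeeping with a single $j_e$ per edge silently assumes this, but the step is harmless and is made explicit in the paper.
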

\begin{proof}
The proof is straightforward;
we start from the definition (\ref{defC}) of ${\cal C}_{\tau}$ and notice that when $ \tau_{ee'}^{v} = [z_{e}|z_{e'}\ket$ this expression reads
\bea
{\cal C}_{[z_{e}|z_{e'}\ket} (w_{e}) =\sum_{[k]} (J+1)! \widehat{C}_{[k]}(z_{e}) \widehat{C}_{[k]}(w_{e}) 
= \sum_{j_{e}} \frac{(J+1)!}{(2j_{e})!} \int \rd g [z_{e}|g|w_{e}\ket^{2j_{e}}
\eea
where we have used (\ref{CC}) in the second equality.
Integrating out $w_{e}$ in (\ref{defG}) and using that $$\int \rd\mu(w)[z |g_{s}|{w}\ket^{2j} [z'|g_{t}|\check{w} \ket^{2j'} =
\int \rd\mu(w)[z |g_{s}|{w}\ket^{2j} \bra w |g_{t}^{-1}|z' \ket^{2j'}=  (2j)! \delta_{j,j'}[z|g_{s}g_{t}^{-1}|z'\ket^{2j},$$ we easily obtain that
\be
\G([z_{e}|z_{e'}\ket ) = \sum_{j_{e}} \frac{\prod_{v} (J_{v}+1)!}{\prod_{e}(2j_{e})!} 
\int \prod_{v\in V_{\Gamma}} \rd g_{v} [z_{e}|g_{s_{e}}g^{-1}_{t_{e}}|z_{e^{-1}}\ket^{2j_{e}} = {\cal A}_{\Gamma}(z_{e}).
\ee
\end{proof}

We now formulate our last main result which in analogy with Lemma \ref{thm_gauss} expresses $\G(\tau_{ee'}^{v})$ as an inverse determinant.
\begin{lemma} \label{thm_gen_gauss}
The generating functional $\G$ can be evaluated as an inverse determinant 
\be
\G(\tau_{ee'}^{v}) =\frac1{\det(E-T^{\Gamma})}
\ee
where 
 \be\label{defE}
E \equiv \bpm 0 & 1 \\ -1 & 0  \epm
\ee  
and the antisymmetric matrix $T^{\Gamma}$ is defined in (\ref{eqn_T_tau}).
\end{lemma}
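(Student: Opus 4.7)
The plan is to evaluate the Gaussian integral (\ref{eqn_gen_func_T_gamma}) directly, by combining the Bargmann measure and the integrand into a single quadratic form indexed by oriented edges and then applying the standard Gaussian determinant formula. First I would assign, to each oriented edge $f$, a spinor $W_{f}$ with $W_{e}=w_{e}$ in the chosen orientation and $W_{e^{-1}}=\check w_{e}\equiv|w_{e}]$ in the reverse. Using the identity $\bra w_{e}|w_{e}\ket = [w_{e}|\check w_{e}\ket$, the measure contribution $-\sum_{e}\bra w_{e}|w_{e}\ket$ rewrites as $-\tfrac12\sum_{f,f'} E_{ff'}[W_{f}|W_{f'}\ket$, where $E$ is understood as a $2|E_{\Gamma}|\times 2|E_{\Gamma}|$ antisymmetric block-diagonal matrix on oriented edges with $E_{e,e^{-1}}=-E_{e^{-1},e}=1$ and vanishing elsewhere. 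The four bilinears in the integrand of (\ref{eqn_gen_func_T_gamma}) likewise organize, via the orientation convention of (\ref{eqn_T_tau}), into $T^{\Gamma}_{ff'}[W_{f}|W_{f'}\ket$, so that the total exponent collapses to $-\tfrac12\sum_{ff'}(E-T^{\Gamma})_{ff'}[W_{f}|W_{f'}\ket$.

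Next I would carry out the Gaussian integration. Writing $[W_{f}|W_{f'}\ket = \epsilon_{AB}W_{f}^{A}W_{f'}^{B}$ and using $W_{e^{-1}}=-\epsilon\,\bar w_{e}$ (with $\epsilon^{2}=-I$), a short block-by-block computation in the $\pm$-orientation decomposition shows that the exponent takes the form $-\tfrac12(\mathbf{w},\bar{\mathbf{w}})\,\mathcal{M}\,(\mathbf{w},\bar{\mathbf{w}})^{T}$ in the complex variables $\mathbf{w}=(w_{e}^{A})$, where
\be
\mathcal{M} = \bpm M^{++}\otimes \epsilon & M^{+-}\otimes I \\ (M^{+-})^{T}\otimes I & M^{--}\otimes \epsilon \epm, \qquad M \equiv E - T^{\Gamma}.
\ee
The diagonal blocks come from the holomorphic and antiholomorphic bilinears $[w_{i}|w_{j}\ket$, $[\check w_{i}|\check w_{j}\ket$ (carrying $\epsilon$), while the off-diagonal blocks come from the mixed $\bra w_{i}|w_{j}\ket$-type pieces (carrying $I$). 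After the numerical factors from the Bargmann measure $\pi^{-2|E_{\Gamma}|}$ cancel those from the real Gaussian formula, one obtains $\G = 1/\sqrt{\det \mathcal{M}}$.

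The final step is to compute $\det \mathcal{M}$. The Schur complement with respect to the $(--)$ block, together with $\epsilon^{-1}=-\epsilon$ and the antisymmetry identity $(M^{+-})^{T}=-M^{-+}$, yields
\be
\det\mathcal{M} = \det(M^{--})^{2}\,\det\!\bigl(M^{++} - M^{+-}(M^{--})^{-1}M^{-+}\bigr)^{2} = \det(M)^{2},
\ee
where the tensor-product identity $\det(A\otimes \epsilon) = \det(A)^{2}\det(\epsilon)^{n} = \det(A)^{2}$ has been used on each Kronecker block. Taking the square root with the branch fixed at $T^{\Gamma}=0$, where $M=E$, $\det E = 1$, and $\G = 1$, gives the claimed identity $\G(\tau^{v}_{ee'}) = 1/\det(E-T^{\Gamma})$.

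The main obstacle lies in the sign bookkeeping in the first two steps. The substitution $W_{e^{-1}}=-\epsilon\,\bar w_{e}$, combined with the four conjugation identities relating $[w|w'\ket$, $[w|w']$, $\bra w|w'\ket$ and $\bra w|w']$ to spinor-component bilinears, must conspire so that the lower-left block of $\mathcal{M}$ is exactly $(M^{+-})^{T}\otimes I$ and not $M^{-+}\otimes I$. This precise sign is what turns the naive combination $M^{++}+M^{+-}(M^{--})^{-1}M^{-+}$ into the genuine Schur complement $M^{++}-M^{+-}(M^{--})^{-1}M^{-+}$, whose determinant gives $\det M/\det M^{--}$; without it, the final answer would not be $\det(E-T^{\Gamma})$ but a different determinant.
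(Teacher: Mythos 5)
Your first two steps are sound: the rewriting of the measure plus integrand as $-\tfrac12\sum_{f,f'}(E-T^{\Gamma})_{ff'}[W_{f}|W_{f'}\ket$ over oriented edges, the identification of the symmetric block matrix $\mathcal{M}$ (including the sign making the lower-left block $(M^{+-})^{T}\otimes I$), and the branch-fixing at $T^{\Gamma}=0$ are all correct, and the overall strategy — one big Gaussian over edge spinors rather than the paper's sequential integration over the components $\alpha_{e},\beta_{e}$ — is a legitimate reorganization of the paper's proof. The genuine gap is the Schur-complement step with respect to the $(--)$ block. Writing $M=E-T^{\Gamma}$, the block $M^{--}$ has entries $M^{--}_{ee'}=-\tau^{v}_{e^{-1}e'^{-1}}$, nonzero only when $t(e)=t(e')=v$; it is therefore antisymmetric and block diagonal over vertices, the block at $v$ having size equal to the number of edges terminating at $v$. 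Whenever some vertex has an odd number of incoming edges that block is an odd-dimensional antisymmetric matrix, so $\det M^{--}\equiv 0$ identically in the $\tau$'s; this happens for the paper's own examples (for the tetrahedron with orientation $e=[ij]$, $i<j$, the blocks have sizes $1,2,3$). Hence $(M^{--})^{-1}$ does not exist for \emph{any} value of the parameters and no genericity-in-$\tau$ argument rescues the step. Contrast this with the paper, whose factorization is organized around the block $1+A$ containing the identity coming from the measure, which is invertible for small $\tau$.

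The gap is repairable inside your framework, because the identity $\det\mathcal{M}=\det(E-T^{\Gamma})^{2}$ is true and can be obtained without inverting $M^{--}$: multiplying by the unit-determinant matrices $W'=\mathrm{diag}(I\otimes\epsilon^{-1},\,I\otimes I)$ and $W=\mathrm{diag}(I\otimes I,\,I\otimes\epsilon^{-1})$ gives
\be
W'\,\mathcal{M}\,W=\bpm M^{++} & -M^{+-} \\ -M^{-+} & M^{--} \epm\otimes I_{2},
\ee
and since the right-hand block matrix is $\mathrm{diag}(I,-I)\,M\,\mathrm{diag}(I,-I)$, one gets $\det\mathcal{M}=\det(M)^{2}$ directly. (Alternatively, observe that both sides are polynomials in \emph{unconstrained} blocks $M^{++},M^{+-},M^{--}$, so your Schur computation on the dense set of invertible $M^{--}$, with antisymmetry relaxed, already forces the identity everywhere.) With that step repaired your argument goes through and amounts to the paper's Gaussian-determinant computation grouped by holomorphic/antiholomorphic edge labels instead of by spinor components.
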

\begin{proof}
Let us begin with (\ref{eqn_gen_func_T_gamma}) and note that the anti-symmetry properties of $\T_{e e'}$  are compatible with the symmetry properties of the spinor products.  
Expressing the spinors of (\ref{eqn_gen_func_T_gamma}) in terms of the two components $w_e = (\alpha_e,\beta_e)^t \in \C^2$ we get 
\begin{align}
 \G (\tau_{ee'}^{v}) 
  &= \int \prod_{e\in E_{\Gamma}} \rd\mu(\alpha_{e}) \rd\mu(\beta_{e}) 
  \exp\Big\{ -\frac12 \sum_{e,e'} 
  \Big( \T_{e^{-1}e' }(\overline{\alpha}_{e} \alpha_{e'} + \overline{\beta}_{e} \beta_{e'})  + \T_{e^{-1}e'^{-1}} ( \overline{\beta}_{e} \overline{\alpha}_{e'} -\overline{\alpha}_{e} \overline{\beta}_{e'})
   \nonumber \\
   &   \qquad \qquad \qquad \qquad\qquad \qquad \qquad \quad-  \T_{ee'}(\alpha_{e} \beta_{e'} - \beta_{e} \alpha_{e'}) - \T_{e e'^{-1}}(\alpha_{e} \overline{\alpha}_{e'} + \beta_{e} \overline{\beta}_{e'}) \Big) \Big\} 
  \nonumber \\ \nonumber
  &= \int \prod_{e\in E_{\Gamma}} \rd\mu(\alpha_{e}) \rd\mu(\beta_{e}) \exp \Big\{ -\sum_{e,e'} \Big( \overline{\alpha}_{e} A_{e e'} \alpha_{e'} + \beta_{e} B_{e e'} \alpha_{e'} + \overline{\alpha}_{e} C_{e e'} \overline{\beta}_{e'} +    {\beta}_{e} D_{e e'} \overline{\beta}_{e'} \Big) \Big\} 
\end{align}
where $\rd \mu(\alpha) = e^{-|\alpha|^2} \rd \alpha/\pi$ and
\begin{align}\nonumber
  A_{e e'} &=\frac12( \T_{e^{-1} e'}-\T_{e' e^{-1}})=  \T_{e^{-1} e'}, \qquad
  D_{e e'} = \frac12(\T_{e'^{-1} e}-\T_{e e'^{-1}})=\T_{e'^{-1} e} = A_{ee'}^{t} \\
  B_{e e'} &=  \frac12(\T_{e' e}-\T_{e e'} )= - \T_{e e'} , \qquad \nonumber\qquad \,\,\,\,\,
  C_{e e'} = \frac12(\T_{e^{-1} e'^{-1}} - \T_{e'^{-1} e^{-1}})=\T_{e^{-1} e'^{-1}}    
  \end{align}
  where $A^{t}$ denotes the transpose of $A$.
Performing the Gaussian integrations first of $\alpha$ and then of $\beta$ we get
\begin{align}
 \G(\tau_{ee'}^{v})
  &= \frac{1}{\mathrm{det}(1+ A)} \int \prod_{e\in E_{\Gamma}} \rd\mu(\beta_{e}) \exp\Big\{ - \sum_{e,e'} \Big({\beta}_{e} A^{t}_{e e'}  \overline{\beta}_{e'} - \beta_{e} (B(1+A)^{-1}C)_{ee'}
  \overline{\beta}_{e'} \Big) \Big\} \\
  &= \mathrm{det}(1+A)^{-1} \mathrm{det}\left( 1 + A^{t} - B(1+A)^{-1} C \right)^{-1} \\
  &= \mathrm{det}\bpm 1+A & 0 \\ B & 1  \epm^{-1} \mathrm{det}\bpm 1 & (1+A)^{-1}C \\ 0 & 1 + A^{t} - B(1+A)^{-1}C \epm^{-1}\\
  &= \mathrm{det}\bpm 1+A & C \\ B & 1 + A^{t} \epm^{-1} 
\end{align}
The  matrix $E$ introduced in (\ref{defE}) 
has a unit determinant; thus the previous determinant is also equal to the determinant of the antisymmetric matrix
\be
\mathrm{det}\left[E \bpm 1+A & B \\ C & 1 + A^{t} \epm  \right]^{-1} =
\mathrm{det}\bpm B  & (1+A^{t}) \\ -(1+A) & -C \epm^{-1} 
=\det(E - \T)^{-1}
\ee
which is what we desired to establish.
\end{proof}
We now are going to evaluate explicitly this determinant in much the same way as Theorem \ref{thm_amp}.  
With these definitions the generating functional is given by
\begin{theorem} \label{thm_gen}
We say two simple loops (see Def. \ref{def_loops_cycles}) are disjoint if they have no edges in common.
Given a simple loop $\ell= \{e_{1}, \cdots ,e_{n}\}$ we define the quantity
\be \label{eqn_A_loop}
A_{\ell}(\tau) = -(-1)^{|e|} \tau_{e_{1}^{-1} e_2}^{s(e_2)} \tau_{e_{2}^{-1} e_3}^{s(e_3)} \cdots \tau_{e_{n}^{-1} e_1}^{s(e_1)}
\ee
where $|e|$ is the number of edges of $l$ whose orientation agrees with the chosen orientation of  $\Gamma$.  
Finally, given a collection of disjoint simple loops $L= l_{1},..., l_{k}$ we define
\be 
A_{L}(\tau) = A_{\ell_{1}}(\tau)\cdots A_{\ell_{k}}(\tau).
\ee
Then the generating functional (\ref{defG}) has the following evaluation
\be \label{eqn_gen_loops}
\G(\tau) = \frac1{\left(1 + \sum_{L} A_{L}(\tau)\right)^{2}}
\ee
where the sum is over all collections of disjoint simple loops of $\Gamma$.
\end{theorem}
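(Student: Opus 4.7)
The plan is to start from Lemma \ref{thm_gen_gauss}, which reduces the evaluation of $\G(\tau)$ to the computation of $\det(E - T^\Gamma)^{-1}$. Since both $E$ and $T^\Gamma$ are antisymmetric, so is $E - T^\Gamma$, and hence $\det(E - T^\Gamma) = \mathrm{Pf}(E - T^\Gamma)^{2}$. This already accounts for the square appearing in (\ref{eqn_gen_loops}), so the task reduces to proving
\begin{equation}
\mathrm{Pf}(E - T^\Gamma) \;=\; 1 + \sum_{L} A_L(\tau),
\end{equation}
where $L$ ranges over collections of edge-disjoint simple loops of $\Gamma$.

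Next, I would expand the Pfaffian as a signed sum over perfect matchings $M$ of the $2|E_\Gamma|$ oriented half-edges. The entry $E_{e,e'}$ is nonzero only when $e' = e^{-1}$, while $T^\Gamma_{e,e'}$ is nonzero only when $s(e) = s(e') = v$, contributing $\tau^v_{e e'}$. Therefore, in any nonvanishing matching, each half-edge $e$ is paired either with its reverse $e^{-1}$ (an ``E-pair'', contributing $\pm 1$) or with some other half-edge $e'$ sharing its source (a ``T-pair'', contributing $\pm\tau^{s(e)}_{ee'}$). A key consistency observation is that if $e$ is E-paired then $e^{-1}$ is automatically E-paired with it; otherwise both $e$ and $e^{-1}$ are T-paired at their respective (possibly distinct) source vertices.

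The geometric content of the matching now becomes transparent. Call an edge of $\Gamma$ \emph{T-type} if both of its orientations are T-paired in $M$; these edges form a subgraph $\Gamma_M \subseteq \Gamma$ in which every vertex has even valency, since the T-half-edges at each vertex $v$ are matched among themselves. The local T-pairings at $v$ specify how successive edges are glued through $v$, and iteratively following this gluing yields a decomposition of $\Gamma_M$ into closed walks that each use every edge they traverse exactly once, i.e.\ an edge-disjoint collection of simple loops $L = \{\ell_1, \dots, \ell_k\}$. Conversely, every such collection together with the choice of pairing at each revisited vertex produces a valid matching. Reading off the product of $\tau$-factors along a loop $\ell = (e_1, \dots, e_n)$ recovers $\tau^{s(e_2)}_{e_1^{-1}e_2} \cdots \tau^{s(e_1)}_{e_n^{-1} e_1}$, which matches $A_\ell(\tau)$ in (\ref{eqn_A_loop}) up to a sign.

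The main obstacle will be the careful accounting of the overall sign of each term, which combines three contributions: the signature of the permutation underlying the Pfaffian, the parity of the number of T-pairs (from the minus sign in $-T^\Gamma$), and the implicit reordering needed to read a product of $\tau^v_{ee'}$ along a loop, given their antisymmetry. I would handle this loop-by-loop, showing that after combining these contributions the residual factor is precisely the orientation sign $-(-1)^{|e|}$ from the definition of $A_\ell(\tau)$. An independent consistency check is obtained by specialising to $\tau^v_{ee'} = [z^v_e|z^v_{e'}\ket$: Proposition \ref{prop_gen_func_coh_amp} then identifies $\G$ with $\cA_\Gamma$, and the Pl\"ucker relation $R = 0$ must kill exactly the loop collections that share vertices, so that only the non-trivial cycle covers of Theorem \ref{thm_amp} survive. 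This serves both as a sanity check on the signs and as the basis for the subsequent Corollary \ref{cor_gen}.
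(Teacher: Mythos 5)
Your overall route --- reduce to $\det(E-T^{\Gamma})^{-1}$ via Lemma \ref{thm_gen_gauss}, use antisymmetry to write the determinant as the square of a Pfaffian, and then expand combinatorially so that the nonvanishing terms are exactly the collections of edge-disjoint simple loops --- is the same skeleton as the paper's proof. The difference is the expansion device: you expand $\mathrm{pf}(E-T^{\Gamma})$ directly as a signed sum over perfect matchings of the $2N$ half-edges, whereas the paper invokes the identity (following Rote) that writes the product $\mathrm{pf}A\cdot\mathrm{pf}B$ of two antisymmetric matrices as a sum over even-length cycle covers with alternating weights, and takes $A=E-T^{\Gamma}$, $B=E$. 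These are two views of the same combinatorics: superimposing your matching with the unique matching encoded by $E$ produces exactly the paper's even cycle covers. Your identification of the admissible matchings ($E$-pairs $\{e,e^{-1}\}$ versus $T$-pairs at a common source vertex), of the $T$-type edges with an even subgraph whose vertex pairings decompose it into edge-disjoint simple loops, and of the product of $\tau$-factors along a loop with $A_{\ell}(\tau)$ up to sign, is all correct.

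The genuine gap is precisely the step you defer. The sign of a term in a Pfaffian expansion is the signature of a global permutation attached to the whole matching; it does not factorize over loops a priori, so ``handling it loop-by-loop'' is an assertion, not an argument, until you supply the mechanism that localizes it. The standard mechanism is exactly the comparison with the reference matching given by $E$ (equivalently, the $\mathrm{pf}A\cdot\mathrm{pf}B$ cycle formula the paper uses): the relative permutation of the two matchings factorizes over the cycles of their superposition, and only then can one verify, cycle by cycle, that a loop of $n$ edges with $|e|$ edges along the orientation of $\Gamma$ contributes $-(-1)^{|e|}$ times its $\tau$-product, the extra signs coming from the $-1$ entries of $E$ and from the antisymmetry $\tau^{v}_{ee'}=-\tau^{v}_{e'e}$ (one must also fix the overall normalization, e.g. via the all-$E$ matching giving the constant term $1$). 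Without this, the theorem is only established up to undetermined per-term signs. Note also that your proposed consistency check --- setting $\tau^{v}_{ee'}=[z_{e}|z_{e'}\ket$ and comparing with Theorem \ref{thm_amp} --- cannot substitute for the sign computation: the Pl\"ucker relations make exactly the vertex-sharing loop collections cancel among themselves, so the specialization tests the signs only modulo combinations that vanish on-shell and cannot pin down the individual coefficients $-(-1)^{|e|}$ that the statement asserts.
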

\begin{proof}
We now want to evaluate the determinant of $E - \T$. This is a anti-symmetric matrix of size $2N$ by $2N$ 
indexed by $e_1,...,e_N, e_{1}^{-1},...,e_{N}^{-1}$. Therefore this determinant can be evaluated as the 
square of the pfaffian of $E - \T$.
We cannot directly evaluate the Pfaffian of a matrix as a sum over cycles, however it is possible  following \cite{rote2001division} 
to write the product of pfaffians of two $2N$ by $2N$ antisymmetric matrices as
\be
  \mathrm{pf} A \cdot \mathrm{pf} B = \sum_{\text{C}} (-1)^{k} W_{A,B}(C)
\ee
where the sum is over cycle covers $C = c_1, ..., c_k$ of $\{1,...,2N\}$ having $k$ cycles and where each cycle is of even length.  
The weight of a cycle cover is the product of the weights of its cycles and the weight of a single cycle $c = (i_1, ..., i_{n})$ with $i_1 > i_2,...,i_{n}$ is given by 
\be
W_{A,B}(c) =   A_{i_1 i_2} B_{i_2 i_3} A_{i_3 i_4} B_{i_4 i_5} ... A_{i_{n-1} i_{n}} B_{i_n i_1}.
\ee
The specification of $i_1$ as the largest element in the cycle avoids any ambiguity in the definition of the weight.  
If one chooses $B =E$ then $\mathrm{pf}E=(-1)^{N(N-1)/2}$ then we have an expression for $\mathrm{pf}A$ in terms of cycle covers up to an overall sign.  
Let us therefore  set $A =E-T^{\Gamma}$ and let us choose $B=E$.

Lets start by evaluating the weight of a 2-cycle.  Since $E_{ij}$ is non-vanishing only if $j = i \pm N$ the weight must have the form
\be
A_{i_{1}+N, i_{1}} E_{i_{1}, i_{1}+N} = (E- \T)_{e_{1}^{-1}e_{1}} = - (1+\T_{e_{1}^{-1}e_{1}}).
\ee
Note that $\T_{e^{-1}e}\neq 0$ only if $e$ forms a 1-cycle (or bubble) at a vertex of $\Gamma$, i.e. $s(e)=t(e)$.
We have used the correspondence between $i_{1}= e_{1}$  and $i_{1}+N=e^{-1}_{1}$ if $i_{1}<N$.
This shows that $2$-cycles of $\{1,...,2N\}$ correspond to an evaluation in terms of 1-cycles of $\Gamma$.

Lets now consider a 4-cycle of $\{1,...,2N\}$. There are two possibilities depending on whether the second index is $i_{2}$ or $i_{2}+N$.
In the first case we get
\be \label{eqn_2_cycle_1}
 A_{i_1+N,  i_2} E_{i_2, i_2 + N} A_{i_2 + N, i_1} E_{i_1, i_1+N}=  \T_{e_{1}^{-1} e_{2}} \T_{e_{2}^{-1}e_{1}}.
\ee
In the second case we have
\be \label{eqn_2_cycle_2}
 A_{i_1+N,  i_2+N} E_{i_2+N, i_2} A_{i_2,  i_1} E_{i_1, i_1+N}= -\T_{e_{1}^{-1} e_{2}^{-1}} \T_{e_{2}e_{1}}.
\ee
In both cases we have used the fact that since $c$ is a cycle we necessarily have $i_{1} \neq i_{2}$.  Hence (because of the presence of $B_{i_{1},i_{1}+N}$)
we have that $e_{1} \neq e_{2}^{-1}$. This means that we can replace the element $ (E -  \T)_{e_{2}e_{1}}$ by $- \T_{e_{2}e_{1}}$.
One can now see that these weights correspond to 2-cycles of $\Gamma$.  The first case corresponds to the cycle of edges  $(e_{1}e_{2})$ while the second case corresponds to $(e_{1}e_{2}^{-1})$.  Clearly at most one of (\ref{eqn_2_cycle_1}) and (\ref{eqn_2_cycle_2}) is nonvanishing, since at most two of the elements of $\T$ are nonvanishing depending on the orientation.  
The difference in sign comes from $B_{i_2+N, i_2}=-1$ while $B_{i_1, i_1+N}=B_{i_2, i_2+N}=1$.  In effect we obtain a minus sign for each edge that { disagrees} with the orientation of $\Gamma$,
we also get a minus sign for every edge.  

This result generalizes easily now to the case of a $2n$-cycle of $\{1,...,2N\}$.
The same reasoning shows that the weight 
\be
W_{A,B}(c) = A_{i_1+N, i_2} E_{i_2, i_2\pm N} A_{i_2\pm N, i_3} E_{i_3, i_3\pm N}\cdots A_{i_{n-1}\pm N, i_n} E_{i_1, i_1+ N}.
\ee
is non zero if and only if the sequence of edges  $(e_{1},\cdots, e_{n})$ corresponds to a simple loop $\ell$ of $\Gamma$ of length $n$.
In that case 
\be 
W_{A,B}(c) = (-1)^{n- |\bar{e}|} \T_{e_{1}^{-1} e_{2}} \T_{e_{2}^{-1} e_{3}} \cdots \T_{e_{n}^{-1} e_{1}} =  - A_{\ell}(\tau)
\ee
and $|\bar{e}|$ is the number of times $i_j > N$ in which case $B_{i_j, i_j-N} = -1$.  Again this corresponds to traversing the edge $e_{j}$ in the  orientation opposite to the one  of $\Gamma$
 thus $|\bar{e}|$ is the number of edges in $c$ which { disagrees} with the orientation of $\Gamma$.
 We denote by $|e| =n-|\bar{e}|$ the number of edges of $c$ that { \it agrees} with the orientation of $\Gamma$.
This establishes the correspondence between $2n$-cycles $c$ of $\{1,...,2N\}$ and simple loops of $\Gamma$ of length $n$, moreover the weight for a simple cycle 
is precisely minus the amplitude of the loop in $\Gamma$.

A cycle cover $\cC$ on $\{1,...,2N\}$ consists of a disjoint union of 2-cycles and non-trivial (i-e the cycles which are not 2-cycles)  cycles of $\{1,...,2N\}$.
We established that each 2-cycle of $\{1,...,2N\}$ has a weight in the sum given by $(1+T_{e_{i}^{-1}e_{i}})$ where $(e_{i}^{-1}e_{i})$ correspond to a bubble in $\Gamma$.
We also established that each nontrivial cycle on $\{1,...,2N\}$ (with non-zero weight) corresponds to a simple loop of $\Gamma$ with amplitude $A_{\ell}$.
This shows that $\mathrm{pf}(E-\T)$ is (up to an overall sign) equal to
$$\sum_{L} \prod_{v\notin L} \left(\prod_{s(e)=v=t(e)}(1+ T_{e^{-1}e})\right) A_{L}(\tau) $$
where the sum is over disjoint union of simple loops of length at least 2 and the product is over all
vertices not in $L$, with a weight given by the product over the bubbles touching $v$ (and with the convention that the weight is $1$ if there is no bubbles).
Now if $T_{e^{-1}e}$ is non zero this means that $(e^{-1}e)$ is a positively oriented bubble; that is a simple loop of length 1.
Therefore expanding the previous product we get that the pfaffian of $(1+\T)$ is (up to an overall sign) equal to
\be
\sum_{L} A_{L}(\tau) \ee
where the sum is over disjoint union of simple loops of  any length, which is what we desired to establish.
\end{proof}
Note that this result for the generating functional $\G(\tau)$ is very similar to the first theorem \ref{thm_amp} we established in the first section
for the coherent amplitude ${\cal A}_{\Gamma}(z_{e})$  .
The key difference is that the coherent amplitude ${\cal A}_{\Gamma}(z_{e})$ involves a sum over cycles (non intersecting simple loops), while $\G(\tau)$ possesses a sum over the same cycles but also
simple loops that intersect at a vertex. The relation between the two theorems comes from the fact that if the Pl\"ucker  relation is satisfied then the sum of 
loops that meet at this vertex vanish. This is depicted graphically in Fig. \ref{fig_STU_box}
and it is established algebraically in the following Corollary which offers an alternative proof of Theorem \ref{thm_amp}.
\begin{corollary} \label{cor_gen}
If $\tau_{ee'}^{v} = [z_{e}|z_{e'}\ket$ where $s(e)=s(e')=v$ then
\be
{\cal G}_{\Gamma}([z_{e}|z_{e'}\ket) = \frac1{\left(1 + \sum_{C} A_{C}(z)\right)^{2}} = {\cal A}_{\Gamma}(z_{e}), 
\ee
where the sum is over all disjoint cycle unions of $\Gamma$.
\end{corollary}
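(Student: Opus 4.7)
The plan is to combine the two previously established identities. By Proposition~\ref{prop_gen_func_coh_amp}, substituting $\tau^v_{ee'}=[z^v_e|z^v_{e'}\ket$ into the generating functional directly yields $\G(\tau)=\mathcal{A}_\Gamma(z_e)$. By Theorem~\ref{thm_gen}, the same generating functional equals $(1+\sum_L A_L(\tau))^{-2}$, where the sum runs over all collections $L$ of edge-disjoint simple loops (which may revisit a vertex, either within a single loop or between two distinct loops). It therefore suffices to show that at the Pl\"ucker point $\tau^v_{ee'}=[z^v_e|z^v_{e'}\ket$ the sum $\sum_L A_L$ collapses to the sum $\sum_C A_C$ over disjoint cycle unions; this automatically produces the identity $\mathcal{A}_\Gamma(z_e)=(1+\sum_C A_C(z))^{-2}$ of Theorem~\ref{thm_amp}, which is the sense in which the corollary delivers an independent proof of that theorem.

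The local mechanism is the Pl\"ucker identity applied at each shared vertex. Fix a ``bad'' collection $L$, meaning one that contains a vertex $v$ through which loops pass more than once. If $v$ is traversed $k\geq 2$ times by $L$, then exactly $2k$ distinct edge-ends of $L$ meet at $v$, paired into $k$ local traversals, each contributing a factor of the form $\tau^v_{e^{-1}_{\mathrm{in}}\, e_{\mathrm{out}}}$ to $A_L(\tau)$. In the minimal case $k=2$, label the four edge-ends at $v$ by $a,b,c,d$; their local contribution takes the form $\tau^v_{ab}\tau^v_{cd}$. Together with the two other re-pairings of $\{a,b,c,d\}$ at $v$ (which reroute the loops through $v$ without changing any edges used), one obtains an orbit of three ``bad'' collections whose local factors sum to
\[
\pm R^{v}(\tau^{v})\;=\;\pm\left(\tau^{v}_{ab}\tau^{v}_{cd}-\tau^{v}_{ac}\tau^{v}_{bd}+\tau^{v}_{ad}\tau^{v}_{bc}\right),
\]
which vanishes at the Pl\"ucker point by~(\ref{eqn_R_plucker}). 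For $k\geq 3$ one iterates the argument, fixing $k-2$ of the pairings and applying Pl\"ucker to the remaining four edge-ends; every bad collection thereby sits in an orbit whose total $A_L$ sums to zero.

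The main technical obstacle is sign bookkeeping. A re-pairing at $v$ can merge two loops of $L$ into a single loop, or split one loop into two, and it will typically alter the count $|e|$ of edges whose orientation agrees with that of $\Gamma$, since traversing a closed walk in a new way flips the direction of a sub-path. The overall sign of $A_L$, given by~(\ref{eqn_A_loop}) as a product of factors $-(-1)^{|e|}$ over the loops of $L$, must recombine across the three configurations into precisely the alternating pattern $+,-,+$ of Pl\"ucker. Verifying this --- checking that loop merging/splitting produces exactly the $-1$ needed to compensate both the shift in $|e|$ and the change in the number of loops --- is the only non-routine step. Once it is in place, the cancellation of all bad collections is established, and both equalities of the corollary follow at once from the loop-expansion of Theorem~\ref{thm_gen}, bypassing the loop-determinant argument used earlier for Theorem~\ref{thm_amp}.
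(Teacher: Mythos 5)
Your route is the same as the paper's: use Proposition \ref{prop_gen_func_coh_amp} to identify ${\cal G}_{\Gamma}$ at the Pl\"ucker point with ${\cal A}_{\Gamma}(z_e)$, use Theorem \ref{thm_gen} for the loop expansion, and then cancel every collection that revisits a vertex by grouping it with the two re-pairings of the four edge-ends at that vertex, the three local factors summing to a Pl\"ucker combination. The one step you explicitly defer --- the sign bookkeeping under merging/splitting of loops --- is, however, precisely where the content of the paper's proof lies, so as written your argument is not yet complete. The paper closes it by a short explicit computation: for $U=(e_1\cdots e_{i-1}e_i\cdots e_n)$ self-intersecting at $v$, $T=(e_1\cdots e_{i-1}e_n^{-1}\cdots e_i^{-1})$ and $S=(e_1\cdots e_{i-1})(e_i\cdots e_n)$, one writes each amplitude as a product of path factors $T_{e_1\cdots e_{i-1}}$, $T_{e_i\cdots e_n}$ and the two $\tau^v$ corner factors, with signs $(-1)^{p_1+p_2+1}$, $(-1)^{p_1+p_2+n-i}$, $(-1)^{p_1+p_2}$ respectively ($p_1,p_2$ counting orientation agreements on the two sub-paths); the antisymmetry $\tau_{ee'}=-\tau_{e'e}$ gives $(-1)^{n-i}T_{e_n^{-1}\cdots e_i^{-1}}=T_{e_i\cdots e_n}$ for the reversed sub-path, which is exactly the compensation between the shift in $|e|$ and the change in loop number that you flag as the obstacle; the three terms then share the common prefactor $(-1)^{p_1+p_2}T_{e_1\cdots e_{i-1}}T_{e_i\cdots e_n}$ multiplying $\tau^v_{12}\tau^v_{34}+\tau^v_{13}\tau^v_{42}-\tau^v_{14}\tau^v_{32}$ with $1=e_{i-1}^{-1}$, $2=e_1$, $3=e_n^{-1}$, $4=e_i$, which vanishes by (\ref{eqn_R_plucker}). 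You should carry out this computation rather than assert its outcome. One further caution: your treatment of a vertex traversed $k\geq 3$ times (``fix $k-2$ pairings and apply Pl\"ucker to the remaining four ends'') does not produce a partition of the bad collections into cancelling triples, since each collection lies in several such triples; the fix is to note that summing over all choices of fixed pairs counts each collection a fixed number of times, so the vanishing of each triple still forces the total to vanish (the paper itself only spells out the two-traversal case, which suffices for 4-valent graphs such as the 4-simplex).
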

\begin{proof}
Suppose a simple loop $U = (e_1e_{2} \cdots e_{i-1} e_i\cdots e_{n-1}e_n)$ is such that $s(e_1) = s(e_i) = v$ and $t(e_{i-1}) = t(e_{n})=v$, i.e. it intersects itself at the vertex $v$.  
Then there exists another simple loop $T = (e_1e_{2} \cdots e_{i-1} e_{n}^{-1} e_{n-1}^{-1} \cdots  e_{i}^{-1})$ which also intersects itself at $v$.  Lastly, there exists a pair of simple loops $S = (e_1... e_{i-1})(e_i ... e_n)$ which share the vertex $v$.  The triple $S,T,U$ exhaust the collections of disjoint simple loops which have an intersection at $v$ and contain precisely the set of (unoriented) edges $\{e_1,...,e_n\}$.

Suppose that $p_1$ edges of $\{e_1,...,e_{i-1}\}$ and $p_2$ of $\{e_i,...,e_n\}$ agrees with the orientation of $\Gamma$.  
And lets introduce the amplitudes
\bea
T_{e_{1}\cdots e_{i-1}} \equiv \left(\tau_{e_{1}^{-1} e_{2}}^{s(e_{2})} \cdots \tau_{e_{i-2}^{-1} e_{i-1}}^{s(e_{i-1})}\right)
\eea
Then by the prescription (\ref{eqn_A_loop})
\begin{align}
  A_U = (-1)^{p_1+p_2+1}\, T_{e_{1}\cdots e_{i-1}} \tau_{e_{i-1}^{-1} e_{i}}^{v}
 T_{e_{i}\cdots e_{n}}\tau_{e_{n}^{-1} e_{1}}^{v} \\
  A_T = (-1)^{p_1+p_2+n-i} \,T_{e_{1}\cdots e_{i-1}} \tau_{e_{i-1}^{-1} e_{n}^{-1}}^{v} T_{e_{n}^{-1}\cdots e_{i}^{-1}} \tau_{e_{i} e_{1}}^{v} \\
  A_S = (-1)^{p_1+p_2} \, \tau_{e_{1}\cdots e_{i-1}} \tau_{e_{i-1}^{-1} e_{1}}^{v} T_{e_{i}\cdots e_{n}} \tau_{e_{n}^{-1} e_{i}}^{v} 
\end{align}
Using the antisymmetry property  of $\tau$ shows that $ (-1)^{n-i}T_{e_{n}^{-1}\cdots e_{i}^{-1}} = T_{e_{i}\cdots e_{n}}$ 
Thus
\bea
  A_S + A_T + A_U = (-1)^{p_1 + p_2} \, T_{e_{1}\cdots e_{i-1}} T_{e_{i}\cdots e_{n}}
  \nonumber  \left( \tau_{e_{i-1}^{-1} e_{1}}^{v} \tau_{e_{n}^{-1} e_{i}}^{v} + \tau_{e_{i-1}^{-1} e_{n}^{-1}}^{v} \tau_{e_{i} e_{1}}^{v} - \tau_{e_{i-1}^{-1} e_{i}}^{v} \tau_{e_{n}^{-1} e_{1}}^{v}  \right)
\eea
For clarity let $1 = e_{i-1}^{-1}$, $2 = e_1$, $3 = e_{n}^{-1}$, and $4 = e_i$ then the last factor
\be
  \left( \tau_{12}^{v} \tau_{34}^{v} + \tau_{13}^{v} \tau_{42}^{v} - \tau_{14}^{v} \tau_{32}^{v}  \right)
\ee
is the Pl\"ucker relation and vanishes under the hypothesis.  Hence the only collections of simple loops which survive
the identification $\tau_{ee'}=[z_{e}|z_{e'}\ket$ are ones which are non-intersecting and do not share vertices with other simple loops, i.e. they are  disjoint unions of non-trivial cycles.
\end{proof}

\begin{figure} 
  \centering
    \includegraphics[width=1\textwidth]{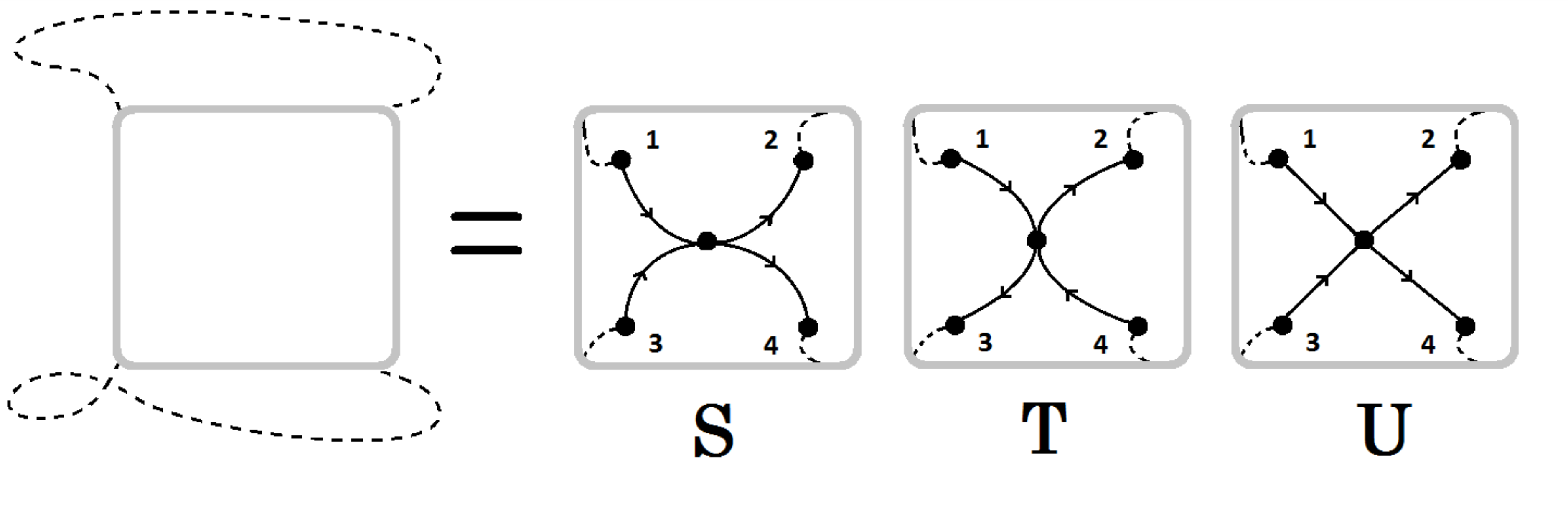}
    \caption{A simple loop depicted by the dashed line intersects itself at a vertex within the box.  In fact there are three possible collections of inequivalent simple loops which intersect at this vertex and have the same unoriented edges in common.  These three collections correspond to the three orientations S,T, and U of the four edges meeting at this vertex.  Note the following identification of vertices: S=(12)(34), T=(13)(42), U=(14)(32) which is an allusion to the Pl\"ucker relation.  An algebraic proof of how the amplitudes of intersecting simple loops arrange into the Pl\"ucker form is given in the proof of Corollary \ref{cor_gen}. }  \label{fig_STU_box}
\end{figure}

\section{Racah Formulae}
\label{section_racah}

Let us now discuss the amplitudes generated by the two generating functionals  ${\cal A}_{\Gamma}(z_{e})$ and ${\cal G}(\tau)$.  In particular, we explain how to extract the amplitude of an arbitrary spin network and how the amplitudes given by the two generating functionals are related by the Pl\"ucker relations.

The amplitude for a general graph, in the discrete basis, will depend on the integers $k_{ee'}^{v}$ associated with each pair of edges meeting at $v$ and is denoted 
$$A_\Gamma(k_{ee'}^{v})\equiv \underset{v\in V_\Gamma}{\corner} |k^{v}_{ee'}\ket.$$  
The fundamental relation (\ref{fundrelation}) between the two bases implies that these two amplitudes are related as follows
\begin{align} \label{eqn_discrete_amp}
  A_\Gamma(j_e,z_e) &= \sum_{k^{v}_{ee'}\in K_j} A_{\Gamma}(k_{ee'}^{v}) \prod_v \frac{(z_{e}| k^{v}_{ee'}\ket}{\|[k^{v}]\|^2} \\
  &=  \sum_{k^{v}_{ee'}\in K_j} A_{\Gamma}(k_{ee'}^{v})
\prod_{v}  \left(\frac{\prod_{(ee')\supset v} [z_{e}|z_{e'}\ket^{k_{ee'}^{v}}}{(J_{v}+1)!}\right). \nonumber
\end{align}



We would now like to evaluate these  amplitudes.  Comparing the coefficients of the same homogeneity in the coherent generating functional 
\be \label{eqn_gen_func}
{\cal A}_{\Gamma}(z_{e}) \equiv \sum_{j_e} \int \prod_{v\in V_\Gamma} \rd g_v (J_v + 1)! \prod_{e \in E_\Gamma} \frac{[z_e|g_{s_e}g_{t_e}^{-1}|z_{e^{-1}}\ket^{2j_e}}{(2j_e)!} = \frac{1}{(1+\sum_{C} A_C(z_e))^2}
\ee 
where $J_v$ is the sum of the spins at the vertex $v$.  

The sum is over collections $C = \{c_1,...,c_k\}$ of non-trivial cycles of the graph which are disjoint, i.e. do not share any edges or vertices with themselves or the other cycles.  The quantities $A_C \equiv A_{c_1} \cdots A_{c_k}$ are defined for each cycle $c_i = (e_1,...,e_n)$ by
\be
  A_{c_i}(z_e) \equiv -(-1)^{|e|} [\tilde{z}_{e_{1}} | z_{e_{2}}\ket [\tilde{z}_{e_2}| z_{e_3}\ket \cdots [\tilde{z}_{e_{n}}|z_{e_1}\ket
\ee
where $\tilde{z}_e \equiv z_{e^{-1}}$ and $|e|$ is the number of edges in the cycle which agrees with the orientation of $\Gamma$.  

Expanding in a power series we obtain 
\be
{\cal A}_{\Gamma}(z_{e}) = \sum_{k_{ee'}^{v}} R_{\Gamma}(k_{ee'}^{v}) \prod_{v} \left(\prod_{(ee')\supset v} [z_{e}|z_{e'}\ket^{k_{ee'}^{v}}\right).
\ee
where $R_{\Gamma}(k_{ee'}^{v})$ are the generalization of the Racah summation for an arbitrary graph
\be \label{eqn_power}
R_{\Gamma}(k_{ee'}^{v}) \equiv  \sum_{[M_C]} (-1)^{N+s} \frac{(N+1)!}{\prod_{C} M_C!}
\ee
where $N = \sum_C M_C$ and the sign $s$ accounts for the ordering of $ee'$ in $[z_{e}|z_{e'}\ket$.  The $M_C$ are positive integers labeled by each disjoint union of cycles $C$ and are summed over. These integers are restricted to depend on the $k_{ee'}^{v}$ by the relation
\be \label{eqn_kee}
  k_{ee'}^{v} = \sum_{C \supset (ee')} M_C,
\ee
where the sum is over all cycle unions $C$ which contain a cycle with the corners $(ee')$ or $(e'e)$.\footnote{Note that the solution of (\ref{eqn_kee}) is not unique since in  the number of cycles is usually greater than the number of independent $k_{ee'}$.  Therefore in general the coefficients $A_\Gamma(k_{ee'}^{v})$ will be given by a sum over arbitrary parameters. This leads  for example  to a summation over  one parameter for the tetrahedral graph, which corresponds to  the Racah expansion of the 6j symbol.  For the 4-simplex this will involve 17 parameters.}

On the other hand the relationship between continuous and discrete bases implies that the generating functional can also be expressed in terms of the discrete intertwiners as
\be
{\cal A}_{\Gamma}(z_{e}) = \sum_{k_{ee'}^{v}} A_{\Gamma}(k_{ee'}^{v}) \prod_{v} \left(\prod_{(ee')\supset v} [z_{e}|z_{e'}\ket^{k_{ee'}^{v}}\right).
\ee
This shows that  
$A_{\Gamma}(k_{ee'}^{v}) \simeq R_{\Gamma}(k_{ee'}^{v})$ 
where $\simeq$ is an equivalence relation on amplitudes $A_{\Gamma}(k^{v}_{ee'})$.
It is defined by  $A_{\Gamma}(k^{v}_{ee'})\simeq 0$ iff $\sum_{k^{v}_{ee'}}A_{\Gamma}(k^{v}_{ee'}) \prod_{v,(ee')} [z_{e}|z_{e'}\ket^{k_{ee'}^{v}}=0$. That is, it vanishes due to the Plucker relations when contracted with and summed over $\prod_{v,(ee')} [z_{e}|z_{e'}\ket^{k_{ee'}^{v}}$.

In order to find the  analog of the Racah formula  for the amplitude $A_{\Gamma}(k_{ee'}^{v})$ we need to use the more general generating functional 
\be
{\cal G}_{\Gamma}(\tau^{v}) \equiv \sum_{k_{ee'}^{v}} A_{\Gamma}(k_{ee'}^{v}) \prod_{v} \left(\prod_{(ee')\supset v} (\tau_{ee'}^{v})^{k_{ee'}^{v}}\right)
\ee
where $\tau^{v}_{ee'}$ are arbitrary complex parameters associated with pairs of edges meeting at $v$.  The expression of this generating functional is similar to (\ref{eqn_gen_func}). The only difference is that the sum is {\it not only} over unions of cycles, but also over unions of simple loops denoted by $L$.

A simple loop is loop of non overlapping edges.
The difference between loops and cycles is that cycles do not have any intersections.  Hence, the unions of cycles are a subset of the unions of loops.  This result implies that the amplitude $A_{\Gamma}(k_{ee'}^{v})$ can be expressed as a Racah sum over loops:
\be 
A_{\Gamma}(k_{ee'}^{v}) \equiv  \sum_{[M_L]} (-1)^{N+s} \frac{(N+1)!}{\prod_{L} M_L!}
\ee
where $M_{L}$ are integers labeled by each disjoint union of simple loops $L$, 
they are summed over with the restriction
$
  k_{ee'}^{v} = \sum_{L \supset (ee')} M_L,
$
while
$N = \sum_L M_L$.

As an application, we will give an explicit expression for the 20j symbol, which is independent of the 15j, as a generalized Racah formula.  

By solving (\ref{eqn_kee}) for $M_C$ in terms of $k^{v}_{ee'}$ we can derive a Racah formula for the amplitude of an arbitarary graph which is given by (\ref{eqn_power}).  Since there are 37 cycles $C$ in the 4-simplex and only 20 independent $k^{v}_{ee'}$ this formula will not be unique and will involve a sum over 17 parameters $p_k$.  The Racah formula is then
\be
  \{20j\}_{S_i,T_i} \simeq \sum_{p_1 \cdots p_{17}} \frac{(-1)^{N+s}(N+1)!}{\prod_{C} M_{C}(j_{ij},S_i,T_i,p_k)!}
\ee
where $N = \sum_C M_C$ and the sign $s = M_{1234}+M_{1235}+M_{1245}+M_{12354}+M_{12435}$ accounts for the edge ordering.  

In appendix \ref{20j_symbol} we give an explicit parameterization of the $M_C$ in terms of the $p_k$ although we note that simpler parameterisations might exist.  Furthermore, using various hypergeometric formulas one may be able to perform some of the summations over the $p_k$ explicitly.

\subsection{Racah Formulae for BF Theory}
\label{sec_BF_racah}

Starting from the expression (\ref{eqn_Z_BF_vertex_amps}) of SU(2) BF theory in terms of vertex amplitudes with the $k$ basis of intertwiners (\ref{C}) we obtain
\be
  Z^{\Delta^\ast}_{BF} = \sum_{j_f} \prod_f (2j_f+1) \sum_{k^{e}_{ff'} \in K_j} \prod_{e}\frac{1}{\|k^{e}_{ff'}\|^2} \prod_v A_{v}(k^{e}_{ff'}).
\ee
where $A_{v}(k^{e}_{ff'}) = \corner_{e \subset v} |k^{e}_{ff'}\ket$ is the contraction of the ``unnormalized'' intertwiners at each of the vertices.  Now since the amplitude vanishes if the spins of contracted intertwiners are different, we are free to sum over $j_{ef}$, that is over each edge of each face.  The reason to do this is that we can then combine the sum over spins and sum over $k^{e}_{ff'} \in K_j$ to just a simple sum over integers $k^{e}_{ff'}$ as
\be 
  Z^{\Delta^\ast}_{BF} = \sum_{k^{e}_{ff'}} \prod_f (2j_f+1) \prod_{e}\frac{1}{\|k^{e}_{ff'}\|^2} \prod_v A_{v}(k^{e}_{ff'}),
\ee
where $j_f = j_{ef}$ for some $e \subset f$.  Let us now use the Racah formula for the vertex amplitude as an expansion in simple loops\footnote{Using the equivalence relation among amplitudes generated by the Pl\"ucker relations, described in the previous section, we can choose to restrict to cycle unions instead of unions of simple loops, which are much fewer in number.}
\be
  A_{v}(k^{ve}_{ff'}) = \sum_{[M_L] \in M_k} (-1)^{\sum_L M_L S_L} \frac{(\sum_L M_L+1)!}{\prod_L M_L !}
\ee
where the sum is over $M_L$ with the restriction $k^{ve}_{ff'} = \sum_{L \supset (veff')} M_L$.  A simple loop $L$ of the boundary spin-network of the vertex is defined to be a closed path for which no link is traversed more than once.  In fact $L$ can be the union of non-overlapping simple loops.  The integer $S_L$ is determined by the orientation of the boundary spin network inherited from the orientation of the 2-complex in a well defined way.  

We can now combine the restricted sum over $[M_L]$ with the sum over $k^{e}_{ff'}$ to just get a simple sum over the integers $M_L$ at each vertex
\be \label{eqn_ZBF_M}
  Z^{\Delta^\ast}_{BF} = \sum_{[M^{v}_{L}]} \prod_f (2j_f+1) \prod_{e} \frac{\Delta_{e}}{\|k^{ve}_{ff'}\| \, \|k^{v'e}_{ff'}\|} \prod_v (-1)^{\sum_L M^{v}_L S^{v}_L} \frac{(\sum_L M^{v}_L+1)!}{\prod_L M^{v}_L!}.
\ee
Thus given a set of integers $[M^{v}_L]$ at each vertex of the 2-complex, a spin foam amplitude is given simply by a ratio of multinomial coefficients on the vertices and edges respectively, a dimension factor for the faces, and a sign from the orientation. 

The simplicity of this expression of $Z^{\Delta^\ast}_{BF}$ is that instead of computing a complicated amplitude for each vertex, such as a 6j, 15j, 20j or more general spin network amplitude, the vertex amplitude becomes a simple multinomial coefficient.  Second of all, instead of specifying spins (non-locally) to the faces of $\Delta^\ast$ we instead assign a set of integers $\{M^{v}_{L}\}$ for each simple loop of the boundary spin network at each the vertex.  

On the other hand, the difficulty with (\ref{eqn_ZBF_M}) is the determination of the signs  $S^{v}_{L}$ at each vertex from the orientation of the faces and ordering of strands in the spin foam edges.  These signs could be easily programmed into a computer, however it would be more desirable to combine the signs in a way that could be more easily determined from the spin foam data.

\chapter{Coarse Graining}
\label{chapter_coarse_graining}

In this chapter we study the operation of coarse graining at the level of the generating functional ${\cal G}(\tau)$.  We first give a non-trivial example by performing the well-known fusion move.  We find that the $\tau$ variables of coarse grained generating functional can be given by a sum over all paths between the boundary edges.  If there are loops in the bulk of the graph then the sum over all paths is infinite.  We then prove this result for an arbitrary graph.

In the last section we show that the generating functional ${\cal G}(\tau)$ on an infinite square lattice with a specific choice of edge orientation and vertex orderings is related to the high temperature loop expansion of the 2d Ising model.  It would be interesting to investigate the relation between the known coarse graining methods of the 2d Ising model with the one found here in terms of sums over paths.

\section{The Fusion Move}

Consider the so called Fusion move depicted in Figure \ref{fig_fusion}.  The result follows by the insertion of the resolution of identity on trivalent intertwiners which is trivially $\|j_1,j_2,j_3\ket\bra j_3,j_2,j_1\|$ since the trivalent intertwiner space is one dimensional; see (\ref{eqn_3j_state}).

\begin{figure} 
  \centering
    \includegraphics[width=0.9\textwidth]{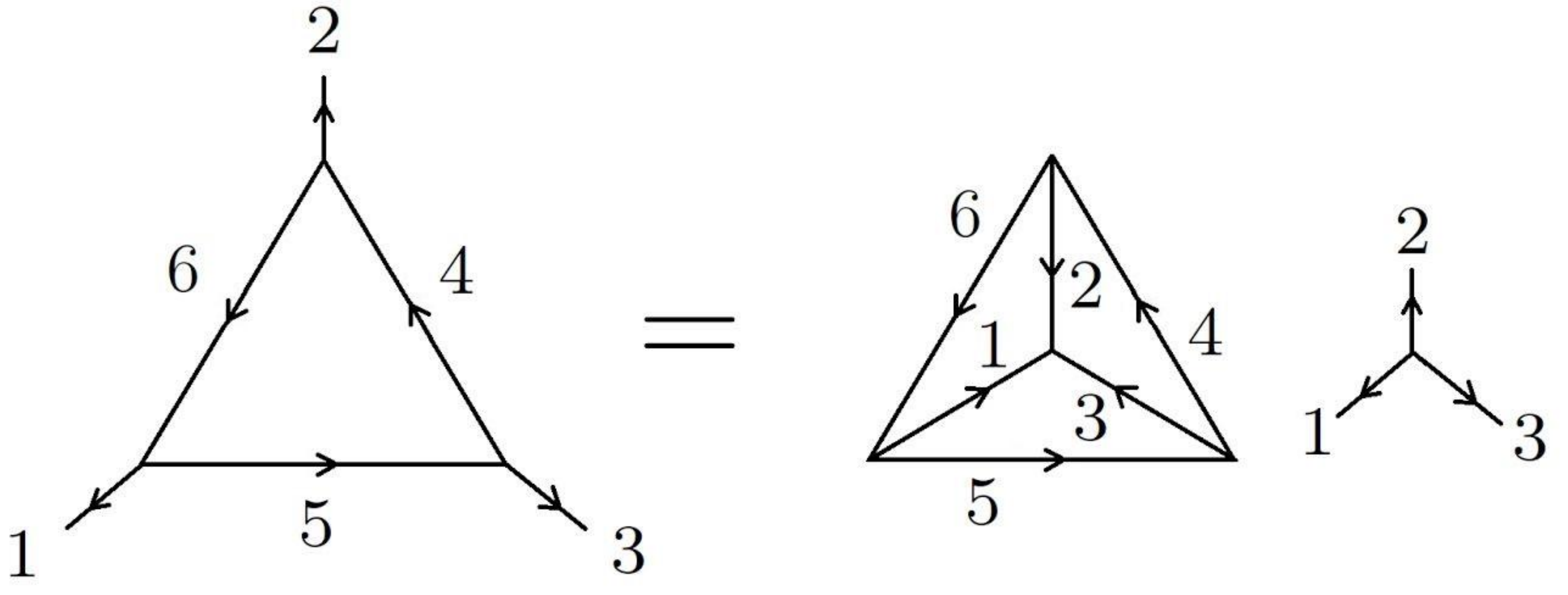}
    \caption{The fusion move.}  \label{fig_fusion}
\end{figure}

Let us now investigate this move in terms of the generating functional (\ref{defG}).  Using the prescription (\ref{eqn_T_tau}) with spinors $x_4,x_5,x_6$ on the internal edges, and spinors $z_1,z_2,z_3$ on the external edges
\be
  \mathcal{G}_{\Gamma}(\tau) = \int \rd\mu(z)\rd\mu(x) e^{S(z,x,\tau)}
\ee
with the initial action 
\be
  S = A^{ij}[\check{x}_{i}|x_{j}\ra + B^{ij}[\check{x}_{i}|z_{j}\ra + C^{ij}[z_{i}|x_{j}\ra
\ee
where 
\bea
A\equiv \left(\begin{array}{ccc}
 0 &0 & \tau_{46} \\
  \tau_{54 }& 0 & 0 \\
  0 & \tau_{65} & 0
  \end{array}\right),\quad
  B \equiv 
  \left(\begin{array}{ccc}
 0 &\tau_{42} & 0 \\
 0& 0 & \tau_{53} \\
  \tau_{61} & 0 & 0
  \end{array}\right), \quad
   C \equiv 
  \left(\begin{array}{ccc}
 0 &\tau_{15} & 0 \\
 0& 0 & \tau_{26} \\
  \tau_{34} & 0 & 0
  \end{array}\right)
\eea

We want to integrate over the internal edges $x$ leaving a generating functional depending on the external spinors $z$.  Including the measure $\sum_{i} [\check{x}_{i}|x_{i}\ra$ the integrals over $x$ are Gaussian and can be computed by evaluating the action on-shell.  The equation of motion for $x$ and $\check{x}$ are
\bea
(1-A)^{ij} |x_{j}\ra - B^{ij}|z_{j}\ra =0 \\
{[}\check{x}_{i}|( 1- A)^{ij} - [z_{i}| C^{ij} =0
\eea
Therefore, after integrating $x$ the action becomes equal to
\be
S' =  [z_{i}|z_{j}\ra (C(1-A)^{-1}B)^{ij}
\ee
The matrix $A$ has the property $A^{3}= \tau \one$ with $\tau \equiv \tau_{45}\tau_{56}\tau_{64} $.  Therefore, assuming the eigenvalues of $A$ are less than one, we can write the inverse as $(1-A)^{-1} = (1-\tau)^{-1}(1+A+A^{2})$ and compute explicitly the product.
 The final result is
\begin{align} \label{eqn_trivalent_action_coarse_grained}
  S' =  (1-\tau)^{-1}\Big( [z_{1}|z_{2}\ra \left(  -\tau_{16}\tau_{62} + \tau_{15}\tau_{54}\tau_{42} \right) + [z_{1}|z_{3}\ra \left(  \tau_{15}\tau_{53} + \tau_{16}\tau_{64}\tau_{43} \right) \hfill \\ \hfill + [z_{2}|z_{3}\ra \left(  -\tau_{24}\tau_{43} + \tau_{26}\tau_{65}\tau_{53} \right)\Big) \nonumber
\end{align}
The prefactor of the amplitude, i.e. the determinant of the inverse covariance matrix, is given by $(1-\tau)^{-2}$.  

Note that the expansion of the factor $(1+\tau)^{-1}$ in (\ref{eqn_trivalent_action_coarse_grained}) associated with the closed loop $(456)$ gives
\be
  (1-\tau)^{-1} = \sum_{l=0}^{\infty} (\tau_{45}\tau_{56}\tau_{64})^l
\ee
which adds powers of loops to the paths in (\ref{eqn_trivalent_action_coarse_grained}).  So in general we expect the coarse grained action to take the form
\be 
 S' = \sum_{i<j} [z_{i}|z_{j}\ra \left( \sum_{P_{ij}} \tau_{P_{ij}}\right) 
\ee
where the sum over path also includes the loops, that is it is allowed to go from $i$ to $j$ looping around several times around a loop and the factor $\tau$ contains an extra factor of $(-1)$ for every loop.
The prefactor is given by the determinant of the matrix corresponding to the subgraph obtained by deleting the external legs, i.e. the generating function of the amputated graph.

Let us now show that the coarse grained action does indeed give the RHS of the fusion move in Figure \ref{fig_fusion}.  Expanding the exponential of the new action $S'$ we find
\begin{align}
  (1-\tau)^{-2}e^{S'} = \sum_{k_{12},k_{13},k_{23}} \left(  \tau_{16}\tau_{62} + \tau_{15}\tau_{54}\tau_{42} \right)^{k_{12}} \left(  \tau_{15}\tau_{53} + \tau_{16}\tau_{64}\tau_{43} \right)^{k_{13}} &\left(  \tau_{24}\tau_{43} + \tau_{26}\tau_{65}\tau_{53} \right)^{k_{23}} \nonumber \\
	 \times (1-\tau)^{J-2} \frac{[z_{1}|z_{2}\ra^{k_{12}}}{k_{12}!} \frac{[z_{1}|z_{3}\ra^{k_{13}}}{k_{13}!} \frac{[z_{2}|z_{3}\ra^{k_{23}}}{k_{23}!} &
\end{align}
where $J = k_{12}+k_{13}+k_{23}$.  Now expanding all the binomials and letting the exponents of the $\tau_{ij}$ be given according to (\ref{eqn_3_k})
\begin{align}
  \prod_{(i,j)} \tau_{ij}^{k_{ij}} \equiv \, &\tau_{12}^{j_1+j_2-j_3}\tau_{13}^{j_1-j_2+j_3}\tau_{23}^{-j_1+j_2+j_3} 
	\tau_{15}^{j_1+j_5-j_6}\tau_{16}^{j_1-j_5+j_6}\tau_{56}^{-j_1+j_5+j_6} \nonumber \\
	&\tau_{24}^{j_2+j_4-j_6}\tau_{26}^{j_2-j_4+j_6}\tau_{46}^{-j_2+j_4+j_6} 
	\tau_{34}^{j_3+j_4-j_5}\tau_{35}^{j_3-j_4+j_5}\tau_{45}^{-j_3+j_4+j_5} \nonumber
\end{align}
we have
\be
  (1-\tau)^{-2}e^{S'} = \sum_{j_1,...j_6} W(j_1,...,j_6) \frac{[z_{1}|z_{2}\ra^{k_{12}}}{k_{12}!} \frac{[z_{1}|z_{3}\ra^{k_{13}}}{k_{13}!} \frac{[z_{2}|z_{3}\ra^{k_{23}}}{k_{23}!} \prod_{(i,j)} \tau_{ij}^{k_{ij}} 
\ee
where the coefficients are
\begin{align}
  W(j_1,...,j_6) &= \sum_{l} (-1)^{l+j_4+j_5+j_6} \binom{l+1}{l-j_1-j_2-j_3} \binom{j_1+j_2-j_3}{j_4+j_5+j_1+j_2-l} \nonumber \\
	  & \hspace{70pt} \times \binom{j_1-j_2+j_3}{j_4+j_6+j_1+j_3-l} \binom{-j_1+j_2+j_3}{j_5+j_6+j_2+j_3-l} \nonumber \\
		&= (-1)^{j_4+j_5+j_6}\frac{\Delta(j_1j_5j_6)\Delta(j_2j_4j_6)\Delta(j_3j_4j_5)}{\Delta(j_1j_2j_3)} \sixj{j_1}{j_2}{j_3}{j_4}{j_5}{j_6}
\end{align}
Hence the coarse grained generating functional does indeed give the 6j symbol coefficients of the fusion move.  The triangle coefficient factors are due to the normalization of the intertwiners in the 6j symbol and the sign is due to vertex orderings.  Now that we have done this example explicitly we can perform the same computation for a general open spin network generating functional.

\section{General Spin Network Coarse Graining}
\label{sec_coarse_graining}
Let us now investigate the coarse graining of a general spin network.
Let $\Gamma$ be a graph with open ends and we denote by $\hat{\Gamma}$ the closed graph obtained by removing from $\Gamma $ the open ends.
The edges of Gamma can be split into internal edges belonging to $\hat{\Gamma}$ and external edges  belonging to $\Gamma \backslash \hat{\Gamma}$.

The generating function associated with an open graph $\Gamma$ depends holomorphically on spinors $|z_{e}\ra$ for each external edge $e \in \Gamma \backslash \hat{\Gamma}$ 
which is {\it outgoing}, and depends antiholomorphically on
$|z_{e}]$ for each external edge $e\in \Gamma \backslash \hat{\Gamma}$ which is {\it ingoing}.
It also depends on a set on complex parameters $\tau^{v}_{ee'}$ associated with pairs of (oriented) edges $e,e'$ of $\Gamma$ meeting at the vertex $v$.
These parameters are antisymmetric in $ee'$, i-e $\tau_{ee'}=-\tau^{v}_{e'e}$
The generating function is then defined to be the integral
\be
\mathcal{G}(z_{e},\tau^{v}) \equiv \int \prod_{e\in \hat{\Gamma}} \rd \mu(x_{e}) \exp \left( - S (z_{e},x_{e}, \tau^{v}_{ee'}) \right)
\ee
where $\rd\mu(x) = \frac1{\pi^{2}} \rd^{2}\omega e^{-\la x | x \ra}$ is a Gaussian measure. 

Just like with the fusion move we can coarse grain the open spin network amplitude $A_\Gamma(z_e)$ into a single vertex by applying a resolution of identity on $n$-valent intertwiners (\ref{proj}). 

The action $S_{\Gamma}$ can be decomposed as a sum of actions $S_{v}$ associated with  each vertex of $\Gamma$, while $ S_{v} = \sum_{e,e' \cap v} S^{v}_{ee'}$ decompose itself as a sum over all pairs of edges meeting at $v$.  Each $S^{v}_{ee'}$ is a quadratic action which depends linearly on $\tau^{v}_{ee'}$ and $z_{e},z_{e'}$. This  action depends on the orientation of the edges meeting at $v$.

 The basic rule, following Definition \ref{defG}, is that it depends holomorphically on $|z_{e}\ra$ for each outgoing  edge $e$  
and depends antiholomorphically on
$|z_{e}]$ for each ingoing edge.
Explicitly we have 
\bea
S^{v}_{ee'}= -\tau^{v}_{ee'} [\check{z}_{e}|z_{e'}\ra,\quad \mathrm{if}\quad t(e)=s(e')=v \\
S^{v}_{ee'}= -\tau^{v}_{ee'} [{z}_{e}|z_{e'}\ra,\quad \mathrm{if}\quad s(e)=s(e')=v \\
S^{v}_{ee'}= -\tau^{v}_{ee'} [{z}_{e}|\check{z}_{e'}\ra,\quad \mathrm{if}\quad s(e)=t(e')=v \\
S^{v}_{ee'}= -\tau^{v}_{ee'} [\check{z}_{e}|\check{z}_{e'}\ra,\quad \mathrm{if}\quad s(e)=t(e')=v 
\eea
where $s(e)$ (reps. $t(e)$) denotes the starting (reps. terminal) vertex of the edge $e$.

In order to write the action in a concise form we introduce the 4-dimensional twistor  and its conjugate
\be
|Z_{e}) \equiv \left(\begin{array}{c}
|z_{e} \ra \\
|z_{e} ] \\
  \end{array}\right) , \quad |\check{Z}_{e}) \equiv \left(\begin{array}{c}
|\check{z}_{e} \ra \\
|\check{z}_{e} ] \\
  \end{array}\right)
\ee
together with a {\it symmetric} pairing
\be
(Z|W) \equiv \la z|w\ra +[z|w].
\ee
It will be important to note that the conjugate twistor is related in a simple manner to the original twistor
\be
|\check{Z}) = E|Z),\quad \mathrm{with}\quad E = \left(\begin{array}{cc}
0&1\\
-1&0
 \end{array}\right) 
\ee
so that $|Z)$ is a null twistor $(Z|\check{Z})=0$.

We also introduce a skew-symmetric  matrix $T_{\Gamma}$ whose entries are half edges of $\Gamma$.
Its matrix elements are such that 
$T_{\Gamma}^{ee'}$ vanishes unless $s(e)=s(e')$. When $s(e)=s(e')$ it is given by
\be
T_{\Gamma}^{ee'}= \tau_{ee'}^{s(e)}.
\ee
We finally introduce the $4$ by $4$ matrix  $\mathbb{T}^{\Gamma}$ as in Lemma \ref{thm_gen_gauss} to be
\be
\mathbb{T}_{\Gamma}^{ee'} \equiv 
\left(\begin{array}{cc}
T^{\bar{e}e'}_{\Gamma} & T^{\bar{e}\bar{e}'}_{\Gamma}\\
-T^{{e}e'}_{\Gamma} & - T^{{e}\bar{e}'}_{\Gamma}
 \end{array}\right)
\ee 
The antisymmetry of $T^{\Gamma}$ translate into  the property 
\be
 \mathbb{T}_{\Gamma}^{t} = - \mathbb{E}_{\Gamma} \mathbb{T}_{\Gamma} \mathbb{E}_{\Gamma},\quad \mathrm{with}\quad \mathbb{E}_{\Gamma}^{ee'} = \delta^{ee'} 
E
\ee
More precisely, this property implies that 
\be
\mathbb{T}_{\Gamma} = \left(\begin{array}{cc}
A_{\Gamma}&B_{\Gamma}\\
C_{\Gamma}&D_{\Gamma}
 \end{array}\right) \quad\mathrm{with}\quad A^{t}_{\Gamma}=D_{\Gamma},\,B^{t}_{\Gamma}=-B_{\Gamma},\,C_{\Gamma}^{t}=-C_{\Gamma}.
\ee
Finally we denote by $\mathbb{T}_{\hat{\Gamma}} $ the square sub-matrix consisting associated with the subgraph $\hat{\Gamma}$ and 
$\mathbb{T}_{\partial\Gamma}$ the rectangular sub-matrix pairing internal edges with external ones. 

We can use this data to rewrite the action associated with the graph $\Gamma$ in a simple form
given by:
\be
S_{\Gamma} = \frac12 (W_{a}|\mathbb{T}^{ab}_{\hat{\Gamma}}|W_{b}) + (W_{a}| \mathbb{T}_{\partial\Gamma}^{ai}|Z_{i}) 
\ee
where $a,b$ labels the internal edges and $i,j $ the external ones. 
In short this reads 
\be
S_{\Gamma} = \frac12 (W|\mathbb{T}_{\hat{\Gamma}}|W) + (W| \mathbb{T}_{\partial\Gamma}|Z)
\ee
 By varying the action with respect to $(W|$ one obtains
the equation 
\be
\left(1+\mathbb{T}_{\hat{\Gamma}}\right) |W) = - \mathbb{T}_{\partial\Gamma} |Z).
\ee
and varying with respect to $|W)$ one gets
 \be
(W| \left(1+ \mathbb{T}_{\hat{\Gamma}}\right)  = - (Z|\hat{\mathbb{T}}_{\partial\Gamma}, \quad \mathrm{with} \quad \hat{\mathbb{T}}\equiv E\mathbb{T}^{t}E^{t}.
\ee
Thus integrating over the twistors $|W) $ is straightforward and we obtain the coarse grained action
\be 
S_{\Gamma\backslash\hat{\Gamma}} =-\frac12 (Z_{i}| \left(\hat{\mathbb{T}}_{\partial\Gamma}(1+\mathbb{T}_{\hat{\Gamma}})^{-1}\mathbb{T}_{\partial\Gamma}\right)^{ij}|Z_{j})
\ee
and the generating function becomes
\be
\mathcal{G}(z_{e},\tau^{v}) = \frac{1}{\det(1+\mathbb{T}_{\hat{\Gamma}})}e^{S_{\Gamma\backslash\hat{\Gamma}}}.
\ee
Note that the inverse matrix can be expressed as an infinite sum $ (1+\mathbb{T}_{\hat{\Gamma}})^{-1}= \sum_{n}(-1)^{n} \mathbb{T}_{\hat{\Gamma}}^{n}$.
Now we can evaluate the matrix elements of $\mathbb{T}^{n}$.
It is straightforward to see that it can be expressed in terms of a sum over paths in $\hat{\Gamma}$,that is
\be
\left(\mathbb{T}_{\hat{\Gamma}}^{n}\right)^{ee'} =(-1)^{n-1} \sum_{P\in {\bf \hat{P}}_{ee'}^{n-1}}  \tau_{P} 
\ee
where ${\bf \hat{P}}_{ee'}^{n}$ denotes the set of paths in $\hat{\Gamma}$  which go from $e$ to $e'$ and are of length $n$ (that is containing $n$ edges besides $e$ and $e'$).
For an oriented  path $P= (ee_{1}e_{2}\cdots e_{n} e')$ with $t(e_{i})=s(e_{i+1})$, the factor $\tau_{P}$ is 
\be
\tau_{P} = (-1)^{a}\tau_{\bar{e}e_{1}} \tau_{\bar{e}_{1}e_{2}} \cdots \tau_{\bar{e}_{n}e'}
\ee
where $a$ is the number of edges of $P$ that agrees with the orientation of $\Gamma$.
Thus 
\be
(1+\mathbb{T}_{\hat{\Gamma}})^{-1}_{ab} = \sum_{P\in \bf{\hat P}_{ab}} \tau_{P},
\ee
where the sum is over all path in $\hat{\Gamma}$ going from $a$ to $b$.
The multiplication by  $ T_{\partial\Gamma}$ extend this sum to a sum over path in $\Gamma$
thus we get 
\be
S_{\Gamma\backslash\hat{\Gamma}} =\frac12 (Z_{i}|Z_{j}) \left(\sum_{P\in \bf{P}_{ij}} \tau_{P}\right) 
\ee
It would be interesting to know whether this general sum over paths can be factorized in terms of simple paths and loops as in (\ref{eqn_trivalent_action_coarse_grained}).

\section{The 2d Ising Model}
\label{sec_ising}

The 2d Ising model on a square 2d lattice describes the possible configurations of spins placed on the lattice sites which can take one of two possible orientations.  The intuition of R. Peierls \cite{peierls1936ising} was that the possible states of this model are given by all possible loops on the dual lattice, which represent the boundary between domains of aligned spins.  The energy associated with the creation of such a domain is given by
$$
  \Delta E = 2JL
$$ 
where $J$ is a coupling constant and $L$ is the number links in the boundary of the domain.  The partition function of the Ising model on a square lattice $\cL_N$ of size $N \times N$at zero magnetic field with one coupling constant $J$ is
\be \label{eqn_ising_part}
  Z_N(v) = \sum_{\{\sigma\}} \exp \left(  \beta J \sum_{(i,j)} \sigma_i \sigma_j \right)
\ee
where for each vertex $i$ the spins are $\sigma_i = \pm 1$, and the sum in the exponent is over nearest neighbors.  Using the identity
$$
  \exp(x \sigma_i \sigma_j) = \cosh (x) ( 1- \sigma_i \sigma_j \tanh (x) )
$$
we get
\be
  Z_N(v) = \cosh^N( J) \sum_{\{\sigma\}} \prod_{(i,j)} (1 - \tanh( \beta J) \sigma_i \sigma_j)
\ee 
Expanding the product and defining $v \equiv \tanh \beta J$ 
\be \label{eqn_ising_loops}
  Z_N(v) = 2^N (1-v^2)^{-N} \left(1+\sum_{P \geq 4} g_P v^P \right)
\ee
where $g_P$ is the number of closed subgraphs $\Gamma^P_{\text{even}}$ of the lattice $\cL$ having a total of $P$ links and with an even number of edges adjacent to each vertex.  In what follows we define $\Gamma_{\text{even}}$ to be the set of such closed, even--valent subgraphs having an arbitrary number of links $P>4$.  Here a closed subgraph can have disconnected components which share neither edges nor vertices.  For more details see \cite{mussardo2010statistical}.

\begin{figure} 
  \centering
    \includegraphics[width=0.45\textwidth]{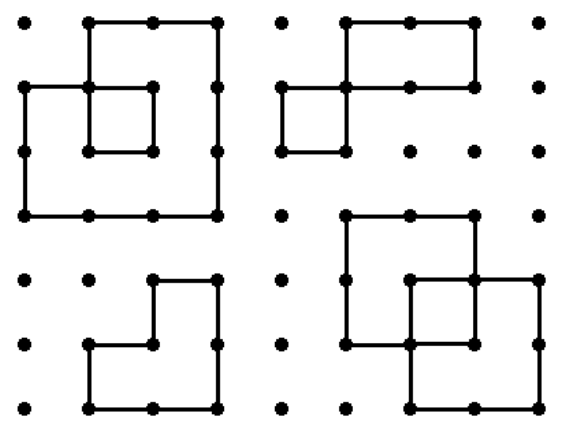}
    \caption{The closed even valent subgraphs of the square lattice correspond to domain boundaries in the 2d Ising model.}  \label{fig_polygon}
\end{figure}

\subsection{Matching of generating function and Ising model} \label{match}

The sum over collections of disjoint simple loops in Theorem \ref{thm_gen} on a square lattice contains all of these configurations of closed subgraphs $\Gamma_{\text{even}}$ in (\ref{eqn_ising_loops}), but also more due to the three possible ways in which two paths can cross at a four--valent vertex\footnote{See the middle three diagrams of Figure \ref{fig_simpleloops}.}.  Another difference is that there are signs in (\ref{eqn_gen_loops}) due to the edge orientation and the vertex ordering.  However, for a particular choice of edge orientation and vertex ordering of the square lattice, and a homogeneous choice of weights $\tau_{ee'} = i \sigma_{ee'} v$, with $\sigma_{ee'}$ being an antisymmetric function, the two sums are equal, as we now show.

\begin{theorem} \label{cor_gen_ising}
Let $\cL$ be the square lattice with edge orientation and vertex ordering as in Figure \ref{fig_lattice}.  Let the vertex weights in (\ref{defG}) be given homogeneously by $\tau_{ee'} = iv$ for $e<e'$ and $\tau_{ee'} = -iv$ for $e>e'$.  Then the spin network generating functional (\ref{defG}) takes the form
\be \label{eqn_G_L}
{\cal G}_{\cL}(iv) = \left(1+\sum_{P} g_P v^P \right)^{-2}
\ee
where the sum is over all even--valent, closed subgraphs of $\cL$ as in (\ref{eqn_ising_loops}).
\end{theorem}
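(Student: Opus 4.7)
The plan is to apply Theorem \ref{thm_gen} to rewrite the generating functional as ${\cal G}_{\cL}(iv) = (1 + \sum_L A_L(iv))^{-2}$, where the sum runs over edge-disjoint collections $L$ of simple loops on $\cL$, and then to prove the identity
\be
1 + \sum_L A_L(iv) \;=\; 1 + \sum_P g_P\,v^P
\ee
by grouping loop collections according to the subgraph of $\cL$ they cover.

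First I would evaluate a single simple loop $\ell=(e_1,\ldots,e_n)$. Since each $\tau^{v}_{e_{k}^{-1} e_{k+1}} = \pm iv$ depending on the vertex ordering, formula (\ref{eqn_A_loop}) gives
\be
A_{\ell}(iv) \;=\; -(-1)^{|e_+|}\, i^n\, v^n\, \epsilon(\ell),
\ee
where $|e_+|$ counts the edges of $\ell$ traversed in agreement with the orientation of $\cL$ and $\epsilon(\ell)=\prod_{k}\mathrm{sgn}(e_{k}^{-1},e_{k+1})$ records the vertex ordering at each visited site. For the Kasteleyn-type choice of edge orientation and vertex ordering of Figure \ref{fig_lattice}, I expect the combined prefactor $-(-1)^{|e_+|}\,i^n\,\epsilon(\ell)$ to reduce to $+1$ for every embedded simple loop $\ell$ on the square lattice. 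This is the standard Kasteleyn/Whitney sign identity: an embedded simple loop on $\cL$ has even length, and the product of Kasteleyn-oriented edges around it combines with the turning-number factor $i^n$ to yield unity.

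Next I would partition disjoint collections $L$ according to the subgraph $\Gamma_L\subset\cL$ formed by their edges. Since the loops in $L$ share no edges and each simple loop uses every edge at most once, each vertex of $\Gamma_L$ has even degree, so $\Gamma_L$ is one of the closed even-valent subgraphs counted by $g_P$. Conversely, given $\Gamma\in\Gamma_{\mathrm{even}}$ with $P$ edges, I must count the $L$ with $\Gamma_L=\Gamma$. At a vertex of $\Gamma$ of degree two the loop structure is forced, while at a vertex of degree four there are three channel resolutions $S$, $T$, $U$ for pairing the incident edges, as in Figure \ref{fig_STU_box}. The key local claim is that, for our choice of orientation and ordering, these three contributions combine — including the $i^n$ factors, the sign $(-1)^{|e_+|}$, and $\epsilon$ — so that exactly one effective configuration survives with weight $+1$. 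Given this local identity, summing over $\Gamma$ gives $\sum_L A_L(iv) = \sum_\Gamma v^{|\Gamma|} = \sum_P g_P\,v^P$.

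The main obstacle is precisely this signed cancellation at a four-valent crossing. This is essentially a Kac--Ward/Sherman-type computation: one must verify that the difference in turning number between the $S$, $T$, $U$ resolutions is exactly compensated by the change in $(-1)^{|e_+|}\epsilon(\ell)$, so that two of the three channels contribute with opposite sign while the third survives. The local check is a finite case analysis over the possible in/out patterns and orderings at a site of $\cL$; once established there, it extends multiplicatively over the (finitely many) four-valent vertices of any $\Gamma\in\Gamma_{\mathrm{even}}$, after which the matching with the high-temperature loop expansion (\ref{eqn_ising_loops}) is immediate and yields (\ref{eqn_G_L}).
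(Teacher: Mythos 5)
Your overall strategy is the same as the paper's: invoke Theorem \ref{thm_gen} to write ${\cal G}_{\cL}(iv) = (1+\sum_L A_L)^{-2}$, show each non-crossing simple loop carries weight $+v^n$, and argue that the three channel resolutions at a four-valent vertex cancel down to one term, matching $\Gamma_{\text{even}}$. However, the two steps you defer are exactly where the work lies, and as stated they are gaps. First, the single-loop sign is not a ``standard Kasteleyn/Whitney identity'': the prefactor involves not only the edge orientations but also the vertex-ordering signs $\sigma_{ee'}$ (your $\epsilon(\ell)$), and the lattice of Figure \ref{fig_lattice} is not a Kasteleyn orientation in the usual sense (no odd-clockwise-face condition is ever used). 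What is actually needed, and what the paper proves by induction on plaquettes (Lemma \ref{lemma_orient}), is that on this lattice every loop has exactly half its edges agreeing with the orientation and every \emph{non-crossing} loop has an odd number of vertices disagreeing with the ordering; combining these with $i^n$ and $n$ even then gives $+v^n$. Appealing to a named identity does not discharge this; you would need the inductive (or an equivalent) argument.

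Second, your claim that the $S/T/U$ cancellation ``extends multiplicatively over the four-valent vertices'' is the step that would fail as written. The cancellation $A_S=A_T=-A_U$ (which needs the factorized form $\tau_{ee'}=\sigma_{ee'}\tau_e\tau_{e'}$, satisfied here) is local, but the surviving term can be a loop with a non-crossing self-intersection whose other crossings must be resolved at different vertices, so the resolutions at distinct vertices are correlated rather than independent. Moreover, simply counting all non-crossing loop collections with a given edge set over-counts $g_P$: two loops sharing two vertices admit two distinct non-crossing pairings of the same edges (the paper points to the lower-right diagram of Figure \ref{fig_polygon}). The paper's Lemma \ref{lemma_STU} therefore runs a global case analysis (its cases (a), (b), (c)) showing that iteratively resolving every intersection leaves \emph{exactly one} surviving configuration per closed even-valent subgraph, and that this survivor is non-crossing so that the sign lemma applies to it. Your proposal needs an argument of this kind to replace the unjustified multiplicativity and to rule out the over-counting; without it the identification $\sum_L A_L(iv)=\sum_P g_P v^P$ is not established.
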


\begin{figure} 
  \centering
    \includegraphics[width=0.7\textwidth]{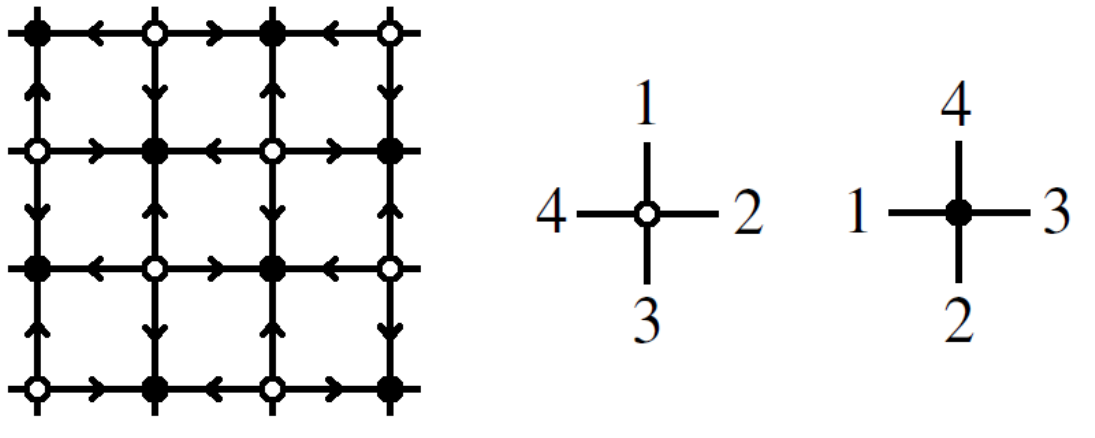}
    \caption{The edge orientation and vertex ordering of a square lattice for which the terms in (\ref{eqn_gen_loops}) all have a positive sign as shown in Theorem \ref{cor_gen_ising}.}  \label{fig_lattice}
\end{figure}

We will prove this theorem by a series of lemmas.  The first step is to control the signs in (\ref{eqn_gen_loops}) which is accomplished by the specific edge direction and vertex ordering in Figure \ref{fig_lattice}.  We say that a vertex $v$ in a loop disagrees with the vertex ordering, if the loop traverses first the edge $e$ and then the edge $e'$ adjacent to $v$ and $e' < e$.  Furthermore a loop without crossing is a loop which may have self intersections (i.e. four edges of the loop meet at one vertex), however the edges are traversed without leading to crossing edge pairs. 

\begin{lemma} \label{lemma_orient}
Let $\cL$ be the lattice in Figure \ref{fig_lattice} with the indicated edge orientation and vertex ordering.  Then 
\begin{enumerate}
\item the number of edges in a loop which agrees with the orientation of $\cL$ is equal to half the number of edges in the loop
\item the number of vertices in a loop without self--crossing, which disagrees with the vertex ordering is odd.
\end{enumerate}
\end{lemma}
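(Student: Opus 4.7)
The proof naturally splits along the two claims.

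For Claim 1, I would use a displacement conservation argument. The orientation in Figure~\ref{fig_lattice} assigns each edge of $\cL$ a unit vector in $\Z^2$. Traversing a loop $\ell$ in order, each edge contributes $+\vec{v}_e$ if the traversal agrees with the $\cL$-orientation and $-\vec{v}_e$ otherwise. Since $\ell$ closes up, the signed sum of these vectors vanishes. Decomposing into horizontal and vertical components, this forces the number of rightward-agreeing edges to equal the number of rightward-disagreeing edges (and similarly vertically), yielding exactly $|e| = n/2$.

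For Claim 2, the argument is more delicate and parallels the logic behind Kasteleyn's Pfaffian orientation theorem. Since a loop $\ell$ without self-crossings bounds a disk by the Jordan curve theorem, its interior decomposes into $m \geq 1$ elementary plaquettes of $\cL$. I would proceed by induction on $m$. The base case $m=1$ is a direct verification from the vertex ordering in Figure~\ref{fig_lattice}, showing that traversing a single elementary plaquette yields an odd number of disagreements at its four corner vertices. For the inductive step, I would excise a boundary-adjacent plaquette $F$ from $\ell$ to obtain a loop $\ell'$ enclosing $m-1$ plaquettes, and establish a relation of the form $D(\ell) \equiv D(\ell') + D(\partial F) + \varepsilon \pmod{2}$, where $D$ counts disagreements and $\varepsilon$ records local contributions from the vertices and edges shared between $\ell$ and $\partial F$. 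The edge orientations and vertex orderings in Figure~\ref{fig_lattice} are designed so that $\varepsilon$ is even, which together with the inductive hypothesis and the base case forces $D(\ell)$ to remain odd.

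The main obstacle will be carrying out the local analysis required to verify $\varepsilon \equiv 0 \pmod 2$ in the inductive step. At every vertex shared by $\partial F$ and $\ell$ there are several cases depending on which pair among the four incident edges participates in the loop, and how the vertex ordering ranks the entering versus exiting edges before and after excising $F$. The coordinated alternation pattern in Figure~\ref{fig_lattice} -- which is precisely a Kasteleyn-type condition on the planar graph $\cL$ -- is what makes the case-by-case totals cancel, and I expect this combinatorial verification to be the technical heart of the proof.
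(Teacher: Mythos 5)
Your part (1) argument does not go through for the orientation actually used in Figure \ref{fig_lattice}. The displacement identity only says that the signed sum of the orientation vectors vanishes, i.e.\ that the loop makes as many rightward moves as leftward moves (and as many upward as downward); it constrains agreement with the lattice orientation only if all horizontal (resp.\ vertical) edges point the same way. In the lattice of the figure the edge orientations alternate --- every vertex is a source or a sink, which is exactly what the paper's proof uses --- so agreement is not determined by the direction of motion. The elementary plaquette already defeats your intermediate claim: traversed counterclockwise, both horizontal edges are traversed in agreement and both vertical edges in disagreement, so ``rightward-agreeing equals rightward-disagreeing'' fails ($1$ versus $0$), while the horizontal displacement still vanishes because the two horizontal edges carry opposite lattice orientations. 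The statement $|e|=n/2$ is true, but for a different reason, which is the paper's one-line argument: since each vertex is a source or a sink, consecutive edges of any loop alternate between agreeing and disagreeing with the orientation, and every loop has even length.

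For part (2), your face-excision induction on the number of enclosed plaquettes is a legitimate Kasteleyn-style alternative to the paper's proof, which instead grows the lattice one square at a time (new row, or extension of a row) and verifies the parity for all loops involving the new square. But as written it has two holes. First, essentially the whole content of the lemma sits in the local verifications you defer: the base case and the claim $\varepsilon\equiv 0 \pmod 2$ at the vertices shared by $\ell$ and $\partial F$. What has to be checked here is the vertex \emph{ordering} (the signs $\sigma_{ee'}$ attached to ordered pairs of edges meeting at a vertex), not a clockwise-odd edge orientation, so the standard Kasteleyn criterion cannot simply be invoked; this casework is precisely the ``one can check'' analysis the paper carries out for the two ways of attaching a square and for the touching configurations. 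Second, the lemma's ``loop without self-crossing'' includes, by the paper's definition, loops with non-crossing self-intersections (four loop edges meeting at a vertex); such curves do not bound disks, so your Jordan-curve starting point does not cover them, and excising a boundary plaquette from a genuinely simple loop can itself create such touchings (or pinch the curve), so the induction must be formulated to handle these non-simple loops anyway. The paper's proof checks exactly these configurations and finds they contribute an even number (four) of additional disagreeing vertices, which is the kind of statement your $\varepsilon$ bookkeeping would need to establish.
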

\begin{proof}
For the first part, it is easy to see that the edges of every loop in $\cL$ alternates orientation and every loop has an even number of edges so the number of edges that agrees with the orientation is equal to half the number of edges in the loop.



For the second part, we will use induction on the number of plaquettes in the lattice.  To this end we will build up the lattice from the left most lower corner.  One can add squares so that the boundary on the right forms a staircase to reach an infinite lattice in the limit. A finite size lattice can be built row by row. We thus have two cases to consider: adding a square which starts a new row and adding a square to an existing row as is illustrated in Figure \ref{fig_lattice2}.
Notice furthermore that the ordering along a vertex is reversed if the loop is reversed, hence we need just to consider one specific loop orientation. Furthermore exchanging all black vertices with white ones and vice versa we also exchange all orientation induced signs, hence we again just need to consider one choice for the partitioning of the vertices into black and white.

One can check that the loop on a single square has an odd number of vertices which disagrees with the vertex ordering.  Assume that we have a square lattice for which every loop has an odd number of vertices which disagree.  
Consider adding a single square starting a new  row, as in the left panel of  Figure \ref{fig_lattice2}.
By the hypothesis all of the loops which contain $e_1$ have an odd number of vertices which disagree.  Traversing $e_1$ in any direction gives one vertex which disagrees.  On the other hand traversing the three edges in the new square clockwise gives three vertices which disagree (or one vertex that disagrees in the counter-clockwise direction).  Furthermore the new square might lead to a loop with a non--crossing self intersection at the black vertex $v_1$, shared by $e_1$. Here one can also check that for a counter--clockwise orientation of the loop a deformation of the loop to include the new square leads to four additional vertices that disagree.  
Hence all loops of the lattice with the new square also have an odd number of vertices which disagree.  

 Similarly, for adding a square to an existing row, one can check that traversing $e_2$, $e_3$ (or both) contributes the same parity as traversing the new square. Again one can also check that  loops with non--crossing self intersections at the black or white vertex of $e_3$, which include the new square, have an odd number of vertices disagreeing with the ordering. 
 
 Hence by induction the loops in a square lattice of any size will always have an odd number of vertices that disagrees with the ordering.
\end{proof}

\begin{figure} 
  \centering
    \includegraphics[width=0.55\textwidth]{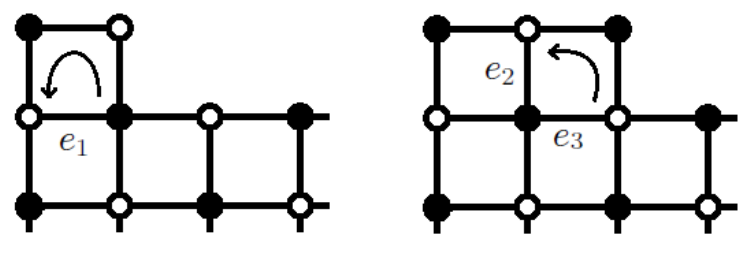}
    \caption{Adding one square to the lattice: Starting a new row and adding to an existing row.  Assuming all the loops in the existing lattice have an odd number of vertices which disagree with the edge ordering, then the loops containing the new square also have an odd number.}  \label{fig_lattice2}
\end{figure}

We have now to discuss the situation that at a given vertex either one loop self--intersects, or two loops touch or even cross each other. A priori all these cases are allowed to appear in the sum for the generating function (\ref{eqn_gen_loops}). This leads to three terms for such a vertex, as there are three possibilities for how two paths meet or cross at a four--valent vertex (see the middle three diagrams of Figure \ref{fig_simpleloops}). In the partition function of the Ising model (\ref{eqn_ising_loops}) only one term for  such a vertex appears. Hence we have to show that always two terms cancel each other, and that the surviving term does not lead to a loop with crossing.


\begin{lemma} \label{lemma_STU}
Consider the lattice $\cL$ and let 
\be
  \tau_{ee'} =\sigma_{ee'}\tau_{e}\tau_{e'}  
\ee
where $\sigma_{ee'} = 1$ if $e<e'$ and $\sigma_{ee'} = -1$ if $e>e'$ according to the vertex ordering.  

 Then the sum over collections of disjoint simple loops in (\ref{eqn_gen_loops}) is reduced to only collections without crossings.  Furthermore, there is a one--to--one matching between these terms and configurations $\Gamma_{\text{even}}$ of closed, even--valent subgraphs.  
\end{lemma}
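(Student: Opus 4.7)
The plan is to group the collections of disjoint simple loops in Theorem~\ref{thm_gen} by the unoriented edge set they traverse, and to show that within each such group the three local pairings at every 4-valent vertex combine into the weight of a single non-crossing representative.

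First I would take the expansion $\mathcal{G}_{\cL}(\tau) = (1 + \sum_L A_L(\tau))^{-2}$ from Theorem~\ref{thm_gen}, where $L = \{\ell_1,\ldots,\ell_k\}$ runs over collections of edge-disjoint simple loops of $\cL$. Substituting $\tau_{ee'} = \sigma_{ee'}\tau_e\tau_{e'}$ into each $A_L$ separates the amplitude into an edge-product $\prod_e \tau_e\tau_{e^{-1}}$, depending only on the underlying unoriented subgraph $G(L) \subseteq \cL$, and a vertex-product of $\sigma$'s, depending on how incident half-edges are paired at each vertex of $G(L)$. Because every vertex of $G(L)$ is either $2$-valent or $4$-valent, one has $G(L) \in \Gamma_{\text{even}}$, so the expansion reorganises as a sum over $G \in \Gamma_{\text{even}}$ followed by an inner sum over loop decompositions with $G(L) = G$.

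Next I would isolate the ambiguity at 4-valent vertices: at such a vertex $v$ of $G$ the four incident half-edges can be split into two pairs in exactly three ways, giving the $S$, $T$, $U$ channels of Corollary~\ref{cor_gen}. By the vertex-by-vertex form of the computation in that corollary, the three corresponding amplitudes factor as a common prefactor (absorbing the pairings at the other 4-valent vertices and all the orientation/loop-count signs) times the local Pl\"ucker combination
\[
\tau^v_{12}\tau^v_{34} + \tau^v_{13}\tau^v_{42} + \tau^v_{14}\tau^v_{23}.
\]
Substituting $\tau^v_{ee'} = \sigma_{ee'}\tau_e\tau_{e'}$ and using antisymmetry of $\sigma$, I would compute
\[
\tau_1\tau_2\tau_3\tau_4 \bigl( \sigma_{12}\sigma_{34} + \sigma_{13}\sigma_{42} + \sigma_{14}\sigma_{23} \bigr) = \tau_1\tau_2\tau_3\tau_4,
\]
since for any total ordering $1<2<3<4$ the three sign products equal $+1, -1, +1$, with the $-1$ attached precisely to the crossing pairing $(1,3)(2,4)$. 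Thus the two non-crossing channels carry weight $+\tau_1\tau_2\tau_3\tau_4$ each and the single crossing channel carries weight $-\tau_1\tau_2\tau_3\tau_4$, and their sum equals the weight of one non-crossing representative.

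I would then iterate this cancellation independently over every 4-valent vertex of $G$, which is legitimate because the three choices at $v$ only alter the local pairing at $v$, so the Pl\"ucker combination factors out vertex-by-vertex. This collapses the inner sum over decompositions of $G$ to a single effective non-crossing loop collection per $G \in \Gamma_{\text{even}}$, establishing both the reduction to non-crossing collections and the one-to-one matching with even-valent closed subgraphs. The main obstacle will be the sign bookkeeping: one must verify that the $-(-1)^{|e|}$ orientation factor and the extra $-1$ per simple loop (which differ between $S$ on one hand and $T,U$ on the other, since $S$ has one more connected component) assemble into exactly the Pl\"ucker pattern above with relative coefficients $(+1,+1,+1)$ rather than some twisted sign pattern; this is the same computation that underlies the proof of Corollary~\ref{cor_gen}, and here it ultimately delivers a net $+1$ per 4-valent vertex rather than $0$, which is what makes each even subgraph survive precisely once.
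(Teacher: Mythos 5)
Your overall mechanism is the same as the paper's: at each $4$-valent vertex the three pairings $S,T,U$ of the four incident edges combine, via the computation underlying Corollary \ref{cor_gen}, into a common prefactor times the Pl\"ucker combination, and the product form $\tau_{ee'}=\sigma_{ee'}\tau_e\tau_{e'}$ turns that combination into a single nonzero term rather than zero. Where you differ is in the bookkeeping: you group loop collections by their underlying even subgraph and sum the three channels vertex by vertex, whereas the paper cancels $A_U$ pairwise against one of $A_S,A_T$ and then resolves the leftover self-intersections (its case (c)) by moving to further vertices. Your organization is arguably cleaner for the one-to-one matching, which the paper argues only informally; note, however, that your reduction must be phrased as a successive one — sum over the triple at $v$ with all other pairings held fixed, using $A_S=A_T$ so that the surviving representative is well defined — and not as a literal factorization of the amplitude over vertices, since the $-1$ per simple loop and the orientation sign $(-1)^{|e|}$ are global quantities that change when a pairing is switched and do not factorize.

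The genuine gap is the step you defer: establishing $A_S=A_T=-A_U$, i.e. that the relative signs of the three channels are $(+,+,-)$ with the minus on the geometric crossing. Your check $\sigma_{12}\sigma_{34}+\sigma_{13}\sigma_{42}+\sigma_{14}\sigma_{23}=1$ is pure slot algebra. It does not by itself show that the three channel amplitudes share a common prefactor with exactly those monomials attached: in Corollary \ref{cor_gen} the labels $1,\dots,4$ are half-edges $e_{i-1}^{-1},e_1,e_n^{-1},e_i$ determined by how the loops happen to traverse the vertex, and the comparison involves edge inverses, the path-reversal sign $A_{\mathcal{P}}=-A_{\mathcal{P}^{-1}}$, and changes in the number of loops. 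Nor does it show that, under the specific vertex ordering and edge orientations of Figure \ref{fig_lattice}, the odd sign lands on the crossing channel rather than on a non-crossing one; the placement of that sign depends on how the labels $1,\dots,4$ sit relative to the planar cross at each (black or white) vertex, and for an incompatible ordering the minus would sit on a non-crossing channel and the conclusion would fail. This verification — carried out in the paper through the explicit cases (a), (b), (c) at black and white vertices — is the actual content of the lemma and cannot be obtained from Corollary \ref{cor_gen} plus the $\sigma$-identity alone; it must be done explicitly (or reduced to a finite check at one black and one white vertex, using the translation invariance of the ordering in Figure \ref{fig_lattice}).
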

\begin{proof}

Suppose we have a configuration $A_{L}(\tau)$ of disjoint simple loops, for which all four edges $e_1,\ldots,e_4$ adjacent to a vertex $v$ are shared by either one or two loops. The way the loop or the two loops traverses the four edges, leads to a partition of the  four edges into two pairs of consecutive edges in the loop(s).  There are three such possible pairings. The crossing case  $(1-3,2-4)$ and the two non--crossing cases $(1-2,3-4)$ and $(2-3,4-1)$. (Here the ordering of the edges inside a pair does not matter.)

Hence  there are also two other configurations, which include the same set of edges as $A_{L}(\tau)$, but differ by a certain rearrangement of the edges into loops, so that the other two pairings are obtained. This gives three configurations, which we will name $A_U$ for the crossing case,  $A_S$ for $(1-2,3-4)$  and $A_T$ for $(2-3,4-1)$.

To be concrete consider a black vertex, for white vertices one just has to invert the edges $e_1,\ldots e_4$ everywhere. Note that under a change of orientation of a simple loop we have $A_{\ell}=A^{-1}_{\ell}$ due to the anti--symmetry of the $\tau_{ee'}$ and the definition (\ref{eqn_A_loop}). Furthermore we can choose w.l.o.g.\ the initial vertex in any given loop. Hence we can assume that in the configuration $A_U$ we have a loop $\ell_{U1}$ of the form $\ell_{U1}=(e_3^{-1} P P' e_1)$ where $P$ and $P'$ stand for paths with the source vertex $s(P)$ given by $t(e_3^{-1})$ and the target vertex of $P'$ being $t(P')=s(e_1)$.

We now consider three possibilities for the end point of $P$.
\begin{itemize}\parskip-2mm
\item[(a)]  We have that the target vertex $t(P)=s(e_2)=s(P')$ with $P'=(e_2 e_4^{-1} p')$.
 \item[(b)] We have $t(P)=s(e_4)=s(P')$ with $P'=(e_4 e_2^{-1} p')$. 
\item[(c)] We have $t(P)=s(e_1)$.
In this case $P'$ is empty and there is a second loop $\ell_{U2}$ contributing to $A_U$ whose orientation and starting point we can choose such that $\ell_{U2}=( e_4^{-1} p'e_2)$ with $s(p')=t(e_4^{-1})$ and $t(p')=s(e_2)$. (The two loops intersect also elsewhere for a planar lattice.)
\end{itemize}

Let us  define the corresponding configurations $A_S$ and $A_T$ for the different cases. 
\begin{itemize}\parskip-2mm
\item[(a)] $A_S$ agrees with $A_U$ in all simple loops except for $\ell_{U1}$ which is replaced by $\ell_{S}=(e_2^{-1}P^{-1}e_3e_4^{-1}p'e_1)$. Likewise we replace for $A_T$ the loop  $\ell_{U1}$ by two loops $\ell_{T}=(e_4^{-1}p'e_1)(e_3^{-1}Pe_2)$.
\item[(b)] For $A_S$ we replace  $\ell_{U1}$ by a pair of loops $\ell_S=(e_{2}^{-1}p'e_1)(e_4^{-1}P^{-1}e_3)$ and for $A_T$ by a loop $\ell_T=(e_4^{-1}P^{-1}e_3e_2^{-1}p'e_1)$.
\item[(c)]  For $A_S$ we replace $\ell_{U1}\ell_{U2}$ by a loop $\ell_S=(e_2^{-1} (p')^{-1} e_4e_3^{-1}Pe_1)$ and for $A_T$ by a loop $\ell_T=(e_4^{-1}p'e_2e_3^{-1} Pe_1)$.
\end{itemize}

\begin{figure} 
  \centering
    \includegraphics[width=1\textwidth]{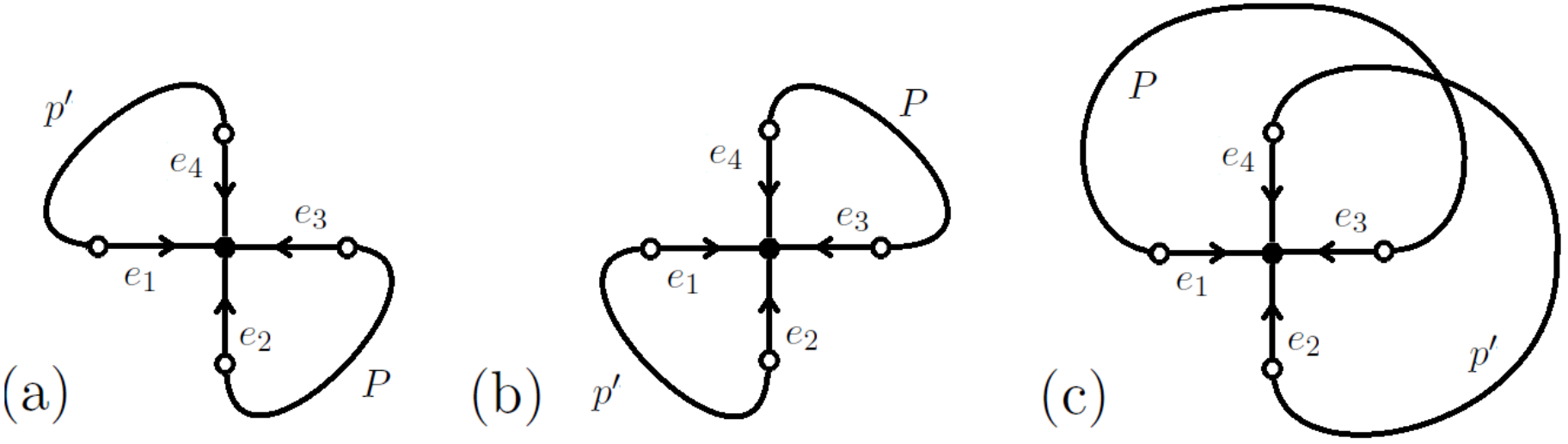}
    \caption{The three possible intersections at a 4-valent vertex.  For each intersection (a), (b), and (c) there are three possible configurations of simple loops $S$, $T$, and $U$.  The paths $P$ and $p'$ are arbitrary.}  \label{fig_abc}
\end{figure}

We have now to compare the corresponding amplitudes as defined in (\ref{eqn_A_loop}). To this end denote by 
\be
A_{{\cal P}} \,= \, (-)^{|{\cal P}|} \prod_{{ \text{bulk}} \,\, v} \tau_v 
\ee
 the contribution from an open path ${\cal P}$, where $|{\cal P}|$ is the number of edges disagreeing with the orientation of the path and $\tau_v$ stands for $\tau_{ee'}$ with $(e,e')$ a pair of edges in ${\cal P}$ adjacent to $v$ and ordered according to the orientation of $|{\cal P}|$. 
 
 Note that under a reversal of the orientation of ${\cal P}$ we have 
 \be
 A_{\cal P} \,=\, (-)A_{{\cal P}^{-1}} \quad .
 \ee
 The reason for this is that the change in sign due to the orientation of edges is given by $(-1)^{\sharp {\cal P}}$ where $\sharp {\cal P}$ is the number of edges in ${\cal P}$. Furthermore the change in sign due to the orientation of the vertices and the antisymmetry of the $\tau_{ee'}$ is given by $(-1)^{\sharp {\cal P}+1}$.
 
 We can now consider all three cases:\\
With $A_{\text{rest}}$ denoting the contribution of all other simple loops in $A_U$  we obtain for the case (a)
 \bea
 A_U&=& A_{\text{rest}}   (-) A_{(e_3^{-1}Pe_2)} \tau_{e_2 e_4^{-1}} A_{(e_4^{-1}p'e_1)} \tau_{e_1e_3^{-1}} \nn\\
 A_S&=&A_{\text{rest}}   (-) A_{(e_3^{-1}Pe_2)^{-1}} \tau_{e_3e_4^{-1}} A_{(e_4^{-1}p'e_1)} \tau_{e_1e_2^{-1}}\,=\,-A_U \nn\\
 A_T &=&A_{\text{rest}}   (-) A_{(e_3^{-1}Pe_2)} \tau_{e_2 e_3^{-1}} (-) A_{(e_4^{-1}p'e_1)} \tau_{e_1e_4^{-1}}\,=\,-A_U \quad .
 \eea
 Here we used the special form of the weights $\tau_{ee'}=\sigma_{ee'}\tau_{e}\tau_{e'}$ to reach $A_S=A_T=-A_U$.
 
 Likewise we also obtain for the other two cases (b) and (c) that $A_S=A_T=-A_U$.

  Thus for cases (a) and (b) we can  cancel in the sum $\sum_L A_L(\tau)$ the term with a crossing $A_U$ such that we remain with the contribution of two simple loops, i.e. for (a) we cancel $A_U$ with $A_S$ and for (b) we cancel $A_U$ with $A_T$. 
  
  In the case (c) we have to cancel $A_U$ with either $A_S$ or $A_T$ and we remain with a loop $\ell_S$ or $\ell_T$ with self--intersection (but non--crossing) at the vertex $v$ under consideration. 
  
  However, in the case of a planar lattice  the two loops $\ell_{U1}$ and $\ell_{U2}$ need to cross at least one other time at one or more other vertices $v',v'',\ldots$. Going to the next vertex, for instance $v'$, we can now resolve this crossing so that the loop is split into two loops. The self--intersection of $\ell_S$ or $\ell_T$ at $v$ then turns into two different loops sharing two vertices. 
  
  Doing this with all vertices we remain with loops which do not self--intersect. Different loops may share vertices. Counting all such configurations would still lead to an over--counting compared to the number of configurations of closed graphs $\Gamma_{\text{even}}$, as can be seen by an example of two loops sharing two vertices\footnote{See the lower right diagram in Figure \ref{fig_polygon}.}, for which there are two (if the loops are not crossing) possibilities involving the same set of edges. But in fact the proof shows that resolving all intersections leads always to just one configuration that remains in the end.  This leads to a matching of (left--over) loops configurations with configurations of closed, even--valent subgraphs $\Gamma_{\text{even}}$ for the Ising model.  
	
Remark: The fact that from the three possible terms $A_S,A_T,A_U$ two terms cancel out generalizes to arbitrary lattices. However to specify the crossing term $A_U$ one needs a planar vertex. Furthermore for i.e. six--valent vertices, three paths might meet at one vertex, in which case one has more terms to consider.	
\end{proof}


Now Theorem \ref{cor_gen_ising} follows from lemmas \ref{lemma_orient} and \ref{lemma_STU}.  Indeed, from Lemma \ref{lemma_STU} the sum in (\ref{eqn_gen_loops}) is reduced to a sum of terms in one--to--one correspondence with the subgraphs $\Gamma_{\text{even}}$ and each term in the sum is a collection of disjoint simple loops having no crossings.  Suppose such a subgraph has $P$ edges then by lemma \ref{lemma_orient} the quantity (\ref{eqn_A_loop}) will have a sign $(-1)^{P/2}$ which is canceled by the factors of $i$ in the weight.


This gives us the following relation between the spin network generating functional and the 2d Ising model partition function.
\be \label{eqn_G_Z}
  \cG_{\cL_N}(iv) = \frac{2^N}{ (1-v^2)^{N} Z_{N}(v)^2}
\ee
where $Z_N(v)$ is the partition function (\ref{eqn_ising_part}) of the 2d Ising model.  In particular, this shows that in the limit $N \rightarrow \infty$ the spin network generating functional $\cG_{\cL_N}(v)$ possesses a second order phase transition at $$v = \sqrt{2}-1$$

Indeed, it is known that the 2d Ising model undergoes a second order phase transition for a particular temperature, namely when $v = \sqrt{2}-1$.  The free energy of $Z_N(v)$ is defined by
\be
  F(T) = -kT \log Z_N(v)
\ee
and is exactly solvable for $N \rightarrow \infty$.  At the critical temperature the logarithm in $F(T)$ becomes singular and since
\be
  \log \cG_{\cL_N}(v) = N\log 2 - N \log(1-v^2) -2 \log Z_N(v)
\ee
it follows that the logarithm of $\cG_{\cL_N}(v)$ is also singular at this point.  Thus we have shown that the spin network generating functional $\cG_{\cL_N}(v)$ will undergo a second order phase transition at the critical value $v = \sqrt{2}-1$.


\chapter{Conclusion}

We have given a self contained overview of the two main bases of SU(2) intertwiners: the orthonormal and the coherent.  We have also defined a new basis which is discrete, coherent, and shares many of the nice properties of the former two bases.

The asymptotic limit of the closed 4-simplex amplitude in this new basis was computed and the semi-classical phase factor was found to give a generalization of the Regge action to Twisted Geometry.
 
We constructed generating functionals for the coherent spin network amplitudes and the amplitudes of the new basis, both for arbitrary graphs.  Using these generating functionals one is able to read off the exact evaluations of arbitrary spin networks in the orthonormal, coherent, and discrete coherent bases.

Finally we showed how coarse graining moves in the new basis could be performed at the level of the generating functional.  Interestingly, we showed how the discrete-coherent generating functional was related to the partition function for the 2d Ising model.  This could potentially be used as a ``Rosetta stone'' for investigations into renormalization of spin network and spin foam models.

We close by discussing some possible directions for future investigations and some open questions.    

\begin{itemize}

\item  In Section \ref{Sec_const_quant} we showed how the $|S,T\ket$ states could be derived by the constraint quantization of an auxillary Hilbert space, the states of which, do not satisfy the Pl\"ucker relations.  

The geometric interpretation of the intertwiner space is known in terms of the Grassmannian space $\text{Gr}(2,\C^{2N})$ of two planes in $\C^N$.  Indeed, in \cite{Freidel:2009ck} it is shown that the full intertwiner space on $N$ legs can be represented in terms of the Grassmannian as
\be
  \mathcal{H}_N = \bigoplus_{\{j_i\}} \mathcal{H}_{j_1,...,j_n} = L^2\left(\text{Gr}(2,\C^N) \right)
\ee 
where $\mathcal{H}_{j_1,...,j_n} \equiv \text{Inv}_{\text{SU(2)}}(V^{j_1} \otimes \cdots V^{j_n})$ was first defined in (\ref{eqn_inter_space}).  Furthermore, the Grassmannian can be embedded into complex projective space via the well known Pl\"ucker embedding 
\be
  \text{Gr}(2,\C^N) \hookrightarrow \mathbb{P}(\C^N \wedge \C^N) \cong \C\mathbb{P}^{\binom{N}{2}-1}
\ee
which maps 2d subspaces of $\C^N$ with coordinates $|z_i\ket$ to the points $x$ in projective space 
\be
  x = \sum_{i<j} [z_i|z_j\ket e_i \wedge e_j
\ee
where the coordinates $[z_i|z_j\ket$ are known as  Pl\"ucker coordinates.  As we saw many times in this thesis, the number of Pl\"ucker coordinates (for $N>3$) is greater than the dimension of $\text{Gr}(2,\C^N)$.  Indeed, they satisfy the Pl\"ucker relations.  

The geometric interpretation of these relations is that an element $x \in \mathbb{P}(\C^N \wedge \C^N)$ is in $\text{Gr}(2,\C^N)$ iff it is decomposable, which for $N=4$ takes the form  
\be
  x \wedge x = \Big( [z_1|z_2\ket[z_3|z_4\ket - [z_1|z_3\ket[z_2|z_4\ket + [z_1|z_4\ket[z_2|z_3\ket \Big) e_1 \wedge e_2 \wedge e_3 \wedge e_4 = 0
\ee
Hence the Pl\"ucker relations offer an algebraic criterion for the simplicity of a complex bivector in 4d.

Moreover, the Grassmannian can be defined by the projective variety defined by the Ideal generated by the Pl\"ucker relations.  Therefore, it seems that our auxillary Hilbert space should be the quantization of $\mathbb{P}(\C^4 \wedge \C^4)$ and we can project down to the intertwiner space by modding by the Pl\"ucker relation.  

This could provide an opportunity of building models based on algebraic properties, rather than the usual geometric critera such as closure, simplicity, etc.  At the very least, the guiding geometric principles could be translated into an algebraic formalism which might be better suited for solving some problems.

For example, Cluster Algebras are a new class of algebras which might be useful in this regard \cite{scott2006grassmannians}.  In the case of the Grassmannian, these algebras consist of monomials of Pl\"ucker coordinates and they ``mutate'' between each other via the Pl\"ucler relations.  These tools could help illuminate the role of Pl\"ucker relations in the amplitudes, and perhaps even reveal hidden discrete symmetries.

\item Regarding the Spin Foam program there seems to be two interesting options for using the $k$-basis.  The first is to impose simplicity constraints on spin(4) representations like in the usual spin foam models.  

From the semiclassical relation (\ref{eqn_sc_k}) between the $k$ basis and the 3d dihedral angles it is easy to see that these simplicity constraints should be of the form
\be
  k^{L}_{ij} = \rho^2 k^{R}_{ij} 
\ee
in order to imposing the usual simplicity constraints $\bra J^{L}_i \cdot J^{L}_j \ket = \rho^2 \bra J^{R}_i \cdot J^{R}_j \ket$.  This formulation was actually considered in \cite{Dupuis:2010iq} however this possibility was abandoned since constraints on the observables $(1-\cos\theta_{ij})/2 = \sin^2 (\theta_{ij}/2)$ lacked an interesting interpretation.  We now have a clear interpretation of these variables: they are precisely the $k$ variables.

The second option is to use the 20j symbol as a vertex amplitude and to break the topological invariance by imposing the shape matching constraints (\ref{eqn_shape_match}) on the boundary $k$ values.  This is in the spirit of the area-angle formulation of Regge Calculus (\ref{eqn_Regge_3d_angles}) proposed by Dittrich and Speziale \cite{Dittrich:2008va}.  Furthermore, the semiclassical limit would give the Regge action and a 4d interpretation almost by construction.

The $k$'s are related to the 3d dihedral angles in the asymptotic limit via (\ref{eqn_sc_k}).  The 3d dihedral angles are related to the 2d angles by (\ref{eqn_2d_3d_dihedral}) which are used in the shape matching constraints.  Hence the shape matching constraints give a set of constraints on the $k$'s.  

In \cite{Dittrich:2008va} it is claimed that the closure constraints and the shape matching constraints reduce the 10 areas and 30 dihedral angles to 10 independent variables in the 4-simplex.  The closure constraints are automatically imposed by the relation $\sum_{j} k_{ij} = A_i$ since
\be
  A_i = \sum_j k_{ij} = \frac{A_i}{A} \sum_{j} A_j \left( 1- \cos \theta_{ij} \right) = A_i - \frac{A_i}{A} \sum_j A_j \cos \theta_{ij}
\ee
which implies the closure relation
\be
  \sum_j A_j \cos \theta_{ij} = 0
\ee
There are 20 independent $k$'s in the 20j symbol with the conditions $\sum_{j} k_{ij} = A_i$ imposed.  Therefore the shape matching constraints should presumably give ten independent constraints on the twenty independent $k$'s leaving ten degrees of freedom.  This might give a new interesting form for the simplicity constraints.

In Section \ref{sec_BF_racah} we gave a formulation of the BF partition function which involved a sums over integers at each of the vertices.  Each integer corresponds to a cycle union on the spin network of the vertex amplitude.  Furthermore the vertex and edge amplitudes were shown to be given simply by multinomial coefficients.  The shape matching constraints would then be imposed by Kronecker deltas on these integers at each vertex.  Such a model would at least be very simple in its formulation.

\item  The renormalization of the Ising model is well studied analytically.  It would be interesting to investigate if this renormalization could be formulated by the sum over paths developed in (\ref{sec_coarse_graining}).  Furthermore, this begs the question of whether we can associate a physical interpretation of the $\tau$ variables in the generating functional by analogy with the Ising model.

The sum over paths in the spin network coarse graining is very reminiscent of the derivation of the Lieb-Robinson bound.  It would be interesting to calculate the correlation of observables at large graph distances and see how this translates from the Ising model to the spin network picture.

There is also the question of using the exactly soluble techniques to compute Pachner moves in non-topological spin foam models.  The degree of divergence of these amplitudes could in principle be extracted from the exact evaluation.

\end{itemize}

\appendix

\chapter*{APPENDICES}
\addcontentsline{toc}{chapter}{APPENDICES}

\chapter{Gaussian Integration} \label{Gauss_int_app}

Using the standard formula for one complex variable
\be
  \int_\C \frac{\rd^2 \alpha}{\pi} e^{-|\alpha|^2 + a \alpha + b \overline{\alpha}} = e^{ab}
\ee
and comparing coefficients we have the formula
\be \label{eqn_gaussian_int_one_var}
  \int_\C \frac{\rd^2 \alpha}{\pi} e^{-|\alpha|^2} \overline{\alpha}^k \alpha^{k'} = \delta_{k,k'} k!.
\ee
We can then combine use this twice to compute the spinor integral
\be
  \int_{\C^2} \rd \mu(z) \bra a | z \ket^k \bra z | b \ket^{k'} = \delta_{k,k'} k! \bra a | b \ket  
\ee
The delta function for holomorphic functions is also given by Gaussian integration as in
\be
  \int_{\C^2} \rd \mu(z) f(z) e^{\bra z | w \ket} = f(w)
\ee

Now consider
\be
  \int_{\C^n} \prod_{i=1}^{n} \frac{\rd^2 \alpha_i}{\pi} \, e^{-\sum{i,j} \bar{\alpha}_i A_{ij} \alpha_j} = \frac{1}{\det(A)} 
\ee
for $n$ spinors this is
\be
  \int_{\C^{2n}} \prod_{i=1}^{n} \rd \mu(z_i) \, e^{\sum{i,j} \bra z_i | A_{ij} | z_j \ket} = \frac{1}{\det(1-A)} 
\ee
where the identity matrix in $\det(1-A)$ comes from the measure $\rd \mu(z)$.

\chapter{Proofs}

\section{Proof of Lemma \ref{product2}}
\label{proj_kern_proof}
\begin{lemma}
The projector onto the kernel of $H$ is explicitly given by
\be 
\Pi_{j_{i}} = 1 +\sum_{N=1}^{\mathrm{min}(2j_{i})}\frac{(-1)^{N}}{N!} \frac{(J-N+1)!(J-2N+1)!}{(J+1)!^{2}} \, \hat{R}^{N}(\hat{R}^{\dagger})^{N}.
\ee
\end{lemma}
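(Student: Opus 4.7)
The plan is to verify directly that the operator $\Pi_{j_{i}}$ in the Lemma is the orthogonal projector onto $\ker H$ by checking three properties: (a) $\Pi_{j_{i}}$ is Hermitian, (b) $\Pi_{j_{i}}$ acts as the identity on $\ker\hat R^{\dagger}$, and (c) $\hat R^{\dagger}\Pi_{j_{i}} = 0$. Together with the positivity of $H$, which forces its kernel to coincide with $\ker\hat R^{\dagger}$, these three properties imply $\Pi_{j_{i}}^{2} = \Pi_{j_{i}}$ and identify $\mathrm{Im}\,\Pi_{j_{i}}$ with $\ker H$, as required.

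Properties (a) and (b) I would dispatch first, since they are essentially immediate. Hermiticity is manifest because every summand $\hat R^{N}(\hat R^{\dagger})^{N}$ is self-adjoint and each coefficient $c_{N} = (-1)^{N}(J-N+1)!(J-2N+1)!/[N!(J+1)!^{2}]$ is real. For (b), if $|\psi) \in \ker\hat R^{\dagger}$ then $(\hat R^{\dagger})^{N}|\psi) = 0$ for every $N \ge 1$, so every $N \ge 1$ term in the definition of $\Pi_{j_{i}}$ vanishes, leaving $\Pi_{j_{i}}|\psi) = |\psi)$.

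The substantive step is (c). I would compute $\hat R^{\dagger}\Pi_{j_{i}}$ in the auxiliary orthogonal basis $|S,T)_{j_{i}}$ by iterating the closed form (\ref{matrixel}). A first iteration yields
\[
(\hat R^{\dagger})^{N}|S,T)_{j_{i}} = \frac{(J+1)!}{(J-2N+1)!}\sum_{s,t}R^{(s,t)}_{(S,T)}(N)\,|s,t)_{j_{i}-N/2},
\]
and a subsequent application of $\hat R^{N}$ brings a second factor $(J+1)!/(J-2N+1)!$ together with another sum over Pl\"ucker coefficients, weighted by the auxiliary norms $\|s,t\|^{2}_{j_{i}-N/2} = (J-2N+1)!/\prod_{i<j}k_{ij}(j_{i}-N/2,s,t)!$. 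These factorials telescope precisely against $c_{N}$, so that each product of two $R$-coefficients enters the matrix element of $c_{N}\hat R^{N}(\hat R^{\dagger})^{N}$ with the clean combinatorial weight $(-1)^{N}(J-N+1)!/[N!\prod k_{ij}(j_{i}-N/2,s,t)!]$. The extra $\hat R^{\dagger}$ standing on the left then shifts one of the labels by $1/2$ and introduces a third $R$-factor of order one.

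The main obstacle is reducing the resulting triple sum over $N$ and intermediate labels to zero identically. My plan is to perform the inner summation first by means of Lemma \ref{Lemma_sum_T}, $\sum_{T}R^{(s,t)}_{(S,T)}(N) = \delta_{sS}$, together with its twin obtained by swapping the roles of $S$ and $T$; these collapse the double sum over intermediate labels into a single alternating sum in $N$. That remaining sum should then be recognizable as a terminating balanced hypergeometric identity of Saalsch\"utz or Vandermonde type, whose cancellation is engineered precisely by the factorial ratios hidden in $c_{N}$. Verifying this combinatorial identity is the real technical content of the proof; once it is in place, combining (a), (b) and (c) completes the identification of $\Pi_{j_{i}}$ with the projector onto $\ker H$.
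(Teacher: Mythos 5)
Your overall skeleton (Hermiticity, identity on $\ker\hat R^{\dagger}$, and $\hat R^{\dagger}\Pi_{j_i}=0$ imply that $\Pi_{j_i}$ is the orthogonal projector onto $\ker\hat R^{\dagger}=\ker(\hat R\hat R^{\dagger})$) is logically sound, and steps (a) and (b) are indeed immediate. But the proposal stops exactly where the proof begins: property (c) is the entire content of the lemma, and you do not establish it — you defer it to an unnamed "terminating balanced hypergeometric identity of Saalsch\"utz or Vandermonde type" that you neither write down nor verify. Moreover, the route you sketch for (c) is not workable as described. The closed form (\ref{matrixel}) gives matrix elements of $\hat R^{N}$ between the fixed sectors $\widehat{\cal H}_{j_i-N/2}$ and $\widehat{\cal H}_{j_i}$; the object you need, $\hat R^{\dagger}\hat R^{N}(\hat R^{\dagger})^{N}$, is a mixed product whose matrix elements require inserting the resolutions of identity of the auxiliary spaces, and these carry the norm factors $\|s,t\|^{-2}_{j_i-N/2}$ and products of $R$-coefficients of different orders ($N$ and $1$). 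Lemma \ref{Lemma_sum_T} is an unweighted sum over a single label, $\sum_{T}R^{(s,t)}_{(S,T)}(N)=\delta_{s,S}$ (used in the paper to project onto $|S\ket=\sum_T|S,T\ket$), and it does not collapse those weighted double sums; so the claimed reduction to a single alternating sum in $N$ is unjustified, and the "combinatorial identity" you would be left with is precisely the nontrivial statement the lemma encodes.

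For comparison, the paper never verifies operator identities like $\hat R^{\dagger}\Pi_{j_i}=0$ directly. It proves the lemma by expanding the exact evaluation (\ref{det4}) of the generating-functional scalar product, shifting $j_i\to j_i-N/2$ and using the multinomial relation (\ref{rel}), and matching coefficients: this yields the explicit physical scalar product of Proposition \ref{product1}, which via (\ref{matrixel}) is recognized term by term as $(S,T|\Pi_{j_i}|S',T')$ with exactly the stated coefficients; combined with (\ref{eqn_scalar_proj}) this identifies $\Pi_{j_i}$ with the projector. If you want to salvage your direct approach, you must actually derive and prove the identity expressing $\hat R^{\dagger}\hat R^{N}(\hat R^{\dagger})^{N}$ (or equivalently a commutation relation between $\hat R^{\dagger}$ and $\hat R^{N}$) that makes the alternating sum with the coefficients $(-1)^{N}(J-N+1)!(J-2N+1)!/[N!\,(J+1)!^{2}]$ telescope to $-\hat R^{\dagger}$; as it stands, that step is a conjecture, not a proof.
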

\begin{proof}
To prove this we will use start from the computation of the scalar product of the generating functionals (\ref{eqn_scalar_expanded}) and its evaluation (\ref{det4}) which reads
\be \label{eqn_ST_loops}
  \sum_{j_i,S,T} \sum_{j'_i,S',T'} \left\bra S,T \right|\left. S',T' \right\ket \prod_{i<j} \bar{\tau}_{ij}^{k_{ij}(j_i,S,T)} \tau_{ij}^{k_{ij}(j_i,S',T')} = \left( 1 - \sum_{i < j} |\tau_{ij}|^2 + |R(\tau)|^2 \right)^{-2},
\ee
where $R(\tau) = \tau_{12}\tau_{34} + \tau_{13}\tau_{42} + \tau_{14}\tau_{23}$.  Expanding the RHS of (\ref{eqn_ST_loops}) gives
\be
 \sum_{[k],N}  \frac{(-1)^N(J+N+1)! }{N !}  |R(\tau)|^{2N} \prod_{i<j} \frac{|\tau_{ij}|^{2k_{ij}}}{k_{ij}!}.
\ee
and by shifting $j_i \rightarrow j_i - N/2$ and using the relations (\ref{rel}) this becomes
\be
  \sum_{j_i,s,t,N}  \frac{(-1)^N(J-N+1)!}{N!} \sum_{S,T} \sum_{S',T'} R^{(s,t)}_{(S,T)}(N) R^{(s,t)}_{(S',T')}(N) \prod_{i<j} \frac{\tau_{ij}^{k_{ij}(j_i,S,T)}{\bar{\tau}}_{ij}^{k_{ij}(j_i,S',T')}}{k_{ij}(j_i-N/2,s,t)!} 
\ee
Now comparing coefficients of this with the LHS of (\ref{eqn_ST_loops}) gives the desired result, that is 
\be
\left\bra S,T \right|\left. S',T' \right\ket=
\sum_{j_i,s,t,N}  \frac{(-1)^N}{N!} \frac{(J-N+1)!}{\prod_{i<j} k_{ij}(j_i-N/2,s,t)!} R^{(s,t)}_{(S,T)}(N) R^{(s,t)}_{(S',T')}(N).
\ee
This can also be written as 
\be
\left\bra S,T \right|\left. S',T' \right\ket=
\sum_{j_i,s,t,N}  \frac{(-1)^N}{N!}||s,t||_{j_{i}-N/2}^{2 }  \frac{(J-N+1)!}{(J-2N +1)!} R^{(s,t)}_{(S,T)}(N) R^{(s,t)}_{(S',T')}(N).
\ee
 Using the expression (\ref{matrixel}) for the matrix elements of $\hat{R}^{N}$:
 \be
{}_{j_{i}} (S,T| \hat{R}^{N}  | s,t)_{j_{i}-N/2} = ||s,t||_{j_{i}-N/2}^{2 }  \frac{(J+1)!}{(J-2N +1)!} 
  R^{(s,t)}_{(S,T)}(N),
\ee
this expression reads
\bea
&&\sum_{N,s,t}\frac{(-1)^{N}}{N!} \frac{(J-N+1)!(J-2N+1)!}{(J+1)!^{2}} \, 
\frac{( S,T |\hat{R}^{N}|s,t)_{j_{i}-N/2}(s,t|(\hat{R}^{\dagger})^{N} |S',T')}{||s,t||_{j_{i}-N/2}^{2}}\nonumber \\
&=& \sum_{N=0}^{\mathrm{min}(2j_{i})}\frac{(-1)^{N}}{N!} \frac{(J-N+1)!(J-2N+1)!}{(J+1)!^{2}} \,\nonumber 
( S,T |\hat{R}^{N}(\hat{R}^{\dagger})^{N} |S',T')\\
&=& ( S,T | \Pi_{j_{i}} |  S',T' ).\nonumber
\eea
where we have used the decomposition of the identity in the second line.
\end{proof}

\section{Proof of Lemma \ref{Lemma_sum_T}}
\label{Lemma_sum_T_proof}

\begin{lemma}
\be 
  \sum_{T} R^{(s,t)}_{(S,T)}(N) = \delta_{s,S}. 
\ee
\end{lemma}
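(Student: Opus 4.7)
The plan is to prove this by a direct computation. The starting point is the explicit formula for the coefficient recalled in equation (\ref{eqn_R_def}):
\[
R^{(s,t)}_{(S,T)}(N) = \frac{(-1)^{t-T+N}\,N!}{(s-S+N)!\,(t-T+N)!\,(S-s+T-t-N)!},
\]
where the allowed range of $(S,T)$ is parametrised by $S = s+N-a$, $T = t+N-b$ with $a,b\geq 0$ and $a+b\leq N$.

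First I would change the summation variable from $T$ to $b \equiv t+N-T$, noting that for fixed $s,S,N$ the other parameter $a \equiv s+N-S$ is a constant. In these variables the three arguments of the factorials in the denominator become respectively $a$, $b$, and $N-a-b$, while the sign is $(-1)^{b}$, so the term takes the suggestive multinomial form
\[
R^{(s,t)}_{(S,T)}(N) \;=\; \frac{(-1)^{b}\,N!}{a!\,b!\,(N-a-b)!}.
\]

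Next I would pull the $a$-dependent factor outside the sum over $b$ and recognise a binomial:
\[
\sum_{T} R^{(s,t)}_{(S,T)}(N) \;=\; \frac{N!}{a!\,(N-a)!}\sum_{b=0}^{N-a}\binom{N-a}{b}(-1)^{b}.
\]
The inner sum is $(1-1)^{N-a}$, which vanishes whenever $N-a>0$ and equals $1$ when $N-a=0$. Since $a = s+N-S$, the condition $N-a=0$ is exactly $s=S$, and in that case the prefactor $N!/(N!\,0!)$ is $1$. This establishes $\sum_T R^{(s,t)}_{(S,T)}(N)=\delta_{s,S}$.

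There is no substantial obstacle here; the result is really just the binomial theorem dressed up in the indexing of the overcompleteness relations. The only minor point to check carefully is that the range of $b$ dictated by the parametrisation $(a,b)\geq 0$, $a+b\leq N$ matches exactly the range $0\leq b\leq N-a$ needed for the binomial collapse, which is immediate once one substitutes the variables.
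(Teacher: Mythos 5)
Your proof is correct and follows essentially the same route as the paper's: change variables to $a=N+s-S$, $b=N+t-T$, recognise the multinomial form of $R^{(s,t)}_{(S,T)}(N)$, and collapse the alternating sum over $b$ via the binomial theorem, $\sum_b(-1)^b\binom{N-a}{b}=(1-1)^{N-a}=\delta_{a,N}$, which is the "standard binomial identity" the paper invokes after the re-indexing $b\to b+a$. The only difference is cosmetic (you factor out $\binom{N}{a}$ directly rather than rewriting the sum as $\sum_{b=a}^{N}(-1)^{b-a}\binom{N}{b}\binom{b}{a}$), and your check of the summation range is exactly the point the paper leaves implicit.
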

\begin{proof}
Writing $a = N+s-S$ and $b=N+t-T$ with $a,b\geq 0$ and $a+b\leq N$ we get
\bea
\sum_{T} R^{(s,t)}_{(S,T)}(N) = \sum_{b=0}^{N-a} \frac{(-1)^{b}N!}{a!b!(N-a-b)!} = \sum_{b=a}^{N} (-1)^{b-a} \binom{N}{b} \binom{b}{a} = \delta_{N,a} = \delta_{S,s},
\eea
where we used a standard binomial identity in the last step.  Similarly we use the same identity to show that
\bea
\sum_{S} (-1)^{k_{23}(j_i,S,T)} R^{(s,t)}_{(S,T)}(N) = (-1)^{k_{23}(j_i,s,t)} \sum_{a=0}^{N-b} \frac{(-1)^{a}N!}{a!b!(N-a-b)!} = (-1)^{k_{23}(j_i,s,t)} \delta_{T,t}.
\eea
\end{proof}

\section{Proof of Proposition \ref{ST_orthog}}
\label{ST_orthog_proof}
  
\begin{proposition}
\be
  \left\bra S\right|\left.S' \right\ket  = \frac{\delta_{S,S'}}{2S+1}  \Delta^{2}(j_{1}j_{2}S) \Delta^{2}(j_{3}j_{4}S), \hspace{12pt} \left\bra T\right|\left.T' \right\ket  = \frac{\delta_{T,T'}}{2T+1}  \Delta^{2}(j_{1}j_{3}T) \Delta^{2}(j_{2}j_{4}T),
\ee
where the triangle coefficients were given in (\ref{eqn_tri_coeff}).
\end{proposition}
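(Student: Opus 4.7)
The cleanest route is to bypass the complicated summation formula (\ref{eqn_usual_sc_prods}) for $\bra S|S'\ket$ entirely and work directly from the 3j definition (\ref{eqn_S_3j}) of $|S\ket$. Recall that by construction
\be
  \left(z_i | S\right\ket = N_S \sum_{m_i, m} (-1)^{S-m} \threej{j_1}{j_2}{S}{m_1}{m_2}{m}\threej{S}{j_3}{j_4}{-m}{m_3}{m_4} \prod_{i=1}^{4} e^{j_i}_{m_i}(z_i),
\ee
with $N_S = \Delta(j_1 j_2 S)\Delta(j_3 j_4 S)$. Since the holomorphic basis $e^{j_i}_{m_i}(z_i)$ is orthonormal with respect to the Bargmann measure (\ref{barg_in_prod}), computing $\bra S|S'\ket$ produces precisely the double 3j contraction
\be
  \bra S | S' \ket = N_S N_{S'} \sum_{m_i, m, m'} (-1)^{S+S'-m-m'} \threej{j_1}{j_2}{S}{m_1}{m_2}{m}\threej{S}{j_3}{j_4}{-m}{m_3}{m_4}\threej{j_1}{j_2}{S'}{m_1}{m_2}{m'}\threej{S'}{j_3}{j_4}{-m'}{m_3}{m_4}.
\ee

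The main step is then the twofold application of the fundamental 3j orthogonality relation
\be
  \sum_{m_1, m_2} \threej{j_1}{j_2}{J}{m_1}{m_2}{M} \threej{j_1}{j_2}{J'}{m_1}{m_2}{M'} = \frac{\delta_{J,J'}\delta_{M,M'}}{2J+1}.
\ee
Applying this to the $(m_1,m_2)$ summation collapses the first pair of 3j symbols to $\delta_{S,S'}\delta_{m,m'}/(2S+1)$. Applying it again to the $(m_3,m_4)$ summation kills the remaining 3j pair and produces a second factor $1/(2S+1)$ together with a Kronecker $\delta_{m, m'}$ that has already been imposed. Summing the resulting $\sum_m 1 = 2S+1$ recovers exactly one factor of $2S+1$, leaving the clean result
\be
  \bra S | S' \ket = \frac{N_S^2}{2S+1}\delta_{S,S'} = \frac{\Delta^2(j_1 j_2 S) \Delta^2(j_3 j_4 S)}{2S+1}\delta_{S,S'}.
\ee
The $T$-channel norm follows by precisely the same argument after permuting the role of the indices $2 \leftrightarrow 3$, which amounts to replacing the intermediate channel $S$ by $T$ and the triangle coefficients accordingly.

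The only subtlety, and the place one must be careful, is the sign phase $(-1)^{S-m}$ built into the contraction rule of Section \ref{sec_contraction}: one should check that the combined factor $(-1)^{2(S-m)}$ that appears after using $\delta_{S,S'}\delta_{m,m'}$ is indeed trivial on the support of the 3j symbols (i.e.\ that $S-m$ is an integer, which holds since $m$ is of the form $m_1+m_2$ with $j_1+j_2+S\in \mathbb{Z}$). Once this is verified, the proposition is immediate and does not require invoking the elaborate machinery of the projector $\Pi_{j_i}$ at all. As a consistency check, one may alternatively carry out the direct summation over $T$ and $N$ in (\ref{eqn_usual_sc_prods}); the hard part in that approach is recognizing the resulting alternating sum of multinomial coefficients as a Vandermonde-type identity that reconstructs $\Delta^2(j_1 j_2 S)\Delta^2(j_3 j_4 S)/(2S+1)$, and this is the main obstacle if one insists on a purely combinatorial derivation.
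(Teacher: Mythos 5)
Your proof is correct, but it takes a genuinely different route from the paper's. You work from the 3j--contraction form (\ref{eqn_S_3j}) of $|S\ket$ and apply the standard Wigner 3j orthogonality (equivalently (\ref{eqn_CG_orthog})) twice, which collapses the double contraction to $N_S^2\,\delta_{S,S'}/(2S+1)$; the phase bookkeeping you mention is indeed harmless since $S-m\in\Z$. The one step you should make explicit is the identification of the state $|S\ket$ as defined in this chapter, namely the holomorphic integral (\ref{eqn_S_def}) of two $k$-basis trivalent intertwiners, with the 3j contraction (\ref{eqn_S_3j}) at normalization $N_S=\Delta(j_1j_2S)\Delta(j_3j_4S)$: this follows from (\ref{C3}) together with the contraction conventions of Section \ref{sec_contraction}, and without it your starting formula is an assumption rather than a consequence of the definitions in play. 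The paper instead proves the proposition in Appendix \ref{ST_orthog_proof} by brute-force summation of the $|S,T\ket$ scalar product: it sums (\ref{eqn_usual_sc_prods}) over $T$ using Vandermonde's identity (\ref{vandermonde}) and then over $N$ using the binomial identity (\ref{binom_id}), recognizing the result as $\Delta^2(j_1j_2S)\Delta^2(j_3j_4S)/(2S+1)$ --- exactly the ``hard combinatorial route'' you flag at the end. What your approach buys is brevity and reliance only on textbook 3j machinery; what the paper's approach buys is an internal consistency check of the new formalism, since it shows that the $|S,T\ket$ scalar product derived from the projector/Pl\"ucker machinery reproduces the known norms without ever invoking Wigner 3j orthogonality, which is precisely the point of that section.
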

\begin{proof}
We wish to perform the summation
\be
  \left\bra S\right|\left.S' \right\ket = \delta_{S,S'} \sum_{t,N} \frac{(-1)^N (J-N+1)!}{N!\prod_{i<j} k_{ij}(j_i-N/2,s,t)!}.
\ee
We can first evaluate the sum over $t$ by noticing that
\be
  \sum_{T} \frac{1}{\prod_{i<j} k_{ij}!} = \frac{1}{k_{12}! k_{34}! (k_{13}+k_{14})!(k_{23}+k_{24})!} \sum_{T} \binom{k_{13}+k_{14}}{k_{13}}\binom{k_{23}+k_{24}}{k_{23}}
\ee
and using Vandermonde's identity\footnote{For proof compare the coefficients in the expansion of $(1+x)^p(1+x)^q = (1+x)^{p+q}$.}
\be \label{vandermonde}
  \sum_{k} \binom{p}{k} \binom{q}{j-k} = \binom{p+q}{j}
\ee
we have
\be
  \sum_{T} \frac{1}{k_{ij}!} = \frac{(2S)!}{k_{12}!k_{34}!k_{1}!k_{2}!k_{3}!k_{4}!}.
\ee
where $k_{1} = k_{13} + k_{14} = j_1 - j_2 + S$, $k_{2} = k_{23} + k_{24} = j_2 - j_1 + S$, $k_{3} = k_{13} + k_{23} = j_3 - j_4 + S$, and $k_{4} = k_{14} + k_{24} = j_4 - j_3 + S$.  Therefore after changing the  variable $j_i$ to $j_{i}-N/2$ we have
\be
   \left\bra S\right|\left.S' \right\ket = \delta_{S,S'} \sum_{N} \frac{(-1)^N (J-N+1)!(2S)!}{N!(k_{12}-N)!(k_{34}-N)!k_{1}!k_{2}!k_{3}!k_{4}!}.
\ee
We can now perform the sum over $N$
\be
  \sum_N (-1)^{N} \frac{(J-N+1)!}{N!(k_{12}-N)!(k_{34}-N)!} = \frac{(j_1+j_2+S+1)!}{k_{12}!}\sum_N (-1)^{N} \binom{k_{12}}{N} \binom{J-N+1}{j_1+j_2+S+1}
\ee
using the identity\footnote{For proof compare the coefficients in the expansion of $(1+x)^p/(1+x)^{q+1} = (1+x)^{p-q -1}$ with $p < q$.}
\be \label{binom_id}
  \sum_{k} (-1)^{k} \binom{p}{k} \binom{j+q- k}{q} = \binom{j+q-p}{j}
\ee
this becomes
\be
  \sum_N (-1)^{N} \frac{(J-N+1)!}{N!(k_{12}-N)!(k_{34}-N)!} = \frac{(j_1+j_2+S+1)!(j_3+j_4+S+1)!}{k_{12}! k_{34}!(2S+1)!}
\ee
and finally
 \be
  \bra S | S' \ket
  = \delta_{S,S'} \frac{(j_1+j_2+S+1)!(j_3+j_4+S+1)!}{(2S+1)k_{12}!k_{34}!k_{1}!k_{2}!k_{3}!k_{4}!}.
\ee
\end{proof}

\chapter{20j Racah formula}
\label{20j_symbol}

In this section we give an explicit parameterization of the 37 $M_C$ in terms of the 17 parameters $p_k$.  We label the 37 cycles $C$ of the 4-simplex by ordered sets of vertices 1,...,5.  Choosing the following parameters: $p_{1} = M_{1324}, p_{2} = M_{1325}, p_{3} = M_{1345}, p_{4} = M_{1354}, p_{5} = M_{1435}, p_{6} = M_{1425}, p_{7} = M_{2345}, p_{8} = M_{2354}, p_{9} = M_{2435}, p_{10} = M_{12345}, p_{11} = M_{12543}, p_{12} = M_{13245}, p_{13} = M_{13254}, p_{14} = M_{13425}, p_{15} = M_{13524}, p_{16} = M_{14235}, p_{17} = M_{14325}$ we can solve for the $M_C$ in (\ref{eqn_kee}) to be
\begin{align}
  M_{123} &= k^{3}_{12}-p_{1}-p_{2}-p_{12}-p_{13}, \hspace{12pt} M_{145} = k^{1}_{45}-p_{6}-p_{5}-p_{16}-p_{17}, \nonumber \\
  M_{124} &= k^{4}_{12}-p_{1}-p_{6}-p_{16}-p_{15}, \hspace{12pt} M_{234} = k^{2}_{34}-p_{1}-p_{8}-p_{12}-p_{16}, \nonumber \\
  M_{125} &= k^{5}_{12}-p_{2}-p_{6}-p_{14}-p_{17}, \hspace{12pt} M_{235} = k^{2}_{35}-p_{2}-p_{7}-p_{13}-p_{17}, \nonumber \\
  M_{134} &= k^{1}_{34}-p_{1}-p_{4}-p_{13}-p_{15}, \hspace{12pt} M_{245} = k^{2}_{45}-p_{9}-p_{6}-p_{14}-p_{15}, \nonumber \\
  M_{135} &= k^{1}_{35}-p_{2}-p_{3}-p_{12}-p_{14}, \hspace{12pt} M_{345} = k^{4}_{35}-p_{7}-p_{3}-p_{10}-p_{11}, \nonumber \\
  M_{1234} &= k^{3}_{24}-k^{2}_{34}+p_{1}+p_{8}+p_{12}+p_{16}-p_{7}-p_{10}-p_{17}, \nonumber \\
  M_{1243} &= k^{3}_{14}-k^{1}_{34}+p_{1}+p_{4}+p_{13}+p_{15}-p_{3}-p_{14}-p_{11}, \nonumber \\
  M_{1245} &= k^{5}_{14}-k^{1}_{45}+p_{6}+p_{5}+p_{16}+p_{17}-p_{3}-p_{10}-p_{12}, \nonumber \\           
  M_{1254} &= k^{5}_{24}-k^{2}_{45}+p_{9}+p_{6}+p_{14}+p_{15}-p_{7}-p_{11}-p_{13}, \nonumber \\
  M_{12354} &= k^{4}_{15}-k^{1}_{45}+k^{2}_{45}-k^{5}_{24}+p_{7}+p_{5}+p_{11}+p_{16}+p_{17}-p_{4}-p_{14}-p_{15}-p_{9}, \nonumber \\
  M_{12453} &= k^{4}_{25}-k^{2}_{45}+k^{1}_{45}-k^{5}_{14}+p_{9}+p_{3}+p_{10}+p_{14}+p_{15}-p_{5}-p_{8}-p_{16}-p_{17}, \nonumber \\
  M_{12435} &= k^{4}_{23}-k^{2}_{34}+k^{1}_{34}-k^{3}_{14}+p_{8}+p_{3}+p_{12}+p_{16}+p_{11}-p_{4}-p_{9}-p_{13}-p_{15}, \nonumber \\
  M_{12534} &= k^{4}_{13}-k^{1}_{34}+k^{2}_{34}-k^{3}_{24}+p_{4}+p_{7}+p_{13}+p_{15}+p_{10}-p_{5}-p_{8}-p_{12}-p_{16}, \nonumber \\
  M_{1235} &= k^{3}_{25}-k^{2}_{35}+k^{1}_{45}-k^{4}_{15}+k^{5}_{24}-k^{2}_{45}+p_{2}+p_{13}+p_{9}+p_{4}+p_{14}+p_{15}-p_{8}-p_{5}-p_{11}-2p_{16}, \nonumber \\
  M_{1253} &= k^{5}_{23}-k^{2}_{35}+k^{1}_{34}-k^{4}_{13}+k^{3}_{24}-k^{2}_{34}+p_{8}+p_{5}+p_{12}+p_{16}+p_{2}+p_{17}-p_{9}-p_{4}-p_{10}-2p_{15}. \nonumber 
\end{align}        
where $k^{i}_{jk}$ are parameterized in terms of $S_i$ and $T_i$ as in (\ref{int1}) and (\ref{int2}) by the relations $2j_{ij} = \sum_{k} k^{j}_{ik}$ and $2j_{jk} = \sum_{i} k^{j}_{ik}$.



\bibliographystyle{plain}
\cleardoublepage 
\phantomsection  
\renewcommand*{\bibname}{References}

\addcontentsline{toc}{chapter}{\textbf{References}}

\bibliography{ref}

\nocite{*}

\end{document}